\newif\ifllncs  %
\newif\iffull   %
\newif\ifsubmit %

\fulltrue
\submittrue

\ifllncs
  \documentclass{llncs}

\else
  \documentclass[letterpaper,11pt]{article}
  
  \usepackage[in]{fullpage}
\fi

\usepackage{iftex}
\ifPDFTeX
  \usepackage[utf8]{inputenc}
  \usepackage[noTeX]{mmap}
  \usepackage[T1]{fontenc}
\fi
\ifLuaTeX
  \usepackage{luatex85}
  \usepackage[noTeX]{mmap}
\fi

\usepackage{amsmath, amsfonts, amsthm, amssymb, amsthm, color}
\usepackage[bookmarks]{hyperref}
\usepackage[nameinlink]{cleveref}

\newtheorem*{rep@theorem}{\rep@title}
\newcommand{\newreptheorem}[2]{%
\newenvironment{rep#1}[1]{%
 \def\rep@title{#2 \ref{##1}}%
 \begin{rep@theorem}}%
 {\end{rep@theorem}}}
\makeatother

\ifllncs\else
  \newtheorem{theorem}{Theorem}[section]
  \newtheorem{definition}[theorem]{Definition}
  \newtheorem{remark}[theorem]{Remark}
  \newtheorem{lemma}[theorem]{Lemma}
  \newtheorem{corollary}[theorem]{Corollary}
  
  \newtheorem{claim}[theorem]{Claim}
  \newtheorem{construction}{Construction}
  \newtheorem{conjecture}[theorem]{Conjecture}
\fi

\newtheorem*{theorem*}{Theorem}
\newtheorem*{construction*}{Construction}
\newtheorem*{lemma*}{Lemma}
\newtheorem*{remark*}{Remark}

\newreptheorem{theorem}{Theorem}
\newreptheorem{lemma}{Lemma}

\usepackage{appendix}
\usepackage{graphicx}
\usepackage{braket}
\usepackage{comment}
\usepackage{url}
\usepackage{algorithmic}
\usepackage[ruled,vlined]{algorithm2e}
\usepackage{mdframed}
\usepackage{authblk}

\usepackage{pgfplots, tikz}
\pgfplotsset{compat=1.14}

\usepackage[style=alphabetic,minalphanames=3,maxalphanames=4,maxnames=99]{biblatex}

\addbibresource{bib.bib}
\addbibresource{refs.bib}
\addbibresource{main.bib}

\ifllncs
\pagestyle{plain}
\fi

\usepackage{soul, xcolor, xparse}
\makeatletter
  \ExplSyntaxOn
    \cs_new:Npn \white_text:n #1
    {
      \fp_set:Nn \l_tmpa_fp {#1 * .01}
      \llap{\textcolor{white}{\the\SOUL@syllable}\hspace{\fp_to_decimal:N \l_tmpa_fp em}}
      \llap{\textcolor{white}{\the\SOUL@syllable}\hspace{-\fp_to_decimal:N \l_tmpa_fp em}}
    }
    \NewDocumentCommand{\whiten}{ m }
    {
      \int_step_function:nnnN {1}{1}{#1} \white_text:n
    }
  \ExplSyntaxOff
  
  \NewDocumentCommand{ \varul }{ D<>{5} O{0.2ex} O{0.1ex} +m } {%
    \begingroup
    \setul{#2}{#3}%
    \def\SOUL@uleverysyllable{%
      \setbox0=\hbox{\the\SOUL@syllable}%
      \ifdim\dp0>\z@
      \SOUL@ulunderline{\phantom{\the\SOUL@syllable}}%
      \whiten{#1}%
      \llap{%
        \the\SOUL@syllable
        \SOUL@setkern\SOUL@charkern
      }%
      \else
      \SOUL@ulunderline{%
        \the\SOUL@syllable
        \SOUL@setkern\SOUL@charkern
      }%
      \fi}%
    \ul{#4}%
    \endgroup
  }
\makeatother

\mdfdefinestyle{figstyle}{ 
  linecolor=black!7, 
  backgroundcolor=black!7, 
  innertopmargin=10pt, 
  innerleftmargin=25pt, 
  innerrightmargin=25pt, 
  innerbottommargin=10pt 
}

\newenvironment{gamespec}{
  \begin{mdframed}[style=figstyle]}{
  \end{mdframed}}

\newcommand{\R}{\mathbb{R}}

\newcommand{\As}{\mathcal{A}}

\newcommand{\cA}{\mathcal{A}}
\newcommand{\adv}{\mathcal{A}}
\newcommand{\advantage}{{\sf Adv}}

\newcommand{\Advz}{\cA_0}
\newcommand{\Bs}{\mathcal{B}}
\newcommand{\advB}{\mathcal{B}}

\newcommand{\advC}{\mathcal{C}}
\newcommand{\Cs}{\mathcal{C}}
\newcommand{\Ds}{\mathcal{D}}

\newcommand{\Ts}{\mathcal{T}}
\newcommand{\Rs}{\mathcal{R}}
\newcommand{\Xs}{\mathcal{X}}
\newcommand{\Ys}{\mathcal{Y}}
\newcommand{\Zs}{\mathcal{Z}}

\newcommand{\Sim}{{\sf Sim}}
\newcommand{\cM}{\mathcal{M}}

\newcommand{\Hs}{\mathcal{H}}

\newcommand{\LF}{{\sf LF}}

\newcommand{\ccobf}{{\sf CC.Obf}}
\newcommand{\obf}{{\sf Obf}}
\newcommand{\iO}{{\sf iO}}

\newcommand{\owf}{{\sf OWF}}

\newcommand{\prf}{{\sf PRF}}

\newcommand{\enc}{{\sf Enc}}
\newcommand{\dec}{{\sf Dec}}
\newcommand{\samp}{{\sf Samp}}

\newcommand{\sign}{{\sf Sign}}

\newcommand{\aux}{{\sf aux}}
\newcommand{\AUX}{{\sf AUX}}

\newcommand{\shO}{{\sf shO}}

\newcommand{\kgen}{\mathsf{KeyGen}}

\newcommand{\xv}{{\mathbf{x}}}
\newcommand{\fv}{{\mathbf{f}}}
\newcommand{\hv}{{\mathbf{h}}}
\newcommand{\bv}{{\mathbf{b}}}
\newcommand{\Rm}{{\mathbf{R}}}

\newcommand{\can}{{\sf Can}}

\newcommand{\sk}{{\sf sk}}

\newcommand{\pk}{{\sf pk}}

\newcommand{\tk}{\mathsf{tk}}

\newcommand{\TS}{\mathsf{TS}}

\newcommand{\eval}{\mathsf{Eval}}

\newcommand{\we}{\mathsf{WE}}

\newcommand{\CC}{\mathsf{CC}}

\newcommand{\obfCC}{\mathsf{\widetilde{CC}}}

\newcommand{\hatCC}{\mathsf{\widehat{CC}}}

\newcommand{\np}{\textsf{NP}}

\newcommand{\keygen}{\mathsf{QKeyGen}}

\newcommand{\setup}{\mathsf{Setup}}

\newcommand{\qsk}{\rho_\sk}
\newcommand{\ct}{\mathsf{ct}}

\newcommand{\AS}{\mathsf{CS}}

\newcommand{\iag}{\mathsf{IAG}}

\newcommand{\tokengen}{\mathsf{TokenGen}}

\newcommand{\qtoken}{\mathsf{\ket{tk}}}

\newcommand{\sig}{\mathsf{sig}}

\newcommand{\verify}{\mathsf{Verify}}

\newcommand{\revoke}{\mathsf{Revoke}}

\newcommand{\gentrigger}{\mathsf{GenTrigger}}

\newcommand{\puncture}{\mathsf{Puncture}}

\newcommand{\strongantipiracy}{\mathsf{StrongAntiPiracy}}
\def\P{\textsf{P}} %

\newcommand{\D}{\mathsf{D}}

\newcommand{\hatP}{\hat{\P}}

\newcommand{\param}{\lambda}

\newcommand{\rand}{{\sf rand}}

\newcommand{\cD}{{\mathcal{D}}}
\newcommand{\cN}{{\mathcal{N}}}
\newcommand{\cE}{{\mathcal{E}}}
\newcommand{\cR}{{\mathcal{R}}}
\newcommand{\shift}{{\sf Shift}}
\newcommand{\api}{{\sf API}}
\newcommand{\ati}{{\sf ATI}}

\newcommand{\ti}{{\sf TI}}
\newcommand{\projimp}{{\sf ProjImp}}
\newcommand{\cproj}{{\sf CProj}}
\newcommand{\ag}{{\sf AG}}
\newcommand{\cP}{{\mathcal{P}}}

\newcommand{\sdenc}{{\sf SDEnc}}

\newcommand{\general}{{\sf general}}

\newcommand{\lf}{{\sf lf}}
\newcommand{\lossy}{{\sf lossy}}
\newcommand{\inj}{{\sf inj}}
\newcommand{\compuncertain}{direct product hardness}
\newcommand{\CompUncertain}{Direct Product Hardness}

\newcommand{\revise}[1]{#1}

\newcommand{\anote}[1]{}
\newcommand{\jiahui}[1]{}
\newcommand{\qipeng}[1]{}

\newcommand{\F}{\mathbb{F}}
\newcommand{\N}{\mathbb{N}}

\newcommand{\poly}{{\sf poly}}
\newcommand{\negl}{{\sf negl}}
\newcommand{\subexp}{{\sf subexp}}

\DeclareMathOperator{\Tr}{Tr}

\newcommand{\unpredictableassumptions}{Assuming the existence of post-quantum $\iO$, one-way functions, compute-and-compare obfuscation for the class of unpredictable distributions (as in \Cref{def: cc obf}), and the \revise{strong monogamy-of-entanglement property} (\Cref{conj:strong_monogamy_it})}

\begin{comment}
\newcommand{\unpredictableassumptions}{Assuming the existence of post-quantum $\iO$, one-way functions, extremely lossy functions (ELFs), and the \revise{strong monogamy-of-entanglement property} (\Cref{conj:strong_monogamy})}
\end{comment}

\begin{comment}
\newcommand{\subexpassumptions}{Similarly, assuming the existence of post-quantum $\iO$ and one-way functions, the quantum hardness of LWE, and the \revise{sub-exponentially strong monogamy-of-entanglement property} (\Cref{conj:subexp_strong_monogamy})}
\end{comment}
%

\newcommand{\subexpassumptions}{Similarly, assuming the existence of post-quantum sub-exponentially secure $\iO$ and one-way functions, the quantum hardness of LWE and assuming the \revise{strong monogamy-of-entanglement property} (\Cref{conj:strong_monogamy_it})}

\begin{document}

\pagenumbering{gobble} %

\title{Hidden Cosets and Applications \\to Unclonable Cryptography}

\author[1]{Andrea Coladangelo}
\author[2]{Jiahui Liu}
\author[3]{Qipeng Liu}
\author[4]{Mark Zhandry}
\affil[1]{UC Berkeley,  Simons Institute for the Theory of Computing \& qBraid}
\affil[2]{University of Texas at Austin}
\affil[3]{Princeton University}
\affil[4]{Princeton University \& NTT Research}
\date{\today}

\maketitle

\begin{abstract}
    In 2012, Aaronson and Christiano introduced the idea of \emph{hidden subspace states} to build public-key quantum money [STOC '12]. Since then, this idea has been applied to realize several other cryptographic primitives which enjoy some form of unclonability.
    
    In this work, we study a generalization of hidden subspace states to hidden \emph{coset} states. This notion was considered independently by Vidick and Zhang [Eurocrypt '21], in the context of proofs of quantum knowledge from quantum money schemes. We explore unclonable properties of coset states and several applications:
    \begin{itemize}
        \item We show that, assuming indistinguishability obfuscation ($\iO$), hidden coset states possess a certain \emph{direct product hardness} property, which immediately implies a tokenized signature scheme in the plain model. Previously, a tokenized signature scheme was known only relative to an oracle, from a work of Ben-David and Sattath [QCrypt '17].
        
        \item Combining a tokenized signature scheme with extractable witness encryption, we give a construction of an unclonable decryption scheme in the plain model. The latter primitive was recently proposed by Georgiou and Zhandry [ePrint '20], who gave a construction relative to a classical oracle.
        
        \item We conjecture that coset states satisfy a certain natural (information-theoretic) monogamy-of-entanglement property. Assuming this conjecture is true, we remove the requirement for extractable witness encryption in our unclonable decryption construction, by relying instead on compute-and-compare obfuscation for the class of unpredictable distributions. This conjecture was later proved by Culf and Vidick in a follow-up work. 

        \item %
        Finally, we give a construction of a copy-protection scheme for pseudorandom functions (PRFs) in the plain model. Our scheme is secure either assuming $\iO$, $\owf$ and extractable witness encryption, or assuming $\iO, \owf$, compute-and-compare obfuscation for the class of unpredictable distributions, and the strong %
        monogamy property mentioned above. This is the first example of a copy-protection scheme with provable security in the plain model for a class of functions that is not evasive.
    \end{itemize}
\end{abstract}

\newpage

\pagenumbering{arabic}

\iffull
{
  \hypersetup{linkcolor=Violet}
  \setcounter{tocdepth}{2}
  \tableofcontents
}
\fi

\section{Introduction}
The no-cloning principle of quantum mechanics asserts that quantum information cannot be generically copied. This principle has profound consequences in quantum cryptography, as it puts a fundamental restriction on the possible strategies that a malicious party can implement. One of these consequences is that quantum information enables cryptographic tasks that are provably impossible to realize classically, the most famous example being information-theoretically secure key distribution~\cite{BenBra84}.

Beyond this, the no-cloning principle opens up an exciting avenue to realize cryptographic tasks which enjoy some form of \emph{unclonability}, e.g. quantum money~\cite{wiesner1983conjugate, aaronson2012quantum, farhi2012quantum, zhandry2019quantum, kane2018quantum}, quantum tokens for digital signatures~\cite{ben2016quantum}, copy-protection of programs \cite{aaronson2009quantum, aaronsonnew, coladangelo2020quantum}, and more recently unclonable encryption \cite{gottesman2002uncloneable, Broadbent2019UncloneableQE} and decryption \cite{georgiou-zhandry20}.

In this work, we revisit the \emph{hidden subspace} %
idea proposed by Aaronson and Christiano, which has been employed towards several of the applications above. We propose a generalization of this idea, which involves hidden \emph{cosets} (affine subspaces), and we show applications of this to signature tokens, unclonable decryption and copy-protection.

Given a subspace $A \subseteq \mathbb{F}_2^n$, the corresponding \emph{subspace state} is defined as a uniform superposition over all strings in the subspace $A$, i.e. $$\ket{A} := \frac{1}{\sqrt{|A|}}\sum_{x \in A} \ket{x}\,,$$
The first property that makes this state useful is that applying a Hadamard on all qubits creates a uniform superposition over all strings in $A^{\perp}$, the orthogonal complement of $A$, i.e. $H^{\otimes n} \ket{A} = \ket{A^{\perp}}$.

The second property, which is crucial for constructing unclonable primitives with some form of verification, is the following. Given one copy of $\ket{A}$, where $A \subseteq{F}_2^n$ is uniformly random of dimension $n/2$, it is impossible to produce two copies of $\ket{A}$ except with negligible probability. As shown by~\cite{aaronson2012quantum}, unclonability holds even when given quantum access to oracles for membership in $A$ and $A^{\perp}$, as long as the number of queries is polynomially bounded. On the other hand, such membership oracles allow for verifying the state $\ket{A}$, leading to publicly-verifiable quantum money, where the verification procedure is the following:
\begin{itemize}
    \item Given an alleged quantum money state $\ket{\psi}$, query the oracle for membership in $A$ on input $\ket{\psi}$. Measure the outcome register, and verify that the outcome is $1$.
    \item If so, apply $H^{\otimes n}$ to the query register, and query the oracle for membership in $A^{\perp}$. Measure the outcome register, and accept the money state if the outcome is $1$.
\end{itemize}

It is not difficult to see that the unique state that passes this verification procedure is $\ket{A}$.

In order to obtain a quantum money scheme in the plain model (without oracles), Aaronson and Christiano suggest instantiating the oracles with some form of program  obfuscation. This vision is realized subsequently in \cite{zhandry2019quantum}, where access to the oracles for subspace membership is replaced by a suitable obfuscation of the membership programs, which can be built from indistinguishability obfuscation (\textsf{iO}). More precisely, Zhandry shows that, letting $P_A$ and $P_{A^{\perp}}$ be programs that check membership in $A$ and $A^{\perp}$ respectively, any computationally bounded adversary who receives a uniformly random subspace state $\ket{A}$ together with $\iO(P_A)$ and $\iO(P_{A^{\perp}})$ cannot produce two copies of $\ket{A}$ except with negligible probability.

The subspace state idea was later employed to obtain \emph{quantum tokens} for digital signatures \cite{ben2016quantum}. What these are is best explained by the (award-winning) infographic in \cite{ben2016quantum} (see the ancillary arXiv files there). Concisely, a quantum signature token allows Alice to provide Bob with the ability to sign \emph{one and only one} message in her name, where such signature can be publicly verified using Alice's public key. The construction of quantum tokens for digital signatures from \cite{ben2016quantum} is the following.
\begin{itemize}
    \item Alice samples a uniformly random subspace $A \subseteq \mathbb{F}_2^n$, which constitutes her secret key. A signature token is the state $\ket{A}$.
    \item Anyone in possession of a token $\ket{A}$ can sign message $0$ by outputting a string $v \in A$ (this can be obtained by measuring $\ket{A}$ in the computational basis), and can sign message $1$ by outputting a string $w \in A^{\perp}$ (this can be done by measuring $\ket{A}$ in the Hadamard basis).
    \item Signatures can be publicly verified assuming quantum access to an oracle for subspace membership in $A$ and in $A^{\perp}$ (such access can be thought of as Alice's public key).
\end{itemize}
In order to guarantee security of the scheme, i.e. that Bob cannot produce a valid signature for more than one message, Ben-David and Sattath prove the following strengthening of the original property proven by Aaronson and Christiano. Namely, they show that any query-bounded adversary with quantum access to oracles for membership in $A$ and $A^{\perp}$ cannot produce, except with negligible probability, a pair $(v,w)$ where $v \in A\setminus \{0\}$ and $w \in A^{\perp}\setminus\{0\}$. We refer to this property as a \emph{direct product hardness} property.

The natural step to obtain a signature token scheme in the plain model is to instantiate the subspace membership oracles using $\iO$, analogously to the quantum money application. However, unlike for the case of quantum money, here one runs into a technical barrier, which we expand upon in Section \ref{sec: tech ovw direct product hardness}. 
Thus, a signature token scheme is not known in the plain model, and this has remained an open question since \cite{ben2016quantum}.

In general, a similar difficulty in obtaining schemes that are secure in the plain model as opposed to an oracle model seems prevalent in works about other unclonable primitives. For example, in the case of copy-protection of programs, we know that copy-protection of a large class of evasive programs, namely compute-and-compare programs, is possible with provable non-trivial security against fully malicious adversaries in the quantum random oracle model (QROM) \cite{coladangelo2020quantum}. Other results achieving provable security in the plain model are secure only against a restricted class of adversaries \cite{ananth2020secure, kitagawa-takashi2020ssl, broadbent2021secure}. To make the contrast between plain model and oracle model even more stark, all unlearnable programs can be copy-protected assuming access to (highly structured) oracles \cite{aaronsonnew}, but we know, on the other hand, that a copy-protection scheme for all unlearnable programs in the plain model does not exist (assuming Learning With Errors is hard for quantum computers) \cite{ananth2020secure}.

Likewise, for the recently proposed task of unclonable decryption, the only currently known scheme is secure only in a model with access to subspace membership oracles \cite{georgiou-zhandry20}.

\subsection{Our Results}
\label{sec: results}

We study a generalization of subspace states, which we refer to as \emph{coset} states. This notion has also been studied independently in a work of Vidick and Zhang \cite{vidick2021classical}, in the context of proofs of quantum knowledge from quantum money schemes. 

For $A\subseteq \mathbb{F}_2^n$, and $s,s' \in \mathbb{F}_2^n$, the corresponding coset state is:

$$ \ket{A_{s,s'}} := \sum_{x \in A} (-1)^{\langle x , s' \rangle} \ket{x+s} \,,$$

where here $\langle x , s' \rangle$ denotes the inner product of $x$ and $s'$. In the computational basis, the quantum state is a superposition over all elements in the coset $A+s$, while, in the Hadamard basis, it is a superposition over all elements in  $A^\perp+s'$. Let $P_{A+s}$ and $P_{A^\perp + s'}$ be programs that check membership in the cosets $A + s$ and $A^\perp + s'$ respectively. To check if a state $\ket{\psi}$ is a coset state with respect to $A, s,s'$, one can compute $P_{A+s}$ in the computational basis, and check that the outcome is $1$; then, apply $H^{\otimes n}$ followed by
$P_{A^\perp + s'}$, and check that the outcome is $1$.

\paragraph{Computational Direct Product Hardness.} Our first technical result is establishing a \emph{computational direct product hardness} property in the plain model, assuming post-quantum $\mathsf{iO}$ and one-way functions.

\begin{theorem}[Informal]
Any quantum polynomial-time adversary who receives $\ket{A_{s,s'}}$ and programs $\iO(P_{A+s})$ and $\iO(P_{A^\perp+s'})$ for uniformly random $A \subseteq{\mathbb{F}_2^n}$, $s,s' \in \mathbb{F}_2^n$, cannot produce a pair $(v,w) \in (A+s) \times (A^{\perp} + s')$, except with negligible probability in $n$.

\end{theorem}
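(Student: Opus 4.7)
The plan is to prove the statement via a hybrid argument using subspace-hiding obfuscation ($\shO$), which is constructible from $\iO$ and one-way functions, together with an information-theoretic argument in the final hybrid.

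First, I would replace each $\iO$ obfuscation with a suitable subspace-hiding obfuscation of the coset, and then use the subspace-hiding property to replace $\shO(A+s)$ with $\shO(T_1 + s)$ for a uniformly random superspace $T_1 \supseteq A$ of dimension $3n/4$; similarly, replace $\shO(A^\perp + s')$ with $\shO(T_2 + s')$ for a uniformly random superspace $T_2 \supseteq A^\perp$ of dimension $3n/4$. Each replacement is computationally indistinguishable by $\iO$ security together with the subspace-hiding property. An auxiliary technical step is that SHO for subspaces extends to cosets via a standard shift trick: one obfuscates a program that, on input $y$, subtracts the shift $s$ and then invokes a subspace-membership obfuscation, and appeals to $\iO$ on functionally equivalent circuits.

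In the resulting hybrid, the adversary is given $\ket{A_{s,s'}}$, $\shO(T_1 + s)$, and $\shO(T_2 + s')$, and must still output a pair $(v, w) \in (A+s) \times (A^\perp + s')$. The residual uncertainty in $A$ given $T_1$ and $T_2$ is the family of dimension-$n/2$ subspaces satisfying $T_2^\perp \subseteq A \subseteq T_1$, so the obfuscations alone do not determine $A$ or the cosets $A+s$ and $A^\perp+s'$; all useful information about these objects must therefore come from the coset state itself.

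The main obstacle is this final information-theoretic step: showing that even computationally unbounded adversaries cannot output such a pair except with negligible probability. The quantum heart of the argument is that measuring $\ket{A_{s,s'}}$ in the computational basis yields a uniformly random element of $A + s$, while measuring in the Hadamard basis yields a uniformly random element of $A^\perp + s'$, and these two measurements are incompatible. I expect the required direct product bound to follow from an uncertainty-style or monogamy argument tailored to the ``slab'' structure induced by $T_1$ and $T_2$, analogous in spirit to the oracle-model proof of Ben-David and Sattath but adapted to cosets rather than subspaces. Formalizing this step — in particular, arguing that the extra classical side information carried by $\shO(T_1 + s)$ and $\shO(T_2 + s')$ does not meaningfully help beyond what computational or Hadamard measurements of the coset state already provide — is where I anticipate the bulk of the technical effort.
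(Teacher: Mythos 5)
Your hybrid sequence matches the paper's at the level of headline moves—shift-and-subspace-hide each coset program, then lift $A$ (resp.\ $A^\perp$) to a random superspace $T_1$ (resp.\ $T_2$) of dimension $3n/4$—but you have dropped the step the paper singles out as the key reason coset states succeed where subspace states fail, and without it the final hybrid you describe is not information-theoretically hard.

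After replacing the coset programs by their superspace analogues, the paper inserts one more $\iO$-based hybrid on each side: replace the hardcoded shift $s$ by $t = s + w$ for a uniformly random $w \in T_1$, which is legitimate because $T_1 + s = T_1 + t$ makes the two circuits functionally equivalent; similarly replace $s'$ by $t' = s' + w'$ with $w' \in T_2$. Only afterwards is the challenger allowed to hand $T_1, T_2, t, t'$ to the adversary in the clear. In your outline the adversary instead keeps the actual obfuscated circuits $\shO(T_1+s)$ and $\shO(T_2+s')$ at the information-theoretic stage, and that is fatal: $\shO$ (built from $\iO$) is only computationally secure, so an unbounded adversary may recover the exact representative $s$ that was fed to the obfuscator rather than merely the membership function it computes. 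Since $s \in A+s$ and $s' \in A^\perp+s'$ hold for \emph{every} $A$, recovering $(s,s')$ trivially wins the direct-product game. This also shows that your discussion of ``residual uncertainty in $A$'' targets the wrong quantity—the thing you must keep hidden from the unbounded adversary is the coset $A+s$ (equivalently $s \bmod A$), not $A$ itself. With the masking hybrid in place, the cleartext $t$ leaks only $s \bmod T_1$, leaving $\dim T_1 - \dim A = n/4$ residual degrees of freedom in $A+s$, and symmetrically for $A^\perp + s'$.

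The paper then closes the argument not by a fresh uncertainty or monogamy bound on the ``slab'' game as you anticipate, but by a reduction: after a change of basis placing $T_1$ and $T_2^\perp$ into canonical coordinate form, and a random self-reduction that pads a coset state over $\mathbb{F}_2^{n/2}$ up to a full coset state over $\mathbb{F}_2^n$ for a subspace squeezed between $T_2^\perp$ and $T_1$ (while programming $t,t'$ consistently), any adversary for the residual game yields an adversary for the already-established information-theoretic coset direct-product bound in $\mathbb{F}_2^{n/2}$. If you add the masking hybrid and flesh out that embedding, your outline converges to the paper's proof.
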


As we mentioned earlier, this is in contrast to regular subspace states, for which a similar direct product hardness
is currently not known in the plain model, but only in a model with access to subspace membership oracles.

We then apply this property to obtain the following primitives.

\paragraph{Signature Tokens.} Our direct product hardness immediately implies a \emph{signature token} scheme in the plain model (from  post-quantum $\iO$ and one-way functions), thus resolving the main question left open in \cite{ben2016quantum}.

\begin{theorem}[Informal]
Assuming post-quantum $\iO$ and one-way functions, there exists a signature token scheme. 
\end{theorem}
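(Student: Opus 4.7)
The plan is to construct the signature token scheme essentially by following the Ben-David--Sattath template, but using coset states in place of subspace states and instantiating the subspace-membership oracles using $\iO$. Concretely, $\gen(1^n)$ samples a uniformly random subspace $A \subseteq \mathbb{F}_2^n$ of dimension $n/2$ and uniformly random shifts $s, s' \in \mathbb{F}_2^n$; it outputs the secret key $(A,s,s')$, the signing token $\ket{A_{s,s'}}$, and the public verification key $\vk = (\iO(P_{A+s}), \iO(P_{A^\perp + s'}))$, where $P_{A+s}$ and $P_{A^\perp + s'}$ are the canonical membership-testing circuits (padded to a common size as in standard $\iO$ constructions). To sign the message $0$, the signer measures the token in the computational basis to obtain some $v \in A+s$; to sign the message $1$, the signer first applies $H^{\otimes n}$ and then measures, obtaining $w \in A^\perp + s'$. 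Verification of $(m, \sigma)$ simply runs the corresponding obfuscated program on $\sigma$ and checks that it outputs $1$.

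Correctness is immediate from the description of $\ket{A_{s,s'}}$ in the computational and Hadamard bases, together with the correctness of $\iO$. For one-time unforgeability, I would show that any efficient adversary that, given $\ket{A_{s,s'}}$ and $\vk$, produces valid signatures for two distinct messages must in particular produce a pair $(v,w)$ with $v \in A+s$ and $w \in A^\perp + s'$; this is exactly the event ruled out by the computational direct product hardness theorem stated above, so the adversary's success probability is negligible. Thus the one-bit message version of the scheme is a secure quantum signature token scheme from post-quantum $\iO$ and one-way functions (the latter being implicit in $\iO$-based constructions and used, e.g., for punctured-PRF arguments inside the direct product hardness proof).

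To handle messages of arbitrary polynomial length $\ell$, I would use the standard parallel-repetition approach: $\gen$ samples $\ell$ independent tuples $(A_i, s_i, s'_i)$ and outputs the token $\bigotimes_{i=1}^{\ell} \ket{(A_i)_{s_i, s'_i}}$ together with the public key $\{(\iO(P_{A_i+s_i}), \iO(P_{A_i^\perp + s'_i}))\}_{i \in [\ell]}$; to sign a message $m \in \{0,1\}^\ell$, the signer uses the $i$-th coset state to sign the bit $m_i$ as above. To prevent mix-and-match attacks (where the adversary signs two messages $m \neq m'$ that agree on most bits but differ in at least one), one composes with a collision-resistant hash, or more cleanly uses the fact that an adversary producing two signatures on $m \neq m'$ must in particular produce, for some index $i$ where $m_i \neq m'_i$, a pair $(v_i, w_i) \in (A_i + s_i) \times (A_i^\perp + s'_i)$, which again contradicts direct product hardness applied to the $i$-th block by a standard hybrid argument.

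The main step I expect to require care is precisely this last reduction: I need to argue that a successful forger on the $\ell$-block scheme can be turned into a successful direct-product attacker on a single coset, while only being given one coset state and one pair of obfuscated programs for that coordinate. The natural hybrid fixes an index $i$ and has the reduction sample all other $(A_j, s_j, s'_j)$ honestly, embed the challenge coset state and obfuscated programs in position $i$, and run the forger; whenever the forger outputs two signatures on messages differing in position $i$, the reduction outputs the $i$-th components as its direct-product pair. Apart from this bookkeeping, no further cryptographic argument is needed, since all the heavy lifting has already been done by the computational direct product hardness theorem.
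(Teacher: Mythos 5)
Your one-bit, one-time construction is exactly the paper's Construction: sample $(A,s,s')$, issue the coset state $\ket{A_{s,s'}}$, publish $\bigl(\iO(P_{A+s}),\iO(P_{A^\perp+s'})\bigr)$, sign by measuring in the computational or Hadamard basis, and argue that a double-signer breaks the computational direct product hardness theorem. That part is correct and is the same argument the paper makes.

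The place where you diverge, and where there is a real gap, is the lift from one-bit $1$-unforgeability to the full tokenized signature scheme. The paper's definition of a tokenized signature scheme (its unforgeability game) is a \emph{multi-token} definition: the adversary is handed $\ell = \poly(\lambda)$ tokens, all generated by $\tokengen(\sk)$ for the \emph{same} $\sk$, and must not output $\ell+1$ distinct valid message--signature pairs. Your parallel-repetition construction only addresses message length: it still generates a single coset state (or $\ell$ coset states) per token, with the same $(A_i,s_i,s_i')$ hard-coded in $\sk$ for every token. If the issuer hands out even two tokens, the adversary receives two copies of each $\ket{(A_i)_{s_i,s_i'}}$, and with polynomially many copies can reconstruct $A_i$ by measuring and solving a linear system, after which forging is trivial. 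So your scheme, as written, is only $1$-unforgeable, not unforgeable in the paper's sense. Your hybrid argument over indices $i$ for mix-and-match attacks is sound for the one-time multi-bit case, but it does not touch the multi-token problem.

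The paper handles this by appealing to the black-box transformation of Ben-David and Sattath (their Theorem~\ref{thm: onebit onetime to full}): a one-bit $1$-unforgeable TS scheme together with a post-quantum digital signature scheme (which follows from one-way functions) yields a full-blown TS scheme. Roughly, each freshly issued token carries its own freshly sampled $(A,s,s')$ plus a classical certificate---a digital signature under the master key---binding the per-token obfuscated verification programs to the scheme's long-term public key. This is the ingredient your proposal is missing. As a minor secondary point, invoking a collision-resistant hash for arbitrary-length messages goes beyond the stated assumptions (CRH is not known to follow from one-way functions); a target-collision-resistant / universal one-way hash would be the OWF-based primitive to reach for, or one can simply absorb the domain-extension step into the Ben-David--Sattath transformation as the paper does.
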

In this signature token scheme, the public verification key is the pair $(\iO(P_{A+s}), \iO(P_{A^\perp+s'}))$, and a signature token is the coset state $\ket {A_{s, s'}}$. Producing signatures for both messages $0$ and $1$ is equivalent to finding elements in both $A+s$ and $A^{\perp} + s'$, which violates our computational direct product hardness property.

\paragraph{Unclonable Decryption.} Unclonable decryption, also known as \emph{single-decryptor encryption}, was introduced in \cite{georgiou-zhandry20}. 
Informally, a single-decryptor encryption scheme is a (public-key) encryption scheme in which the secret key is a \textit{quantum state}. The scheme satisfies a standard notion of security (in our case, CPA security), as well as the following additional security guarantee: no efficient quantum algorithm with one decryption key is able to produce two working decryption keys. 
We build a single-decryptor encryption scheme using a signature tokens scheme and extractable witness encryption in a black-box way. By leveraging our previous result about the existence of a signature token scheme in the plain model, we are able to prove security without the need for the structured oracles used in the original construction of \cite{georgiou-zhandry20}. 

\begin{theorem}[Informal]
Assuming post-quantum $\iO$, one-way functions, and extractable witness encryption, there exists a public-key single-decryptor encryption scheme.
\end{theorem}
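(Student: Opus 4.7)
The plan is to combine the signature-token scheme from the previous theorem with extractable witness encryption (xWE) in a black-box way. Concretely, the public key of the single-decryptor scheme contains the signature verification key $\vk$; the quantum secret key $\rho_{\sk}$ is a signature token (a coset state $\ket{A_{s,s'}}$, or a tensor product of such states when signing multi-bit messages). To encrypt $m$ under $\pk$, I sample a uniformly random challenge $r$ from a super-polynomial domain and output $(r, \mathrm{xWE.Enc}(m; x_r))$, where $x_r$ is the NP statement ``there exists $\sigma$ with $\mathrm{Verify}(\vk, r, \sigma) = 1$.'' To decrypt $(r, c)$, I use $\rho_{\sk}$ to produce a valid signature $\sigma$ on $r$ and then use $\sigma$ as the witness to recover $m$ from $c$ via WE decoding.

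Correctness is immediate from the correctness of the signature-token scheme (the honest holder of $\rho_{\sk}$ can sign any challenge $r$) together with the correctness of witness encryption. For CPA security, every honest ciphertext is a WE encryption under a \emph{satisfiable} statement, so the standard security of WE already yields indistinguishability of encryptions; the reduction is routine and does not consume the token.

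The main content is the anti-piracy argument. Suppose for contradiction that a QPT pirate $\cA$, given $\pk$ and one copy of $\rho_{\sk}$, outputs a bipartite quantum state which, when split between two separated decryptors, lets both decrypt freshly sampled challenge-ciphertexts with non-negligible advantage. Fixing the two independently sampled challenges $r_1, r_2$, each decryptor becomes an adversary that distinguishes $\mathrm{xWE.Enc}(m_0; x_{r_i})$ from $\mathrm{xWE.Enc}(m_1; x_{r_i})$ with non-negligible advantage. Invoking extractability of xWE independently on each side yields valid signatures $\sigma_1$ on $r_1$ and $\sigma_2$ on $r_2$ with non-negligible probability; since $r_1, r_2$ are independently uniform in a super-polynomial domain we have $r_1 \neq r_2$ except with negligible probability, and so we obtain two valid signatures on distinct messages under the same $\vk$. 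This violates the one-time unforgeability of the signature-token scheme, which was itself derived from the computational direct product hardness of coset states proved earlier in the excerpt.

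The step I expect to be the main obstacle is making the simultaneous quantum extraction rigorous: extracting a witness from a quantum decryption program can disturb it, so I must ensure that the two xWE extractions are carried out in a way that does not interfere with each other. This is tractable here because, once the bipartite state is split, the two registers live with non-communicating parties and the extractors are run on disjoint quantum systems; no extractor is ever applied twice to the same register. A secondary subtlety is bridging ``$\rho_i$ decrypts a random ciphertext correctly'' (the anti-piracy success condition) with the \emph{fixed-statement} distinguishing advantage that xWE extractability assumes: this is handled by conditioning on the sampled challenge $r_i$, averaging, and invoking a standard search-to-decision style reduction so that extractability applies for a non-negligible fraction of $r_i$.
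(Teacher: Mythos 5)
Your construction coincides with the paper's Construction~\ref{cons: unclonable dec 2}: the public key is a tokenized-signature verification key, the quantum secret key is a signature token, and a ciphertext is a witness encryption of the plaintext under the statement ``there exists a valid signature on the random challenge $r$.'' Whether you use one full-blown token signing a long $r$ or (as the paper does) $\kappa$ parallel one-bit tokens each signing a bit $r_i$ is essentially a packaging choice; the forgery at the end is extracted at the coordinate where $r_1$ and $r_2$ differ in either case.

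The genuine gap is in the simultaneous-extraction step. You identify it correctly as the main obstacle, but the resolution you offer --- ``the two registers live with non-communicating parties and the extractors are run on disjoint quantum systems; no extractor is ever applied twice to the same register'' --- does not close it. Register disjointness says nothing about the \emph{conditional} state on $R_2$ after measuring $R_1$. Consider a pirate whose bipartite output is a uniform mixture of $\ket{G}\ket{B}$ and $\ket{B}\ket{G}$, where $G$ is a perfect decryptor and $B$ is useless: each marginal is half-good, so each xWE extractor fires with non-negligible probability \emph{on its own}, yet there is no single run in which both fire. To turn the pirate into a double-forgery you must argue that with non-negligible probability both registers are \emph{simultaneously} $\gamma$-good decryptors, and moreover that this joint goodness survives running the first extractor (a measurement on $R_1$).

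This is exactly what the paper supplies and you are missing. It applies approximate threshold implementations of the decryption-success POVMs to $R_1$ and $R_2$; conditioned on both accepting (an event with probability essentially the pirate's advantage), the post-measurement state lies in the tensor product of the $(\ge 1/2+\gamma)$-eigenspaces of the two success projectors. Claim~\ref{claim:unchanged_r2} then shows that this tensor-product structure is preserved under \emph{any} POVM on one side, so after running the first xWE extractor the second register is still a $\gamma$-good decryptor and the second extractor still succeeds with inverse-polynomial probability. Without this projection-then-commute argument your reduction only shows that a forgery on $r_1$ and, separately, a forgery on $r_2$ are each attainable --- not that two forgeries on distinct messages are attainable in the same execution. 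You need to add the threshold-implementation step (or an equivalent argument that the two success events are positively correlated and measurement-robust) before invoking extractability on the second register.
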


\paragraph{Copy-protection of PRFs.} The notion of a copy-protection scheme was introduced by Aaronson in \cite{aaronson2009quantum} and recently explored further in \cite{ananth2020secure, coladangelo2020quantum, aaronsonnew, broadbent2021secure}.

In a copy-protection scheme, the vendor of a classical program wishes to provide a user the ability to run the program on any input, while ensuring that the functionality cannot be ``pirated'': informally, the adversary, given one copy of the program, cannot produce two programs that enable evaluating the program correctly.

Copy-protection is trivially impossible classically, since classical information can always be copied. This impossibility can be in principle circumvented if the classical program is encoded in a quantum state, due to the no-cloning principle. However, positive results have so far been limited. A copy-protection scheme \cite{coladangelo2020quantum} is known for a class of evasive programs, known as compute-and-compare programs, with provable non-trivial security against fully malicious adversaries in the Quantum Random Oracle Model (QROM). Other schemes in the plain model are only secure against restricted classes of adversaries (which behave honestly in certain parts of the protocol) \cite{ananth2020secure, kitagawa-takashi2020ssl,   broadbent2021secure}. Copy-protection schemes for more general functionalities are known \cite{aaronsonnew}, but these are only secure assuming very structured oracles (which depend on the functionality that is being copy-protected).

In this work, we present a copy-protection scheme for a family of pseudorandom functions (PRFs). In such a scheme, for any classical key $K$ for the PRF, anyone in possession of a \textit{quantum} key $\rho_K$ is able to evaluate $PRF(K, x)$ on any input $x$. 

The copy-protection property that our scheme satisfies is that given a quantum key $\rho_K$, no efficient algorithm can produce two (possibly entangled) keys such that these two keys allow for simultaneous correct evaluation on uniformly random inputs, with noticeable probability.

Similarly to the unclonable decryption scheme, our copy-protection scheme is secure assuming post-quantum $\iO$, one-way functions, and extractable witness encryption. 

\begin{theorem}[Informal]
Assuming post-quantum $\iO$, one-way functions, and extractable witness encryption, there exists a copy-protection scheme for a family of PRFs.
\end{theorem}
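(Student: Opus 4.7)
The plan is to build a copy-protection scheme whose anti-piracy security reduces to the computational direct product hardness for coset states proved earlier in the paper, with extractable witness encryption serving as the bridge, following the same general template as our single-decryptor encryption construction.

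\textbf{Construction.} The master key samples a puncturable PRF key $K$, a uniform subspace $A \subseteq \F_2^n$ of dimension $n/2$, and uniform shifts $s, s' \in \F_2^n$. The quantum key is $\ket{A_{s,s'}}$ together with an obfuscation $\iO(P)$, where $P$ on input $x$ outputs a fresh extractable witness encryption ciphertext of $\prf(K, x)$ under the statement ``there exists $v \in A+s$'' when $x$ lies in a fixed set $S_0$ (e.g.\ determined by the first bit of $x$) and ``there exists $w \in A^\perp + s'$'' otherwise. Honest evaluation on input $x$ runs $P(x)$, then measures $\ket{A_{s,s'}}$ in the computational or Hadamard basis as dictated by $x$, and uses the outcome as the witness to recover $\prf(K, x)$. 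Unbounded reuse is supported either by attaching many parallel coset states per key, or by running the evaluation coherently via state-preserving projective verification in the spirit of the Aaronson--Christiano money verification.

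\textbf{Security reduction.} Given an adversary that splits one quantum key into two (possibly entangled) pirate decoders $(\mathcal{D}_1, \mathcal{D}_2)$ that compute $\prf(K, \cdot)$ correctly on uniform inputs with noticeable probability, I would pick $x_1 \in S_0$ and $x_2 \notin S_0$, generate fresh ciphertexts from $P(x_1), P(x_2)$, and feed them to $\mathcal{D}_1, \mathcal{D}_2$ respectively. A hybrid argument that punctures $K$ at $(x_1, x_2)$ replaces $\prf(K, x_i)$ inside the corresponding ciphertext by an independent uniform string; since the underlying witness encryption hides the plaintext when no witness is known, the adversary cannot distinguish this hybrid. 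In the hybrid, outputting the correct value forces each decoder to have committed to a witness, and extractability of the $\we$ yields from $\mathcal{D}_1$ an element of $A+s$ and from $\mathcal{D}_2$ an element of $A^\perp + s'$. This pair in $(A+s) \times (A^\perp + s')$ directly contradicts the direct product hardness theorem of the previous section, finishing the reduction.

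\textbf{Main obstacle.} The principal difficulty is lifting witness-encryption extractability, which is normally formulated against classical decryptors, to quantum, possibly stateful and entangled, pirate decoders. I expect to invoke projective and threshold implementation techniques (\`a la Zhandry) to stabilise each decoder's success probability on a fresh ciphertext, run the extractor coherently on one side, and argue that the partner decoder's success probability survives the disturbance well enough to perform its own extraction. A secondary subtlety is \emph{simultaneity}: both extractions must succeed at once with non-negligible probability over the joint choice of $(x_1, x_2)$ and the shared entanglement; this is handled by bounding the joint success probability below via the stabilised individual success probabilities together with the tensor-product structure of the two non-communicating decoders after the adversary's partition.
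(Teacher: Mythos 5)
Your high-level template — gate the PRF output behind coset membership inside an obfuscated program, feed the pirates' decoders the two challenges, and use witness-encryption extractability plus threshold-implementation stabilisation to pull out one element of $A+s$ and one of $A^\perp+s'$ simultaneously, contradicting direct-product hardness — is the right shape and matches the paper's overall strategy (the paper routes this through the single-decryptor scheme of Section~\ref{sec: unclonable dec witness enc}, which is purely a modularity choice). You also correctly flag the simultaneity issue and the need for Claim~\ref{claim:unchanged_r2}-style arguments. However, there is a genuine gap in your puncturing hybrid: you say ``a hybrid argument that punctures $K$ at $(x_1, x_2)$ replaces $\prf(K, x_i)$ inside the corresponding ciphertext by an independent uniform string,'' but this hybrid is not available, because the challenge inputs $x_1, x_2$ are drawn uniformly by the challenger \emph{after} the adversary has been handed $\iO(P)$ (and after the adversary has already split into decoders). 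To puncture $K$ at $x_1, x_2$ in the program you would need to know $x_1, x_2$ when generating $\iO(P)$, and a naive ``selective'' version in which the reduction guesses them up front loses a $2^{-2n}$ factor. Without this hop, you cannot invoke either the PRF puncturing security or the witness-encryption extractability: the adversary holds a program that contains $K$ in the clear (modulo $\iO$), so its ability to output $\prf(K,x_1)$ is not \emph{prima facie} evidence of decryption, and the witness extractor has nothing to bite on.

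The paper's resolution is the Sahai--Waters \emph{hidden trigger} technique (Section~\ref{sec:cp_wPRF}, Figure~\ref{fig:program_p}, Lemma~\ref{lem:hidden_trigger}), which your proposal does not mention. The obfuscated program $P$ is constructed with a sparse ``hidden trigger'' mode, decoded from the input itself, alongside the normal PRF-evaluation mode; the key indistinguishability lemma shows that a pair of uniformly random challenge inputs is computationally indistinguishable (given $\rho_K$) from a pair of hidden-trigger inputs that carry encryptions of \emph{fresh random} values. This is what lets one move the ``puncturing'' into the adaptively chosen inputs rather than into the pre-committed program, after which the reduction to single-decryptor anti-piracy (and hence to direct-product hardness via signature-token unforgeability) is essentially the one you sketch. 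The lemma also has a subtlety you would need to reproduce: it must be proved for the \emph{pair} $(u,w)$ jointly, since a random hidden-trigger input cannot be sampled from public information, so a one-at-a-time hybrid does not work. Your construction also differs cosmetically in that $P$ itself emits a witness-encryption ciphertext rather than releasing $F(K,x)$ after a membership check; that variant can in principle work, but it does not avoid the adaptivity problem above, and you would additionally need to be careful that coherent decryption (to preserve the coset states for reuse) un-computes cleanly.
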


We remark that our scheme requires a particular kind of PRFs, namely puncturing and extracting with small enough error (we refer to Section \ref{sec: PRF prelim} for precise definitions). However, PRFs satisfying these properties can be built from just one-way functions.
The existence of extractable witness encryption is considered to be a very strong assumption. In particular, it was shown to be impossible in general (under a special-purpose obfuscation conjecture) \cite{garg2017implausibility}. However, we emphasize that no provably secure copy-protection schemes with standard malicious security in the plain model are known at all. %
Given the central role of PRFs in the construction of many other cryptographic primitives, we expect that our copy-protection scheme, and the techniques developed along the way, will play an important role as a building block to realize \textit{unclonable} versions of other primitives.

\vspace{2mm}
To avoid the use of extractable witness encryption, we put forth a (information-theoretic) conjecture about a \emph{monogamy of entanglement} property of coset states, which we discuss below
\footnote{This conjecture is proved true in the follow-up work by Culf and Vidick \cite{culfvidick2021cosetsproof} after the first version of this paper.}.

Assuming this conjecture is true, we show that both unclonable decryption and copy-protection of PRFs can be constructed \emph{without} extractable witness encryption, by relying instead on compute-and-compare obfuscation \cite{wichs2017obfuscating, goyal2017lockable} (more details on the latter can be found in Section \ref{sec:cc}).
\begin{theorem}[Informal]
Assuming post-quantum $\iO$, one-way functions, and obfuscation of compute-and-compare programs against unpredictable distributions, there exist: (i) a public-key single-decryptor encryption scheme, and (ii) a copy-protection scheme for a family of PRFs.
\end{theorem}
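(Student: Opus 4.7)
The plan is to modify the extractable-witness-encryption-based constructions for single-decryptor encryption and PRF copy-protection so that the message (or PRF output) is ``locked'' behind the output of a compute-and-compare program rather than behind a witness-encryption statement. The strong monogamy-of-entanglement property of coset states will play the role that computational direct-product hardness plays in the extractable-WE-based constructions: it guarantees that, for any efficient splitting adversary, the relevant lock values are unpredictable, which is exactly the hypothesis required to invoke security of compute-and-compare obfuscation against unpredictable distributions.

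For part (i), the public key of the single-decryptor scheme is $(\iO(P_{A+s}), \iO(P_{A^\perp+s'}))$ and the quantum secret key is $\ket{A_{s,s'}}$. To encrypt a message $m$ together with a bit $b$, sample randomness $r$, set the lock value $y := \prf_K(r,b)$ for an injective puncturable PRF, and output $(r, b, \ccobf(\mathsf{CC}[f_b, y, m]))$, where $f_b$ is the program that applies $\prf_K(r, \cdot)$ to its input $v$ if $v$ lies in the advertised coset ($A+s$ when $b=0$, $A^\perp+s'$ when $b=1$) and outputs $\bot$ otherwise. To decrypt, the key holder measures $\ket{A_{s,s'}}$ in the computational or Hadamard basis according to $b$, obtains a coset element $v$, and runs the compute-and-compare obfuscation on $v$. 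CPA security reduces, via $\iO$ hybrids on the membership programs and on the obfuscated $\mathsf{CC}$ circuit, to the unpredictability of $y$ given only the public key and ciphertext, which in turn follows from subspace-hiding for cosets.

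For part (ii), the copy-protected PRF key consists of $n$ independent coset states $\ket{(A_i)_{s_i,s_i'}}$ together with an $\iO$ of an evaluation program $E$ that, on classical input $x$, uses a hidden-trigger mechanism to compute a pseudorandom basis string $\theta_x \in \{0,1\}^n$, verifies for each $i$ that the provided $v_i$ lies in $A_i+s_i$ or $A_i^\perp+s_i'$ as dictated by $\theta_x$, and outputs $\prf(K,x)$ if all $n$ checks succeed. The holder evaluates on $x$ by computing $\theta_x$, measuring each coset state in the indicated basis, and calling $E$ on the outcomes. Correctness and pseudorandomness follow from standard hidden-trigger and puncturing arguments for watermarkable PRFs; compute-and-compare obfuscation applied to the PRF output now plays the role of extractable witness encryption in hiding $\prf(K,x)$ from any party without valid coset witnesses.

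The main obstacle in both proofs is the anti-piracy reduction to monogamy. Suppose a splitting adversary outputs $(\rho_1, \rho_2)$ that each succeed on fresh random challenges with non-negligible advantage. We want to extract $v^{(1)} \in A+s$ from $\rho_1$ and $v^{(2)} \in A^\perp + s'$ from $\rho_2$, violating the monogamy conjecture. Compute-and-compare obfuscation for unpredictable distributions provides only simulation security, not a built-in extractor, so the reduction must proceed indirectly: if the lock $y$ were unpredictable given $\rho_i$ and its challenge, simulation security of $\ccobf$ would force the ciphertext to carry no information about $m$, contradicting the adversary's advantage; hence $\rho_i$ together with its challenge must be able to predict $y$, and via $\iO$-hybrids on the hard-coded membership programs this predictor can be coherently converted into an algorithm that outputs an actual coset element in the appropriate subspace. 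Executing both extractions in parallel on the two halves of the post-split state, while preserving the entanglement structure needed for the final reduction, is the delicate step that must be engineered so that the joint output exactly matches the strong monogamy game.
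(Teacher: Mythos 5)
Your reduction strategy---if neither split half can predict the lock value, then CC-obfuscation security makes the ciphertext carry no information about $m$, contradicting the pirate's advantage; so both halves must be predictors, which you hand to the strong monogamy game---is exactly the paper's, and identifying it is the main conceptual step. However, two pieces of the execution are off. First, the lock. Setting $y := \prf_K(r,b)$ breaks correctness: the decryptor's measured coset element $v \in \mathbb{F}_2^n$ never satisfies $\prf_K(r,v) = \prf_K(r,b)$ under injectivity (the types do not even match). Even after patching to $y := \prf_K(r,\can_A(s))$, you lose extractability: the CC-obfuscation unpredictability argument only gives you a predictor for $y$, and an injective PRF is not invertible, so you cannot recover a coset element from $y$. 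The paper instead takes the lock to be $\can_A(s)$ (or $\can_{A^\perp}(s')$) directly---concatenated over $\kappa$ independent cosets indexed by a random $r \in \{0,1\}^\kappa$, so the two sides' challenges differ at some index with overwhelming probability---and then a $y$-predictor given the subspace descriptions is, verbatim, a winning player in the monogamy game; no ``coherent conversion via $\iO$-hybrids'' is required or, as far as I can see, available.

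Second---and this is the real gap---the step you flag as ``delicate \ldots\ must be engineered'' is the heart of the proof, and you leave it unsolved. Measuring one half of an entangled state to extract a coset element generically damages the other half. The paper resolves this with Zhandry's projective/threshold-implementation machinery: one first applies approximate threshold tests $\ati$ to both registers; conditioned on both accepting (which occurs with noticeable probability whenever the pirate wins, by \Cref{lem:ati_2d}), the residual bipartite state has each marginal supported on high-eigenvalue eigenvectors of the good-decryptor projector. \Cref{claim:unchanged_r2} then shows this support structure is invariant under any POVM applied to the opposite register, so extracting from $R_1$ leaves $R_2$ still a good decryptor and a second extraction still succeeds with noticeable probability. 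Without this machinery (or an equivalent idea) the reduction is incomplete; ``preserving the entanglement structure'' is not something that can be arranged in any obvious elementary way. As a minor aside, your part (ii) places the hidden-trigger mechanism in the wrong role: in the paper the basis string is just a fixed prefix $x_0$ of the input $x$, and hidden triggers appear only in the security proof, where they replace uniformly random challenge inputs with trigger inputs encoding a single-decryptor ciphertext, so that correct evaluation on the challenges becomes a break of unclonable decryption.
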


As potential evidence in support of the monogamy-of-entanglement conjecture, we prove a weaker version of the monogamy of entanglement property, which we believe will still be of independent interest (more details on this are below).

\begin{remark}While $\iO$ was recently constructed based on widely-believed  computational assumptions~\cite{JLS20-io-wellfounded}, the latter construction is not quantum resistant, and the situation is less clear quantumly. However, several works have proposed candidate post-quantum obfuscation schemes ~\cite{BGMZ18, WeeWichs20,BDGM20}, and based on these works $\iO$ seems plausible in the post-quantum setting as well.
\end{remark}

\begin{remark}
Compute-and-compare obfuscation against unpredictable distributions is known to exist assuming LWE (or $\iO$) and assuming the existence of Extremely Lossy Functions (ELFs) \cite{zhandry2019magic} \cite{wichs2017obfuscating, goyal2017lockable}. Unfortunately, the only known constructions of ELFs rely on hardness assumptions that are broken by quantum computers (exponential hardness of decisional Diffie-Hellman). To remedy this, we give a construction of computate-and-compare obfuscation against \emph{sub-exponentially} unpredictable distributions, from plain LWE (see Theorem \ref{thm: cc sub-exp from lwe}, and its proof in \Cref{sec:CC_quantum_aux}). The latter weaker obfuscation is sufficient to prove security of our single-decryptor encryption scheme, and copy-protection scheme for PRFs, if one additionally assumes \emph{sub-exponentially} secure $\iO$ and one-way functions.
\end{remark}

\paragraph{Monogamy-of-Entanglement.}
As previously mentioned, we conjecture that coset states additionally satisfy a certain (information-theoretic) \emph{monogamy of entanglement} property, similar to the one satisfied by BB84 states, which is studied extensively in \cite{tomamichel2013monogamy}. Unlike the monogamy property of BB84 states, the monogamy property we put forth is well-suited for applications with public verification, in a sense made more precise below.

This monogamy property states that Alice, Bob and Charlie cannot cooperatively win the following game with a challenger, except with negligible probability. The challenger first prepares a uniformly random coset state $\ket {A_{s, s'}}$ and gives the state to Alice. Alice outputs two (possibly entangled) quantum states and sends them to Bob and Charlie respectively. Finally, Bob and Charlie both get the description of the subspace $A$. The game is won if Bob outputs a vector in $A+s$ and Charlie outputs a vector in  $A^\perp+s'$.

Notice that if Alice were told $A$ before she had to send the quantum states to Bob and Charlie, then she could recover $s$ and $s'$ (efficiently) given $\ket{A_{s,s'}}$. Crucially, $A$ is  only revealed to Bob and Charlie \emph{after} Alice has sent them the quantum states (analogously to the usual monogamy-of-entanglement game based on BB84 states, where $\theta$ is only revealed to Bob and Charlie after they receive their states from Alice.).

We note that the hardness of this game is an \emph{information-theoretic} conjecture. As such, there is hope that it can be proven unconditionally. 

Under this conjecture, we show that the problem remains hard (computationally) even if Alice additionally receives the programs $\iO(P_{A+s})$ and $\iO(P_{A^{\perp}+s'})$. Based on this result, we then obtain unclonable decryption and copy-protection of PRFs from post-quantum $\iO$ and one-way functions, and compute-and-compare obfuscation against unpredictable distributions. We thus remove the need for extractable witness encryption (more details on this are provided in the technical overview, Section \ref{sec: tech ovw direct product hardness}).

As evidence in support of our conjecture, we prove a weaker information-theoretic monogamy property, namely that Alice, Bob and Charlie cannot win at a monogamy game that is identical to the one described above, except that at the last step, Bob and Charlie are each required to return a pair in $(A+s) \times ( A^{\perp} + s')$, instead of a single element each.  Since coset states have more algebraic structure than BB84 states, a more refined analysis is required to prove this (weaker) property compared to that of \cite{tomamichel2013monogamy}. We again extend this monogamy result to the case where Alice receives programs $\iO(P_{A+s})$ and $\iO(P_{A^{\perp}+s'})$.

We emphasize that our monogamy result for coset states differs from the similar monogamy result for BB84 states in one crucial way: the result still holds when Alice receives programs that allow her to verify the correctness of her state (namely $\iO(P_{A+s})$ and $\iO(P_{A^{\perp}+s'})$). This is not the case for the BB84 monogamy result. In fact,
Lutomirski \cite{lutomirski2010online} showed that an adversary who is given $\ket {x^\theta}$ and a public verification oracle that outputs $1$ if the input state is correct and $0$ otherwise, can efficiently copy the state $\ket {x^\theta}$. At the core of this difference is the fact that coset states are highly entangled, whereas strings of BB84 states have no entanglement at all.

For this reason, we believe that the monogamy property of coset states may be of independent interest, and may find application in contexts where public verification of states is important. %

\paragraph{Proof for the Strong Monogamy-of-Entanglement Conjecture.}
After the first version of this paper, Vidick and Culf posted a follow-up paper \cite{culfvidick2021cosetsproof} that proved the strong monogamy-of-entanglement conjecture stated above (formalized in \Cref{sec: monogamy conjectured}). We thank Vidick and Culf for following up on our work. 

The readers can therefore consider the ``strong monogamy-of-entanglement conjecture'' removed from the assumptions in all formal statements in this paper.

\paragraph{Acknowledgements}
A.C. and Q.L. were Quantum Postdoctoral Fellows at the Simons Institute for the Theory of Computing supported by NSF QLCI Grant No. 2016245. A.C. and Q.L. were also supported by DARPA under agreement No. HR00112020023. J. L. and M. Z. were supported by the NSF. J. L. was also supported by Scott Aaronson's Simons Investigator award. The authors are grateful for the support of the Simons Institute, where this collaboration was initiated. Any opinions, findings and conclusions or recommendations expressed in this material are those of the author(s) and do not necessarily reflect the views of the United States Government or DARPA.

\section{Technical Overview}
\label{sec: tech overview}

\subsection{Computational Direct Product Hardness for Coset States}
\label{sec: tech ovw direct product hardness}
Our first technical contribution is to establish a \textit{computational} direct product hardness property for coset states. In this section, we aim to give some intuition for the barrier to proving such a property for regular subspace states, and why resorting to coset states helps. 

We establish the following: a computationally bounded adversary who receives $\ket{A_{s,s'}}$ and programs $\iO(P_{A+s})$ and $\iO(P_{A^\perp+s'})$ for uniformly random $A, s,s'$, cannot produce a pair $(v,w)$, where $v \in A +s $ and $w \in A^{\perp} + s'$, except with negligible probability.

The first version of this direct product hardness property involved regular subspace states, and was \textit{information-theoretic}. It was proven by Ben-David and Sattath \cite{ben2016quantum}, and it established the following: given a uniformly random subspace state $\ket{A}$, where $A \subseteq \mathbb{F}_2^n$ has dimension $n/2$, no adversary can produce a pair of vectors $v,w$ such that $v \in A$ and $w \in A^\perp$ respectively, even with access to oracles for membership in $A$ and in $A^\perp$. 

The first successful instantiation of the membership oracles in the plain model is due to Zhandry, in the context of public-key quantum money \cite{zhandry2019quantum}. Zhandry showed that replacing the membership oracles with indistinguishability obfuscations of the membership programs $P_A$ and $P_{A^{\perp}}$ is sufficient to prevent an adversary from copying the subspace state, and thus is sufficient for public-key quantum money. In what follows, we provide some intuition as to how one proves this ``computational no-cloning'' property, and why the same proof idea does not extend naturally to the direct product hardness property for regular subspace states.

In \cite{zhandry2019quantum}, Zhandry shows that $\iO$ realizes what he refers to as a \textit{subspace-hiding obfuscator}. A subspace hiding obfuscator $\shO$ has the property that any computationally bounded adversary who chooses a subspace $A$ cannot distinguish between $\shO(P_A)$ and $\shO(P_B)$ for a uniformly random superspace $B$ of $A$ (of not too large dimension). In turn, a subspace hiding obfuscator can then be used to show that an adversary who receives $\ket{A}$, $\shO(P_A)$ and $\shO(P_{A^{\perp}})$, for a uniformly random $A$, cannot produce two copies of $\ket{A}$. This is done in the following way. For the rest of the section, we assume that $A \subseteq \mathbb{F}_2^n$ has dimension $n/2$.
\begin{itemize}
    \item Replace $\shO(P_A)$ with $\shO(P_B)$ for a uniformly random superspace $B$ of $A$, where $\text{dim}(B) = \frac34 n$. Replace $\shO(P_{A^{\perp}})$ with $\shO(P_C)$ for a uniformly random superspace $C$ of $A^{\perp}$, where $\text{dim}(C) = \frac{3}{4}n$.
    \item Argue that the task of copying a subspace state $\ket{A}$, for a uniformly random subspace $C^{\perp} \subseteq A \subseteq B$ (even knowing $B$ and $C$ directly) is just as hard as the task of copying a uniformly random subspace state of dimension $\ket{A'} \subseteq \mathbb{F}_2^{n/2}$ where $\text{dim}(A') = \frac{n}{4}$. The intuition for this is that knowing $C^{\perp}$ fixes $\frac{n}{4}$ dimensions out of the $\frac{n}{2}$ original dimensions of $A$. Then, you can think of the first copying task as equivalent to the second up to a change of basis. 
    Such reduction completely removes the adversary's knowledge about the membership programs. 
    \item The latter task is of course hard (it would even be hard with access to membership oracles for $A'$ and $A'^{\perp}$).
\end{itemize}

One can try to apply the same idea to prove a \emph{computational direct product hardness property} for subspace states, where the task is no longer to copy $\ket{A}$, but rather we wish to show that a bounded adversary receiving $\ket{A}$ and programs $\iO(P_{A})$ and $\iO(P_{A^\perp})$, for uniformly random $A$, cannot produce a pair $(v,w)$, where $v \in A$ and $w \in A^{\perp}$. Applying the same replacements as above using $\shO$ allows us to reduce this task to the task of finding a pair of vectors in $A \times A^{\perp}$ given $\ket{A}$,$B,C$, such that $C^{\perp} \subseteq A \subseteq B$. Unfortunately, unlike in the case of copying, this task is easy, because any pair of vectors in $C^{\perp} \times B^{\perp}$ also belongs to $A \times A^{\perp}$. This is the technical hurdle that ones runs into when trying to apply the proof idea from \cite{zhandry2019quantum} to obtain a computational direct hardness property for subspace states.

Our first result is that we overcome this hurdle by using coset states. In the case of cosets, the natural analog of the argument above results in a replacement of the program that checks membership in $A+s$ with a program that checks membership in $B+s$. Similarly, we replace $A^{\perp}+s'$ with $C+s'$. The crucial observation is that, since $B+s = B+s+t$ for any $t \in B$, the programs $P_{B+s}$ and $P_{B+s+t}$ are functionally equivalent. So, an adversary who receives $\iO(P_{B+s})$ cannot distinguish this from $\iO(P_{B+s+t})$ for any $t$. We can thus argue that $t$ functions as a randomizing mask that prevents the adversary from guessing $s$ and finding a vector in $A+s$.

\paragraph{\textbf{Signature Tokens.}}
The computational direct product hardness immediately gives a signature token scheme in the plain model:
\begin{itemize}
    \item Alice samples a key $(A,s,s')$ uniformly at random. This constitutes her secret key. The verification key is $(\iO(P_{A+s}), \iO(P_{A^{\perp}+s'}))$. A signature token is $\ket{A_{s,s'}}$.
    \item Anyone in possession of a token  can sign message $0$ by outputting a string $v \in A+s$ (this can be obtained by measuring the token in the computational basis), and can sign message $1$ by outputting a string $w \in A^{\perp}+s'$ (this can be done by measuring the token in the Hadamard basis).
    \item Signatures can be publicly verified using Alice's public key.
\end{itemize}
If an algorithm produces both signatures for messages $0$ and $1$, it finds vectors $v \in A + s$ and $w \in A^\perp + s'$, which violates computational direct product hardness.

\subsection{Unclonable Decryption} Our second result is an \emph{unclonable decryption} scheme (also known as a \emph{single-decryptor encryption} scheme \cite{georgiou-zhandry20} - we will use the two terms interchangeably in the rest of the paper) from black-box use of a signature token scheme and extractable witness encryption. This construction removes the need for structured oracles, as used in the construction of \cite{georgiou-zhandry20}.

Additionally, we show that, assuming the conjectured monogamy property described in Section \ref{sec: results}, we obtain an unclonable decryption scheme from just $\iO$ and post-quantum one-way functions, where $\iO$ is used to construct obfuscators for both subspace-membership programs and compute-and-compare programs~\cite{goyal2017lockable,wichs2017obfuscating}.

In this overview, we focus on the construction from the monogamy property, as we think it is conceptually more interesting. 

Recall that a single-decryptor encryption scheme is a public-key encryption scheme in which the secret key is a quantum state. On top of the usual encryption security notions, one can define ``single-decryptor'' security: this requires that it is not possible for an adversary who is given the secret key to produce two (possibly entangled) decryption keys, which both enable simultaneous successful decryption of ciphertexts. A simplified version of our single-decryptor encryption scheme is the following. Let $n \in \mathbb{N}$.
\begin{itemize}
    \item The key generation procedure samples uniformly at random $A \subseteq \mathbb{F}_2^n$, with $\text{dim}(A) = \frac{n}{2}$ and $s, s' \in \mathbb{F}_2^n$ uniformly at random. The public key is the pair $(\iO(P_{A+s}), \iO(P_{A^\perp + s'}))$. The (quantum) secret key is the coset state $\ket {A_{s, s'}}$. 
    
    \item To encrypt a message $m$, sample uniformly $r \leftarrow \{0,1\}$, and set $R = \iO(P_{A+ s})$ if $r = 0$ and $R = \iO(P_{A^\perp + s'})$ if $r = 1$. Then, let $C$ be the following program: \vspace{2mm}
    \\
        \quad $C$: on input $v$, output the message $m$ if $R(v) = 1$ and otherwise output $\bot$.
        \vspace{2mm}
        \\
    The ciphertext is then $(r, \iO(C))$.
    \item To decrypt a ciphertext $(r, \iO(C))$ with the quantum key $\ket {A_{s, s'}}$, one simply runs the program $\iO(C)$ coherently on input $\ket {A_{s, s'}}$ if $r = 0$, and on $H^{\otimes n} \ket {A_{s, s'}}$ if $r=1$.
\end{itemize}

In the full scheme, we actually amplify security by sampling $r \leftarrow \{0,1\}^{\lambda}$, and having $\lambda$ coset states, but we choose to keep the presentation in this section as simple as possible.

The high level idea for single-decryptor security is the following. 
Assume for the moment that $\iO$ were an ideal obfuscator (we will argue after this that $\iO$ is good enough). Consider a pirate who receives a secret key, produces two copies of it, and gives one to Bob and the other to Charlie. Suppose both Bob and Charlie can decrypt ciphertexts $(r, \iO(C))$ correctly with probability close to $1$, over the randomness in the choice of $r$ (which is crucially chosen only after Bob and Charlie have received their copies). %
Then, there must be some efficient quantum algorithm, which uses Bob's (resp. Charlie's) auxiliary quantum information (whatever state he has received from the pirate), and is able to output a vector in $A + s$. This is because in the case of $r = 0$, the program $C$ outputs the plaintext message $m$ exclusively on inputs $v \in A + s$. Similarly, there must be an algorithm that outputs a vector in $A^{\perp} + s'$ starting from Bob's (resp. Charlie's) auxiliary quantum information. Notice that this doesn't imply that Bob can \textit{simultaneously} output a pair in $(A+s) \times (A^{\perp} +s')$, because explicitly recovering a vector in one coset might destroy the auxiliary quantum information preventing recovery of a vector in the other (and this very fact is of course crucial to the direct product hardness). Hence, in order to argue that it is not possible for both Bob and Charlie to be decrypting with probability close to $1$, we have to use the fact that Bob and Charlie have separate auxiliary quantum information, and that each of them can recover vectors in $A+s$ or $A^{\perp}+s'$, which means that this can be done simultaneously, now violating the direct product hardness property.

The crux of the security proof is establishing that $\iO$ is a good enough obfuscator to enable this argument to go through. 

To this end, we first notice that there is an alternative way of computing membership in $A+s$, which is functionally equivalent to the program $C$ defined above.

Let $\can_A(s)$ be a function that computes the lexicographically smallest vector in $A + s$ (think of this as a representative of the coset). It is not hard to see that a vector $t$ is in $A + s$ if and only if $\can_A(t) = \can_A(s)$. Also $\can_A$ is efficiently computable given $A$. Therefore, a functionally equivalent program to $C$, in the case that $r=0$, is:

\vspace{0.3em}
\quad $\widetilde{C}$: on input $v$, output $m$ if $\can_A(v) = \can_A(s)$, otherwise output $\bot$.

\vspace{0.3em}

By the security of $\iO$, an adversary can't distinguish $\iO(C)$ from $\iO(\widetilde{C})$. 

The key insight is that now the program $\widetilde{C}$ is a \textit{compute-and-compare} program \cite{goyal2017lockable, wichs2017obfuscating}. The latter is a program described by three parameters: an efficiently computable function $f$, a target $y$ and an output $z$. The program outputs $z$ on input $x$ if $f(x) = y$, and otherwise outputs $\bot$. 
In our case, $f =\can_A$, $y = \can_A(s)$, and $z = m$. 
Goyal et al.~\cite{goyal2017lockable} and Wichs et al.~\cite{wichs2017obfuscating} show that, assuming LWE or assuming $\iO$ and certain PRGs, a compute-and-compare program can be obfuscated provided $y$ is (computationally) unpredictable given the function $f$ and the auxiliary information. More precisely, the obfuscation guarantee is that the obfuscated compute-and-compare program is indistinguishable from the obfuscation of a (simulated) program that outputs zero on every input (notice, as a sanity check, that if $y$ is unpredictable given $f$, then the compute-and-compare program must output zero almost everywhere as well). We will provide more discussion on compute-and-compare obfuscation for unpredictable distributions in the presence of quantum auxiliary input in \Cref{sec:cc} and \Cref{sec:CC_quantum_aux}. 
\begin{itemize}
    \item By the security of $\iO$, we can replace the ciphertext $(0, \iO(C))$, with the ciphertext $(0, \iO(\CC.\obf(\widetilde{C})))$ where $\CC.\obf$ is an obfuscator for compute-and-compare programs (this is because $C$ has the same functionality as $\CC.\obf(\widetilde{C})$).
    \item By the security of \textsf{CC.Obf}, we can replace the latter with $(0, \iO(\CC.\obf(Z)))$, where $Z$ is the zero program. It is clearly impossible to decrypt from the latter, since no information about the message is present.
\end{itemize}
Thus, assuming $\iO$ cannot be broken, a Bob that is able to decrypt implies an adversary breaking the compute-and-compare obfuscation. This implies that there must be an efficient algorithm that can predict $y = \can_A(s)$ with non-negligible probability given the function $\can_A$ and the auxiliary information received by Bob. Similarly for Charlie. 

Therefore, if Bob and Charlie, with their own quantum auxiliary information, can both independently decrypt respectively $(0, \iO(C))$ and $(1, \iO(C'))$ with high probability (where here $C$ and $C'$ only differ in that the former releases the encrypted message on input a vector in $A+s$, and $C'$ on input a vector in $A^{\perp} + s'$), then there exist efficient quantum algorithms for Bob and Charlie that take as input the descriptions of $\can_A(\cdot)$ and $\can_{A^{\perp}}(\cdot)$ respectively (or of the subspace $A$), and their respective auxiliary information, and recover $\can_A(s)$ and $\can_{A^\perp}(s')$ respectively with non-negligible probability. Since $\can_A(s) \in A + s$ and $\can_{A^\perp}(s') \in A^\perp + s'$, this violates the strong monogamy property of coset states described in Section \ref{sec: results}.

Recall that this states that Alice, Bob and Charlie cannot cooperatively win the following game with a challenger, except with negligible probability. The challenger first prepares a uniformly random coset state $\ket {A_{s, s'}}$ and gives the state to Alice. Alice outputs two (possibly entangled) quantum states and sends them to Bob and Charlie respectively. Finally, Bob and Charlie both get the description of the subspace $A$. The game is won if Bob outputs a vector in $A+s$ and Charlie outputs a vector in  $A^\perp+s'$. Crucially, in this monogamy property, Bob and Charlie 
will both receive the description of the subspace $A$ in the final stage, yet it is still not possible for both of them to be simultaneously successful.

What allows to deduce the existence of efficient extracting algorithms is the fact that the obfuscation of compute-and-compare programs from \cite{goyal2017lockable, wichs2017obfuscating} holds provided $y$ is computationally unpredictable given $f$ (and the auxiliary information). Thus, an algorithm that breaks the obfuscation property implies an efficient algorithm that outputs $y$ (with noticeable probability) given $f$ (and the auxiliary information).

In our other construction from signature tokens and extractable witness encryption, one can directly reduce unclonable decryption security to  direct product hardness. We do not discuss the details of this construction in this section, instead we refer the reader to Section \ref{sec: unclonable dec witness enc}.

\subsection{Copy-Protecting PRFs}
Our last contribution is the construction of copy-protected PRFs assuming post-quantum $\iO$, one-way functions and the monogamy property we discussed in the previous section. Alternatively just as for unclonable decryption, we can do away with the monogamy property by assuming extractable witness encryption. 

A copy-protectable PRF is a regular PRF $F: \{0,1\}^k \times \{0,1\}^m \rightarrow \{0,1\}^{m'}$, except that it is augmented with a \textit{quantum key} generation procedure, which we refer to as $\textsf{QKeyGen}$. This takes as input the classical PRF key $K$ and outputs a quantum state $\rho_K$. The state $\rho_K$ allows to efficiently compute $F(K, x)$ on any input $x$ (where correctness holds with overwhelming probability). Beyond the standard PRF security, the copy-protected PRF satisfies the following additional security guarantee:
any computationally bounded adversary that receives $\rho_K$ cannot process $\rho_K$ into two states, such that each state enables efficient evaluation of $F(K, \cdot)$ on uniformly random inputs. 

A simplified version of our construction has the following structure. For the rest of the section, we take all subspaces to be of $\mathbb{F}_2^n$ with dimension $n/2$.
\begin{itemize}
\item The quantum key generation procedure $\textsf{QKeyGen}$ takes as input a classical PRF key $K$ and outputs a quantum key. The latter consists of a number of uniformly sampled coset states $| (A_i)_{s_i, s_i'} \rangle$, for $i \in [\lambda]$, together with a (classical) \textit{obfuscation} of the classical program $P$ that operates as follows.
$P$ takes an input of the form $(x, v_1, \ldots, v_{\lambda})$; checks that each vector $v_i$ belongs to the correct coset ($A_i + s_i$ if $x_i = 0$, and $A_i^{\perp} + s_i'$ if $x_i = 1$); if so, outputs the value $F(K, x)$, otherwise outputs $\perp$.
\item A party in possession of the quantum key can evaluate the PRF on input $x$ as follows: for each $i$ such that $x_i = 1$, apply $H^{\otimes n}$ to $|(A_i)_{s_i, s_i'} \rangle$. Measure each resulting coset state in the standard basis to obtain vectors $v_1,\ldots, v_{\lambda}$. Run the obfuscated program on input $(x, v_1, \ldots, v_{\lambda})$.
\end{itemize}

Notice that the program has the classical PRF key $K$ hardcoded, as well as the values $A_i, s_i, s_i'$, so giving the program in the clear to the adversary would be completely insecure: once the adversary knows the key $K$, he can trivially copy the functionality $F(K, \cdot)$; and even if the key $K$ is hidden by the obfuscation, but the $A_i, s_i, s_i'$ are known, a copy of the (classical) obfuscated program $P$, together with the $A_i, s_i, s_i'$ is sufficient to evaluate $F(K, \cdot)$ on any input.

So, the hope is that an appropriate obfuscation will be sufficient to hide all of these parameters. If this is the case, then the intuition for why the scheme is secure is that in order for two parties to simultaneously evaluate correctly on uniformly random inputs, each party should be able to produce a vector in $A_i+s$ or in $A_i^{\perp} + s_i'$. If the two parties accomplish this separately, then this implies that it is possible to simultaneously extract a vector in $A_i+s_i$ and one in $A_i^{\perp} + s_i'$, which should not be possible. \footnote{Again, we point out that we could not draw this conclusion if only a single party were able to do the following two things, each with non-negligible probability: produce a vector in $A+s_i$ and produce a vector in $A^{\perp}+s_i'$. This is because in a quantum world, being able to perform two tasks with good probability, does not imply being able to perform both tasks simultaneously. So it is crucial that both parties are able to separately recover the vectors.}

We will use $\iO$ to obfuscate the program $P$. In the next part of this overview, we will discuss how we are able to deal with the fact that the PRF key $K$ and the cosets are hardcoded in the program $P$. First of all, we describe a bit more precisely the copy-protection security that we wish to achieve. The latter is captured by the following security game between a challenger and an adversary $(A, B, C)$: 
\begin{itemize}
    \item The challenger samples a uniformly random PRF key $K$ and runs $\textsf{QKeyGen}$ to generate $\rho_K$. Sends $\rho_K$ to $A$.
    \item $A$ sends quantum registers to two spatially separated parties $B$ and $C$.
    \item The challenger samples uniformly random inputs $x,x'$ to $F(K,\cdot)$. Sends $x$ to $B$ and $x'$ to $C$.
    \item $B$ and $C$ return $y$ and $y'$ respectively to the challenger.
\end{itemize}
$(A, B, C)$ wins if $y = F(K, x)$ and $y' = F(K, x')$.

Since the obfuscation we are using is not VBB, but only $\iO$, there are two potential issues with security. $B$ and $C$ could be returning correct answers not because they are able to produce vectors in the appropriate cosets, but because:
\begin{itemize}
    \item[(i)] $\iO(P)$ leaks information about the PRF key $K$.
    \item[(ii)] $\iO(P)$ leaks information about the cosets.
\end{itemize}
We handle issue (i) via a delicate ``puncturing'' argument \cite{sahai2014use}. At a high level, a puncturable PRF $F$ is a PRF augmented with a procedure that takes a key $K$ and an input value $x$, and produces a ``punctured'' key $K\setminus \{x\}$, which enables evaluation of $F(K, \cdot)$ at any point other than $x$. The security guarantee is that a computationally bounded adversary possessing the punctured key $K\setminus \{x\}$ cannot distinguish between $F(K, x)$ and a uniformly random value (more generally, one can puncture the key at any polynomially sized set of points). Puncturable PRFs can be obtained from OWFs using the \cite{GGM86} construction \cite{boneh2013constrained}. %

By puncturing $K$ precisely at the challenge inputs $x$ and $x'$, one is able to hardcode a punctured PRF key $K\setminus \{x,x'\}$ in the program $P$, instead of $K$, and setting the output of program $P$ at $x$ to uniformly random $z$ and $z'$, instead of to $F(K,x)$ and $F(K,x')$ respectively. The full argument is technical, and relies on the ``hidden trigger'' technique introduced in \cite{sahai2014use}, \revise{which allows the ``puncturing'' technique to work even when the program $P$ is generated before $x$ and $x'$ are sampled}.

Once we have replaced the outputs of the program $P$ on the challenge inputs $x,x'$ with uniformly random outputs $z, z'$, we can handle issue (ii) in a similar way to the case of unclonable decryption in the previous section. 

By the security of $\iO$, we can replace the behaviour of program $P$ at $x$ by a suitable functionally equivalent compute-and-compare program that checks membership in the appropriate cosets. We then replace this by an obfuscation of the same compute-and-compare program, and finally by an obfuscation of the zero program. We can then perform a similar reduction as in the previous section from an adversary breaking copy-protection security (and thus the security of the compute-and-compare obfuscation) to an adversary breaking the monogamy of entanglement game described in the previous section. 

As in the previous section, we can replace the reliance on the conjectured monogamy property by extractable witness encryption. In fact, formally, we directly reduce the security of our copy-protected PRFs to the security of our unclonable decryption scheme.

\section{Preliminaries}
In this paper, we use $\lambda$  to denote security parameters. We denote a function belonging to the class of polynomial functions by $\poly(\cdot)$. 
We say a function $f(\cdot) : \mathbb{N} \to \mathbb{R}^+$ is negligible if for all constant $c > 0$, $f(n) < \frac{1}{n^c} $ for all large enough $n$. 
We use $\negl(\cdot)$ to denote a negligible function.
We say a function $f(\cdot) : \mathbb{N} \to \mathbb{R}^+$ is sub-exponential if there exists a constant $0 < c < 1$, such that $f(n) = {2^{n^c}} $ for all large enough $n$. 
We use $\subexp(\cdot)$ to denote a sub-exponential function.

When we refer to a probabilistic algorithm $\mathcal{A}$, sometimes we need to specify the randomness $r$ used by $\mathcal{A}$ when running on some input $x$. We write this as $\mathcal{A}(x ; r)$.

For a finite set $S$, we use $x \gets S$ to denote uniform sampling of $x$ from the set $S$. 
We denote $[n] = \{1, 2, \cdots, n\}$. 
A binary string $x \in \{0,1\}^{\ell}$ is represented as $x_1 x_2 \cdots x_\ell$.
For two strings $x, y$, $x || y$  is the concatenation of $x$ and $y$. 

We refer to a probabilistic polynomial-time algorithm as PPT, and we refer to a quantum polynomial-time algorithm as QPT.

We will assume familiarity with basic quantum information and computation concepts. We refer the reader to \Cref{appendix:quantum_info} and \cite{nielsen2002quantum} for a reference.

\subsection{Pseudorandom Functions}

For the rest of this paper, we will assume that all of the classical cryptographic primitives used are post-quantum (i.e. secure against quantum adversaries), and we sometimes omit mentioning this for convenience, except in formal definitions and theorems.

\begin{definition}[PRF] \label{def:wPRF}
A pseudorandom function (PRF) is a function $F:\{0,1\}^k \times \{0,1\}^n \to \{0,1\}^m$, where $\{0, 1\}^k$ is the
key space, and  $\{0, 1\}^n$ and $\{0, 1\}^m$ are the domain and range. $k, n$ and $m$ are implicity functions of a security parameter $\lambda$. The following should hold: 
\begin{itemize}
    \item For every $K \in \{0,1\}^k$, $F(K, \cdot)$ is efficiently computable;
    \item PRF security:  no efficient quantum adversary $\As$ making quantum queries can distinguish between a truly random function and the function $F(K, \cdot)$; that is for every such $\As$, there exists a negligible function $\negl$, 
    \begin{align*}
        \left| \Pr_{K \gets \{0,1\}^k}\left[ \As^{F(K, \cdot)}() = 1 \right]  -  \Pr_{O: \{0,1\}^n \to \{0,1\}^m}\left[ \As^{O}() = 1 \right] \right| \leq \negl(\lambda)
    \end{align*}
\end{itemize}
\end{definition}

\subsection{Indistinguishability Obfuscation}

\begin{definition}[Indistinguishability Obfuscator (iO)~\cite{barak2001possibility,garg2016candidate,sahai2014use}]
A uniform PPT machine $\iO$ is an indistinguishability obfuscator for a circuit class $\{\Cs_\lambda\}_{\lambda \in \mathbb N}$ if the following conditions are satisfied:
\begin{itemize}
    \item For all $\lambda$, all $C \in \Cs_\lambda$, all inputs $x$, we have 
    \begin{align*}
        \Pr\left[\widehat{C}(x) = C(x) \,|\, \widehat{C} \gets \iO(1^\lambda, C) \right] = 1
    \end{align*}
    
    \item (Post-quantum security): For all (not necessarily uniform) QPT adversaries $(\samp, D)$, the following holds: if $\Pr[\forall x, C_0(x) = C_1(x) \,:\, (C_0, C_1, \sigma) \gets \samp(1^\lambda)] > 1 - \alpha(\lambda)$ for some negligible function $\alpha$, then there exists a negligible function $\beta$ such that:
    \begin{align*}
        & \Bigg|\Pr\left[D(\sigma, \iO(1^\lambda, C_0)) =1\,:\, (C_0, C_1, \sigma) \gets \samp(1^\lambda)\right] \\ 
        - & \Pr\left[D(\sigma, \iO(1^\lambda, C_1)) = 1\,:\, (C_0, C_1, \sigma) \gets \samp(1^\lambda)\right] \Bigg| \leq \beta(\lambda)
    \end{align*}
\end{itemize}
\end{definition}

Whenever we assume the existence of $\iO$ in the rest of the paper, we refer to $\iO$ for the class of polynomial-size circuits, i.e. when $\mathcal{C}_{\lambda}$ is the collection of all circuits of size at most $\lambda$.

We will also make use of the stronger notion of \emph{sub-exponentially secure} $\iO$. By the latter, we mean that the distinguishing advantage above is $1/\subexp$ for some sub-exponential function $\subexp$, instead of negligible (while the adversary is still $QPT$).

Similarly, we will also make use of sub-exponentially secure one-way functions. For the latter, the advantage is again $1/\subexp$ (and the adversary is $QPT$).

\subsection{Compute-and-Compare Obfuscation} \label{sec:cc}
\begin{definition}[Compute-and-Compare Program]
    Given a function $f:\{0,1\}^{\ell_{\sf in}} \to \{0,1\}^{\ell_{\sf out}}$ along with a target value $y \in \{0,1\}^{\ell_{\sf out}}$ and a message $z \in \{0,1\}^{\ell_{\sf msg}}$, we define the compute-and-compare program: 
    \begin{align*}
        \CC[f, y, z](x) = \begin{cases}
                            z & \text{ if } f(x) = y \\
                            \bot & \text{ otherwise }
                        \end{cases}
    \end{align*}
\end{definition}

We define the following class of \emph{unpredictable distributions} over pairs of the form $(\CC[f, y, z], \aux)$, where $\aux$ is auxiliary quantum information. These distributions are such that $y$ is computationally unpredictable given $f$ and $\aux$.

\begin{definition}[Unpredictable Distributions]
\label{def:cc_unpredictable_dist}
    We say that a family of distributions $D = \{D_\lambda\}$ where $D_{\lambda}$ is a distribution over pairs of the form $(\CC[f, y, z], \aux)$ where $\aux$ is a quantum state, belongs to the class of \emph{unpredictable distributions} if the following holds. There exists a negligible function $\negl$ such that, for all QPT algorithms $\As$, 
    \begin{align*}
        \Pr_{ (\CC[f, y, z], \aux) \gets D_\lambda } \left[ A(1^\lambda, f, \aux) = y \right] \leq \negl(\lambda). 
    \end{align*}
\end{definition}

We further define the class of \emph{sub-exponentially unpredictable distributions}, where we require the guessing probability to be inverse sub-exponential in the security parameter. 
\begin{definition}[Sub-Exponentially Unpredictable Distributions]
\label{def:cc_subexp_unpredictable_dist}
    We say that a family of distributions $D = \{D_\lambda\}$ where $D_{\lambda}$ is a distribution over pairs of the form $(\CC[f, y, z], \aux)$ where $\aux$ is a quantum state, belongs to the class of \emph{sub-exponentially unpredictable distributions} if the following holds. There exists a sub-exponential function $\subexp$ such that, for all QPT algorithms $\As$, 
    \begin{align*}
        \Pr_{ (\CC[f, y, z], \aux) \gets D_\lambda } \left[ A(1^\lambda, f, \aux) = y \right] \leq 1/\subexp(\lambda). 
    \end{align*}
\end{definition}

We assume that a program $P$ has an associated set of parameters $P.{\sf param}$ (e.g input size, output size, circuit size, etc.), which we are not required to hide.
\begin{definition}[Compute-and-Compare Obfuscation]
\label{def: cc obf}
    A PPT algorithm $\ccobf$ is an obfuscator for the class of unpredictable distributions (or sub-exponentially unpredictable distributions) if for any family of distributions $D = \{ D_{\lambda}\}$ belonging to the class, the following holds:
    \begin{itemize}
        \item Functionality Preserving: there exists a negligible function $\negl$ such that for all $\lambda$, every program $P$ in the support of $D_\lambda$, 
        \begin{align*}
            \Pr[\forall x,\,  \widetilde{P}(x) = P(x),\, \widetilde{P} \gets \ccobf(1^\lambda, P) ] \geq 1 - \negl(\lambda)
        \end{align*}
        \item Distributional Indistinguishability: there exists an efficient simulator $\Sim$ such that:
        \begin{align*}
            (\ccobf(1^\lambda, P), \aux) \approx_c (\Sim(1^\lambda, P.{\sf param}), \aux)
        \end{align*}
        where $(P, \aux) \gets D_\lambda$.
    \end{itemize}
\end{definition}

Combining the results of \cite{wichs2017obfuscating, goyal2017lockable} with those of \cite{zhandry2019magic}, we have the following two theorems. For the proofs and discussions, we refer the readers to \Cref{sec:CC_quantum_aux}. Note that although \Cref{thm:CC__from_ELF_iO} is a strictly stronger statement, currently we do not know of any post-quantum construction for ELFs. 
\begin{reptheorem}{thm:CC_subexp_from_LWE_iO}
\label{thm: cc sub-exp from lwe}
Assuming the existence of post-quantum $\iO$ and the
quantum hardness of LWE, there exist obfuscators for sub-exponentially unpredictable distributions, as in \Cref{def: cc obf}.
\end{reptheorem}
\begin{reptheorem}{thm:CC__from_ELF_iO}
    Assuming the existence of post-quantum $\iO$ and post-quantum
extremely lossy functions (ELFs), there exist obfuscators as in \Cref{def: cc obf}. for any unpredictable distributions.
\end{reptheorem}

\subsection{Subspace Hiding Obfuscation}
\label{sec: shO}

Subspace-hiding obfuscation was introduced by Zhandry~\cite{zhandry2019quantum} as a key component in constructing public-key quantum money. This notion requires that the obfuscation of a circuit that computes membership in a subspace $A$ is indistinguishable from the obfuscation of a circuit that computes membership in a uniformly random superspace of $A$ (of dimension sufficiently far from the full dimension). The formal definition is as follows.
\begin{definition}[\cite{zhandry2019quantum}]
A subspace hiding obfuscator (shO) for a field $\F$ and dimensions $d_0, d_1$ is a PPT algorithm $\shO$ such that:
\begin{itemize}
    \item \textbf{Input.} $\shO$ takes as input the description of a linear subspace $S \subseteq \F^n$ of dimension $d \in \{d_0, d_1\}$.
    
    For concreteness, we will assume $S$ is given as a matrix whose rows form a basis for $S$.
    \item \textbf{Output.} $\shO$ outputs a circuit $\hat{S}$ that computes membership in $S$. Precisely, let $S(x)$ be the function that decides membership in $S$. Then there exists a negligible function $\negl$,
    \begin{align*}
        \Pr[\hat{S}(x) = S(x)~~\forall x : \hat{S} \leftarrow \shO(S)] \geq 1 - \negl(n)
    \end{align*}

    \item \textbf{Security.} For security, consider the following game between an adversary and a challenger.
    \begin{itemize}
        \item The adversary submits to the challenger a subspace $S_0$ of dimension $d_0$.
        \item The challenger samples a uniformly random subspace $S_1 \subseteq \F^n$ of dimension $d_1$ such that $S_0 \subseteq S_1$.
        
        It then runs $\hat{S} \leftarrow \shO(S_b)$, and gives $\hat{S}$ to the adversary.
        \item The adversary makes a guess $b'$ for $b$.
    \end{itemize}
    $\shO$ is secure if all QPT adversaries have negligible advantage in this game.
\end{itemize}
\end{definition}

Zhandry \cite{zhandry2019quantum} gives a construction of a subspace hiding obfuscator based on one-way functions and $\iO$.
\begin{theorem}[Theorem 6.3 in \cite{zhandry2019quantum}]
If injective one-way functions exist, then any indistinguishability obfuscator, appropriately padded, is also a subspace hiding obfuscator for field $\F$ and dimensions $d_0, d_1$, as long as $|\F|^{n-d_1}$ is exponential.
\end{theorem}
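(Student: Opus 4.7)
The plan is to prove $\iO(C_{S_0}) \approx_c \iO(C_{S_1})$, where $C_T$ is the canonical subspace-membership circuit for a subspace $T$. Since the two circuits agree everywhere off $S_1 \setminus S_0$ but differ on it, plain $\iO$ indistinguishability does not apply directly; I would bridge them via a hybrid chain of circuits whose membership decisions are routed through the injective OWF $f$. The key intermediate form is a circuit $\widetilde{C}_{M,H}$ that, on input $x$, computes $y = Mx$ via a hardcoded full-rank matrix $M:\F^n \to \F^{n-\dim \ker M}$ and accepts iff $f(y) \in H$ for a hardcoded set $H$ of codomain elements. Taking $M = M_T$ with $\ker M_T = T$ and $H = \{f(0)\}$ gives a circuit functionally equivalent to $C_T$ by injectivity of $f$, so $\iO$ swaps freely between the two; the ``appropriately padded'' clause in the theorem statement is exactly what ensures the raw $\iO(C_T)$ has enough slack to match these richer intermediate circuits.

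Fixing $M = M_0$ with $\ker M_0 = S_0$, I would grow the accept set $H$ from $\{f(0)\}$ to $\{f(y) : y \in M_0(S_1)\}$, inserting one hash at a time. The terminal circuit accepts on $M_0^{-1}(M_0(S_1)) = S_1$, so it is functionally equivalent to the $S_1$-tester, and a final $\iO$ step rewrites it into $\widetilde{C}_{M_1, \{f(0)\}}$ and thence $\iO(C_{S_1})$. Each single-hash insertion of $f(y_i)$ is a four-step micro-hybrid: (i) by $\iO$, introduce an unused hardcoded slot $t_i$; (ii) by $\iO$ again, change the circuit to additionally accept when $f(M_0 x) = t_i$, which is functionally identical to (i) whenever $t_i$, sampled uniformly from the codomain of $f$, lies outside the image of $f$ — an event with overwhelming probability once we pad $f$'s output length to make its image exponentially sparse; (iii) switch $t_i$ from uniform to $f(y_i)$, reducing to the one-wayness of $f$; (iv) by $\iO$, absorb $t_i$ into $H$.

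The main obstacle is step (iii), where any distinguisher between ``$t_i$ uniform'' and ``$t_i = f(y_i)$'' must be converted into an inverter for $f$. This reduction works only if $y_i$ is computationally unpredictable from the adversary's view at that hybrid, which is where the hypothesis that $|\F|^{n-d_1}$ is exponential becomes essential: since the challenger samples $S_1$ uniformly among superspaces of $S_0$ of dimension $d_1$, the projection $M_0(S_1)$ is a uniformly random $(d_1 - d_0)$-dimensional subspace of $\F^{n-d_0}$, and the exponential sparsity of its complement ensures each $y_i$ carries enough entropy to support inversion; the extractor itself would recover a preimage from the fact that any non-negligible distinguishing advantage must come from an input $x$ with $Mx = y_i$. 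A secondary bookkeeping point is that the accept set may need to grow by as many as $|\F|^{d_1 - d_0}$ elements; this I would control either by invoking sub-exponential security of $\iO$ and $f$ in the final accounting, or — more elegantly — by batching the enlargement into a single $\iO$ step that replaces the explicit accept list with a recursively obfuscated inner membership test for $M_0(S_1) \subseteq \F^{n-d_0}$, i.e., a recursive instance of subspace-hiding at lower ambient dimension, yielding a polynomial-depth hybrid chain by induction on $n$.
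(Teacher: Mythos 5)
Your setup through step (ii) is sound: rewriting the membership test as $f(Mx) \in H$ is a genuine functional equivalence by injectivity, and inserting a fresh uniform target $t_i$ costs nothing by $\iO$ because a uniform point lands outside the (sparse) image of $f$ except with negligible probability, which the iO definition in this paper tolerates. The place the argument breaks is step (iii). There you switch $t_i$ from uniform to $f(y_i)$ and reduce to one-wayness, but one-wayness of $f$ does not say that $f(y_i)$ is indistinguishable from uniform — that is a pseudorandomness property, and a generic injective OWF can have heavily biased output. Your fallback, that ``any non-negligible distinguishing advantage must come from an input $x$ with $Mx = y_i$,'' is a \emph{differing-inputs} (extractability) argument, not an $\iO$ argument: indistinguishability obfuscation gives no extraction guarantee from a distinguisher of two functionally inequivalent circuits. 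Moreover, even the Boyle--Chung--Pass result that $\iO$ implies diO for circuits differing on polynomially many inputs does not apply here, because the two circuits in step (iii) differ on all of $M_0^{-1}(y_i)$, a coset of $S_0$ of size $|\F|^{d_0}$, which is superpolynomial whenever $d_0 = \Omega(\lambda)$. You also cannot repair this by substituting a PRG $G$ built from the OWF: PRG security lets you introduce a random point $G(s_i)$ for a fresh secret seed $s_i$, but you need $t_i$ to equal the \emph{specific} value $f(y_i)$ tied to a determined element $y_i \in M_0(S_1)$, and a PRG does not let you program its output to hit a designated image point.

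Separately, your accounting of the hybrid length is a real problem even if step (iii) were patched. Growing $H$ elementwise requires $|\F|^{d_1 - d_0}$ hybrids, and the two remedies you propose do not fit the theorem. Invoking sub-exponentially secure $\iO$ and OWFs changes the hypotheses — the theorem is stated under standard (polynomial) security. The recursive batching idea is more interesting but, as sketched, does not terminate: the inner instance is a subspace-hiding game with ambient dimension $n' = n - d_0$ and dimensions $(d_0', d_1') = (0, d_1 - d_0)$, so when $d_0 = 0$ the problem does not shrink at all, and even for $d_0 > 0$ you would need to spell out the induction measure and why the nesting of $\iO$ inside $\iO$ preserves the security reduction — none of which is obvious. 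Without a resolution of both the step-(iii) reduction and the hybrid-count growth, the proof does not establish the theorem.
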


\subsection{Extractable Witness Encryption}
\label{appendix:extractable_witness}
In this subsection, we describe the primitive of witness encryption~\cite{garg2017implausibility} with extractable security, which will we use in our construction of unclonable decryption in Section \ref{sec: unclonable dec witness enc}.
\begin{definition}[Extractable Witness Encryption]
\label{def: extractable witness enc}
An extractable witness encryption scheme for an $\np$
relation $R$ is a pair of algorithms $(\enc, \dec)$:

\begin{itemize}

    \item $\enc(1^\param, x, m) \to \ct:$ 
   takes as input a security parameter $\lambda$ in unary, an instance $x$ and
a message $m$, and outputs a ciphertext $\ct$.

\item $\dec(\ct, w) \to m/\bot:$ takes as input a ciphertext $\ct$ and a witness $w$ and outputs a message $m$ or $\bot$ (for decryption failure).

\end{itemize}

The scheme satisfies the following:

\begin{description}
\item[Correctness:]
For any security parameter $\lambda \in \N$, for any $m \in \{0,1\}$, for any $x$ and $w$ such that $R(x, w) = 1$, we have that:
\begin{align*}
    & \Pr[ \dec(\enc(1^\lambda, x, m), w] = m ] = 1
\end{align*}

\item[Extractable Security:]
 For any QPT adversary $\cA$, polynomial-time sampler $(x, \aux) \gets \samp(1^\lambda)$
and for any polynomial $q(\cdot)$, there exists a QPT extractor $E$ and a polynomial $p(\cdot)$, such that:
\begin{align*}
     & \Pr\left[ \cA(1^\lambda, x, \ct, \aux) = m \middle| \begin{array}{cr}
         m \gets \{0, 1\},  (x, \aux) \gets \samp(1^\lambda),\\
        \ct  \gets \enc(1^\lambda, x, m)
    \end{array}
    \right] \geq \frac{1}{2} + \frac{1}{q(\lambda)} \\
    & \rightarrow \Pr\left[ E(1^\lambda, x, \aux) = w \text{ s.t. } R(x, w) = 1 : (x, \aux) \gets \samp(1^\lambda) \right]
    \geq \frac{1}{p(\lambda)}. 
\end{align*}
\end{description}
\end{definition}

\subsection{Testing Quantum Adversaries: Projective Implementation} \label{sec:unclonable dec ati}

In this section, we include several definitions about measurements, which are relevant to testing whether quantum adversaries are successful in the security games of \Cref{sec: unclonable dec strong ag}. Part of this section is taken verbatim from \cite{aaronsonnew}. As this section only pertains directly to our security definitions for unclonable decryption schemes, the reader can skip ahead, and return to this section when reading \Cref{sec: unclonable dec strong ag}. In particular, this section is not needed to understand Sections \ref{sec: coset states} and \ref{sec: signature tokens}.

\vspace{1mm}

In classical cryptographic security games, the challenger typically gets some information from the adversary and checks if this information satisfies certain properties.
However, in a setting where the adversary is required to return \emph{quantum} information to the challenger, classical definitions of ``testing'' whether a quantum state returned by the adversary satisfies certain properties may result in various failures as discussed in \cite{z20}, as this state may be in a superposition of ``successful'' and ``unsuccessful'' adversaries. We provide here a short description of some of the difficulties in the quantum setting, and we refer the reader to \cite{z20} for a more in-depth discussion. %

\revise{

As an example, consider a security game in which an adversary is required to return some information to a challenger, which enables evaluation of a program on any input. Such a scenario is natural in copy-protection, where the adversary (a ``pirate'') attempts to create two copies of a copy-protected program, given just a single copy (and one can think of these two copies as being returned to the challenger for testing).

Naturally, one would consider a copy-protected program to be ``good'' if it enables correct evaluation on all inputs, or at least on a large fraction of all inputs. Testing correct evaluation on all inputs is of course not possible efficiently (not even classically). Instead, one would typically have the challenger \emph{estimate} the fraction of correct evaluations to high statistical confidence by picking a large enough number of inputs uniformly at random (or from an appropriate distribution), running the copy-protected program on these inputs, and computing the fraction of correct evaluations. Unfortunately, such a test does not easily translate to the quantum setting. 
The reason is that the challenger only gets a single copy of the program, which in a quantum world cannot be generically copied. Moreover, in general, each evaluation may alter the copy-protected program in an irreversible way (if the outcome of the evaluation is not deterministic). Thus, estimating the fraction of inputs on which the copy-protected program received from the adversary evaluates correctly is not in general possible. For instance, consider an adversary who sends a state $\frac{1}{\sqrt{2}} \ket {P_0} + \frac{1}{\sqrt{2}} \ket {P_1}$ to the challenger, where $\ket {P_{0}}$ is a copy-protected program that evaluates perfectly on every input, and $\ket {P_{1}}$ is a useless program. Using this state, evaluation is successful on any input with probability $1/2$. Thus, even a single evaluation collapses the state either to $\ket {P_{0}}$ or to $\ket {P_1}$, preventing the challenger from performing subsequent evaluations on the original state. In fact, it is impossible to have a generic procedure that estimates the ``average success probability of evalutation'' to very high precision, as this would imply a procedure that distinguishes between the state $\frac{1}{\sqrt{2}} \ket {P_0} + \frac{1}{\sqrt{2}} \ket {P_1}$ and the state $\ket {P_0}$ almost perfectly, which is impossible since the two states have large overlap.%
}
\vspace{1em}

\paragraph{Projective Implementation}
Motivated by the discussion above, \cite{z20} formalizes a new measurement procedure for testing a state received by an adversary. We will be adopting this procedure when defining security of single-decryptor encryption schemes in Section \ref{sec: unclonable dec strong ag}.

Consider the following procedure as a binary POVM $\cP$ acting on an alleged-copy-protected program $\rho$: sample a uniformly random input $x$, evaluates the copy-protected program on $x$, and checks if the output is correct.   
In a nutshell, the new procedure consists of applying an appropriate projective measurement which \emph{measures} the success probability of the tested state $\rho$ under $\cP$, and to output ``accept'' if the success probability is high enough. Of course, such measurement will not be able extract the exact success probability of $\rho$, as this is impossible from we have argued in the discussion above. Rather, the measurement will output a success probability from a finite set, such that the expected value of the output matches the true success probability of $\rho$. We will now describe this procedure in more detail.

The starting point is that a POVM specifies exactly the probability distribution over outcomes $\{0,1\}$ (``success'' or ``failure'') on any copy-protected program, but it does not uniquely determine the post-measurement state. Zhandry shows that, for any binary POVM $\cP = (P, I-P)$, there exists a particularly nice implementation of $\cP$ which is projective, and such that the post-measurement state is an eigenvector of $P$. In particular, Zhandry observes that there exists a projective measurement $\cE$ which \emph{measures} the success probability of a state with respect to $\cP$. More precisely,
\begin{itemize}
    \item $\cE$ outputs a \emph{distribution} $D$ of the form $(p, 1-p)$ from a finite set of distribution over outcomes $\{0,1\}$. (we stress that $\cE$ actually outputs a distribution).
    \item The post-measurement state upon obtaining outcome $(p,1-p)$ is an \emph{eigenvector} (or a mixture of eigenvectors) of $P$ with eigenvalue $p$.
\end{itemize}

A measurement $\cE$ which satisfies these properties is the measurement in the common eigenbasis of $P$ and $I-P$ (such common eigenbasis exists since $P$ and $I-P$ commute). 

Note that since $\cE$ is projective, we are guaranteed that applying the same measurement twice will yield the same outcome. Thus, what we obtain from applying $\cE$ is a state with a ``well-defined'' success probability with respect to $\cP$: we know exactly how good the leftover program is with respect to the initial testing procedure $\cP$.

Formally, to complete the implementation of $\cP$, after having applied $\cE$, one outputs the bit $1$ with probability $p$, and the bit $0$ with probability $1-p$. This is summarized in the following definition.

\begin{definition}[Projective Implementation of a POVM]
\label{def:project_implement}
    Let $\cP = (P, Q)$ be a binary outcome POVM. Let $\cD$ be a finite set of distributions $(p, 1-p)$ over outcomes $\{0, 1\}$. Let $\cE = \{E_p\}_{(p, 1-p) \in \cD}$ be a projective measurement with index set $\cD$. Consider the following measurement procedure: 
    \begin{itemize}
        \item[(i)] Apply the projective measurement $\cE$ and obtain as outcome a distribution $(p, 1-p)$ over $\{0, 1\}$;
        \item[(ii)] Output a bit according to this distribution, i.e. output $1$ w.p $p$ and output $0$ w.p $1-p$. 
    \end{itemize}
    We say the above measurement procedure is a projective implementation of $\cP$, which we denote by $\projimp(\cP)$, if it is equivalent to $\cP$ (i.e. it produces the same probability distribution over outcomes).
\end{definition}

Zhandry shows that any binary POVM has a projective implementation, as in the previous definition.

\begin{lemma}[Adapted from Lemma 1 in \cite{z20}]
\label{lem:proj_implement}
    Any binary outcome POVM $\mathcal{P} = (P, Q)$ has a projective implementation $\projimp(\cP)$.
    
    Moreover, if the outcome is a distribution $(p, 1-p)$ when measuring under $\cE$, the collapsed state $\rho'$ is a mixture of eigenvectors of $P$ with eigenvalue $p$, and it is also a mixture of eigenvectors of $Q$ with eigenvalue $1 - p$. %
\end{lemma}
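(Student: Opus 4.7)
The plan is to construct the projective implementation directly from the spectral decomposition of $P$. Since $\cP = (P, I-P)$ is a binary POVM, $P$ and $Q = I-P$ are simultaneously diagonalizable (they commute trivially). Assuming a finite-dimensional Hilbert space, $P$ has finitely many distinct eigenvalues $p_1,\ldots, p_k \in [0,1]$. Write the spectral decomposition $P = \sum_{i} p_i \Pi_i$, where each $\Pi_i$ is the orthogonal projector onto the $p_i$-eigenspace of $P$; the family $\cE := \{\Pi_i\}_i$ is then a projective measurement indexed by the finite set of distributions $\cD := \{(p_i, 1-p_i)\}_i$, and each $\Pi_i$ is simultaneously the projector onto the $(1-p_i)$-eigenspace of $Q$.

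Next I would verify that the two-step procedure of \Cref{def:project_implement} built from this $\cE$ is genuinely equivalent to $\cP$. The probability that it outputs $1$ on any state $\rho$ is
\[
\sum_{i} p_i \cdot \Tr(\Pi_i \rho) \;=\; \Tr\!\left(\Big(\sum_{i} p_i \Pi_i\Big)\rho\right) \;=\; \Tr(P\rho),
\]
which exactly matches the probability $\cP$ assigns to outcome $1$; the probability of outcome $0$ follows from $\sum_i \Pi_i = I$. This establishes the first part of the lemma.

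The ``moreover'' clause is then immediate from the construction: conditioned on $\cE$ producing the outcome $(p, 1-p)$, the post-measurement state is proportional to $\Pi_p \rho \Pi_p$, which lies in the image of $\Pi_p$, i.e., the $p$-eigenspace of $P$. Any pure state in this subspace is by definition an eigenvector of $P$ with eigenvalue $p$, and a general mixed post-measurement state is a convex combination of such eigenvectors. Because $Q = I - P$, that same subspace is the $(1-p)$-eigenspace of $Q$, giving the second part.

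I do not anticipate a serious obstacle. The only mild subtlety is the tacit finite-dimensionality assumption needed to make $\cD$ finite (as required by \Cref{def:project_implement}); in finite dimension this is automatic from $P$ being Hermitian. A second point worth checking, but routine, is that the argument is insensitive to whether the input $\rho$ is pure or mixed, which holds because every step above is linear in $\rho$.
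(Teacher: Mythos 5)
Your proof is correct and follows precisely the mechanism the paper sketches in the discussion preceding the lemma: the projective implementation is the measurement in the common eigenbasis of $P$ and $I-P$, which you realize via the spectral decomposition $P = \sum_i p_i \Pi_i$, and the functional equivalence, the post-measurement structure, and the finite index set all fall out as you observe. The paper itself defers the argument to \cite{z20} without spelling it out, so there is nothing to contrast further.
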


As anticipated, the procedure that we will eventually use to test a state received from the adversary will be to:
\begin{itemize}
    \item[(i)] \emph{Measure} the success probability of the state,
    \item[(ii)] Accept if the outcome is large enough. 
\end{itemize}
As you may guess at this point, we will employ the projective measurement $\cE$ defined previously for step $(i)$. We call this variant of the projective implementation a \emph{threshold implementation}.

\vspace{1em}

\paragraph{Threshold Implementation}
The concept of threshold implementation of a POVM was proposed by Zhandry, and formalized by Aaronson, Liu, Liu, Zhandry and Zhang~\cite{aaronsonnew}.
The following is a formal definition.
\begin{definition}[Threshold Implementation]
\label{def:thres_implement}
Let $\cP = (P, Q)$ be a binary POVM. Let $\projimp(\cP)$ be a projective implementation of $\cP$, and let $\cE$ be the projective measurement in the first step of $\projimp(\cP)$ (using the same notation as in Definition \ref{def:project_implement}). Let $\gamma >0$. We refer to the following measurement procedure as a \emph{threshold implementation} of $\cP$ with parameter $\gamma$, and we denote is as $\ti_\gamma(\cP)$.
\begin{itemize}
        \item Apply the projective measurement $\cE$, and obtain as outcome a vector $(p, 1-p)$;
        \item Output a bit according to the distribution $(p, 1-p)$: output $1$ if $p \geq \gamma$, and $0$ otherwise. 

\end{itemize}
\end{definition}

For simplicity, for any quantum state $\rho$, we denote by $\Tr[\ti_{\gamma}(\cP) \, \rho]$ the probability that the threshold implementation applied to $\rho$ \textbf{outputs} $\mathbf{1}$. Thus, whenever $\ti_{\gamma}(\cP)$ appears inside a trace $\Tr$, we treat $\ti_{\gamma}(\cP)$ as a projection onto the $1$ outcome (i.e. the space spanned by eigenvectors of $P$ with eigenvalue at least $\gamma$).

Similarly to \Cref{lem:proj_implement}, we have the following lemma.

\begin{lemma}
\label{lem:threshold_implementation}
    Any binary outcome POVM $\mathcal{P} = (P, Q)$ has a threshold   implementation $\ti_{\gamma}(\cP)$ for any $\gamma$. 
\end{lemma}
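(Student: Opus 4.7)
The plan is to derive the existence of a threshold implementation as an essentially immediate consequence of \Cref{lem:proj_implement}. By that lemma, for any binary POVM $\cP = (P,Q)$ there is a projective implementation $\projimp(\cP)$ whose first step is a projective measurement $\cE = \{E_p\}_{(p,1-p) \in \cD}$ indexed by a finite set $\cD$ of distributions over $\{0,1\}$. I would then simply define $\ti_\gamma(\cP)$ exactly as prescribed by \Cref{def:thres_implement}: apply $\cE$ to obtain some $(p,1-p)$, and deterministically output $1$ if $p \geq \gamma$ and $0$ otherwise. This is well-defined as a measurement procedure, which already suffices for the statement of the lemma.

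To make the construction slightly more transparent (and to emphasize that $\ti_\gamma(\cP)$ can itself be realized as a binary projective measurement), I would next coarsen the index set of $\cE$. Since $\cD$ is finite and the $\{E_p\}$ form a complete family of pairwise orthogonal projectors, define
$$\Pi_1 \;=\; \sum_{\substack{(p,1-p)\in\cD \\ p \geq \gamma}} E_p, \qquad \Pi_0 \;=\; \sum_{\substack{(p,1-p)\in\cD \\ p < \gamma}} E_p.$$
Orthogonality of the $E_p$ gives $\Pi_0\Pi_1 = 0$, each $\Pi_b$ is a projector, and $\Pi_0 + \Pi_1 = \sum_{(p,1-p) \in \cD} E_p = I$. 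Hence $(\Pi_0,\Pi_1)$ is a genuine two-outcome projective measurement, and by construction the probability of outcome $1$ on any state $\rho$ equals $\Tr[\Pi_1 \rho] = \sum_{p \geq \gamma}\Tr[E_p \rho]$, which is exactly the probability that $\ti_\gamma(\cP)$ outputs $1$ on $\rho$. This justifies the notational convention, used throughout \Cref{sec: unclonable dec strong ag}, of treating $\ti_\gamma(\cP)$ inside a trace as a single projector onto the subspace spanned by the eigenvectors of $P$ with eigenvalue at least $\gamma$.

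There is no substantive obstacle: once \Cref{lem:proj_implement} provides the projective implementation with a finite-outcome $\cE$, the threshold implementation is obtained essentially for free by a deterministic post-processing of the outcome label. The only ingredient one actually relies on from \cite{z20} is that $\cE$ has finitely many outcomes, which holds in the finite-dimensional Hilbert spaces relevant to our applications since the distinct eigenvalues of $P$ are finite in number (bounded by $\dim \mathcal{H}$), so the sums defining $\Pi_0$ and $\Pi_1$ involve only finitely many terms.
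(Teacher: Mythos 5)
Your proposal is correct and matches the paper's approach: the paper offers no separate proof of this lemma, presenting it as an immediate consequence of \Cref{lem:proj_implement} (``Similarly to \Cref{lem:proj_implement}, we have the following lemma''), which is exactly how you obtain it by post-processing the outcome of the projective measurement $\cE$ from $\projimp(\cP)$ according to the threshold rule in \Cref{def:thres_implement}. Your additional observation coarsening $\cE$ into the binary projective measurement $(\Pi_0,\Pi_1)$ is a clean way to justify the paper's notational convention of writing $\Tr[\ti_\gamma(\cP)\,\rho]$ as if $\ti_\gamma(\cP)$ were a single projector, and goes slightly beyond what the paper makes explicit.
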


\vspace{1em}

In this work, we are interested in threshold implementations of POVMs with a particular structure. These POVMs represent a challenger's test of a quantum state received from an adversary in a security game (like the POVM described earlier for testing whether a program evaluates correctly on a uniformly random input). These POVMs have the following structure:
\begin{itemize}
    \item Sample a projective measurement from a set of projective measurements $\mathcal{I}$, according to some distribution $D$ over $\mathcal{I}$.
    \item Apply this projective measurement.
\end{itemize}

We refer to POVMs of this form as \emph{mixtures of projective measurements}. The following is a formal definition.

\begin{definition}[Mixture of Projective Measurements] \label{def:mixture_of_projective}
Let $\mathcal{R}$, $\mathcal{I}$ be sets. Let $D: \mathcal{R} \rightarrow \mathcal{I}$. Let $\{(P_i, Q_i)\}_{i \in I}$ be a collection of binary projective measurements. The \emph{mixture of projective measurements} associated to $\mathcal{R}$, $\mathcal{I}, D$ and $\{(P_i, Q_i)\}_{i \in I}$ is the binary POVM $\cP_D = (P_D, Q_D)$ defined as follows: 
\begin{align*}
 P_D = \sum_{i \in \cal I} \Pr[i \gets D(R)] \, P_i ,\,\,\,\,\,\text{  }\,\,\,\,\,  Q_D = \sum_{i \in \cal I} \Pr[i \gets D(R)] \, Q_i, 
\end{align*}
where $R$ is uniformly distributed in $\cR$. 
\end{definition}

In other words, $\cP_D$ is implemented in the following way: sample randomness $r \gets \cR$, compute the index $i = D(r)$, and apply the projective measurement $(P_i, Q_i)$. Thus, for any quantum state $\rho$, $\Tr[P_D \rho]$ is the probability that a projective measurement $(P_i, Q_i)$, sampled according to the distribution induced by $D$, applied to $\rho$ outputs $1$. 

The following lemma will be important in the proof of security for our single-decryptor encryption scheme in Section \ref{sec:unclonable_dec}.

Informally, the lemma states the following. Let $\cP_{D_0}$ and $\cP_{D_1}$ be two mixtures of projective measurements, where $D_0$ and $D_1$ are two computationally indistinguishable distributions. Let $\gamma, \gamma'>0$ be inverse-polynomially close. Then for any (efficiently constructible) state $\rho$, the probabilities of obtaining outcome $1$ upon measuring $\ti_{\gamma}(\cP_{D_0})$ and $\ti_{\gamma'}(\cP_{D_1})$ respectively are negligibly close.

\begin{theorem}[Theorem 6.5 in \cite{z20}] \label{thm:ti_different_distribution}
Let $\gamma >0$. Let $\cP$ be a collection of projective measurements indexed by some set $\cal I$. Let $\rho$ be an efficiently constructible mixed state, and let $D_0, D_1$ be two efficiently sampleable and computationally indistinguishable distributions over $\cal I$. For any inverse polynomial $\epsilon$, there exists a negligible function $\delta$ such that
\begin{align*}
    \Tr[\ti_{\gamma - \epsilon}(\cP_{D_1}) \rho] \geq \Tr[\ti_{\gamma}(\cP_{D_0}) \rho] - \delta \,,
\end{align*}
where $\cP_{D_i}$ is the mixture of projective measurements associated to $\cP$ and $D_i$.
\end{theorem}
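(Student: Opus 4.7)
The plan is to prove this by contradiction and a reduction to the computational indistinguishability of $D_0$ and $D_1$. Suppose that the conclusion fails: there exist an inverse polynomial $\epsilon$ and an inverse polynomial $\delta$ such that, infinitely often in $\lambda$, $\Tr[\ti_{\gamma - \epsilon}(\cP_{D_1}) \rho] < \Tr[\ti_{\gamma}(\cP_{D_0}) \rho] - \delta$. The goal will be to build an efficient QPT distinguisher $\cA$ whose output bit differs by at least $\delta/2$ in expectation when it is given sample access to $D_0$ versus $D_1$. Because $\rho$ is efficiently constructible and each projective measurement $(P_i, Q_i)$ is assumed to be implementable efficiently, the mixture POVM $\cP_{D_b}$ is accessible to $\cA$ as long as $\cA$ can draw samples $i \gets D_b$, and this is the only place in the reduction that depends on $b$.

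The key technical tool is the Approximate Projective Implementation $\api$ of Marriott--Watrous, refined for this setting by Zhandry. Given black-box access to a binary POVM $\cP$ together with a precision parameter $\eta > 0$ and a failure tolerance $\zeta > 0$, $\api$ is a QPT algorithm that alternately measures $\cP$ and its complement $\poly(1/\eta, 1/\zeta)$ times and outputs a value $p'$ such that, except with probability at most $\zeta$, $p'$ is within additive error $\eta$ of a true eigenvalue $p$ of $\projimp(\cP)$, with the post-measurement state being close (in trace distance) to a corresponding eigenstate. In particular, $\api$ applied to $\cP_{D_b}$ can be implemented by a QPT algorithm given only sample access to $D_b$ and the ability to perform the individual projective measurements $(P_i,Q_i)$.

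The distinguisher $\cA$ now proceeds as follows: prepare $\rho$, then run $\api$ on $\rho$ relative to $\cP_{D_b}$ with parameters $\eta := \epsilon/3$ and $\zeta := \delta/10$, obtain an estimate $p'$, and output $1$ iff $p' \geq \gamma - \epsilon/2$. When $b = 0$: conditioning on the $\api$ guarantee, if the true eigenvalue is $p \geq \gamma$ then $p' \geq \gamma - \epsilon/3 > \gamma - \epsilon/2$, so $\Pr[\cA = 1 \mid b = 0] \geq \Tr[\ti_{\gamma}(\cP_{D_0}) \rho] - \zeta$. When $b = 1$: again conditioning on the $\api$ guarantee, the event $p' \geq \gamma - \epsilon/2$ forces $p \geq \gamma - \epsilon/2 - \eta = \gamma - 5\epsilon/6 > \gamma - \epsilon$, so by monotonicity of the threshold projectors, $\Pr[\cA = 1 \mid b = 1] \leq \Tr[\ti_{\gamma - \epsilon}(\cP_{D_1}) \rho] + \zeta$. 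Subtracting yields a distinguishing advantage of at least $\delta - 2\zeta \geq 4\delta/5$, which is non-negligible and contradicts the computational indistinguishability of $D_0$ and $D_1$.

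The main obstacle is establishing and analyzing the $\api$ primitive with the precise accuracy and efficiency required: the procedure is an alternating-measurement scheme analyzed via Jordan's lemma, which block-decomposes the Hilbert space into two-dimensional invariant subspaces on which $\cP$ and its complement act as rank-one projectors at a fixed angle. On each such block the alternation implements a form of phase estimation whose precision must be tuned so that one obtains an $\eta$-accurate eigenvalue estimate without disturbing the eigenstate structure enough to invalidate the spectral-comparison argument. A secondary subtlety is handling the $\zeta$ failure event of $\api$ simultaneously on both sides of the reduction, which I will absorb into a single union bound; and verifying monotonicity of $\Tr[\ti_{\gamma}(\cdot)\rho]$ in $\gamma$, which is what licenses the step from $\gamma - 5\epsilon/6$ to $\gamma - \epsilon$ in the bound for $b = 1$.
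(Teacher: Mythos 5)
The paper cites this statement from Zhandry's work \cite{z20} without providing its own proof, so there is no internal argument to compare against; this is a quoted black-box theorem. Your reduction---run the Marriott--Watrous approximate projective implementation $\api$ with accuracy $\eta = \epsilon/3$ and failure tolerance $\zeta = \delta/10$, threshold the estimated eigenvalue at $\gamma - \epsilon/2$, and invoke the shift-distance guarantee $\Delta^{\eta}_{\shift}(\api, \projimp) \leq \zeta$ once on each side of the game---is exactly the standard argument, and it is the proof Zhandry gives in \cite{z20}. Your constants check out: on the $D_0$ side $p \geq \gamma$ forces $p' \geq \gamma - \epsilon/3 > \gamma - \epsilon/2$, and on the $D_1$ side $p' \geq \gamma - \epsilon/2$ forces $p \geq \gamma - 5\epsilon/6 > \gamma - \epsilon$, so the distinguishing advantage is at least $\delta - 2\zeta = 4\delta/5$, non-negligible. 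The only small correction is that the running time of $\api$ scales as $\poly(1/\eta, \log(1/\zeta))$, not $\poly(1/\eta, 1/\zeta)$; this is immaterial here since $\zeta$ is inverse polynomial, but worth stating precisely.
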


\vspace{0.5em}

%

%
%
%
%
%
%
%

%

%
%
%

%

\begin{comment}
\qipeng{delete the following def}
\begin{definition}[Controlled Projection]
\label{def:controlled_project}
    Let $\cP = \{\cP_i = (P_i, Q_i)\}_{i \in \cal I}$ be a collection of binary projective measurements over a Hilbert space $\cal H$. Let $D : \cR \rightarrow \cal I $.
    We define the controlled projection to be the measurement $\cproj_{\cP, D} = (\cproj^1_{\cP, D}, \cproj^0_{\cP, D})$ where:
    \begin{align*}
        \cproj^1_{\cP, D} := \sum_{r \in \cR} \ket r \bra r \otimes P_{D(r)} \quad\quad\quad \cproj^0_{\cP, D} := \sum_{r \in \cR} \ket r \bra r \otimes Q_{D(r)} \,.
    \end{align*}
\end{definition}
In other words, $\cproj_{\cP, D}$ uses outcomes $r$ as a control to decide which projective measurement to be applied to the second system. That is, $\cproj_{\cP, D}$ implements the following mixed projective measurement, which is a POVM $\cP_D = (P_D, Q_D)$. 
\end{comment}

\vspace{1em}

\paragraph{Approximating Threshold Implementation}
\emph{Projective} and \emph{threshold} implementations of POVMs are unfortunately not efficiently computable in general.

However, they can be approximated if the POVM is a mixture of projective measurements, as shown by Zhandry \cite{z20}, using a technique first introduced by Marriott and Watrous \cite{marriott2005quantum} in the context of error reduction for quantum Arthur-Merlin games.

We will make use of the following lemma from a subsequent work of Aaronson et al.~\cite{aaronsonnew}.

\begin{lemma}[Corollary 1 in \cite{aaronsonnew}]\label{cor:ati_thresimp}
    For any $\epsilon, \delta, \gamma \in (0,1)$, any collection of projective measurements $\cP = \{(P_i, Q_i)\}_{i \in \mathcal{I}}$, where $\mathcal{I}$ is some index set, and any distribution $D$ over $\mathcal{I}$, there exists a measurement procedure $\ati^{\epsilon, \delta}_{\cP, D, \gamma}$ that satisfies the following:
    \begin{itemize}
        \item $\ati^{\epsilon, \delta}_{\cP, D, \gamma}$ implements a binary outcome measurement. For simplicity, we denote the probability of the measurement \textbf{outputting} $\mathbf{1}$ on $\rho$ by $\Tr[\ati^{\epsilon, \delta}_{\cP, D, \gamma}\, \rho]$. %
    
        \item For all quantum states $\rho$, $\Tr[\ati^{\epsilon, \delta}_{\cP, D, \gamma}\, \rho]\geq \Tr[\ti_\gamma(\cP_D)\, \rho]-\delta$. 
        
        \item 
        For all quantum states $\rho$, 
        let $\rho'$ be the post-measurement state after applying  $\ati^{\epsilon, \delta}_{\cP, D, \gamma}$ on $\rho$, and obtaining outcome $1$. Then, $\Tr[\ti_{\gamma-2 \epsilon}(\cP_D)\, \rho'] \geq 1 - 2\delta$. 
        \item
        The expected running time is $T_{\cP, D} \cdot \poly(1/\epsilon, 1/(\log \delta))$, where $T_{\cP, D}$ is the combined running time of sampling according to $D$, of mapping $i$ to $(P_{i}, Q_{i})$, and of implementing the projective measurement $(P_{i}, Q_{i})$. 
    \end{itemize}
\end{lemma}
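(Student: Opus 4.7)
The plan is to build $\ati^{\epsilon,\delta}_{\cP,D,\gamma}$ by first approximating the projective implementation $\projimp(\cP_D)$ via a Marriott--Watrous style alternating-projections procedure, and then thresholding the resulting eigenvalue estimate at (a slightly shifted) $\gamma$. The construction essentially lifts Zhandry's approximate projective implementation (which gave the first two bullets for the non-threshold case) to the threshold setting, being careful to track how approximation errors propagate into the post-measurement guarantee.

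First I would set up coherent access to the distribution $D$. Introduce an auxiliary randomness register and prepare the state $\ket{\psi_D} = \sum_{i \in \mathcal{I}} \sqrt{\Pr[D=i]}\,\ket{i}$. Define two projections on the combined system: $\Pi_0 = \ket{\psi_D}\bra{\psi_D} \otimes I$, and $\Pi_1 = \sum_i \ket{i}\bra{i} \otimes P_i$. The operator $\Pi_0 \Pi_1 \Pi_0$, restricted to the subspace $\ket{\psi_D} \otimes (\cdot)$, realizes exactly the POVM element $P_D$ from \Cref{def:mixture_of_projective}; thus the common-eigenbasis measurement of $\Pi_0$ and $\Pi_1$ on $\ket{\psi_D}\otimes\rho$ realizes $\projimp(\cP_D)$ on $\rho$, with eigenvalues being precisely the success probabilities $p \in [0,1]$.

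Next I would build the approximate projective implementation $\api^{\epsilon,\delta}_{\cP,D}$ by alternating $\Pi_0$ and $\Pi_1$ for $t = \poly(1/\epsilon, \log(1/\delta))$ rounds, recording the outcome bits, and outputting an empirical-frequency estimate $\tilde p$ of the eigenvalue. Jordan's lemma decomposes the joint Hilbert space into two-dimensional invariant subspaces of $(\Pi_0,\Pi_1)$, within which the alternation becomes a classical biased random walk; standard concentration analysis then yields $|\tilde p - p| \leq \epsilon$ except with probability $\delta$, together with the $(\epsilon,\delta)$-almost-projective property that two consecutive runs agree within $\epsilon$ except with probability $\delta$. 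Finally I define $\ati^{\epsilon,\delta}_{\cP,D,\gamma}$ to run $\api^{\epsilon,\delta}_{\cP,D}$ and output $1$ iff $\tilde p \geq \gamma - \epsilon$. The first bullet follows because whenever the true eigenvalue $p$ exceeds $\gamma$, with probability at least $1-\delta$ we have $\tilde p \geq \gamma - \epsilon$; the third bullet is just the cost of $t$ alternations of sampling $i$ from $D$ and applying $(P_i,Q_i)$.

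The main obstacle, and the part requiring real care, is the post-measurement guarantee (second bullet). After $\ati$ outputs $1$, the state $\rho'$ has an $\api$-value at least $\gamma-\epsilon$; I need to convert this into a statement about the \emph{true} $\projimp(\cP_D)$-eigenvalue being at least $\gamma - 2\epsilon$ with probability $\geq 1-2\delta$, so that \Cref{lem:threshold_implementation} applied with threshold $\gamma - 2\epsilon$ accepts. This requires one more invocation of almost-projectivity (a re-measurement of $\api$ lands within $\epsilon$ of $\gamma-\epsilon$, hence $\geq \gamma-2\epsilon$, except with probability $\delta$) combined with the accuracy bound on $\api$ (the true eigenvalue on the collapsed state is within $\epsilon$ of the $\api$ output, except with another $\delta$), yielding the stated $1-2\delta$ bound by a union bound. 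The delicate point throughout is that the Jordan-block analysis must be uniform over all input $\rho$, and the two losses of $\epsilon$ and two losses of $\delta$ need to compose cleanly; this bookkeeping is what forces the specific parameter shifts in the corollary.
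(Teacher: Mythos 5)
The paper does not actually prove this lemma; it is imported verbatim as Corollary~1 of \cite{aaronsonnew}, which in turn is a thresholded repackaging of Zhandry's approximate projective implementation (the $\api^{\epsilon,\delta}_{\cP,D}$ of Theorem~2 in \cite{z20}, stated in a commented-out block of this paper's source). So there is no in-paper proof to compare against, and the correct benchmark is the argument in those two references. Your high-level plan matches it: purify the sampling of $D$ into $\ket{\psi_D}$, define $\Pi_0 = \ket{\psi_D}\bra{\psi_D}\otimes I$ and $\Pi_1 = \sum_i \ket{i}\bra{i}\otimes P_i$, observe that $\bra{\psi_D}\Pi_1\ket{\psi_D} = P_D$, run Marriott--Watrous alternating projections with Jordan-block analysis to get the $(\epsilon,\delta)$-almost-projective estimator and the $\epsilon$-shift-distance bound to $\projimp(\cP_D)$, then threshold. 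That is the real content, and you have it.

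The genuine gap is in the parameter bookkeeping for the third bullet, which you flag but do not resolve. If you threshold $\api^{\epsilon,\delta}$ directly at $\gamma-\epsilon$, then conditioned on acceptance you only know the $\api$ outcome $X$ satisfies $X \geq \gamma-\epsilon$; almost-projectivity gives a re-measured $X' \geq \gamma-2\epsilon$ except with probability $\delta$; and the shift-distance inequality $\Pr[\projimp \leq x] \leq \Pr[\api \leq x+\epsilon] + \delta$ then only controls $\Pr[\projimp(\rho') < \gamma-3\epsilon]$, not $\Pr[\projimp(\rho') < \gamma-2\epsilon]$. You lose three $\epsilon$'s (one for the threshold, one for almost-projectivity, one for shift distance) but the statement only budgets two. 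The fix, which the cited works use, is to instantiate the internal $\api$ with tighter parameters --- e.g.\ $\api^{\epsilon/2,\delta}$ with acceptance threshold $\gamma-\epsilon/2$ --- so that the three losses of $\epsilon/2$ sum to $3\epsilon/2 \leq 2\epsilon$; the running time still lands in $T_{\cP,D}\cdot\poly(1/\epsilon,\log(1/\delta))$. Without that rescaling step your construction as stated proves a strictly weaker post-measurement guarantee than the one claimed. A second, minor point: the ``empirical-frequency estimate'' you describe is a reasonable gloss, but the actual $\api$ of \cite{z20} is a controlled phase-estimation-style counting of alternations, not literal frequency counting; this does not affect the conclusion but is worth stating accurately since the almost-projectivity property is tied to that specific construction.
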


Intuitively the corollary says that if a quantum state $\rho$ has weight $p$ on eigenvectors with eigenvalues at least $\gamma$, then the measurement $\ati^{\epsilon, \delta}_{\cP, D, \gamma}$ will produce with probability at least $p - \delta$ a post-measurement state which has weight $1 - 2 \delta$ on eigenvectors with eigenvalues at least $\gamma - 2 \epsilon$. Moreover, the running time for implementing $\ati^{\epsilon, \delta}_{\cP, D, \gamma}$ is proportional to $\poly(1/\epsilon, 1/(\log \delta))$, which is a polynomial in $\lambda$ as long as $\epsilon$ is any inverse polynomial and $\delta$ is any inverse sub-exponential function. 

Crucially for applications to single-decryption encryption and copy-protection, the above lemma can be generalized to pairs of POVMs on bipartite states.

\begin{lemma}[Lemma 3 in~\cite{aaronsonnew}] \label{lem:ati_2d}
Let $\cP_1$ and $\cP_2$ be two  collections of projective measurements, indexed by elements of $\mathcal{I}$, and let $D_1$ and $D_2$ be probability distributions over $\mathcal{I}$.

For any $\epsilon, \delta, \gamma \in (0,1)$, let $\ati^{\epsilon, \delta}_{\cP_1, D_1, \gamma}$ and $\ati^{\epsilon, \delta}_{\cP_2, D_2, \gamma}$ be the measuring algorithms above. They satisfy: 
\begin{itemize}
    \item For any bipartite (possibly entangled, mixed) quantum state $\rho\in \mathcal{H}_{1}\otimes \mathcal{H}_{2}$,
    \begin{align*}
        \Tr\big[\big(\ati_{\cP_1, D_1, \gamma}^{\epsilon, \delta}\otimes \ati_{\cP_2, D_2, \gamma}^{\epsilon, \delta}\big)\rho\big] \geq \Tr\big[ \big(\ti_{\gamma}(\cP_{1, D_1})\otimes \ti_\gamma(\cP_{2, D_2})\big)\rho \big] - 2\delta,
    \end{align*}
    where $\cP_{i, D_i}$ is the mixture of projective measurement corresponding to $\cP_i, D_i$. 
    \item For any (possibly entangled, mixed) quantum state $\rho \in \mathcal{H}_{1}\otimes \mathcal{H}_{2}$,
        let $\rho'$ be the (normalized) post-measurement state after applying the measurements $\ati^{\epsilon, \delta}_{\cP_1, D_1, \gamma}$ and  $\ati^{\epsilon, \delta}_{\cP_2, D_2, \gamma}$ to $\rho$ and obtaining outcomes $1$ for both. Then, 
        \begin{align*}
            \Tr\big[ \big(\ti_{\gamma-2\epsilon}(\cP_{1, D_1})\otimes \ti_{\gamma-2\epsilon}(\cP_{2, D_2})\big) \rho'\big]\geq  1- 4 \delta.
        \end{align*}
\end{itemize}
\end{lemma}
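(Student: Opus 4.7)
The plan is to reduce both claims to the single-system Corollary 1, using that $\ati_1 := \ati^{\epsilon,\delta}_{\cP_1, D_1, \gamma}$ and $\ati_2 := \ati^{\epsilon,\delta}_{\cP_2, D_2, \gamma}$ act on disjoint tensor factors $\mathcal{H}_1$ and $\mathcal{H}_2$, hence commute as quantum operations and can be applied in either order.

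For the first claim, I would first show that for any POVM element $0 \leq M \leq I$ on $\mathcal{H}_2$,
\[ \Tr[(\ati_1 \otimes M)\rho] \;\geq\; \Tr[(\ti_\gamma(\cP_{1,D_1}) \otimes M)\rho] - \delta, \]
which follows by applying the single-system bound of Corollary 1 to the sub-normalized positive operator $\Tr_2[(I \otimes M)\rho]$ on $\mathcal{H}_1$ and using linearity of that bound in the input. Setting $M = \ati_2$ handles the first ATI; a symmetric step on system 2, now with $\ti_\gamma(\cP_{1,D_1})$ fixed in the role of the POVM on system 1, swaps $\ati_2$ for $\ti_\gamma(\cP_{2,D_2})$ at an additional cost of $\delta$. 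Combining yields the claimed $2\delta$ bound.

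For the second claim, the crux is to rephrase the state-wise post-measurement guarantee of Corollary 1 as an \emph{operator} inequality. Let $\cE_{\ati_i, 1}$ denote the quantum operation on $\mathcal{H}_i$ corresponding to obtaining outcome $1$ from $\ati_i$ (as implemented in Corollary 1), with adjoint $\cE_{\ati_i, 1}^*$, and write $P_i := \ti_{\gamma-2\epsilon}(\cP_{i, D_i})$. The statement ``for every input state $\tau$ on $\mathcal{H}_i$, the normalized post-measurement state $\cE_{\ati_i,1}(\tau)/\Tr[\ati_i \tau]$ has overlap $\geq 1-2\delta$ with $P_i$'' is equivalent, by clearing the denominator and characterizing positivity via expectations against all positive $\tau$, to the operator inequality $\cE_{\ati_i,1}^*(P_i) \geq (1 - 2\delta)\,\ati_i$ on $\mathcal{H}_i$. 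Tensoring these two operator inequalities on disjoint systems preserves positivity, giving
\[ \cE_{\ati_1,1}^*(P_1) \otimes \cE_{\ati_2,1}^*(P_2) \;\geq\; (1-2\delta)^2\, (\ati_1 \otimes \ati_2). \]
The un-normalized post-$(1,1)$ state is $\sigma = (\cE_{\ati_1,1} \otimes \cE_{\ati_2,1})(\rho)$, with $\Tr[\sigma] = p_{11} := \Tr[(\ati_1 \otimes \ati_2)\rho]$, and by adjointness,
\[ \Tr[(P_1 \otimes P_2)\sigma] = \Tr[(\cE_{\ati_1,1}^*(P_1) \otimes \cE_{\ati_2,1}^*(P_2))\rho] \geq (1-2\delta)^2\,p_{11}. \]
Dividing by $p_{11}$ yields $\Tr[(P_1 \otimes P_2)\rho'] \geq (1-2\delta)^2 \geq 1 - 4\delta$.

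The main obstacle is the passage from the state-wise post-measurement bound of Corollary 1 to the operator inequality $\cE_{\ati_i,1}^*(P_i) \geq (1-2\delta)\,\ati_i$. Once this is in place, the tensorization step---legitimate precisely because all operators involved are positive---is what allows the normalization factor $p_{11}$ to cancel \emph{exactly} in the bipartite analysis. Without this operator-level argument, a naive union-bound approach via the decomposition $I \otimes I - P_1 \otimes P_2 \leq (I - P_1)\otimes I + I \otimes (I - P_2)$ incurs a blowup factor of $p_1/p_{11}$, where $p_1$ is a single-system outcome probability, which can be arbitrarily large when the joint success probability is small; so the operator-inequality route is essential for the clean $4\delta$ bound.
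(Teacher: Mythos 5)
The paper does not actually prove \Cref{lem:ati_2d}: it is imported verbatim, with its statement, as Lemma 3 of the cited work of Aaronson, Liu, Liu, Zhandry, and Zhang. So there is no ``paper's own proof'' to compare against; I can only assess your argument on its merits, and it is correct.

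Your proof of the first item is right: the single-system inequality of \Cref{cor:ati_thresimp} has the form $\Tr[\ati_i\,\tau]\ge \Tr[\ti_\gamma(\cP_{i,D_i})\,\tau]-\delta$, and since the error term is additive and $\delta\ge\delta\cdot\Tr[\tau]$ for $\Tr[\tau]\le 1$, it extends to any sub-normalized positive $\tau$. Taking $\tau=\Tr_2[(I\otimes M)\rho]$ for a POVM element $M$ on the second system, and then symmetrizing with $M=\ati_2$ on one side and the projector $\ti_\gamma(\cP_{1,D_1})$ on the other, correctly chains to give the $2\delta$ bound. The second item is where your approach has genuine content: you lift the state-wise post-measurement guarantee of \Cref{cor:ati_thresimp} to the Heisenberg-picture operator inequality $\cE_{\ati_i,1}^*\bigl(\ti_{\gamma-2\epsilon}(\cP_{i,D_i})\bigr)\ge (1-2\delta)\,\ati_i$ (the degenerate case $\Tr[\ati_i\tau]=0$ gives $\cE_{\ati_i,1}(\tau)=0$, so the inequality holds with equality there, and testing against all pure states upgrades it to an operator inequality). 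Tensoring two operator inequalities $A_i\ge B_i\ge 0$ on disjoint factors preserves the order via $A_1\otimes A_2 - B_1\otimes B_2 = (A_1-B_1)\otimes A_2 + B_1\otimes(A_2-B_2)\ge 0$, and then duality with the CP map $\cE_{\ati_1,1}\otimes\cE_{\ati_2,1}$ makes the joint acceptance probability $p_{11}$ cancel exactly, yielding the stronger bound $(1-2\delta)^2\ge 1-4\delta$. Your closing observation that the union-bound decomposition $I\otimes I - P_1\otimes P_2 \le (I-P_1)\otimes I + I\otimes(I-P_2)$ blows up by a factor $1/\Tr[(I\otimes\ati_2)\rho_1']$ after conditioning is also the right diagnosis for why the operator-level tensorization is needed.

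One small point to make explicit: the tensorization step requires $1-2\delta\ge 0$ so that $B_i=(1-2\delta)\ati_i$ is actually positive. For $\delta\ge 1/4$ the conclusion $1-4\delta\le 0$ is vacuous, so you may assume $\delta<1/4$ without loss of generality; stating this removes any ambiguity from the ``$\delta\in(0,1)$'' hypothesis. With that caveat noted, the proof is complete.
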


\section{Coset States}
\label{sec: coset states}

This section is organized as follows. In Section \ref{sec: affine subspaces defs}, we introduce coset states. In Section \ref{sec: direct product}, we show that coset states satisfy both an information-theoretic and a computational \emph{\compuncertain} property. The latter immediately yields a signature token scheme in the plain model assuming $\iO$, (this is described in Section \ref{sec: signature tokens}). In Section \ref{sec: monogamy} we show that coset states satisfy both an information-theoretic \emph{monogamy of entanglement} property (analogous to that satisfied by BB84 states \cite{tomamichel2013monogamy}), and a computational monogamy of entanglement property. The latter is used in Section \ref{sec: unclonable dec witness enc} to obtain an unclonable decryption scheme from $\iO$ and extractable witness encryption. In Section \ref{sec: monogamy conjectured}, we describe a \emph{strong version} of the monogamy property, which we conjecture to be true. The latter is used in Section \ref{sec: unclonable dec from stronger monogamy} to obtain an unclonable decryption scheme which does not assume extractable witness encryption.

\subsection{Definitions}
\label{sec: affine subspaces defs}

In this subsection, we provide the basic definitions and properties of coset states. 

For any subspace $A$, its complement is $A^\perp = \{ b \in \mathbb{F}^n \,|\,  \langle a, b\rangle \bmod 2 = 0 \,,\, \forall a \in A \}$. It satisfies $\dim(A) + \dim(A^\perp) = n$. We also let $|A| = 2^{\dim(A)}$ denote the size of the subspace $A$.
\begin{definition}[Subspace States]
For any subspace $A \subseteq \mathbb{F}_2^n$, the subspace state $\ket A$ is defined as $$ \ket{A} = \frac{1}{\sqrt{|A|}}\sum_{a \in A} \ket a \,.$$
\end{definition}
Note that given $A$, the subspace state $\ket A$ can be constructed efficiently. 

\begin{definition}[Coset States]
For any subspace $A \subseteq \mathbb{F}_2^n$ and vectors $s, s' \in \mathbb{F}_2^n$, the coset state $\ket {A_{s,s'}}$ is defined as:
\begin{align*}
    \ket {A_{s,s'}} = \frac{1}{\sqrt{|A|}} \sum_{a \in A} (-1)^{\langle s', a\rangle} \ket {a + s}\,.
\end{align*}
\end{definition}

Note that by applying $H^{\otimes n}$, which is {\sf QFT} for $\F_2^n$, to the state $\ket {A_{s,s'}}$, one obtains exactly $\ket {A^{\perp}_{s', s}}$. 

Additionally, note that given $\ket A$ and $s, s'$, one can efficiently construct $\ket {A_{s, s'}}$ as follows: 
\begin{align*}
     & \sum_a \ket a \,\xrightarrow[]{\text{add } s}\, \sum_a \ket {a + s} \,\xrightarrow[]{H^{\otimes n}}\, \sum_{a' \in A^\perp} (-1)^{\langle a', s\rangle} \ket {a'} \\
      \xrightarrow[]{\text{adding } s'}\, & \sum_{a' \in A^\perp} (-1)^{\langle a', s\rangle} \ket {a' + s'} \,\xrightarrow[]{H^{\otimes n}}\, \sum_{a \in A} (-1)^{\langle a, s'\rangle} \ket {a + s}
\end{align*}

For a subspace $A$ and vectors $s,s'$, we define $A+s = \{v +s : v \in A\}$, and $A^{\perp}+s' = \{v +s': v \in A^{\perp}\}$.

It is also convenient for later sections to define a canonical representative, with respect to subspace $A$, of the coset $A+s$.

\begin{definition}[Canonical representative of a coset]
\label{def:canonical_vec_func}
    For a subspace $A$, we define the function $\can_A(\cdot)$ such that $\can_A(s)$ is the lexicographically smallest vector contained in $A + s$ (we call this the canonical representative of coset $A+s$).
\end{definition}
Note that if $\tilde{s} \in A + s$, then $\can_A(s) = \can_A(\tilde{s})$. 
Also note that $\can_A$ is polynomial-time computable given the description of $A$. The algorithm to compute $\can_A$ is the following:
\begin{enumerate}
    \item Initialize the answer to be empty. 
    \item In the first step, let the first entry of the answer be $0$ and check if a vector starting with $0$ is in $A+s$. This can be done efficiently by solving a linear system (by knowing $A$ and $s$). If such a vector is not in $A+s$, let the first entry of the answer be $1$. 
    \item Iterate the same procedure for all entries, and output the answer.
\end{enumerate}

When it is clear from the context, for ease of notation, we will write $A+s$ to mean the \emph{program} that checks membership in $A+s$. For example, we will often write $\iO(A+s)$ to mean an $\iO$ obfuscation of the program that checks membership in $A+s$. 

The following equivalences, which follow straightforwardly from the security of $\iO$, will be useful in our security proofs later on. 
\begin{lemma} \label{lem:io_shO_CC_equivalent}
For any subspace $A \subseteq \mathbb{F}_2^n$, 
\begin{itemize}
    \item $\iO(A+s) \approx_c \iO(\shO_A(\cdot-s))\,,$
    \vspace{1mm}
    
    where $\shO_A()$ denotes the program $\shO(A)$, and $\shO$ is the subspace hiding obfuscator defined in Section \ref{sec: shO}. So, $\shO_A(\cdot-s)$ is the program that on input $x$, runs program $\shO(A)$ on input $x-s$. 
    \item $\iO(A+s) \approx_c \iO(\CC[\can_A, \can_A(s)])\,,$ 
    \vspace{1mm}
    
    where recall that $\CC[\can_A, \can_A(s)]$ refers to the compute-and-compare program which on input $x$ outputs $1$ if and only if $\can_A(x) = \can_A(s)$. 
\end{itemize}
\end{lemma}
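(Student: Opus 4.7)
My plan is to prove each of the two indistinguishabilities by the standard strategy for $\iO$ arguments: exhibit a pair of circuits that are functionally equivalent (with overwhelming probability, where randomness is involved), appropriately padded to the same size, and then invoke the post-quantum security of $\iO$. The bulk of the work is simply verifying functional equivalence; the cryptographic step is immediate.

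For the first equivalence, I would start by recalling that the program $P_{A+s}$, on input $x$, outputs $1$ iff $x - s \in A$. On the other hand, the program $\shO_A(\,\cdot\, - s)$, on input $x$, first computes $x - s$ and then runs $\shO(A)$ on this value. By the correctness guarantee of the subspace-hiding obfuscator (stated in \Cref{sec: shO}), $\shO(A)(y) = \mathbf{1}[y \in A]$ for every $y$ with probability $1 - \negl(n)$ over the randomness of $\shO$. Hence, with overwhelming probability over $\shO$'s coins, the two circuits agree on all inputs. Padding both to a common size and invoking $\iO$ security (against a sampler that outputs both circuits along with the $\shO$ randomness as its auxiliary state) gives the claim. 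The only subtlety I would flag is ensuring that the bad event (functional inequality from $\shO$ sampling) can be absorbed into the error bound tolerated by the $\iO$ security definition, which it can since that definition allows a negligible probability of the sampled circuits differing.

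For the second equivalence, the core observation is that $\can_A(x) = \can_A(s)$ iff $x$ and $s$ lie in the same coset of $A$, iff $x - s \in A$, iff $x \in A + s$. This chain of equivalences is immediate from \Cref{def:canonical_vec_func} and the definition of $\can_A$: the canonical representative depends only on the coset, and distinct cosets have distinct representatives. Thus $P_{A+s}$ and $\CC[\can_A, \can_A(s)]$ compute the same boolean function on all inputs, deterministically (assuming $\can_A$ is implemented by the deterministic procedure described after \Cref{def:canonical_vec_func}). Again I would pad both circuits to the same length and invoke $\iO$ security.

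The main thing to be careful about — though it is not really an obstacle, more a routine check — is the padding: since $\shO_A(\,\cdot\, - s)$ and $\CC[\can_A, \can_A(s)]$ may be larger than the raw membership program $P_{A+s}$, one must define $\iO(A+s)$ in this lemma as shorthand for $\iO$ applied to a suitably padded version of the membership program, so that all circuits involved are of equal size and $\iO$ security directly applies. With that convention fixed, both indistinguishabilities reduce in one step to a single invocation of post-quantum $\iO$ security, and no further hybrid argument is needed.
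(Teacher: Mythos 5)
Your proposal is correct and matches the paper's intended argument: the paper does not spell out a proof, stating only that the equivalences ``follow straightforwardly from the security of $\iO$,'' and your argument — establish functional equivalence (exact in the compute-and-compare case, with overwhelming probability over $\shO$'s coins in the first case, which the $\iO$ definition's negligible-$\alpha$ slack explicitly tolerates), pad to a common size, and invoke post-quantum $\iO$ security once — is exactly that argument made explicit.
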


\subsection{Direct Product Hardness}
\label{sec: direct product}
In this section, we argue that coset states satisfy both an information-theoretic 
and a computational 
\compuncertain  \,\,property.

\subsubsection{Information-Theoretic \CompUncertain}
\label{sec: it direct product}
\begin{theorem}
\label{thm: direct product info}
Let $A \subseteq \mathbb{F}_2^n$ be a uniformly random subspace of dimension $n/2$, and $s, s'$ be uniformly random in $\mathbb{F}_2^n$. Let $\epsilon > 0$ be such that $1/\epsilon = o(2^{n/2})$. Given one copy of $\ket{A_{s,s'}}$, and a quantum membership oracle for $A+s$ and $A^{\perp}+s'$, an adversary needs $\Omega(\sqrt{\epsilon} 2^{n/2})$ queries to output a pair $(v,w)$ such that $v \in A+s$ and $w \in A^{\perp}+s'$ with probability at least $\epsilon$.
\end{theorem}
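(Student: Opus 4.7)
The plan is to reduce to the information-theoretic direct product hardness for subspace states proved by Ben-David and Sattath~\cite{ben2016quantum}, which states that, given $\ket{A}$ for a uniformly random subspace $A \subseteq \mathbb{F}_2^n$ of dimension $n/2$ together with quantum membership oracles for $A$ and $A^\perp$, outputting a pair $(v,w) \in (A\setminus\{0\}) \times (A^\perp \setminus\{0\})$ with probability $\epsilon$ requires $\Omega(\sqrt{\epsilon}\, 2^{n/2})$ oracle queries. Given a coset adversary $\mathcal{A}$ that succeeds in our game with $q$ queries and probability $\epsilon$, I will build a subspace adversary $\mathcal{B}$ with the same query count and nearly the same success probability, which by the BDS bound forces $q = \Omega(\sqrt{\epsilon}\, 2^{n/2})$.

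The reduction $\mathcal{B}$, on input $\ket{A}$ and with oracle access to the membership predicates for $A$ and $A^\perp$, proceeds as follows. First, it samples $s, s' \uniformgets \mathbb{F}_2^n$, and from $\ket{A}$ and $(s,s')$ efficiently prepares $\ket{A_{s,s'}}$ using the phase/shift procedure recalled at the start of Section~\ref{sec: affine subspaces defs}. Second, it simulates a quantum membership oracle for $A+s$ by, on query $\ket{x}$, computing $x-s$ into an ancilla, querying the $A$-oracle, and uncomputing; the oracle for $A^\perp + s'$ is simulated analogously from the $A^\perp$-oracle. Third, it runs $\mathcal{A}$ on these simulated inputs, obtains $(v,w)$, and outputs $(v-s,\, w-s')$. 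Each query of $\mathcal{A}$ is answered with one oracle query, so $\mathcal{B}$ makes the same number of queries. If $\mathcal{A}$ wins, meaning $v \in A+s$ and $w \in A^\perp + s'$, then $(v-s, w-s') \in A \times A^\perp$, and $\mathcal{B}$ wins the BDS game provided both coordinates are nonzero.

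The key point is to upper-bound the probability that $v = s$ or $w = s'$. The coset state $\ket{A_{s,s'}}$ depends on $s$ only through the coset $A+s$ (shifting $s$ by an element of $A$ only permutes the summation), and on $s'$ only through the coset $A^\perp + s'$ (shifting $s'$ by an element of $A^\perp$ leaves the phases $(-1)^{\langle s', a\rangle}$ for $a\in A$ unchanged). The simulated oracles likewise depend only on the cosets $A+s$ and $A^\perp+s'$. Hence, conditioned on the entire view of $\mathcal{A}$ (in particular, on $(v,w)$), the representative $s$ is uniformly distributed over $A+s$ and $s'$ is uniformly distributed over $A^\perp + s'$, each of size $2^{n/2}$. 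A union bound then gives $\Pr[v = s \text{ or } w = s'] \leq 2 \cdot 2^{-n/2}$. Therefore $\mathcal{B}$ wins the BDS game with probability at least $\epsilon - 2^{-n/2+1}$, which, under the hypothesis $1/\epsilon = o(2^{n/2})$, is $\Omega(\epsilon)$, and the BDS lower bound yields $q = \Omega(\sqrt{\epsilon}\, 2^{n/2})$ as desired.

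The only subtle point, and what I would expect to require the most care in writing, is the conditioning argument establishing that $s$ is uniform over $A+s$ from $\mathcal{A}$'s perspective (and similarly for $s'$). The cleanest way to state this is that the joint distribution of $\mathcal{A}$'s view together with $(s,s')$ factors as $(\text{view}\mid A, A+s, A^\perp+s') \times (\text{uniform } s \in A+s) \times (\text{uniform } s' \in A^\perp+s')$, which follows because neither the prepared state nor the simulated oracles are functions of the particular coset representatives chosen. Everything else is a routine query-counting reduction, and the final query bound is obtained by applying the BDS theorem as a black box.
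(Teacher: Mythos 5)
Your proposal matches the paper's proof: both perform the same random self-reduction to the Ben-David--Sattath bound by sampling $s, s'$ uniformly, forming $\ket{A_{s,s'}}$, simulating the coset membership oracles via shifted queries to the subspace oracles, and returning $(v-s,\, w-s')$, then finish by observing that the adversary's view depends only on the cosets $A+s$ and $A^\perp+s'$ (not on the particular representatives), so the output coordinates are nonzero except with probability $O(2^{-n/2})$. Your explicit factorization of the joint distribution of the view with $(s,s')$ is a slightly more careful phrasing of the counting argument the paper carries out over the $2^n$ representative pairs consistent with a given state.
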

The proof is a simple random self-reduction to the analogous statement from Ben-David and Sattath \cite{ben2016quantum} for regular subspace states. The proof is given in Section \ref{sec: coset state direct product proof}.

\subsubsection{Computational \compuncertain}
\label{sec:affine_direct_product_comp}

Next, we present the computational version of the {\compuncertain } property. This establishes that \Cref{thm: direct product info} still holds, even if an adversary is given $\iO$ obfuscations of the subspace membership checking programs.

\begin{theorem}
\label{thm: direct product comp}
{Assume the existence of post-quantum $\iO$ and one-way function.} 
Let $A \subseteq \mathbb{F}_2^n$ be a uniformly random subspace of dimension $n/2$, and $s, s'$ be uniformly random in $\mathbb{F}_2^n$. 
Given one copy of $\ket{A_{s,s'}}$, $\mathsf{iO}(A+s)$ and $\mathsf{iO}(A^{\perp}+s')$, any polynomial time adversary outputs a pair $(v,w)$ such that $v \in A+s$ and $w \in A^{\perp}+s'$ with negligible probability.
\end{theorem}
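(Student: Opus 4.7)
My plan is to prove \Cref{thm: direct product comp} by a sequence of hybrids that reduces the computational statement to the information-theoretic direct product hardness of \Cref{thm: direct product info}. The key tool is Zhandry's subspace hiding obfuscator $\shO$, which, combined with the fact that coset membership programs are invariant under shifts by elements of the subspace, will let me erase essentially all useful information about $s$ and $s'$ from the obfuscated programs.

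The hybrids proceed as follows. Starting from the real experiment, I first use \Cref{lem:io_shO_CC_equivalent} and the security of $\iO$ to replace $\iO(A+s)$ by $\iO(\shO_A(\cdot - s))$, and analogously replace $\iO(A^\perp + s')$ by $\iO(\shO_{A^\perp}(\cdot - s'))$; the two programs in each pair are functionally equivalent. Next, by $\shO$-security I replace $\shO_A$ by $\shO_B$ for a uniformly random superspace $B \supseteq A$ of dimension $3n/4$, and similarly $\shO_{A^\perp}$ by $\shO_{C^\perp}$ for a uniformly random $C^\perp \supseteq A^\perp$ of dimension $3n/4$ (noting that $|\F_2|^{n/4}$ is exponential, so $\shO$-security applies with negligible advantage). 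Now comes the step that is specific to cosets and which fails for plain subspaces: since $B + s = B + s + t$ for every $t \in B$, the programs $\shO_B(\cdot - s)$ and $\shO_B(\cdot - s - t)$ are functionally equivalent, so $\iO$-security lets me resample the shift to $s + t$ for $t \uniformgets B$, and analogously $s' \to s' + t'$ for $t' \uniformgets C^\perp$. At this point the adversary's classical input can be generated from just $(B, s+t, C^\perp, s'+t')$ together with $\ket{A_{s,s'}}$, so it suffices to bound the winning probability of an unbounded adversary given exactly this information.

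For the final, information-theoretic step, I reduce to \Cref{thm: direct product info}. The observation is that conditioned on $(B, s+t)$, the subspace $A$ is a uniformly random $n/2$-dimensional subspace of $B$, and the coset $A+s$ is uniformly distributed among the $2^{n/4}$ cosets of $A$ contained in $B+s$ (since $t$ is uniform in $B$ and ranges over all of $B/A$); the analogous statement holds for $A^\perp$ inside $C^\perp$. After choosing a basis that identifies $B$ (or a suitable quotient) with a smaller ambient space, the adversary's task is precisely to find a nonzero vector in a random $n/2$-dimensional affine subspace, together with one in its ``orthogonal'' coset, given the coset state — exactly the situation covered by \Cref{thm: direct product info}, and the reduction can simulate the oracles there by brute force since the reduction is unbounded.

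The main obstacle I anticipate is executing the final reduction cleanly: one has to be careful that the joint distribution of $(A, s, s', B, C, t, t')$ produced in the last hybrid really matches (up to a change of basis) a fresh instance of the information-theoretic game in an appropriate smaller dimension, and that the reduction correctly maps an output pair $(v,w) \in (A+s)\times(A^\perp+s')$ in the hybrid to a valid output in the base game. A secondary technical point is checking that the parameters for $\shO$-security are met on both the primal and dual sides simultaneously; this just requires the codimensions of $B$ and $C^\perp$ to be linear in $n$, which they are by construction.
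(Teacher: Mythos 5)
Your proposal follows essentially the same hybrid sequence as the paper's proof: $\iO$-equivalence to $\shO_A(\cdot-s)$ and $\shO_{A^\perp}(\cdot-s')$, then $\shO$-security to widen to random superspaces $B$ and $C^\perp$ of dimension $3n/4$, then the coset-specific $\iO$-rerandomization $s \mapsto s+t$ with $t \uniformgets B$ (and dually), then handing $(B, C, t, t')$ over in the clear and reducing to \Cref{thm: direct product info} via a change-of-basis/quotient argument. The paper does the primal and dual replacements one side at a time rather than in parallel, and spells out the final reduction in more detail (WLOG normalizing $B$ and $C$ to have trailing zeros and padding a fresh $n/2$-dimensional coset state), but these are presentational differences — the argument is the same, and the subtlety you flag about matching distributions in the last hybrid is exactly where the paper spends its effort.
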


\subsubsection{Proof of \Cref{thm: direct product info}}
\label{sec: coset state direct product proof}
\label{sec:direct product info}

We first present the theorem from Ben-David and Sattath \cite{ben2016quantum}.
\begin{theorem}[\cite{ben2016quantum}]
\label{thm: direct product bds}
Let $A \subseteq \mathbb{F}_2^n$ be a uniformly random subspace of dimension $n/2$, and let $\epsilon > 0$ be such that $1/\epsilon = o(2^{n/2})$. Given one copy of $\ket{A}$, and a quantum membership oracle for $A$ and $A^{\perp}$, an adversary needs $\Omega(\sqrt{\epsilon} 2^{n/2})$ queries to output a pair $(v,w)$ such that $v \in A \setminus \{0\}$ and $w \in A^{\perp} \setminus \{0\}$ with probability $\epsilon$.
\end{theorem}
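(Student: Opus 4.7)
The plan is to adapt the query-complexity lower bound methodology of Aaronson and Christiano \cite{aaronson2012quantum}, who proved an analogous $\Omega(\sqrt{\epsilon}\, 2^{n/2})$ bound for cloning a random subspace state $\ket{A}$ given membership oracles for $A$ and $A^{\perp}$. The overall strategy combines an information-theoretic ``zero-query'' bound for producing a direct-product pair from $\ket{A}$ alone, with a hybrid argument that accounts for the contribution of the $T$ oracle queries.

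First, I would establish the zero-query bound: given only one copy of $\ket{A}$ and no oracle access, the probability of outputting $(v,w) \in (A \setminus \{0\}) \times (A^{\perp} \setminus \{0\})$, averaged over uniformly random $A$ of dimension $n/2$, is at most $O(2^{-n/2})$. The intuition is complementarity: a computational-basis measurement of $\ket{A}$ yields a uniformly random element of $A$, while a Hadamard-basis measurement yields a uniformly random element of $A^{\perp}$, and these measurements are mutually incompatible. A clean way to formalize this is to write the bipartite output register in the Schmidt basis and upper bound the joint probability by the maximum singular value, which is controlled by $|A \cap A^{\perp}|^{-1/2}$-style factors scaling as $2^{-n/2}$.

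Next, I would invoke a BBBV-style hybrid argument to bound the advantage gained from $T$ oracle queries. Replacing the oracles $O_{A}$ and $O_{A^{\perp}}$ with the constant-$0$ oracle changes the algorithm's final state by at most $O\bigl(T \sqrt{q}\bigr)$ in $L_2$ norm, where $q$ is the expected total query mass on $(A \cup A^{\perp}) \setminus \{0\}$. For a uniformly random subspace of dimension $n/2$, the set $(A \cup A^{\perp}) \setminus \{0\}$ has size at most $2^{n/2+1}$, so for any query vector independent of $A$ one has $\mathbb{E}_{A}[q] = O(2^{-n/2})$, which by Jensen's inequality gives $\mathbb{E}_{A}[\sqrt{q}] = O(2^{-n/4})$.

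The main obstacle is closing the quadratic gap between the naive hybrid bound $T = \Omega(\epsilon\, 2^{n/4})$ and the tight target $T = \Omega(\sqrt{\epsilon}\, 2^{n/2})$. Following the refinements in \cite{aaronson2012quantum,ben2016quantum}, the tight bound should be obtained via a more careful adversary-method argument tailored to subspace states: define a potential function that tracks the inner product between the algorithm's state on a random $A$ and a reference state obtained by running the algorithm against a dummy oracle; show that each oracle query changes the potential by at most $O(2^{-n/2})$ in absolute value using a careful accounting of the amplitude on $(A \cup A^{\perp}) \setminus \{0\}$; and argue that a successful direct-product output with probability $\epsilon$ requires the cumulative potential change to be $\Omega(\sqrt{\epsilon})$. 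Combining these with the zero-query bound then yields $T = \Omega(\sqrt{\epsilon}\, 2^{n/2})$. Since the theorem is quoted verbatim from \cite{ben2016quantum}, the cleanest route in this paper is of course to invoke it as a black box, as is done here.
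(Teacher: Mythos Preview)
The paper does not prove this theorem at all: it is stated as a result of Ben-David and Sattath \cite{ben2016quantum} and invoked as a black box in the proof of \Cref{thm: direct product info}. Your final sentence correctly identifies this, so in that sense your proposal and the paper agree.

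The preceding sketch you give is extraneous for the paper's purposes, and as a sketch of the actual argument in \cite{ben2016quantum} it is only loosely accurate. In particular, your ``zero-query bound'' step is not quite the right formulation: the subspace state $\ket{A}$ is a highly entangled state across $n$ qubits, so the heuristic that ``computational-basis and Hadamard-basis measurements are incompatible'' does not immediately yield an $O(2^{-n/2})$ bound via a single Schmidt-basis argument on the output register; the actual zero-query analysis in \cite{aaronson2012quantum,ben2016quantum} requires more care about the structure of random subspaces. Your hybrid-argument paragraph also mixes up two distinct refinements (BBBV versus the adversary method), and the potential-function description is vague about what makes the $2^{-n/2}$-per-query bound hold. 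None of this matters for the present paper, since the theorem is simply cited; but if you intend the sketch to stand on its own, these steps would need to be tightened substantially or replaced by a direct pointer to the relevant lemmas in \cite{ben2016quantum}.
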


\jiahui{begin comment out for camera ready ver}
\begin{proof}[Proof of Theorem \ref{thm: direct product info}]
Let $\adv$ be an adversary for Theorem \ref{thm: direct product info} who suceeds with probability $p$, we construct an adversary $\adv'$ for \Cref{thm: direct product bds} with almost the same success probability making the same number of queries. $\adv'$ proceeds as follows.
\begin{itemize}
    \item $\adv'$ receives $\ket{A}$ for some $A \subseteq \mathbb{F}_2^n$. Samples $s,s'$ uniformly at random, and creates the state $\ket{A_{s,s'}}$.
    \item $\adv'$ gives $\ket{A_{s,s'}}$ as input to $\adv$. $\As$ also needs to get access to oracle $A+s$ and $A^\perp+s'$. $\As'$ can simulate them by having access to $A, A^\perp$ and knowing $s, s'$.
    It receives $v,w$ in return from $\As$. $\adv'$ outputs $(v-s, w-s')$.
\end{itemize}
With probability $p$, $\adv$ returns $v, w$ such that $v \in A+s$ and $w\in A^{\perp} + s'$. Thus the output of $\adv'$ $(v-s, w-s')$ is such that $v-s \in A$ and $w-s' \in A^{\perp}$. All that is left to argue is that with overwhelming probability $v-s \neq 0$ and $w-s \neq 0$. Note that there are $2^{n/2} \cdot 2^{n/2}$ pairs $(\tilde{s},\tilde{s}')$ such that $\ket{A_{\tilde{s}, \tilde{s}'} }= \ket{A_{s, s'}}$, since translating $s$ and $s'$ by an element in $A$ and $A^{\perp}$ respectively does not affect the state. Note further that only $2^{n/2+1} - 1$ pairs are such that $v - \tilde{s} = 0$ or $w - \tilde{s}' = 0$.
Since $s$ and $s'$ are sampled uniformly at random, the probability that $v-s = 0$ or $w-s' = 0$ is $\frac{2^{n/2+1}-1}{2^n}$, which is negligible.
\end{proof}

\subsubsection{Proof of \Cref{thm: direct product comp}}
\label{sec:proof_for_direct_product}

\begin{proof}
We consider the following hybrids.
\begin{itemize}
        \item {Hyb 0:}  This is the game of Theorem \ref{thm: direct product comp}: $A \subseteq \mathbb{F}_2^n$, $s, s'$ are sampled uniformly at random. $\adv$ receives $\iO(A+s)$,  $\iO(A^{\perp}+s')$, and $\ket{A_{s,s'}}$. $\adv$ wins if it returns $(v,w) \in (A +s) \times (A^{\perp}+s')$.
        
        \item {Hyb 1:} Same as Hyb 0 except $\adv$ gets $\iO(\shO_A(\cdot-s))$, $\iO(A^\perp+s')$ and $\ket{A_{s,s'}}$. Recall that $\shO_A$ is the program $\shO(A)$, and so $\shO_A(\cdot-s))$ is the program that on input $x$, runs program $\shO(A)$ on input $x-s$.%

        \item {Hyb 2:} Same as Hyb 1 except $\adv$ gets $\iO(\shO_B(\cdot-s))$, $\iO(A^\perp+s')$ and $\ket{A_{s,s'}}$, for a uniformly random superspace $B$ of $A$, of dimension $3/4n$.
        
        \item {Hyb 3:} Same as Hyb 2 except for the following. The challenger samples $s,s', A$, and a uniformly random superspace $B$ of $A$ as before. The challenger sets $t = s + w_B$, where $w_B \leftarrow B$. Sends $\iO(\shO_B(\cdot-t))$, $\iO(A^\perp+s')$ and $\ket{A_{s,s'}}$ to $\adv$.
        
        \item {Hyb 4:} Same as Hyb 3 except $\adv$ gets $\iO(\shO_B(\cdot-t))$, $\iO(\shO_{A^{\perp}}(\cdot-s'))$ and $\ket{A_{s,s'}}$.
        
        \item {Hyb 5:} Same as Hyb 4 except $\adv$ gets $\iO(\shO_{B}(\cdot-t))$, $\iO(\shO_{C^{\perp}}(\cdot-s'))$ and $\ket{A_{s,s'}}$, for a uniformly random superspace $A^\perp \subseteq C^\perp$ of dimension $3n/4$.
        
        \item {Hyb 6:} Same as Hyb 5 except for the following. The challenger sets $t' = s' + w_{C^{\perp}}$, where $ w_{C^{\perp}} \leftarrow C^{\perp}$. $\adv$ gets $\iO(\shO_B(\cdot-t))$, $\iO(\shO_{C^{\perp}}(\cdot-t'))$ and $\ket{A_{s,s'}}$. 
        
        \item {Hyb 7:} Same as Hyb 6 except the challenger sends $B, C, t,t'$ in the clear to $\adv$.
    \end{itemize}

\begin{claim}
\label{lem: direct product hyb 0-1}
For any QPT adversary $\adv$,
$$\left|\Pr[\adv \text{ wins in Hyb 1}] - \Pr[\adv \text{ wins in Hyb 0}] \right| = \negl(\lambda) \,.$$
\end{claim}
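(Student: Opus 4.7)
The plan is to argue that the two programs being obfuscated in the first slot, namely $A+s$ and $\shO_A(\cdot - s)$, are functionally equivalent (up to negligible error), and then invoke the security of $\iO$. Recall that by the correctness property of the subspace-hiding obfuscator $\shO$ (from Section~\ref{sec: shO}), with overwhelming probability over the randomness of $\shO$, the obfuscated program $\shO(A)$ decides membership in $A$ exactly on every input. Conditioned on this event, for every input $x$ we have $\shO_A(x-s) = 1$ iff $x - s \in A$ iff $x \in A+s$, which is exactly the behavior of the membership program $A+s$. Thus the two circuits being fed into $\iO$ compute the same function on all inputs except with negligible probability (over the sampling of $\shO(A)$, which is done by the challenger/reduction, not the adversary).

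The reduction is then standard. Suppose $\adv$ distinguishes Hyb~0 from Hyb~1 with non-negligible advantage. We build a distinguisher $\adv'$ against $\iO$ as follows: $\adv'$ samples $A, s, s'$ uniformly as in the hybrids, constructs the coset state $\ket{A_{s,s'}}$ efficiently from $A,s,s'$, computes $\iO(A^\perp+s')$ honestly, prepares the two candidate circuits $C_0 = (A+s)$ and $C_1 = \shO_A(\cdot - s)$ (padded to equal size), and submits $(C_0, C_1)$ to the $\iO$ challenger. Upon receiving $\iO(C_b)$, it forwards $(\iO(C_b), \iO(A^\perp+s'), \ket{A_{s,s'}})$ to $\adv$ and outputs whatever $\adv$ outputs (using, say, the winning condition $(v,w) \in (A+s) \times (A^\perp+s')$ as the distinguishing bit, which $\adv'$ can check since it knows $A,s,s'$). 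Since with overwhelming probability $C_0$ and $C_1$ are functionally equivalent, the $\iO$ security guarantee yields
\[
\bigl|\Pr[\adv \text{ wins in Hyb 1}] - \Pr[\adv \text{ wins in Hyb 0}]\bigr| \leq \negl(\lambda),
\]
as claimed.

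The only mild subtlety I anticipate is ensuring that the two circuits $C_0$ and $C_1$ are of the same size; this is handled by padding, which is a standard technical step whenever $\iO$ is invoked to swap functionally equivalent circuits, and does not affect the argument. The rest is a direct application of $\iO$ security together with the correctness of $\shO$, so no further hybridization or novel idea is required for this particular claim.
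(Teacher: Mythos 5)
Your proof is correct and takes essentially the same approach as the paper: both reduce to $\iO$ security by observing that, conditioned on the (overwhelming-probability) correctness of $\shO(A)$, the programs $A+s$ and $\shO_A(\cdot-s)$ compute the same function on all inputs, and both use $\adv$'s output pair $(v,w)$, checkable against the known $A,s,s'$, as the distinguishing bit. The only cosmetic difference is that you keep $A,s,s'$ (and $\shO$'s coins) random inside the reduction and invoke the sampler-based $\iO$ definition, whereas the paper nonuniformly fixes the $A,s,s'$ that maximize the distinguishing gap; both are standard and give the same conclusion.
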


\begin{proof}
Suppose for a contradiction there was a QPT adversary $\adv$ such that:
\begin{equation}
\label{eq: difference 1-directproduct}
    \left|\Pr[\adv \text{ wins in Hyb 1}] - \Pr[\adv \text{ wins in Hyb 0}] \right| 
\end{equation}
is non-negligible. 
Such an adversary can be used to construct $\adv'$ which distinguishes $\mathsf{iO}(A+s)$ from $\mathsf{iO}(\shO_A(\cdot -s))$, which is impossible by the security of the (outer) $\mathsf{iO}$, since $A+s$ and $\shO_A(\cdot -s)$ compute the same functionality.

Fix $n$, let $A \subseteq \mathbb{F}_2^n$, $s,s' \in \mathbb{F}_2^n$ be such that the difference in \eqref{eq: difference 1-directproduct} is maximized. Suppose $\Pr[\adv \text{ wins in Hyb 1}] > \Pr[\adv \text{ wins in Hyb 0}]$, the other case being similar.

$\adv'$ proceeds as follows:
\begin{itemize}
    \item Receives as a challenge a circuit $P$ which is either $\mathsf{iO}(A+s)$ or $ \mathsf{iO}(\mathsf{shO}_A(\cdot - s))$. Creates the state $\ket{A_{s,s'}}$. Gives $P$, $\iO(A^{\perp}+s')$ and $\ket{A_{s,s'}}$ as input to $\adv$.
    \item $\adv$ returns a pair $(v,w)$. If $v \in A+s$ and $w \in A^{\perp} +s'$, then $\adv'$ guesses that $P = \mathsf{iO}(\mathsf{shO}_A(\cdot - s))$, otherwise that $P = \mathsf{iO}(A+s)$.
\end{itemize}
It is straightforward to verify that $\adv'$ succeeds at distinguishing with non-negligible probability.
\end{proof}

\begin{claim}
\label{lem: direct product hyb 1-2}
For any QPT adversary $\adv$,
$$\left|\Pr[\adv \text{ wins in Hyb 2}] - \Pr[\adv \text{ wins in Hyb 1}] \right| = \negl(\lambda) \,.$$
\end{claim}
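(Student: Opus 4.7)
My plan is to prove this claim by a direct reduction to the security of the subspace-hiding obfuscator $\shO$, which is the only primitive whose behavior changes between Hyb 1 and Hyb 2. Note that the two hybrids differ only in the \emph{inner} obfuscation used to build the first circuit: Hyb 1 uses $\shO(A)$, while Hyb 2 uses $\shO(B)$ for a uniformly random superspace $B$ of $A$ of dimension $3n/4$. The outer $\iO$ and the shift by $s$ are identical in both hybrids, so all the action is at the $\shO$ layer. Crucially, since $\dim(A)=n/2$ and $\dim(B)=3n/4$, we have $|\F_2|^{n-\dim(B)}=2^{n/4}$, which is exponential, so $\shO$ security applies for these parameters.

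The reduction $\adv'$ against $\shO$ proceeds as follows: (i) sample $A \subseteq \F_2^n$ of dimension $n/2$ uniformly at random, and $s,s' \in \F_2^n$ uniformly at random; (ii) submit $A$ to the $\shO$-challenger and receive back a circuit $\hat S$ that is either $\shO(A)$ or $\shO(B)$ for a random superspace $B$ of dimension $3n/4$; (iii) form the circuit $\hat S(\cdot-s)$, padded to the appropriate size, and compute $P_1 \gets \iO(\hat S(\cdot-s))$; (iv) compute $P_2 \gets \iO(A^{\perp}+s')$ directly from $A,s'$; (v) prepare the coset state $\ket{A_{s,s'}}$ efficiently from $A,s,s'$; (vi) run $\adv$ on $(P_1,P_2,\ket{A_{s,s'}})$ to obtain a pair $(v,w)$; (vii) using its knowledge of $A,s,s'$, check whether $v\in A+s$ and $w\in A^{\perp}+s'$, and output the guess ``$\hat S = \shO(A)$'' if so, else ``$\hat S=\shO(B)$''.

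When $\hat S = \shO(A)$, the view given to $\adv$ is distributed exactly as in Hyb 1, so the probability of $\adv'$ outputting ``$\shO(A)$'' is $\Pr[\adv \text{ wins in Hyb 1}]$. When $\hat S = \shO(B)$, the view is exactly Hyb 2 (the superspace $B$ is sampled by the $\shO$-challenger with the same distribution used in Hyb 2), so the probability of outputting ``$\shO(A)$'' is $\Pr[\adv \text{ wins in Hyb 2}]$. Hence the distinguishing advantage of $\adv'$ is precisely $|\Pr[\adv\text{ wins in Hyb 1}]-\Pr[\adv\text{ wins in Hyb 2}]|$, and since $\shO$ is secure, this must be negligible.

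The main subtlety---which I expect to be the only non-routine issue---is ensuring that the padding and circuit size of $\hat S(\cdot-s)$ are independent of whether $\hat S$ obfuscates $A$ or $B$, so that the outer $\iO$ produces identically distributed outputs in the two worlds conditioned on $\hat S$. This is handled by standard padding of $\shO$ outputs to a common size (e.g., the maximum size across dimensions in $\{n/2,3n/4\}$); once this is in place, the reduction is exact. Everything else---efficient preparation of $\ket{A_{s,s'}}$, efficient construction of $\iO(A^{\perp}+s')$, and efficient verification of $\adv$'s output---follows directly because the reduction holds $A,s,s'$ in the clear.
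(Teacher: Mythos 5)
Your reduction is essentially identical to the paper's: sample $A,s,s'$, submit $A$ to the $\shO$ challenger, wrap the returned circuit as $\iO(\hat S(\cdot-s))$ alongside $\iO(A^{\perp}+s')$ and $\ket{A_{s,s'}}$, feed them to $\adv$, and decide by checking whether $\adv$'s output lands in $(A+s)\times(A^{\perp}+s')$. The only cosmetic difference is the direction of the final guess (the paper WLOG assumes one sign of the difference and guesses $\shO_B$ on a win); your explicit remark about padding the $\shO$ outputs to a common size is a sensible detail that the paper leaves implicit.
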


\begin{proof}
Suppose for a contradiction there was a QPT adversary $\adv$ such that:
\begin{equation*}
    \left|\Pr[\adv \text{ wins in Hyb 2}] - \Pr[\adv \text{ wins in Hyb 1}] \right| \,, 
\end{equation*}
is non-negligible. 

We argue that $\adv$ can be used to construct an adversary $\adv'$ that breaks the security of $\shO$. 

Fix $n$. Suppose $\Pr[\adv \text{ wins in Hyb 2}] > \Pr[\adv \text{ wins in Hyb 1}]$, the other case being similar. 

$\adv'$ proceeds as follows:
\begin{itemize}
    \item Sample $A \subseteq \mathbb{F}_2^n$ uniformly at random. Send $A$ to the challenger.
    \item The challenger returns a program $P$ which is either $\shO_A$ or $\shO_B$. $\adv'$ samples uniformly $s,s' \in \mathbb{F}_2^n$, and creates the state $\ket{A_{s,s'}}$. Gives $\iO(P(\cdot - s))$, $\iO(A^{\perp}+s')$ and $\ket{A_{s,s'}}$ as input to $\adv$.
    \item $\adv$ returns a pair $(v,w)$. If $v \in A+s$ and $w \in A^{\perp} + s'$, then $\adv'$ guesses that $P = \shO_B$, otherwise that $P = \shO_A$.
\end{itemize}
It is straightforward to verify that $\adv'$ succeeds at the security game for $\shO$ with non-negligible advantage.
\end{proof}

\begin{claim}
\label{lem: direct product hyb 2-3}
For any QPT adversary $\adv$,
$$\left|\Pr[\adv \text{ wins in Hyb 3}] - \Pr[\adv \text{ wins in Hyb 2}] \right| = \negl(\lambda) \,.$$
\end{claim}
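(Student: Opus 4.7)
The plan is to reduce indistinguishability of Hyb 2 and Hyb 3 to the security of the outer $\iO$. The crucial observation is that $w_B$ is sampled uniformly from $B$, so in particular $w_B \in B$; consequently the circuits $\shO_B(\cdot - s)$ and $\shO_B(\cdot - s - w_B) = \shO_B(\cdot - t)$ are \emph{functionally equivalent}. Indeed, for every input $x$,
\[
\shO_B(x - s) = \shO_B(x - s - w_B),
\]
because $\shO_B$ computes membership in $B$ and $B + w_B = B$. Thus, the only change between the two hybrids is that the outer $\iO$ is applied to a different but functionally equivalent circuit, while the rest of the adversary's view ($\iO(A^\perp + s')$ and $\ket{A_{s,s'}}$) is sampled from an identical distribution in both hybrids.

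Given this, a standard reduction to $\iO$ security proves the claim. Suppose toward contradiction that some QPT $\adv$ has non-negligible difference in winning probability between the two hybrids. Construct an $\iO$-sampler that samples $A, s, s', B$ and $w_B \leftarrow B$ exactly as in Hyb 3, forms the two circuits $C_0 = \shO_B(\cdot - s)$ and $C_1 = \shO_B(\cdot - s - w_B)$ padded to equal size, and outputs $(C_0, C_1)$ along with auxiliary state $\sigma = (\iO(A^\perp + s'), \ket{A_{s,s'}})$. The distinguisher, on receiving $(P^\ast, \sigma)$ from the $\iO$ challenger, runs $\adv(P^\ast, \iO(A^\perp + s'), \ket{A_{s,s'}})$ and outputs $1$ iff $\adv$ returns a pair $(v, w) \in (A+s) \times (A^\perp + s')$. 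By the functional equivalence above, the two sampled circuits agree on every input, so $\iO$ security is applicable, and the distinguishing advantage of the constructed pair exactly matches the gap between Hyb 2 and Hyb 3, yielding the contradiction.

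I expect no genuine obstacle at this step --- it is essentially a syntactic application of $\iO$ security. The only care needed is to pad $C_0$ and $C_1$ to the same size and to observe that the distribution of $t = s + w_B$ in the reduction exactly matches the distribution in Hyb 3. The conceptual role of the mask $w_B$ is not yet visible here; it will play a substantive part only in the subsequent hybrid (Hyb 3 $\to$ Hyb 4 and beyond), where, once the description of $B$ is revealed, $w_B$ will serve to randomize the representative of the coset $B + s$ and thereby hide $s$.
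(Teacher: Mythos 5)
Your proof is correct and matches the paper's argument exactly: the paper likewise observes that $\shO_B(\cdot - s)$ and $\shO_B(\cdot - t)$ are functionally equivalent because $t = s + w_B$ with $w_B \in B$ (so $B + w_B = B$), and reduces to $\iO$ security via the same sampler/distinguisher template used for Hyb~0 $\to$ Hyb~1. Your closing remark that $w_B$ only becomes substantively useful in later hybrids, once $B$ is revealed and $t$ masks the coset representative, is also in line with the paper's intent.
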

\begin{proof}
The proof is similar to the proof of Lemma \ref{lem: direct product hyb 0-1}, and follows from the security of $\iO$ and the fact that  $\shO_B(\cdot - s)$ and $\shO_B(\cdot - t)$ compute the same functionality. This is because for any vector $w_B \in B$, $B + w_b$ is the same subspace as $B$.
\end{proof}

\begin{claim}
For any QPT adversary $\adv$, and $j = 4, 5, 6$, we have
\begin{align*}
\left|\Pr[\adv \text{ wins in Hyb j}] - \Pr[\adv \text{ wins in Hyb (j-1)}] \right| = \negl(\lambda)
\end{align*}
\end{claim}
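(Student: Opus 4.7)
The plan is to observe that the three transitions in question mirror exactly the three transitions already handled in Claims \ref{lem: direct product hyb 0-1}, \ref{lem: direct product hyb 1-2}, and \ref{lem: direct product hyb 2-3}, except that the modifications are now applied to the \emph{second} obfuscated program (the one pertaining to $A^{\perp} + s'$) rather than the first. So the proof consists of three symmetric arguments, each reducing to the security of $\iO$, of $\shO$, and of $\iO$ again, respectively, in direct analogy with the preceding claims. Crucially, in each of these reductions the first program $\iO(\shO_{B}(\cdot - t))$ and the coset state $\ket{A_{s,s'}}$ can be prepared by the reduction itself, since the reduction gets to sample $A, B, s, s', t, w_B$ in the clear.

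For the transition Hyb $3 \to$ Hyb $4$, the inner programs $A^\perp + s'$ and $\shO_{A^\perp}(\cdot - s')$ compute the same function (up to the negligible correctness error of $\shO$, which is absorbed into the hybrid gap). Hence any distinguisher between the outer $\iO$'s yields a distinguisher against $\iO$ security, via the same sampling-and-forwarding reduction as in the proof of Claim \ref{lem: direct product hyb 0-1}; the adversary is converted into a distinguisher by outputting $1$ exactly when the returned pair $(v,w)$ lies in $(A+s) \times (A^{\perp}+s')$.

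For the transition Hyb $4 \to$ Hyb $5$, we reduce to the security of $\shO$ for field $\mathbb{F}_2$ with dimensions $d_0 = n/2$ and $d_1 = 3n/4$, so that $|\mathbb{F}_2|^{n - d_1} = 2^{n/4}$ is exponential and $\shO$ security applies. The reduction samples $A, s, s'$, sends $A^{\perp}$ to the $\shO$ challenger, receives a challenge program $P \in \{\shO_{A^{\perp}}, \shO_{C^{\perp}}\}$, and hands the adversary $(\iO(\shO_{B}(\cdot - t)),\, \iO(P(\cdot - s')),\, \ket{A_{s,s'}})$, prepared using the sampled $A,s,s',B,w_B$. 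Guessing $C^{\perp}$ whenever $\adv$ outputs a valid pair and $A^{\perp}$ otherwise yields the claimed non-negligible advantage, exactly as in Claim \ref{lem: direct product hyb 1-2}.

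For the transition Hyb $5 \to$ Hyb $6$, the two programs $\shO_{C^{\perp}}(\cdot - s')$ and $\shO_{C^{\perp}}(\cdot - t')$ with $t' = s' + w_{C^{\perp}}$, $w_{C^{\perp}} \in C^{\perp}$, compute the same function: since translating the input by any vector in $C^{\perp}$ preserves membership in $C^{\perp} + s'$, the two obfuscations are over functionally equivalent circuits. Security of the outer $\iO$ therefore gives the bound, exactly as in Claim \ref{lem: direct product hyb 2-3}. The only mild subtlety across all three steps is that $\shO$ is only correct on all inputs with overwhelming probability rather than with probability one; this correctness slack is handled by conditioning on the overwhelming-probability event that the sampled $\shO$ circuit is totally correct, at which point $\iO$ security applies to genuinely equivalent circuits. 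This is precisely the same treatment used implicitly in the previous three claims, so no new technical obstacle arises.
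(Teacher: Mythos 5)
Your proof is correct and follows exactly the route the paper intends: the three transitions mirror Claims \ref{lem: direct product hyb 0-1}, \ref{lem: direct product hyb 1-2}, and \ref{lem: direct product hyb 2-3} applied to the second program, reducing respectively to $\iO$ security (functional equivalence of $A^\perp + s'$ and $\shO_{A^\perp}(\cdot - s')$), $\shO$ security with dimensions $(n/2, 3n/4)$, and $\iO$ security again (since $t' - s' \in C^\perp$ preserves the coset). The paper's proof simply states that these cases are analogous to the preceding claims, which is precisely what you have spelled out.
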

\begin{proof}
    The proofs are analogous to those of Lemmas \ref{lem: direct product hyb 0-1}, \ref{lem: direct product hyb 1-2},  \ref{lem: direct product hyb 2-3}.
\end{proof}

%
\iffalse
\begin{lemma}
For any QPT adversary $\adv$,
$$\left|\Pr[\adv \text{ wins in Hyb 4}] - \Pr[\adv \text{ wins in Hyb 3}] \right| = \negl(\lambda) \,.$$
\end{lemma}
\begin{proof}
The proof is analogous to that of Lemma \ref{lem: direct product hyb 0-1}.
\end{proof}

\begin{lemma}
For any QPT adversary $\adv$,
$$\left|\Pr[\adv \text{ wins in Hyb 5}] - \Pr[\adv \text{ wins in Hyb 4}] \right| = \negl(\lambda) \,.$$
\end{lemma}
\begin{proof}
The proof is analogous to that of Lemma \ref{lem: direct product hyb 1-2}.
\end{proof}

\begin{lemma}
For any QPT adversary $\adv$,
$$\left|\Pr[\adv \text{ wins in Hyb 6}] - \Pr[\adv \text{ wins in Hyb 5}] \right| = \negl(\lambda) \,.$$
\end{lemma}
\begin{proof}
The proof is analogous to that of Lemma \ref{lem: direct product hyb 2-3}.
\end{proof}
\fi

\begin{lemma}
For any QPT adversary $\adv$ for Hyb 6, there exists an adversary $\adv'$ for Hyb 7 such that
$$ \Pr[\adv' \text{ wins in Hyb 7}] \geq \Pr[\adv \text{ wins in Hyb 6}] \,.$$
\end{lemma}
\begin{proof}
This is immediate.
\end{proof}

\begin{lemma}
For any (unbounded) adversary $\adv$,
$$\Pr[\adv \text{ wins in Hyb 7}] = \negl(\lambda) \,.$$
\end{lemma}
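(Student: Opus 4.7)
The plan is to reduce the information-theoretic hardness of Hyb~7 at dimension $n$ to the corresponding hardness at dimension $n/2$, as provided by \Cref{thm: direct product info}. The key structural observation is that, fixing $B$ and $C$ with the required dimension constraints $C \subseteq A \subseteq B$, the ``quotient'' $A/C$ is a uniformly random $(n/4)$-dimensional subspace of the $(n/2)$-dimensional space $B/C$. Moreover, $t$ and $t'$ reveal exactly the cosets $s \bmod B$ and $s' \bmod C^\perp$ respectively, but leave the residual ``quotient components'' of $s$ and $s'$ uniformly distributed. These quotient components are precisely the data the adversary must produce to give a valid $v \in A+s$ and $w \in A^\perp + s'$, so the residual task is exactly a direct product problem on the $(n/2)$-dimensional quotient.

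To formalize this, I would build a reduction from any (possibly unbounded) adversary $\adv$ succeeding in Hyb~7 with probability $p$ to an adversary $\adv'$ for \Cref{thm: direct product info} on $\mathbb{F}_2^{n/2}$. Given $\ket{A'_{\sigma,\sigma'}}$ with $A' \subseteq \mathbb{F}_2^{n/2}$ uniformly random of dimension $n/4$, $\adv'$ would sample $B, C \subseteq \mathbb{F}_2^n$ of the appropriate dimensions, fix an isomorphism $\varphi\colon B/C \xrightarrow{\sim} \mathbb{F}_2^{n/2}$ that transports $A/C$ to $A'$, then sample $t, t'$ uniformly and lift $\ket{A'_{\sigma,\sigma'}}$ through $\varphi^{-1}$—combining with the classical components determined by $t, t'$—to assemble a sample $\ket{A_{s,s'}}$ together with the Hyb~7 auxiliary data. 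Running $\adv$ on this simulated instance yields $(v,w) \in (A+s) \times (A^\perp + s')$ with probability $p$; $\adv'$ then pushes these outputs forward through $\varphi$ to obtain candidate vectors in $(A' + \sigma) \times (A'^\perp + \sigma')$. Averaging over the simulation randomness, the induced distribution on $(A', \sigma, \sigma')$ matches the distribution in \Cref{thm: direct product info}.

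The main obstacle is to verify that the ``push-forward'' through $\varphi$ preserves a non-negligible fraction of $\adv$'s successes. In principle, $\adv$ could try to exploit $B, C, t, t'$ to output vectors lying entirely in the classical components of a decomposition of $\mathbb{F}_2^n$, which would project trivially under $\varphi$. This is especially delicate in characteristic $2$, where the standard orthogonal decomposition of $\mathbb{F}_2^n$ along a flag $C \subseteq A \subseteq B$ is not canonical and need not split the coset state into a tensor product in a clean way. Handling this requires leveraging the uniform randomness of $t$ and $t'$ (a random self-reduction in spirit) and arguing that any output strategy of $\adv$ effectively encodes, via the coset state, the residual quotient data; equivalently, that $\adv$'s winning outputs necessarily carry information about $s \bmod A$ modulo $B$ and $s' \bmod A^\perp$ modulo $C^\perp$. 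Once this translation is justified, invoking \Cref{thm: direct product info} at dimension $n/2$ upper-bounds $\adv'$'s success probability by a negligible function, yielding the same bound on $p$ and completing the proof.
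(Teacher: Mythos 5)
Your high-level strategy---reducing Hyb~7 at ambient dimension $n$ to the information-theoretic direct product hardness of \Cref{thm: direct product info} at dimension $n/2$ by isolating the ``free'' $n/4$-dimensional degree of freedom of $A$ inside $B/C$---is exactly the route the paper takes. But you have identified, and then left open, precisely the step that carries all the weight. The obstacle you name in your third paragraph is genuine: in characteristic $2$ there is no canonical splitting $\mathbb{F}_2^n \cong C \oplus (B/C) \oplus (\mathbb{F}_2^n/B)$, so the coset state $\ket{A_{s,s'}}$ does not obviously factor as a tensor product across such a decomposition, and moreover the quotient $B/C$ does not come equipped with an inner product under which $(A/C)^{\perp}$ corresponds to $A^{\perp}/B^{\perp}$---which is what you need for the Hadamard-side coset to line up. Ending with ``once this translation is justified, invoking \Cref{thm: direct product info} \dots'' does not close this; it is the proof.

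The paper resolves the issue with a concrete random self-reduction that you gesture at but do not carry out. One first reduces, without loss of generality, to the case where $B$ is the coordinate subspace with last $n/4$ entries zero and $C$ the coordinate subspace with last $3n/4$ entries zero: given a general Hyb~7 instance, sample a uniformly random linear isomorphism $\mathcal{T}$ carrying $B_*,C_*$ to $B,C$, apply the permutation unitary $U_\mathcal{T}$ to the state, and observe that $U_\mathcal{T}\ket{A_{s,s'}} = \ket{\mathcal{T}(A)_{\mathcal{T}(s),\,(\mathcal{T}^{-1})^T(s')}}$---the transpose-inverse on the dual shift is exactly the bookkeeping that handles the duality concern you raise. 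One then transports $t,t'$ via $\mathcal{T}$ and $(\mathcal{T}^{-1})^T$, and checks the resulting distributions are correct. After this change of basis the ``quotient'' is a literal coordinate block, the simulated state is built as a tensor product $\ket{\phi}\otimes\ket{A'_{\sigma,\sigma'}}\otimes\ket{\hat{s}}$ with an $\mathbb{F}_2^{n/2}$-coset state in the middle block, and the push-forward is simply ``restrict to the middle $n/2$ coordinates.'' With that setup, the worry you raise about the adversary's output projecting trivially disappears automatically: any $v\in A+s$ has its middle $n/2$ coordinates lying in $A'+\sigma$ because $A+s$ respects the coordinate split, and likewise on the Hadamard side. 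So the proposal is structurally on the right track, but it is missing the change-of-basis lemma and the explicit tensor-product calculation; those are not ``translation'' details but the substance of the reduction.
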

\begin{proof}
Suppose there exists an adversary $\adv$ for Hyb 7 that wins with probability $p$. 

We first show that, without loss of generality, one can take $B$ to be the subspace of vectors such that the last $n/4$ entries are zero (and the rest are free), and one can take $C$ to be such that the last $3/4n$ entries are zero (and the rest are free). We construct the following adversary $\adv'$ for the game where $B$ and $C$ have the special form above with trailing zeros, call these $B_*$ and $C_*$, from an adversary $\adv$ for the game of Hyb 7. 
\begin{itemize}
\item $\adv'$ receives a state $\ket{A_{s,s'}}$, together with $t$ and $t'$, for some $C_* \subseteq A \subseteq B_*$, where $t = s + w_{B_*}$ for $w_{B_*} \leftarrow B_*$, and $t' = s' + w_{C_*^{\perp}}$, where $w_{C_*^{\perp}} \leftarrow C_*^{\perp}$.
\item $\adv'$ picks uniformly random subspaces $B$ and $C$ of dimension $\frac{3}{4}n$ and $\frac{n}{4}$ respectively such that $C \subseteq B$, and a uniformly random isomorphism $\mathcal{T}$ mapping $C_*$ to $C$ and $B_*$ to $B$ (which can be sampled efficiently). We think of $\mathcal{T}$ as a change-of-basis matrix (in particular when we take its transpose). $\adv'$ applies to  $\ket{A_{s,s'}}$ the unitary $U_{\mathcal{T}}$ which acts as $\mathcal{T}$ on the standard basis elements. $\adv'$ gives $U_{\mathcal{T}}\ket{A}$ to $\adv$ together with $B$, $C$, $\mathcal{T}(t)$ and $(\mathcal{T}^{-1})^T(t')$. $\adv'$ receives a pair $(v,w)$ from $\adv$. $\adv'$ outputs $(\mathcal{T}^{-1}(v),\mathcal{T}^T(w))$.
\end{itemize}

First, notice that
\begin{align*}
    U_{\mathcal{T}} \ket{A_{s,s'}} &= U_{\mathcal{T}}  \sum_{v \in A} (-1)^{\langle v,s' \rangle}\ket{v+s} \\
    & =  \sum_{v \in A} (-1)^{\langle v,s' \rangle}\ket{\mathcal{T}(v)+\mathcal{T}(s)} \\
    & =  \sum_{w \in \mathcal{T}(\mathcal{A})} (-1)^{\langle \mathcal{T}^{-1}(w), s' \rangle}\ket{w+\mathcal{T}(s)} \\
    & = \sum_{w \in \mathcal{T}(A)} (-1)^{\langle w, (\mathcal{T}^{-1})^T(s') \rangle}\ket{w+\mathcal{T}(s)} \\
    & = \ket{\mathcal{T}(A)_{z, z'}} \,,
\end{align*}
where $z = \mathcal{T}(s)$ and $z' = (\mathcal{T}^{-1})^T (s')$.

Notice that $\mathcal{T}(A)$ is a uniformly random subspace between $C$ and $B$, and that $z$ and $z'$ are uniformly random vectors in $\mathbb{F}_2^n$. Moreover, we argue that:
\begin{itemize}
\item[(i)] $\mathcal{T}(t)$ is distributed as a uniformly random element of $z+B$.
\item[(ii)] $(\mathcal{T}^{-1})^T(t')$ is distributed as a uniformly random element of $z' + C^{\perp}$. 
\end{itemize}

For (i), notice that 
$$\mathcal{T}(t) = \mathcal{T}(s+w_{B_*}) = \mathcal{T}(s)+\mathcal{T}(w_{B_*}) = z + \mathcal{T}(w_{B_*})\,,$$ where $w_{B_*}$ is uniformly random in $B_*$. Since $\mathcal{T}$ is an isomorphism with $\mathcal{T}(B_*) = B$, then $\mathcal{T}(w_{B_*})$ is uniformly random in $B$. Thus, $\mathcal{T}(t)$ is distributed as a uniformly random element in $z+B$.

For (ii), notice that 
$$(\mathcal{T}^{-1})^T(t') = (\mathcal{T}^{-1})^T(s'+w_{C_*^{\perp}}) = (\mathcal{T}^{-1})^T(s') + (\mathcal{T}^{-1})^T(w_{C_*^{\perp}}) = z' + (\mathcal{T}^{-1})^T(w_{C_*^{\perp}}) \,,$$
where $w_{C_*^{\perp}}$ is uniformly random in $C_*^{\perp}$. We claim that $(\mathcal{T}^{-1})^T(w_{C_*^{\perp}})$ is uniformly random in $C^{\perp}$. Notice, first, that the latter belongs to $C^{\perp}$. Let $x \in C$, then 
$$\langle (\mathcal{T}^{-1})^T(w_{C_*^{\perp}}), x \rangle = \langle w_{C_*^{\perp}}, \mathcal{T}^{-1} (x) \rangle  = 0\,,$$
where the last equality follows because $w_{C_*^{\perp}} \in C_*^{\perp}$, and $\mathcal{T}^{-1}(C) = C_*$. The claim follows from the fact that $(\mathcal{T}^{-1})^T$ is a bijection.

Hence, $\adv$ receives inputs from the correct distribution, and thus, with probability $p$, $\adv$ returns a pair $(v,w)$ such that $v \in \mathcal{T}(A)+z$ and $w \in \mathcal{T}(A)^{\perp} + z'$, where $z = \mathcal{T}(s)$ and $z' \in (\mathcal{T}^{-1})^T (s')$. $\adv'$ returns $(v', w') = (\mathcal{T}^{-1}(v),  \mathcal{T}^{T}(w))$.

Notice that:
\begin{itemize}
    \item If $v \in \mathcal{T}(A)+z$, where $z = \mathcal{T}(s)$, then $\mathcal{T}^{-1}(v) \in A + s$.
    \item If $w \in \mathcal{T}(A)^{\perp} + z'$, where $z' \in (\mathcal{T}^{-1})^T (s')$, then $\mathcal{T}^{T}(w) \in A^{\perp} + s'$. This is because, for any $u \in A$:
    $$ \langle \mathcal{T}^T(w), u \rangle = \langle w, \mathcal{T}(u) \rangle = 0 \,. $$
\end{itemize}

Thus, with probability $p$, $\adv'$ returns a pair $(v',w')$ where $v' \in A + s$ and $w' \in A^{\perp} + s'$, as desired.
\vspace{2mm}

So, we can now assume that $B$ is the space of vectors such that the last $\frac{n}{4}$ entries are zero, and $C$ is the space of vectors such that the last $\frac34 n$ entries are zero. Notice then that the sampled subspace $A$ is uniformly random subspace subject to the last $\frac{n}{4}$ entries being zero, and the first $\frac{n}{4}$ entries being free. From an adversary $\adv$ for Hybrid 7 with such $B$ and $C$, we will construct an adversary $\adv'$ for the information-theoretic direct-product game where the ambient subspace is $\mathbb{F}_2^{n'}$, where $n' = \frac{n}{2}$.

\begin{itemize}
    \item $\adv'$ receives $\ket{A_{s,s'}}$, for uniformly random $A \subseteq \mathbb{F}_2^{n'}$ of dimension $n'/2$ and uniformly random $s, s' \in \mathbb{F}_2^{n'}$. $\adv'$ samples $\tilde{s}, \tilde{s}', \hat{s}, \hat{s}' \leftarrow \mathbb{F}_2^{\frac{n}{4}}$. 
    
    Let  $\ket{\phi} = \frac{1}{2^{n/8}} \sum_{x \in \{0,1\}^{n/4} } (-1)^{\langle x,\tilde{s}' \rangle}\ket{x + \tilde{s}}$. $\adv_0'$ creates the state 
$$ \ket{W} = \ket{\phi} \otimes \ket{A_{s,s'}} \otimes \ket{\hat{s}}\,,$$
$\adv'$ gives to $\adv$ as input the state $\ket{W}$, together with $t = 0^{3n/4}|| \hat{s} + w_B$ for $w_B \leftarrow B$ and $t' = \hat{s}'||0^{3n/4} + w_{C^{\perp}}$, for $w_{C^{\perp}} \leftarrow C^{\perp}$. $\adv$ returns a pair $(v,w) \in \mathbb{F}_2^{n} \times \mathbb{F}_2^{n}$. Let $v' = [v]_{\frac{n}{4}+1, \frac{3}{4}n} \in \mathbb{F}_2^{n/2}$ be the ``middle'' $n/2$ entries of $v$. Let $w' =  [w]_{\frac{n}{4}+1, \frac{3}{4}n} \in \mathbb{F}_2^{n/2} $. $\adv'$ outputs $(v', w')$. 
\end{itemize}

Notice that 
\begin{align*}
\ket{W} &=  \ket{\phi} \otimes \ket{A_{s,s'}} \otimes \ket{\hat{s}} \\
&= \sum_{x \in \{0,1\}^{n/4}, v \in A} (-1)^{\langle x,\tilde{s}' \rangle} (-1)^{\langle v,s' \rangle}\Big| (x+\tilde{s})||(v+s)||\hat{s}\Big\rangle \\
&= \sum_{x \in \{0,1\}^{n/4}, v \in A} (-1)^{\langle (x||v||0^{n/4}), (\tilde{s}'|| s'||\hat{s}') \rangle} \Big| x||v||0^{n/4} + \tilde{s}||s||\hat{s} \Big \rangle \\
&= \sum_{w \in \tilde{A}} (-1)^{\langle w, z' \rangle} \ket{w + z} = \ket{\tilde{A}_{z,z'}} \,,
\end{align*}
where $z = \tilde{s}||s||\hat{s}$, $z' = \tilde{s}'||s'||\hat{s}'$, and $\tilde{A} \subseteq \mathbb{F}_2^n$ is the subspace in which the first $n/4$ entries are free, the middle $n/2$ entries belong to subspace $A$, and the last $n/4$ entries are zero (notice that there is a freedom for the choice of $s'$ in the above calculation).

Notice that the subspace $\tilde{A}$, when averaging over the choice of $A$, is distributed precisely as in the game of Hybrid 7 (with the special choice of $B$ and $C$); $z, z'$ are uniformly random in $\mathbb{F}_2^{n}$; $t$ is uniformly random from $z + B$, and $t'$ is uniformly random from $z'+C^{\perp}$. Thus, with probability $p$, $\adv$ returns to $\adv'$ a pair $(v,w)$ such that $v \in \tilde{A}+z$ and $w \in \tilde{A}^{\perp} + z'$. It follows that, with probability $p$, the answer $(v',w')$ returned by $\adv'$ is such that $v' \in A+s$ and $w' \in A^{\perp} + s'$.

Thus, by Theorem \ref{thm: direct product info}, we deduce that $p$ must be negligible.

\end{proof}

Therefore, we have shown that the advantage in distinguishing Hybrid 0 and Hybrid 6 is negligible, and the success probability in Hybrid 6 is at most the success probability in Hybrid 7, which is negligible). Hence, the probability of success in the original game is also negligible. 
\end{proof}

\jiahui{comment out for camera ready ver}

\subsection{Monogamy-of-Entanglement Property}
\label{sec: monogamy}
In this subsection, we argue that coset states satisfy an information-theoretic 
and a computational 
monogamy-of-entanglement property. We will not make use of these properties directly, instead we will have to rely on a stronger conjectured monogamy-of-entanglement property, which is presented in subsection \ref{sec: monogamy conjectured}. Thus, the properties that we prove in this subsection serve merely as ``evidence'' in support of the stronger conjecture.

\subsubsection{Information-Theoretic Monogamy-of-Entanglement}
\label{sec: it monogamy}
Let $n \in \mathbb{N}$. Consider the following game between a challenger and an adversary $(\Advz, \adv_1, \adv_2)$.
\begin{itemize}
    \item The challenger picks a uniformly random subspace $A \subseteq \mathbb{F}_2^n$ of dimension $\frac{n}{2}$, and two uniformly random elements $s, s' \in \mathbb{F}_2^n$. Sends $\ket{A_{s,s'}}$ to $\Advz$. 
    \item $\Advz$ creates a bipartite state on registers $\mathsf{B}$ and $\mathsf{C}$. Then, $\Advz$ sends register $\mathsf{B}$ to $\adv_1$, and $\mathsf{C}$ to $\adv_2$. 
    \item The description of $A$ is then sent to both $\As_1, \As_2$. 
    \item $\adv_1$ and $\adv_2$ return respectively $(s_1,s_1')$ and $(s_2, s_2')$.
\end{itemize}
$(\Advz, \adv_1, \adv_2)$ wins if, for $i \in \{1,2\}$, $s_i \in A +s$ and  $s_i' \in A^{\perp} + s' \,.$
\vspace{2mm}

Let $\mathsf{ITMonogamy}((\Advz, \adv_1, \adv_2), n)$ be a random variable which takes the value $1$ if the game above is won by adversary $(\Advz, \adv_1, \adv_2)$, and takes the value $0$ otherwise. We have the following theorem. 
\begin{theorem}
\label{thm: monogamy info}
There exists a sub-exponential function $\subexp$ such that, for any (unbounded) adversary $(\Advz, \adv_1, \adv_2)$, 
$$\Pr[\mathsf{ITMonogamy}((\Advz, \adv_1, \adv_2), n) = 1] \leq 1/\subexp(n)\,.$$
\end{theorem}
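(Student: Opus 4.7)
The plan is to reduce the game, via an $A$-dependent change of basis, to a BB84-style monogamy-of-entanglement game and then invoke the bound of Tomamichel--Fehr--Kaniewski--Wehner. The key calculation is that, for the canonical subspace $A_0 = \mathbb{F}_2^{n/2}\times\{0^{n/2}\}$, a direct computation gives
\begin{equation*}
\ket{(A_0)_{s,s'}} \;=\; (-1)^{\langle s_1, s_1'\rangle}\,\big(H^{\otimes n/2}\ket{s_1'}\big)\otimes\ket{s_2}\,,
\end{equation*}
where $s = (s_1,s_2)$ and $s' = (s_1',s_2')$. A vector in $A_0+s$ has its last $n/2$ coordinates equal to $s_2$, and a vector in $A_0^\perp+s'$ has its first $n/2$ coordinates equal to $s_1'$. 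For general $A$, any invertible linear map $U_A \in GL_n(\mathbb{F}_2)$ sending $A$ to $A_0$ intertwines $\ket{A_{s,s'}}$ with $\ket{(A_0)_{U_A s,\,(U_A^{-1})^T s'}}$, so up to this basis change the coset state is always a tensor product of a Hadamard-basis encoding of the canonical representative of $s' \bmod A^\perp$ and a computational-basis encoding of the canonical representative of $s \bmod A$.

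Under this reformulation, the monogamy game becomes the following cloning task: $\Advz$ receives $H^{\otimes n/2}\ket{b}\otimes\ket{a}$ for uniform $a,b \in \mathbb{F}_2^{n/2}$, but the partition of the $n$ qubits into a ``Hadamard half'' and a ``computational half'' is encoded by the random $A$ and is hidden from $\Advz$ at split time; it is revealed to $\adv_1,\adv_2$ together with $A$, and both must output $(a,b)$. I will reduce this to a BB84 monogamy game on $n$ qubits by choosing the basis string $\theta$ to be the indicator vector of a uniform subset of size $n/2$ (the set of ``Hadamard positions'') composed with a random coordinate permutation. Since $\adv_1$ and $\adv_2$ can each compute $U_A$ from $A$ and apply $U_A^{-1}$ before measuring, any strategy in the coset game transfers with the same success probability to the associated BB84 game. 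The bound of Tomamichel et al.\ then gives an exponentially small upper bound $(\tfrac{1}{2}+\tfrac{1}{2\sqrt{2}})^{n/2}$ on the winning probability, which is a fortiori $1/\subexp(n)$.

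The main obstacle is that the distribution over $A$ is richer than the distribution over a basis-string permutation: $A$ ranges over all dimension-$n/2$ subspaces of $\mathbb{F}_2^n$, not merely coordinate subspaces. This is handled by exploiting the fact that $GL_n(\mathbb{F}_2)$ acts transitively on dimension-$n/2$ subspaces; combined with translation invariance under $s,s'$, one can average over the $GL_n(\mathbb{F}_2)$-action to reduce the analysis to the canonical $A_0$, at the cost of verifying that all relevant operators are invariant under the residual stabilizer of $A_0$ (which fixes the partition but is free to permute within each half and to apply arbitrary linear combinations within each half). A second subtlety is that $\adv_1,\adv_2$ may share joint entanglement established by $\Advz$, but the Tomamichel--Fehr--Kaniewski--Wehner bound already applies to arbitrary bipartite quantum strategies, so once the reduction to BB84 is in place no further work on this point is required.
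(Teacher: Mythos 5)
Your calculation for the canonical subspace $A_0 = \mathbb{F}_2^{n/2}\times\{0^{n/2}\}$ is correct, and so is the intertwining relation $\tilde{U}_A\ket{A_{s,s'}} = \ket{(A_0)_{U_A s,\,(U_A^{-1})^T s'}}$, where $\tilde{U}_A$ is the basis-permutation unitary induced by $U_A$. But the reduction to BB84 monogamy founders at the very next step. You write that $\Advz$ ``receives $H^{\otimes n/2}\ket{b}\otimes\ket{a}$\ldots but the partition of the $n$ qubits into a Hadamard half and a computational half is encoded by the random $A$.'' That phrasing tacitly assumes that $\ket{A_{s,s'}}$ is always a tensor product of $n$ single-qubit states in the \emph{physical} computational basis, only with an unknown assignment of Hadamard positions. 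That is false: the product decomposition exists only after applying $\tilde{U}_A$, and $\tilde{U}_A$ is a generic linear twist that scrambles all $n$ qubits. For any non-coordinate subspace---e.g.\ $A = \mathrm{span}\{(1,1)\}\subset\mathbb{F}_2^2$, $s=s'=0$, giving $(\ket{00}+\ket{11})/\sqrt{2}$---the coset state is genuinely entangled across the physical qubits. So what $\Advz$ actually receives is $\tilde{U}_A^{-1}(H^{\otimes n/2}\ket{b}\otimes\ket{a})$ with $\tilde{U}_A$ unknown; this is not a BB84 state with a hidden $\theta$, it is a BB84 state twisted by a hidden element of $GL_n(\mathbb{F}_2)$, a strictly larger and structurally different space of ``hidden bases.''

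Because of this, the reduction cannot be completed in either direction. Having $\adv_1,\adv_2$ apply $\tilde{U}_A$ after learning $A$ does not convert a coset-game strategy into a BB84-game strategy, since $\Advz$'s splitting map was already applied in the untwisted basis, and $\tilde{U}_A$ does not factor across $\Advz$'s bipartition. And the ``average over the $GL_n(\mathbb{F}_2)$-action'' step you invoke would need to move the change of basis \emph{inside} $\Advz$'s view, which is impossible precisely because $\Advz$ never learns $A$. Restricting attention to coordinate subspaces (which really do give weight-$n/2$ BB84 states) is also not a without-loss-of-generality step: an adversary's success probability on that measure-zero sub-family could be much larger than on the full Grassmannian.

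What the paper actually does is \emph{adapt} the TFKW proof technique rather than reduce to the TFKW game. It writes the winning probability as the operator norm of a sum of projections $\Pi^A$ and invokes Lemma 2 of \cite{tomamichel2013monogamy} (the bound via mutually orthogonal permutations) directly on that sum, after sub-averaging over a random basis so that the $A$'s in play become coordinate subspaces of that basis. The overlap bound is then computed from scratch using the algebraic structure of cosets: $|\Pi^A\Pi^{A'}| \leq 2^{\dim(A\cap A')}/2^{n/2}$. This is the ``more refined analysis'' the paper flags as necessary, and it is exactly the part your reduction tries to skip. If you want to salvage your approach, you would have to prove that twisted BB84 states behave no better than untwisted ones under the monogamy bound---which is essentially re-proving what the paper proves.
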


We refer the reader to \Cref{proof:monogamy} for the proof.

\subsubsection{Computational monogamy}
\label{sec: comp monogamy}
We describe a computational version of the monogamy game from the previous section. In the computational version, $\adv_0$ additionally receives the programs $\iO(A+s)$ and $\iO(A'+s')$. The game is between a challenger and an adversary $(\Advz, \adv_1, \adv_2)$.
\begin{itemize}
    \item The challenger picks a uniformly random subspace $A \subseteq \mathbb{F}^n$ of dimension $\frac{n}{2}$, and two uniformly random elements $s, s' \in \mathbb{F}_2^n$. It sends $\ket{A_{s,s'}}$, $\iO(A+s)$, and $\iO(A^{\perp}+s')$ to $\Advz$.
    \item $\Advz$ creates a bipartite state on registers $\mathsf{B}$ and $\mathsf{C}$. Then, $\Advz$ sends register $\mathsf{B}$ to $\adv_1$, and $\mathsf{C}$ to $\adv_2$. 
    \item The description of $A$ is then sent to both $\As_1, \As_2$. 
    \item $\adv_1$ and $\adv_2$ return respectively $(s_1,s_1')$ and $(s_2, s_2')$.
\end{itemize}
$(\Advz, \adv_1, \adv_2)$ wins if, for $i \in \{1,2\}$, $s_i \in A +s$ and  $s_i' \in A^{\perp} + s' \,.$
\vspace{2mm}

Let $\mathsf{CompMonogamy}((\Advz, \adv_1, \adv_2), n)$ be a random variable which takes the value $1$ if the game above is won by adversary $(\Advz, \adv_1, \adv_2)$, and takes the value $0$ otherwise.

\begin{theorem} \label{thm: monogamy comp}
{Assume the existence of post-quantum $\iO$ and one-way function,} there exists a negligible function $\negl(\cdot)$, for any QPT adversary $(\Advz, \adv_1, \adv_2)$, 
$$\Pr[\mathsf{CompMonogamy}((\Advz, \adv_1, \adv_2), n) =1] = \negl(n)\,.$$
\end{theorem}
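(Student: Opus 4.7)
The plan is to follow the same hybrid strategy used in the proof of \Cref{thm: direct product comp}, adapted to the tripartite monogamy setting, and then reduce the final hybrid to the information-theoretic monogamy game of \Cref{thm: monogamy info}. That is, I would define hybrids Hyb~0 through Hyb~7 exactly analogous to those in \Cref{sec:proof_for_direct_product}, but with the additional step that $\Advz$ produces a bipartite state, sends halves to $\Adv_1,\Adv_2$, who each receive $A$ and return pairs $(s_i,s_i')$. Concretely: Hyb~0 is the real game with $\iO(A+s),\iO(A^\perp+s')$; Hyb~1 replaces these with $\iO(\shO_A(\cdot-s)),\iO(\shO_{A^\perp}(\cdot-s'))$ using $\iO$ security for functionally equivalent programs; Hyb~2 and Hyb~5 replace $\shO_A$ and $\shO_{A^\perp}$ with $\shO_B$ and $\shO_{C^\perp}$ for uniformly random superspaces $B\supseteq A$ and $C^\perp\supseteq A^\perp$ of dimension $3n/4$, using subspace-hiding obfuscation; Hyb~3 and Hyb~6 replace $s,s'$ inside the membership programs with randomized masks $t=s+w_B$ and $t'=s'+w_{C^\perp}$ for uniformly random $w_B\in B$, $w_{C^\perp}\in C^\perp$, which leaves the programs functionally equivalent and is therefore invisible under $\iO$; and Hyb~7 hands $\Advz$ simply the descriptions $(B,C,t,t')$ in the clear together with $\ket{A_{s,s'}}$, which can only help the adversary since it can reconstruct the Hyb~6 programs from this data. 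Each of these hybrid indistinguishability claims is proved by exactly the same reductions as in \Cref{sec:proof_for_direct_product}, with the only change being that the $\iO$/\shO-distinguisher now internally simulates the full three-party protocol and declares ``win'' iff both $\Adv_1$ and $\Adv_2$ return valid pairs (which is efficiently checkable given $A,s,s'$).

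The remaining step is to bound the success probability in Hyb~7 by the information-theoretic monogamy bound from \Cref{thm: monogamy info}. The first observation is that, by a random change of basis, we may without loss of generality assume $B$ is the subspace whose last $n/4$ coordinates are zero and $C$ is the subspace whose last $3n/4$ coordinates are zero: given a Hyb~7 adversary for arbitrary $B,C$, one samples a random isomorphism $\mathcal{T}$ carrying the canonical $B_*,C_*$ to $B,C$, applies $U_{\mathcal T}$ to $\ket{A_{s,s'}}$ to obtain $\ket{\mathcal{T}(A)_{\mathcal{T}(s),(\mathcal{T}^{-1})^T s'}}$, and forwards $\mathcal{T}(t),(\mathcal{T}^{-1})^T t'$ to the adversary; the answers are mapped back by $\mathcal{T}^{-1}$ and $\mathcal{T}^T$. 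This is the same calculation as in the direct-product proof, and it works identically here because the mapping is applied in the pre-processing step and the map $A\mapsto\mathcal{T}(A)$ commutes with the later revelation of the subspace to $\Adv_1,\Adv_2$ (the reduction simply provides $\Adv_1,\Adv_2$ with $\mathcal{T}(A)$ when they later receive the subspace).

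With $B,C$ in canonical form, I reduce Hyb~7 to the IT monogamy game on the ambient space $\mathbb{F}_2^{n/2}$. Given a challenge coset state $\ket{A'_{s,s'}}$ with $A'\subseteq\mathbb{F}_2^{n/2}$ of dimension $n/4$, the reduction samples auxiliary random strings $\tilde s,\tilde s',\hat s,\hat s'\in\mathbb{F}_2^{n/4}$ and, exactly as in the Hyb~7 analysis of the direct-product proof, builds
\[
\ket W \;=\; \bigl(\tfrac{1}{2^{n/8}}\!\!\sum_{x\in\{0,1\}^{n/4}}\!\!(-1)^{\langle x,\tilde s'\rangle}\ket{x+\tilde s}\bigr)\otimes\ket{A'_{s,s'}}\otimes\ket{\hat s}\;=\;\ket{\tilde A_{z,z'}},
\]
where $\tilde A$ is uniform subject to $C\subseteq\tilde A\subseteq B$ and $z=\tilde s\|s\|\hat s$, $z'=\tilde s'\|s'\|\hat s'$ are uniform. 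It sets $t=0^{3n/4}\|\hat s+w_B$, $t'=\hat s'\|0^{3n/4}+w_{C^\perp}$ for fresh $w_B\in B$, $w_{C^\perp}\in C^\perp$, and runs the Hyb~7 adversary. When $\Adv_1,\Adv_2$ subsequently need to receive $\tilde A$, the reduction forwards the description of $\tilde A$ (which it constructs by appending $A'$ in the middle block), and outputs the middle $n/2$ entries of each returned pair. By the same coset-arithmetic check as before, valid Hyb~7 outputs translate to valid IT-monogamy outputs with identical probability.

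The main obstacle I anticipate is verifying that revealing $\tilde A$ to $\Adv_1,\Adv_2$ in the reduction can be simulated using only the revelation of $A'$ in the IT monogamy game. This works because $\tilde A$ is determined by $A'$ together with the fixed public choice of $B,C$ and the block structure, so the reduction can produce $\tilde A$ from $A'$ as a deterministic public computation once $A'$ is revealed. Combining the seven indistinguishability steps with the sub-exponential bound from \Cref{thm: monogamy info} yields that the success probability in Hyb~0 is at most $1/\subexp(n)+\negl(n)=\negl(n)$, completing the proof.
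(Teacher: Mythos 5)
Your proposal follows essentially the same hybrid argument and final reduction as the paper's proof (given in \Cref{sec: appendix monogamy comp}): the same sequence of $\iO$/$\shO$ replacements and mask randomizations to arrive at a hybrid where $B,C,t,t'$ are given in the clear, followed by the same random change-of-basis normalization and block-embedding of the information-theoretic monogamy challenge into the middle $n/2$ coordinates via the tensored state $\ket{W}$. The only cosmetic difference is that you batch the two $\shO$ replacements and the two mask randomizations together, whereas the paper handles the $A+s$ side (Hybs 1--3) before the $A^\perp+s'$ side (Hybs 4--6); the reductions and the final invocation of \Cref{thm: monogamy info} are identical.
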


The proof is very similar to the proof of \Cref{thm: direct product comp}. We refer the reader to \Cref{sec: appendix monogamy comp} for the full details.

\subsection{Conjectured Strong Monogamy Property}
\label{sec: monogamy conjectured}
\jiahui{add comment about the follow-up?}
In this section, we describe a stronger version of the monogamy property, which we conjecture to hold.  The monogamy property is a slight (but significant) variation of the one stated in the last section (which we proved to be true). Recall that there $\adv_1$ and $\adv_2$ are required to return pairs $(s_1, s_1')$ and $(s_2, s_2')$ respectively, such that both $s_1, s_2 \in A+s$ and $s_1', s_2' \in A^{\perp} + s'$. Now, we require that it is hard for $\adv_1$ and $\adv_2$ to even return a single string $s_1$ and $s_2$ respectively such that $s_1 \in A+s$ and $s_2 \in A^{\perp} + s'$.

Formally, consider the following game between a challenger and an adversary $(\Advz, \adv_1, \adv_2)$. 
\begin{itemize}
    \item The challenger picks a uniformly random subspace $A \subseteq \mathbb{F}_2^n$ of dimension $\frac{n}{2}$, and two uniformly random elements $s, s' \in \mathbb{F}_2^n$. It sends $\ket{A_{s,s'}}$ to $\Advz$.
    \item $\Advz$ creates a bipartite state on registers $\mathsf{B}$ and $\mathsf{C}$. Then, $\Advz$ sends register $\mathsf{B}$ to $\adv_1$, and $\mathsf{C}$ to $\adv_2$. 
    \item The description of $A$ is then sent to both $\As_1, \As_2$. 
    \item $\adv_1$ and $\adv_2$ return respectively $s_1$ and $s_2$.
\end{itemize}
Let $\mathsf{ITStrongMonogamy}((\Advz, \adv_1, \adv_2), n)$ be a random variable which takes the value $1$ if the game above is won by adversary $(\Advz, \adv_1, \adv_2)$, and takes the value $0$ otherwise. We conjecture the following: 
\begin{conjecture}
\label{conj:strong_monogamy_it}
There exists a sub-exponential function $\subexp$ such that, for any (unbounded) adversary $(\Advz, \adv_1, \adv_2)$, 
$$\Pr[\mathsf{ITStrongMonogamy}((\Advz, \adv_1, \adv_2), n) = 1] \leq 1/\subexp(n)\,.$$
\end{conjecture}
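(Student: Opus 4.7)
The plan is to prove the conjecture in the spirit of the BB84 monogamy-of-entanglement result of Tomamichel--Fehr--Kaniewski--Wehner, by bounding the operator norm of an appropriate averaged operator, but with an analysis tailored to the additional entanglement structure present in coset states.

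First, by standard purification and Naimark-style reductions, I would assume without loss of generality that $\Advz$ prepares a pure tripartite state $\ket{\Phi}_{\mathsf{ABC}}$, and that $\Adv_1, \Adv_2$ perform projective measurements $\{B^A_t\}_{t\in\mathbb F_2^n}$ and $\{C^A_{t'}\}_{t'\in\mathbb F_2^n}$, each depending on the classical description $A$ revealed in the last phase. After $\Advz$ applies an isometry mapping the coset state into register $\mathsf A$ of $\ket\Phi$, the winning probability can be written as
\[
    p \;=\; \mathbb E_{A,s,s'} \,\bra{A_{s,s'}}\!\otimes\!\bra{\Phi}\;\Bigl(\Pi^{\mathsf A}_{A,s}\otimes \Pi^{\mathsf B}_{A,s}\otimes \Pi^{\mathsf C}_{A,s'}\Bigr)\;\ket{A_{s,s'}}\!\otimes\!\ket\Phi,
\]
where $\Pi^{\mathsf B}_{A,s}=\sum_{t\in A+s} B^A_t$ and $\Pi^{\mathsf C}_{A,s'}=\sum_{t'\in A^\perp+s'} C^A_{t'}$. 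The goal is then to upper bound the operator norm of the averaged operator $M := \mathbb E_{A,s,s'}\, \ket{A_{s,s'}}\!\bra{A_{s,s'}}\otimes \Pi^{\mathsf B}_{A,s}\otimes \Pi^{\mathsf C}_{A,s'}$ by $1/\subexp(n)$, uniformly over all strategies.

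The second step is to exploit two algebraic simplifications. For fixed $A$, averaging over $s,s'$ turns the outer product $\ket{A_{s,s'}}\!\bra{A_{s,s'}}$ into a density matrix supported on the cosets, which collapses to a projector onto a structured subspace and allows one to trace out register $\mathsf A$ and reduce $M$ to an operator of the form $\mathbb E_A[\Psi_A^{\mathsf B\mathsf C}]$ on $\mathsf{BC}$ alone. Then one would decouple Bob's and Charlie's contributions using a CS-style or SDP-style inequality $\Psi_A \preceq \tfrac12(\Pi^{\mathsf B}_A\otimes I + I \otimes \Pi^{\mathsf C}_A)^2$ or similar, and reduce the problem to bounding the norm of $\mathbb E_A[\Pi^{\mathsf B}_{A,s}]$ on Bob's side alone (and symmetrically for Charlie). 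For a fixed strategy, this expectation is an operator whose spectrum is controlled by how well any fixed vector can lie in a random coset $A+s$, a question that can be analyzed combinatorially via the distribution of intersections $\dim(A\cap A')$ for two independent uniformly random subspaces of dimension $n/2$.

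The main obstacle I expect is handling the cross terms $\Pi^{\mathsf B}_{A,s}\Pi^{\mathsf B}_{A',s}$ for distinct $A,A'$ arising when one squares such an average, since, unlike the BB84 case, coset states for different $A$ are not related by a simple product of single-qubit Cliffords, so the clean symmetry group arguments of Tomamichel et al.\ are only partially available. My plan for this is to group subspace pairs $(A,A')$ according to $d=\dim(A\cap A')$, and observe that the number of pairs with a given $d$ decays as roughly $2^{-(n/2 - d)^2}$ relative to the trivial intersection, while the worst-case contribution of each pair scales only polynomially in $|A\cap A'|$; summing the resulting geometric series should yield a sub-exponentially small norm bound, hence the $1/\subexp(n)$ bound claimed in the conjecture. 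If this direct approach encounters obstructions, a fallback is a random self-reduction: conditioning on $\dim(A\cap A_0)$ for a fixed reference subspace $A_0$, one can embed the game into a smaller BB84-type monogamy instance and invoke the known bound of Tomamichel et al.\ as a black box on the reduced register.
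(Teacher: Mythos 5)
The statement you are attempting to prove is stated as a \emph{conjecture} in this paper and is explicitly left unproved (the paper only notes it was later resolved in a follow-up work by Culf and Vidick); the paper itself proves only the weaker \Cref{thm: monogamy info}, in which both $\adv_1$ and $\adv_2$ must return a full pair $(s_i,s_i')\in(A+s)\times(A^\perp+s')$. So there is no in-paper proof to compare against, and your proposal must be judged on its own; unfortunately it contains a step that cannot work.

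The fatal gap is the decoupling step. You propose to trace out the anchor register $\mathsf A$ and then reduce to bounding $\|\mathbb{E}_A[\Pi^{\mathsf B}_{A,s}]\|$ on Bob's side alone (and symmetrically for Charlie). But Bob \emph{alone} can succeed with probability $1$: Alice forwards $\ket{A_{s,s'}}$ unchanged to register $\mathsf B$, and Bob measures in the computational basis to obtain a vector in $A+s$. All of the monogamy lives in the requirement that Bob and Charlie succeed \emph{simultaneously}, so any bound that separates the two sides is necessarily trivial. Relatedly, tracing out register $\mathsf A$ destroys the anchoring: for fixed $A$ one gets $\operatorname{Tr}_{\mathsf A}\bigl[\tfrac1{2^n}\sum_{s,s'}|A_{s,s'}\rangle\langle A_{s,s'}|\otimes\Pi^{\mathsf B}_{A,s}\otimes\Pi^{\mathsf C}_{A,s'}\bigr]=\bigl(\tfrac1{2^{n/2}}I\bigr)\otimes\bigl(\tfrac1{2^{n/2}}I\bigr)$, which is uninformative because the correlation between the anchor and the indices $s,s'$ has been discarded. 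By contrast, the paper's proof of the weak version (\Cref{thm: monogam2 IT canonical form}) keeps the rank-one anchor projectors $|A_{s,s'}\rangle\langle A_{s,s'}|$ throughout and relies on the scalar overlap bound $|\langle A_{s_1,s_1'}|A'_{s_2,s_2'}\rangle|\le 2^{\dim(A\cap A')-n/2}$ inside the Tomamichel et al.\ permutation lemma. The obstruction to the strong version is precisely that, once Bob's target depends only on $s$ and Charlie's only on $s'$, the anchor projectors become $\sum_{s'}|A_{s,s'}\rangle\langle A_{s,s'}|=\Pi_{A+s}$ of rank $2^{n/2}$ rather than rank one, and no analogous cross-term overlap bound is available; your assertion that the worst-case contribution of each $(A,A')$ pair scales "only polynomially in $|A\cap A'|$" is exactly the missing lemma, not a consequence. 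Your fallback reduction to a BB84-type instance runs into the very obstruction the paper highlights: coset states are highly entangled across all $n$ qubits while strings of BB84 states are product states, so the TFKW bound cannot be invoked as a black box on a reduced register.
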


\begin{remark}
This conjecture is later proved in a follow-up work by Culf and Vidick after the first version of this paper. We refer the readers to \cite{culfvidick2021cosetsproof} for details of the proof.
\end{remark}

Assuming the conjecture is true, and assuming post-quantum $\iO$ and one-way functions, we are able to prove the following computational strong monogamy statement.  Consider a game between a challenger and an adversary $(\Advz, \adv_1, \adv_2)$, which is identical to the one described above except that all $\Advz$ additionally gets the membership checking programs $\iO(A+s)$ and $\iO(A^{\perp}+s')$. 

\begin{itemize}
    \item The challenger picks a uniformly random subspace $A \subseteq \mathbb{F}_2^n$ of dimension $\frac{n}{2}$, and two uniformly random elements $s, s' \in \mathbb{F}_2^n$. It sends $\ket{A_{s,s'}}$, $\iO(A+s)$, and $\iO(A^{\perp}+s')$ to $\Advz$.
    \item $\Advz$ creates a bipartite state on registers $\mathsf{B}$ and $\mathsf{C}$. Then, $\Advz$ sends register $\mathsf{B}$ to $\adv_1$, and $\mathsf{C}$ to $\adv_2$. 
    \item The description of $A$ is then sent to both $\As_1, \As_2$. 
    \item $\adv_1$ and $\adv_2$ return respectively $s_1$ and $s_2$.
\end{itemize}
$(\Advz, \adv_1, \adv_2)$ wins if, for $s_1 \in A +s$ and $s_2 \in A^{\perp} + s'$.
\vspace{2mm}

Let $\mathsf{CompStrongMonogamy}((\Advz, \adv_1, \adv_2), n)$ be a random variable which takes the value $1$ if the game above is won by adversary $(\Advz, \adv_1, \adv_2)$, and takes the value $0$ otherwise.

\begin{theorem}
\label{conj:strong_monogamy}
Assuming Conjecture \ref{conj:strong_monogamy_it} holds, and assuming the existence of post-quantum $\iO$ and one-way functions, then there exists a negligible function $\negl(\cdot)$,
for any QPT adversary $(\Advz, \adv_1, \adv_2)$, 
$$\Pr[\mathsf{CompStrongMonogamy}((\Advz, \adv_1, \adv_2), n) = 1] = \negl(n)\,.$$
\end{theorem}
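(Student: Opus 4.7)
The plan is to follow verbatim the seven-hybrid template used in the proof of Theorem~\ref{thm: direct product comp}, modifying only the terminal information-theoretic step so that it appeals to Conjecture~\ref{conj:strong_monogamy_it} instead of Theorem~\ref{thm: direct product info}. Hybrid~0 is the real $\mathsf{CompStrongMonogamy}$ game. In Hybrids~1--3 I rewrite $\iO(A+s)$ as $\iO(\shO_A(\cdot-s))$ (functional equivalence plus $\iO$-security), replace $\shO_A$ by $\shO_B$ for a uniformly random superspace $B\supseteq A$ of dimension $3n/4$ (by $\shO$-security), and translate $s$ to $t:=s+w_B$ for uniform $w_B\in B$ (using $B+s=B+t$ and $\iO$-security). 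Hybrids~4--6 perform the mirror transformation on $\iO(A^{\perp}+s')$, introducing a superspace $C^{\perp}\supseteq A^{\perp}$ of dimension $3n/4$ and the translation $t':=s'+w_{C^{\perp}}$ for uniform $w_{C^{\perp}}\in C^{\perp}$. Each consecutive transition is negligibly close by exactly the arguments of Claims already established in Section~\ref{sec:proof_for_direct_product}; crucially, none of those arguments relies on $\As_1,\As_2$ returning pairs rather than singletons, so they transfer unchanged.

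Next, I would introduce Hybrid~7, where the challenger simply hands $\Advz$ the plaintext tuple $(B,t,C,t')$ in place of the two obfuscated programs; since these descriptions let $\Advz$ regenerate the Hybrid~6 obfuscations on its own, the winning probability cannot decrease. The heart of the proof then consists in showing that no (even unbounded) adversary can win Hybrid~7 except with negligible probability, and this is where Conjecture~\ref{conj:strong_monogamy_it} enters.

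For that final step I would reuse the change-of-basis trick from the Hybrid~7 analysis in Section~\ref{sec:proof_for_direct_product}: without loss of generality $B$ is the subspace whose last $n/4$ coordinates vanish and $C$ is the subspace whose last $3n/4$ coordinates vanish, so that $A$ decomposes with its middle $n/2$ coordinates forming a uniformly random $A'\subseteq\mathbb{F}_2^{n/2}$ of dimension $n/4$. Under this decomposition the coset state rewrites as a product $\ket{\phi_1}\otimes\ket{A'_{u,u'}}\otimes\ket{\phi_2}$ with $u,u'$ uniform in $\mathbb{F}_2^{n/2}$ and the outer factors determined by fresh randomness that also fixes $t,t'$; the explicit linear-algebra computation is identical to the one at the end of Section~\ref{sec:proof_for_direct_product}. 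I would then build a reduction $(\Advz',\As_1',\As_2')$ to the information-theoretic strong monogamy game of Conjecture~\ref{conj:strong_monogamy_it} on ambient space $\mathbb{F}_2^{n/2}$. On receiving $\ket{A'_{u,u'}}$, $\Advz'$ samples the outer registers and $(t,t')$ itself, assembles the Hybrid~7 input, and simulates the Hybrid~7 $\Advz$; $\As_1'$ and $\As_2'$ reconstruct the full $A$ from the eventually revealed $A'$ together with the fixed $B,C$, invoke the corresponding Hybrid~7 parties, and return the middle $n/2$ coordinates of the vectors those parties output. A vector $v\in A+s$ projects to an element of $A'+u$ and a vector $w\in A^{\perp}+s'$ projects to an element of $A'^{\perp}+u'$, so any success probability $\varepsilon$ in Hybrid~7 transfers directly to $\mathsf{ITStrongMonogamy}$ on $\mathbb{F}_2^{n/2}$, which the conjecture bounds by $1/\subexp(n/2)$.

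The main obstacle I anticipate is purely bookkeeping: checking that the embedding of the $\mathbb{F}_2^{n/2}$ coset state into the $\mathbb{F}_2^n$ coset state truly reproduces the Hybrid~7 challenger's joint distribution over $(\ket{A_{s,s'}},t,t')$ and over the bipartite split handed to $\As_1,\As_2$, including the marginal that $t$ is uniform in $B+s$ and $t'$ is uniform in $C^{\perp}+s'$. This verification is the same one performed in the direct product proof and raises no new conceptual difficulty, so once Conjecture~\ref{conj:strong_monogamy_it} is in hand the theorem follows from combining the negligible hybrid gaps with the sub-exponential information-theoretic bound.
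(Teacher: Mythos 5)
Your proposal is correct and follows essentially the same route as the paper, which states that the proof is ``almost identical to that of Theorem~\ref{thm: monogamy comp}'' (whose full hybrid argument and change-of-basis reduction appear in \Cref{sec: appendix monogamy comp}); the only change needed is that the terminal information-theoretic step invokes Conjecture~\ref{conj:strong_monogamy_it} in place of Theorem~\ref{thm: monogamy info}, exactly as you describe.
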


We can further show a `sub-exponential strong monogamy property' if we additionally assume sub-exponentially secure $\iO$ and one-way functions. 
%

\begin{comment}
\begin{theorem}
\label{conj:subexp_strong_monogamy}
Assuming Conjecture \ref{conj:strong_monogamy_it} holds, and assuming the existence of sub-exponentially secure post-quantum $\iO$ and one-way functions, then
for any quantum sub-exponential time adversary $(\Advz, \adv_1, \adv_2)$, 
$$\Pr[\mathsf{CompStrongMonogamy}((\Advz, \adv_1, \adv_2), n) = 1] = \negl(n)\,.$$
\end{theorem}
The proof is almost identical to that of \Cref{thm: monogamy comp}, therefore we omit the proof here and refer to the proof of %
\Cref{thm: monogamy comp}. 
\end{comment}

\begin{theorem}
\label{conj:subexp_strong_monogamy}
Assuming Conjecture \ref{conj:strong_monogamy_it} holds, and assuming the existence of sub-exponentially secure post-quantum $\iO$ and one-way functions, then
for any QPT adversary $(\Advz, \adv_1, \adv_2)$, 
$$\Pr[\mathsf{CompStrongMonogamy}((\Advz, \adv_1, \adv_2), n) = 1] \leq 1/\subexp(n)\,.$$
\end{theorem}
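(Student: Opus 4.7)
The plan is to follow the same hybrid structure used in the proof of \Cref{thm: monogamy comp} (and equivalently in the proof of \Cref{conj:strong_monogamy}), but to track sub-exponential rather than negligible losses at every step, and to reduce in the final hybrid to the conjectured information-theoretic strong monogamy bound (\Cref{conj:strong_monogamy_it}) rather than to the information-theoretic direct product hardness. Concretely, I would start from $\mathsf{CompStrongMonogamy}$ as Hybrid 0, and repeat exactly the sequence of hybrids already used in the proof of \Cref{thm: direct product comp}: first replace $\iO(A+s)$ by $\iO(\shO_A(\cdot - s))$ (equivalent programs), then by $\iO(\shO_B(\cdot - s))$ for a uniformly random superspace $B \supseteq A$ of dimension $3n/4$, then shift the argument by a uniform $w_B \in B$ to obtain $\iO(\shO_B(\cdot - t))$ with $t = s + w_B$; the same sequence is then applied on the Hadamard side to reach $\iO(\shO_{C^{\perp}}(\cdot - t'))$ where $C^{\perp} \supseteq A^{\perp}$ has dimension $3n/4$ and $t' = s' + w_{C^{\perp}}$.

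The terminal hybrid, analogous to Hyb 7 in the direct product proof, simply hands $\Advz$ the state $\ket{A_{s,s'}}$ together with $B, C, t, t'$ in the clear (enough to reconstruct the previous hybrid's obfuscations), and then proceeds with the strong monogamy game. Applying the same random change-of-basis reduction as in the proof of \Cref{thm: direct product comp} (normalizing $B, C$ to the canonical form with trailing zeros, then absorbing the extra free/zero coordinates into auxiliary vectors and applying the permutation-of-coordinates unitary to the coset state) converts any adversary in this terminal hybrid into an adversary for the information-theoretic strong monogamy game on $\mathbb{F}_2^{n/2}$ with the same success probability, up to constants. By \Cref{conj:strong_monogamy_it}, this terminal probability is at most $1/\subexp(n/2) = 1/\subexp'(n)$ for some sub-exponential $\subexp'$.

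The quantitative accounting is the only new ingredient: since we assume sub-exponentially secure $\iO$ and sub-exponentially secure one-way functions, Zhandry's construction yields a sub-exponentially secure subspace-hiding obfuscator $\shO$ (the reduction is loss-free aside from a polynomial blowup in circuit size). Consequently each of the constantly many hybrid transitions changes the winning probability of any QPT adversary by at most $1/\subexp_i(n)$ for some sub-exponential function $\subexp_i$; taking $\subexp(n) := \min_i \subexp_i(n)/c$ for the constant number $c$ of hybrid transitions (plus the terminal bound) gives a single sub-exponential function that upper-bounds the total advantage, yielding the claimed $1/\subexp(n)$ bound.

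The main obstacle, and the only place the proof requires genuine care beyond copy-pasting the negligible-loss argument, is verifying that the $\shO$ security reduction and the $\iO$ functional-equivalence reductions all preserve the sub-exponential quantitative bound when the underlying $\iO$ and OWF are only sub-exponentially (not exponentially) secure; this amounts to checking that the simulator and distinguisher in each hybrid step run in polynomial time and that the hybrid reductions introduce only polynomial overhead, so the $1/\subexp$ advantage of the underlying primitive translates directly into a $1/\subexp$ hybrid gap. Everything else is structurally identical to the proof of \Cref{thm: monogamy comp}, with \Cref{conj:strong_monogamy_it} substituted in the final step.
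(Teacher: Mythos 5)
Your proposal matches the paper's intended argument: the paper explicitly omits this proof, stating that it is almost identical to that of \Cref{thm: monogamy comp}, which is the same Zhandry-style $\shO$-hybrid chain you describe, ending in a reduction to the terminal information-theoretic game (here \Cref{conj:strong_monogamy_it} instead of \Cref{thm: monogamy info}). Your quantitative accounting — that the constantly many hybrid transitions each lose $1/\subexp$ under sub-exponentially secure $\iO$ and OWFs (and hence sub-exponentially secure $\shO$), combined with the sub-exponential terminal bound — is exactly the ``almost identical'' modification the paper has in mind.
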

The proof is almost identical to that of \Cref{thm: monogamy comp}, therefore we omit the proof here and refer to the proof of %
\Cref{thm: monogamy comp}.

In the rest of the work, whenever we mention `strong monogamy property' or `strong monogamy-of-entanglement property', we refer to the computational monogamy property in \Cref{conj:strong_monogamy} above. Whenever  we mention `sub-exponentially strong monogamy property' or `sub-exponentially  strong monogamy-of-entanglement property', we refer to the computational monogamy property in \Cref{conj:subexp_strong_monogamy}.
\section{Tokenized Signature Scheme from iO}
\label{sec: signature tokens}

\revise{In this section, we present a construction for tokenized signatures with unforgeability security based on the computational \compuncertain  (\Cref{thm: direct product comp}). We improved upon the scheme in  \cite{ben2016quantum} by removing the need of (highly structured) oracles or post-quantum VBB obfuscation. 
} 

\subsection{Definitions}
\begin{definition}[Tokenized signature scheme] 
\label{def: ts}A tokenized signature (TS) scheme consists of a tuple of QPT algorithms $(\kgen, \tokengen, \sign, \verify)$ with the following properties:
\begin{itemize}
    \item $\kgen(1^\param) \to (\sk, \pk)$: Takes as input $1^{\lambda}$, where $\lambda$ is a security parameter, and outputs a secret key, public (verification) key pair $(\sk,\pk)$.
    
    \item $\tokengen(\sk) \to \qtoken$: Takes as input a secret key $\sk$ and outputs a signing token $\qtoken$. 
    
    \item $\sign(m, \qtoken) \to (m, \sig)/\bot$: Takes as input a message $m \in \{0,1\}^*$ and a token $\qtoken$, and outputs either a message, signature pair $(m, \sig)$ or $\bot$.
    
    \item $\verify(\pk, m, \sig) \to 0/1$: Takes as input an verification key, an alleged message, signature pair $(m,\sig)$, and outputs $0$ (``reject'') or $1$ (``accept'').
    
    \item $\revoke(\pk, \qtoken) \to 0/1$:  Takes in public key $\pk$ and a claimed token $\qtoken$, and outputs $0$ (``reject'') or $1$ (``accept'').
\end{itemize}
\end{definition}

These algorithms satisfy the following. First is correctness. There exists a negligible function $\negl(\cdot)$, for any $\lambda \in \mathbb{N}$, $m \in \{0,1\}^*$,
\begin{align*}
    \Pr[\verify(\pk, m, \sig) = 1: 
    & (m,\sig) \leftarrow \sign(m, \qtoken), \qtoken \leftarrow \tokengen(\sk), \\ 
    &(\sk, \pk) \leftarrow \kgen(1^{\lambda})  ] 
    \geq 1 - \negl(\lambda) \,. \nonumber \label{eq: ts scheme}
\end{align*}

\begin{definition}[Length restricted TS scheme]
A TS scheme is $r$-restricted if it holds only for $m \in \{0,1\}^r$. We refer to a scheme that is $1$-restricted as a one-bit TS scheme.
\end{definition}

Notation-wise, we introduce an additional algorithm $\verify_\ell$. The latter takes as input a public key $\pk$ and $\ell$ pairs $(m_\ell, \sig_\ell), \ldots, (m_\ell, \sig_\ell)$. It checks that $m_i \neq m_j$ for all $i\neq j$, and $\verify (m_i, \sig_i) = 1$ for all $i \in [\ell]$; it outputs $1$ if and only if they all hold. %

Next we define unforgeability. 

\begin{definition}[$1$-Unforgeability] A TS scheme is $1$-unforgeable if for every QPT adversary $\adv$, there exists a negligible function $\negl(\cdot)$, for every $\lambda$:
\begin{align*}
    \Pr\left[\begin{array}{cc} (m_0, \sig_0, m_1, \sig_1) \gets \As(\pk, \ket {\sf tk}) \\  \verify_2(\pk, m_0,\sig_0 ,m_1, \sig_1) = 1 \\ \end{array}: \begin{array}{cc} (\sk, \pk) \leftarrow \kgen(1^{\lambda})\\ \ket{\mathsf{tk}} \leftarrow \tokengen(\sk) \end{array} \right] \leq \negl(\lambda) \,.
\end{align*}
\end{definition}

\begin{definition}[Unforgeability] 
\label{def: ts unforgeability}A TS scheme is unforgeable if for every QPT adversary $\adv$, there exists a negligible function $\negl(\cdot)$, for every $\lambda$, $l = \poly(\lambda)$:
\begin{align*}
    \Pr\left[ \begin{array}{cc} \{m_i, \sig_i\}_{i \in [l+1]} \gets \As(\pk, \{\ket {{\sf tk}_i}\}_{i \in [l]})  \\ \verify_{l+1}(\pk, \{m_i, \sig_i\}_{i \in [l + 1]}) = 1 \end{array}: \begin{array}{cc}
   (\sk, \pk) \leftarrow \kgen(1^{\lambda}) \\
    \ket {{\sf tk}_1} \gets \tokengen(\sk) \\
    \vdots \\
    \ket {{\sf tk}_l} \gets \tokengen(\sk) 
    \end{array} \right] \leq \negl(\lambda) \,.
\end{align*}
\end{definition}

Finally we have revocability. 

\begin{definition}[Revocability]
A revocable tokenized signature scheme satisfies:
\begin{itemize}
    \item Correctness:
    
       $ \Pr\left[\revoke(\pk, \qtoken) = 1 \middle| (\pk,\sk) \gets \kgen(1^\lambda), \qtoken \gets \tokengen(\sk) \right] = 1$.
       
      \item Revocability: 
       For every $\ell \leq \poly(\lambda), t \leq \ell$, and every QPT $\cA$ with $\ell$ signing tokens $\ket{\tk_1} \otimes \cdots \otimes \ket{\tk_\ell}$ and $\pk$, which has generated $t$ signatures $(m_1,\sig_1), \cdots,$ $(m_t, \sig_t)$ and a state $\sigma$:
    \begin{align*}
    \Pr\left[ \verify_t(\pk, (m_1,\sig_1), \cdots, (m_t, \sig_t)) = 1  \wedge \revoke_{\ell-t+1}(\sigma) = 1 \right] \leq \negl(\lambda)
     \end{align*}
    Here $\revoke_{\ell-t+1}$ means applying $\revoke$ on all $\ell-t+1$ registers of $\sigma$, and outputs $1$ if they all output $1$.
\end{itemize}
\end{definition}

\revise{
The revocability property follows straightforwardly from unforgeability \cite{ben2016quantum}. Thus to show a construction is secure, we only need to focus on proving unforgeability. The following theorem says $1$-unforgeability is sufficient to achieve a full blown TS scheme. 
}

\begin{theorem}[\cite{ben2016quantum}]
\label{thm: onebit onetime to full}
A one-bit  $1$-unforgeable TS scheme implies a (full blown) TS scheme, assuming the existence of a quantum-secure digital signature scheme.
\end{theorem}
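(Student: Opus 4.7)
The plan is to construct the full blown scheme from the one-bit primitive by combining it with a quantum-secure classical digital signature scheme $\Sigma$. Let $r$ be the target message length. $\kgen(1^\lambda)$ runs $\Sigma$'s key generator to obtain $(\msk,\mpk)$ and sets $(\sk,\pk) := (\msk,\mpk)$. $\tokengen(\msk)$ samples $r$ fresh one-bit TS key pairs $(\sk_i,\pk_i)$ together with corresponding quantum tokens $\ket{\tk_i}$ for $i \in [r]$, and then uses $\msk$ to produce a classical certificate $\tau$ on $\pk_1 \| \cdots \| \pk_r$, outputting the bundle $(\ket{\tk_1},\ldots,\ket{\tk_r},\pk_1,\ldots,\pk_r,\tau)$ as the full quantum signing token. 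To sign $m = m_1\cdots m_r$, one invokes the one-bit signing algorithm on $\ket{\tk_i}$ with bit $m_i$ to obtain a one-bit signature $\sigma_i$, and outputs $(\sigma_1,\ldots,\sigma_r,\pk_1,\ldots,\pk_r,\tau)$. $\verify$ first checks that $\tau$ is a valid $\Sigma$-signature of $\pk_1\|\cdots\|\pk_r$ under $\mpk$, and then checks each $\sigma_i$ against $m_i$ using the one-bit verifier with key $\pk_i$; correctness is immediate from the correctness of both building blocks.

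For unforgeability, the plan is a pigeonhole argument. Suppose a QPT adversary given $\ell$ issued tokens outputs $\ell+1$ valid pairs with pairwise distinct messages; every output signature exposes a tuple $(\pk_1,\ldots,\pk_r)$ together with a valid $\Sigma$-signature $\tau$ of that tuple under $\mpk$. Either (a) some output tuple never appeared among the tuples of the $\ell$ issued tokens, in which case $(\pk_1\|\cdots\|\pk_r,\tau)$ is a classical forgery against $\Sigma$, contradicting its quantum-secure unforgeability; or (b) all $\ell+1$ output tuples appear among the tuples used in the $\ell$ issued tokens, in which case pigeonhole forces two forged messages $m \neq m'$ to reuse the same tuple coming from the same issued token. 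Since $m \neq m'$, they differ at some index $i$, and the adversary has therefore produced two valid one-bit signatures under the same $\pk_i$ on the distinct bits $m_i$ and $m'_i$, contradicting the $1$-unforgeability of the one-bit scheme.

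The main subtlety I expect is in setting up the reduction for case (b): the one-bit challenger provides only a single token $\ket{\tk^\star}$ with public key $\pk^\star$, so the reduction must guess the colliding pair $(j,i) \in [\ell] \times [r]$ in advance, incurring only a $1/(r\ell) = 1/\poly(\lambda)$ loss, embed $(\pk^\star,\ket{\tk^\star})$ at position $i$ inside the $j$-th issued token, generate every other one-bit key pair and token honestly, and produce all $\ell$ certificates itself using $\msk$. When the guess is correct, the two colliding full-scheme signatures directly yield the pair of one-bit signatures on different bits under $\pk^\star$ needed to break $1$-unforgeability, so the overall advantage in case (b) is polynomially related to the full-scheme advantage. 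Combining cases (a) and (b) bounds the full-scheme forgery probability by a negligible function, establishing unforgeability in the sense of \Cref{def: ts unforgeability}; revocability then follows from unforgeability by the standard argument of \cite{ben2016quantum}.
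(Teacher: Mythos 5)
The paper does not give a proof for this theorem --- it is cited directly from Ben-David and Sattath --- and your construction and reduction (bit-wise repetition of the one-bit primitive to handle $r$-bit messages, a quantum-secure classical signature under $\msk$ binding each fresh bundle of one-bit public keys, and the two-case analysis extracting either a classical EUF-CMA forgery or, via pigeonhole and a guessed position $(j,i)$, a one-bit $1$-unforgeability break) reproduce the standard argument from that reference correctly. Two minor points worth making explicit: the derived scheme must also specify the natural $\revoke$ algorithm (invoke one-bit $\revoke$ on each $\ket{\tk_i}$ and check $\tau$ against the stored tuple under $\mpk$), and in case (b) you should note that the $\ell$ issued bundles have pairwise distinct public-key tuples except with negligible probability, so the collision that pigeonhole finds really does pin both forgeries to one and the same issued bundle.
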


In the next section, we give our construction of a one-bit  $1$-unforgeable TS scheme from coset states. 

\subsection{Tokenized Signature Construction}
\begin{construction*} 
$\,$

\label{cons: token sig}
\begin{itemize}
    \item $\mathsf{KeyGen}(1^\param)$: Set $n = \poly(\lambda)$. Sample uniformly $A \subseteq \mathbb{F}_2^n$. Sample $s,s' \leftarrow \mathbb{F}_2^n$. Output $\sk = (A, s,s')$ (where by $A$ we mean a description of the subspace $A$) and $\pk = (\iO(A+s), \iO(A^{\perp}+s'))$.
    
    \item $\tokengen(\sk)$: Takes as input $\sk$ of the form $(A, s,s')$. Outputs $\qtoken = \ket{A_{s,s'}}$.
    
    \item $\sign(m, \qtoken)$: Takes as input $m \in \{0,1\}$ and a state $\qtoken$ on $n$ qubits. Compute $H^{\otimes n} \qtoken$ if $m=1$, otherwise do nothing to the quantum state. It then measures in the standard basis. Let $\sig$ be the outcome. Output $(m, \sig)$.
    
    \item $\verify(\pk, (m, \sig))$: Parse $\pk$ as $\pk = (C_0, C_1)$ where $C_0$ and $C_1$ are circuits. Output $C_m(\sig)$.

    \item $\revoke(\pk, \qtoken)$: Parse $\pk$ as $\pk = (C_0, C_1)$. Then:
    \begin{itemize}
        \item Coherently compute $C_0$ on input $\qtoken$, and measure the output of the circuit. If the latter is $1$, uncompute $C_0$, and proceed to the next step. Otherwise halt and output $0$.
        \item Apply $H^{\otimes n}$. Coherently compute $C_1$ and measure the output of the circuit. If the latter is $1$, output $1$.
    \end{itemize}
\end{itemize}
\end{construction*}

\begin{theorem}
\label{thm: sig tokens main}
Assuming {post-quantum} $\iO$ {and one-way function}, the scheme of Construction \ref{cons: token sig} is a one-bit $1$-unforgeable tokenized signature scheme. 
\end{theorem}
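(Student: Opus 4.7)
The plan is to reduce $1$-unforgeability directly to the computational direct product hardness for coset states (Theorem~\ref{thm: direct product comp}), while handling correctness by a routine check on the construction.

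First I would verify correctness. Given $\qtoken = \ket{A_{s,s'}}$, signing $m=0$ measures in the computational basis, returning a uniformly random $v \in A+s$; signing $m=1$ first applies $H^{\otimes n}$, producing $\ket{A^{\perp}_{s',s}}$, and then measures, returning a uniformly random $w \in A^{\perp}+s'$. The verification circuit $C_m = \iO(A+s)$ for $m=0$ (resp.\ $\iO(A^{\perp}+s')$ for $m=1$) accepts by the functionality-preserving property of $\iO$. Revocation correctness follows because the honestly generated coset state is the $+1$ eigenvector of both membership projectors (applied with Hadamard sandwiched appropriately), so measuring always yields outcome $1$.

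For $1$-unforgeability, suppose a QPT adversary $\cA$ receives $\pk = (\iO(A+s), \iO(A^{\perp}+s'))$ and $\qtoken = \ket{A_{s,s'}}$ and outputs with non-negligible probability $\varepsilon$ two pairs $(m_0,\sig_0)$ and $(m_1,\sig_1)$ passing $\verify_2$. By definition of $\verify_2$, $m_0 \neq m_1$, so WLOG $m_0 = 0$ and $m_1 = 1$. Acceptance means $\iO(A+s)(\sig_0) = 1$ and $\iO(A^{\perp}+s')(\sig_1) = 1$, which by functionality of $\iO$ gives $\sig_0 \in A+s$ and $\sig_1 \in A^{\perp}+s'$. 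I would then build a reduction $\cB$ that, on input $\ket{A_{s,s'}}$, $\iO(A+s)$, $\iO(A^{\perp}+s')$, runs $\cA$ and outputs $(v,w) = (\sig_0, \sig_1)$. This reduction wins the direct product game of Theorem~\ref{thm: direct product comp} with probability $\varepsilon$, contradicting that theorem. Hence $\varepsilon$ must be negligible.

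There is no real obstacle to overcome in this proof: the hard technical work is entirely encapsulated in Theorem~\ref{thm: direct product comp}, whose proof establishes the chain of hybrids that replaces the $\iO$'d membership programs by $\shO$'d superspace membership programs, masks the translations with uniform elements of the superspaces, and finally reduces to the information-theoretic direct product hardness (Theorem~\ref{thm: direct product info}) for uniformly random subspaces of smaller ambient dimension. Once that machinery is in place, the present theorem follows by the one-shot reduction sketched above, and combined with Theorem~\ref{thm: onebit onetime to full} this immediately yields a full tokenized signature scheme under the same assumptions.
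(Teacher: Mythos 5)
Your proposal is correct and follows exactly the same route as the paper, which states security ``follows immediately from Theorem~\ref{thm: direct product comp}'': the reduction you describe (feeding the forged $(\sig_0,\sig_1)$ directly as the pair $(v,w)$ in the direct product game) is precisely what is meant. Your correctness check is routine and accurate.
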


\begin{proof}
Security follows immediately from Theorem \ref{thm: direct product comp}.
\end{proof}

\begin{corollary}
Assuming {post-quantum} $\iO$, {one-way function(which implies digital signature)} and a quantum-secure digital signature scheme, there exists a (full blown) tokenized signature scheme.
\end{corollary}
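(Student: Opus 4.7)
The corollary is a direct consequence of the two preceding results, so my plan is essentially to chain them together, filling in the mild bookkeeping that the combination requires. First I would invoke \Cref{thm: sig tokens main} with the assumed post-quantum $\iO$ and post-quantum one-way functions to instantiate a one-bit, $1$-unforgeable tokenized signature scheme $\Pi_0 = (\mathsf{KeyGen}_0, \tokengen_0, \sign_0, \verify_0, \revoke_0)$ as given by the construction just above. This takes care of the quantum, unclonable ``core'' of the token: given a single token $\ket{A_{s,s'}}$, no QPT adversary can produce two message-signature pairs for distinct bits with non-negligible probability. At this point the only remaining task is the purely classical lifting described in \Cref{thm: onebit onetime to full}, which I would apply as a black box using the quantum-secure digital signature scheme $\Sigma$ promised in the hypothesis (note that such a $\Sigma$ already exists given post-quantum OWFs via Lamport/Merkle-type constructions, so the extra assumption is in fact redundant, but we may use it directly as stated).

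The lifting proceeds in two orthogonal steps, which I would either cite verbatim from \cite{ben2016quantum} or reprove briefly. To remove the one-bit restriction I would have $\tokengen$ issue a batch of $\lambda$ independent one-bit tokens from $\Pi_0$, so that to sign an $r$-bit message $m$ one signs each bit $m_i$ using the $i$-th token; unforgeability for arbitrary polynomial $r$ then follows by a standard hybrid argument from $1$-unforgeability of $\Pi_0$. To support polynomially many tokens under a single public key (the ``full blown'' property in \Cref{def: ts unforgeability}), I would use $\Sigma$ as a classical authentication layer: the master key pair of the TS scheme is $(\sk_\Sigma, \vk_\Sigma)$; each call to $\tokengen$ generates a fresh $\Pi_0$ key pair $(\sk_0, \pk_0)$, produces the corresponding coset-state token, and packages it together with a $\Sigma$-signature $\sigma_{\pk_0} \gets \Sigma.\sign(\sk_\Sigma, \pk_0)$; signing carries $\sigma_{\pk_0}$ along with the bit-wise coset signatures, and $\verify$ first checks $\sigma_{\pk_0}$ against $\vk_\Sigma$ before invoking $\verify_0$.

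Unforgeability of the resulting scheme follows by a two-layer reduction: any forger producing $\ell+1$ valid signatures after obtaining $\ell$ tokens must either (i) forge a $\Sigma$-signature on a fresh $\pk_0$ that was never issued, contradicting EUF-CMA security of $\Sigma$, or (ii) produce two distinct signed messages under some previously issued $\pk_0$, which by a pigeonhole/hybrid argument contradicts the $1$-unforgeability of $\Pi_0$ on the token associated with that $\pk_0$. Revocability is inherited automatically, as noted in the paper, from unforgeability. The main (and only) technical subtlety I anticipate is the quantum hybrid bookkeeping across the $\ell$ independent tokens: one has to be careful that reducing to a single $\Pi_0$-instance can be done while simulating all other tokens honestly, which is straightforward because each $\Pi_0$ key can be sampled by the reduction itself without interacting with the $\Pi_0$-challenger. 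Once this is in place the corollary follows, and I would state the result by combining \Cref{thm: sig tokens main} and \Cref{thm: onebit onetime to full} directly.
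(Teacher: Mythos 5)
Your proposal matches the paper's proof exactly: the corollary is obtained by combining \Cref{thm: sig tokens main} (one-bit $1$-unforgeable TS from post-quantum $\iO$ and OWFs) with \Cref{thm: onebit onetime to full} (the one-bit one-time to full-blown lifting via a quantum-secure digital signature scheme). The paper states this in one line as an immediate consequence; your additional unpacking of how the lifting from \cite{ben2016quantum} works is accurate but not strictly needed, since that lifting is cited as a black box.
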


\begin{proof}
This is an immediate consequence of Theorems \ref{thm: onebit onetime to full} and \ref{thm: sig tokens main}.
\end{proof}
\section{Single-Decryptor Encryption}
\label{sec:unclonable_dec}

In this section, we formally introduce unclonable decryption, i.e. single-decryptor encryption \cite{georgiou-zhandry20}. Then we describe two constructions and prove their security.

\revise{Our first construction (\Cref{sec: unclonable dec from stronger monogamy}) relies on the strong monogamy-of-entanglement property (Conjecture \ref{conj:strong_monogamy_it}), the existence of post-quantum one-way function, indistinguishability obfuscation and compute-and-compare obfuscation for (sub-exponentially) unpredictable distributions (whose existence has been discussed in \Cref{sec:cc} and \Cref{sec:CC_quantum_aux}). Our second construction (\Cref{sec: unclonable dec witness enc}) has a similar structure. It does not rely on the strong monogamy-of-entanglement property for coset states, but on the (weaker) \compuncertain{} property (Theorem \ref{thm: direct product comp}). However, the construction additionally relies on a much stronger cryptographic primitive -- post-quantum extractable witness encryption (as well post-quantum one-way functions and indistinguishability obfuscation). 
}

\subsection{Definitions}

\begin{definition}[Single-Decryptor Encryption Scheme]
A single-decryptor encryption scheme consists of the following efficient algorithms:
\begin{itemize}
     \item $\setup(1^\lambda) \to (\sk, \pk): $ a (classical) probabilistic algorithm that takes as input a security parameter $\lambda$ and outputs a classical secret key $\sk$ and public key $\pk$.
     
     \item $\keygen(\sk) \to \qsk: $ a quantum algorithm that takes as input a secret key $\sk$ and outputs a quantum secret key $\qsk$.
     
     \item $\enc(\pk, m) \to \ct:$ a (classical) probabilistic algorithm that takes as input a public key $\pk$, a message $m$ and outputs a classical ciphertext $\ct$.
     
     \item $\dec(\qsk, \ct) \to m/\bot: $ a quantum algorithm that takes as input a quantum secret key $\qsk$ and a ciphertext $\ct$, and outputs a message $m$ or a decryption failure symbol $\bot$.
\end{itemize}
\end{definition}

A secure single-decryptor encryption scheme should satisfy the following:
\begin{description}

\item \textbf{Correctness}:  There exists a negligible function $\negl(\cdot)$, for all $\lambda \in \N$, for all $m \in \cM$, 
     \begin{align*}
         & \Pr\left[ \dec(\qsk, \ct) = m \,\middle| \begin{array}{cc} 
              & (\sk, \pk) \gets \setup(1^\lambda), \qsk \gets \keygen(\sk) \\
              & \ct \gets \enc(\pk, m) \\
         \end{array}\right] \geq 1-\negl(\lambda)  
     \end{align*}

Note that correctness \revise{implies that a honestly generated quantum decryption key} can be used to decrypt correctly polynomially many times, from the gentle measurement lemma~\cite{Aaronson05}.

\item \textbf{CPA Security}: The scheme should satisfy (post-quantum) CPA security, i.e. indistinguishability under chosen-plaintext attacks: for every (stateful) QPT adversary $\cA$, there exists a negligible function $\negl(\cdot)$ such that for all $\lambda \in \N$, the following holds:
\begin{gather*}
    \Pr\left[\cA(\ct) = b :
      \begin{array}{cl}
        (\sk, \pk) \gets \setup( 1^\lambda) \\
        ((m_0,m_1) \in \cM^2) \gets \cA(1^\lambda, \pk) \\
        b \gets \{0,1\}; \ct \gets \enc(\pk, m_b)
      \end{array}
    \right] \leq \dfrac{1}{2} + \negl(\lambda),
  \end{gather*}
\end{description}

\paragraph{Anti-Piracy Security}
Next, we define anti-piracy security via the anti-piracy game below. Recall that, intuitively, anti-piracy security says that it is infeasible for a pirate who receives a quantum secret key to produce two quantum keys, which both allow successful decryption. This can be formalized into ways:
\begin{itemize}
    \item (\emph{CPA-style anti-piracy}) We can ask the pirate to provide a pair of messages $(m_0, m_1)$ along with two quantum secret keys, and we test whether the two keys allow to (simultanoeusly) distinguish encryptions of $m_0$ and $m_1$.
    \item (\emph{random challenge anti-piracy}) We do \emph{not} ask the pirate to provide a pair of plaintext messages, but only a pair of quantum secret keys, and we test whether the two quantum secret keys allow for simultaneous decryption of encryptions of uniformly random messages.
    
\end{itemize}
The reader might expect that, similarly to standard definitions of encryption security, the former implies the latter, i.e. that CPA-security (it is infeasible to distinguish encryptions of chosen plaintexts with better than negligible advantage) implies that it is infeasible to decrypt uniformly random challenges with non-negligible probability. However, for the case of anti-piracy security, this implication does not hold, as we explain in more detail in \Cref{sec:unclonable_dec_unified}. This subtlety essentially arises due to the fact that there are two parties involved, having to simultaneously make the correct guess. Therefore, we will state both definitions here, and we will later argue that our construction satisfies both. 

\revise{In Section \ref{sec: unclonable dec strong ag}, we will introduce an even stronger definition of CPA-style anti-piracy (and a stronger definition for random challenge anti-piracy in \Cref{sec:strong_anti_piracy_random}). We will eventually prove that our constructions satisfy both of the strong definitions. We chose to start our presentation of unclonable decryption with the definitions in this section since they are much more intuitive than the stronger version of Section \ref{sec: unclonable dec strong ag}.} %

In order to describe the security games, it is convenient to first introduce the concept of a \emph{quantum decryptor}. The following definition is implicitly with respect to some single-decryptor encryption scheme $(\sf{Setup, QKeyGen, Enc, Dec} )$.

\begin{definition}[Quantum decryptor]
\label{def: quantum decryptor}
A \emph{quantum decryptor} for ciphertexts of length $n$, is a pair $(\rho, U)$ where $\rho$ is a state, and $U$ is a general quantum circuit acting on $n + m$ qubits, where $m$ is the number of qubits of $\rho$. 

For a ciphertext $c$ of length $n$, we say that we run the quantum decryptor $(\rho, U)$ on ciphertext $c$ to mean that we execute the circuit $U$ on inputs $\ket{c}$ and $\rho$.
\end{definition}

We are now ready to describe the CPA-style anti-piracy game.

\begin{definition}[Anti-Piracy Game, CPA-style]
\label{def: regular antipiracy cpa}
Let $\lambda \in \mathbb{N}^+$.
The CPA-style anti-piracy game is the following game between a challenger and an adversary $\mathcal{A}$.
\begin{enumerate}
    \item \textbf{Setup Phase}: The challenger samples keys $(\sk, \pk) \gets \setup(1^\lambda)$.

    \item \textbf{Quantum Key Generation Phase}:
    The challenger sends $\cA$ the classical public key $\pk$ and one copy of quantum decryption key $\qsk \leftarrow \keygen(\sk)$. %

    \item \textbf{Output Phase}: $\As$ outputs a pair of distinct messages $(m_0, m_1)$. It also outputs a (possibly mixed and entangled) state $\sigma$ over two registers $R_1, R_2$ and two general quantum circuits $U_1$ and $U_2$. We interpret $\As$'s output as two (possibly entangled) quantum decryptors $\D_1 = (\sigma[R_1], U_1)$ and $\D_2 = (\sigma[R_2],U_2)$.
    
    \item \textbf{Challenge Phase:} 
    The challenger samples $b_1, b_2$ and $r_1, r_2$ uniformly at random and generates ciphertexts $c_1 = \enc(\pk, m_{b_1}; r_1)$ and $c_2 = \enc(\pk, m_{b_2}; r_2)$. The challenger runs quantum decryptor $\D_1$ on $c_1$ and $\D_2$ on $c_2$, and checks that $\D_1$ outputs $m_{b_1}$ and $\D_2$ outputs $m_{b_2}$. If so, the challenger outputs $1$ (the game is won by the adversary), otherwise outputs $0$. %
\end{enumerate}
We denote by $\sf AntiPiracyCPA(1^{\lambda}, \mathcal{A})$ a random variable for the output of the game.

\end{definition}

Note that an adversary can succeed in this game with probability at least $1/2$. It simply gives $\rho_{\sk}$ to the first quantum decryptor and the second decryptor randomly guesses the plaintext.

We remark that one could have equivalently formulated this definition by having the pirate send registers $R_1$ and $R_2$ to two separated parties Bob and Charlie, who then receive ciphertexts from the challenger sampled as in the Challenge Phase above. The two formulations are equivalent upon identifying the quantum circuits $U_1$ and $U_2$. 

\begin{definition}[Anti-Piracy Security, CPA-style]  \label{def:weak_ag}
Let $\gamma: \mathbb{N}^+ \rightarrow [0,1]$. A single-decryptor encryption scheme satisfies $\gamma$-anti-piracy security, if for any QPT adversary $\cA$,  there exists a negligible function $\negl(\cdot)$ such that the following holds for all $\lambda \in \N$: 
  \begin{align}
    \Pr\left[b = 1, b \gets \sf{AntiPiracyCPA}(1^{\lambda}, \mathcal{A}) \right]\leq \frac{1}{2} + \gamma(\lambda) + \negl(\lambda)
    \end{align}
\end{definition}

Unless specified otherwise, when discussing anti-piracy security of an unclonable encryption scheme in this work, we refer to CPA-style anti-piracy security. %

It is not difficult to show that if $\gamma$-anti-piracy security holds for all inverse poly $\gamma$, then this directly implies CPA security (we refer the reader to the appendix (\Cref{sec:unclonable_dec_cpa_ag_implies_cpa}) for the proof of this implication). %

\vspace{1em}

Next, we define an anti-piracy game with \emph{random challenge plaintexts}. This quantifies how well an efficient adversary can produce two ``quantum decryptors'' both of which enable successful decryption of encryptions of uniformly random plaintexts. This security notion will be directly useful in the security proof for copy-protection of PRFs in \Cref{sec:cp_wPRF}.

\begin{definition}[Anti-Piracy Game, with random challenge plaintexts] 
\label{def: regular antipiracy random challenges}Let $\lambda \in \mathbb{N}^+$.
The anti-piracy game with random challenge plaintexts is the following game between a challenger and an adversary $\mathcal{A}$.
\begin{enumerate}
    \item \textbf{Setup Phase}: The challenger samples keys $(\sk, \pk) \gets \setup(1^\lambda)$.

    \item \textbf{Quantum Key Generation Phase}:
    The challenger sends $\cA$ the classical public key $\pk$ and one copy of quantum decryption key $\qsk \leftarrow \keygen(\sk)$. %
    
    \item \textbf{Output Phase}: $\As$ outputs a (possibly mixed and entangled) state $\sigma$ over two registers $R_1, R_2$ and two general quantum circuits $U_1$ and $U_2$. We interpret $\As$'s output as two (possibly entangled) quantum decryptors $\D_1 = (\sigma[R_1], U_1)$ and $\D_2 = (\sigma[R_2],U_2)$.
    \item \textbf{Challenge Phase:} 
    The challenger samples $m_1, m_2 \gets \cM$  and $r_1, r_2$ uniformly at random, and generates ciphertexts $c_1 = \enc(\pk, m_1; r_1)$ and $c_2 = \enc(\pk, m_2; r_2)$. The challenger runs quantum decryptor $\D_1$ on $c_1$ and $\D_2$ on $c_2$, and checks that $\D_1$ outputs $m_1$ and $\D_2$ outputs $m_2$. If so, the challenger outputs $1$ (the game is won by the adversary), otherwise outputs $0$.  
\end{enumerate}
We denote by $\sf{AntiPiracyGuess}(1^{\lambda}, \mathcal{A})$ a random variable for the output of the game. 
\end{definition}

Note that an adversary can succeed in this game with probability at least $1/|\cM|$. The adversary simply gives $\rho_{\sk}$ to the first quantum decryptor and the second decryptor randomly guesses the plaintext.

\begin{definition}[Anti-Piracy Security, with random challenge plaintexts] 
\label{def:weak_ag_random}
Let $\gamma: \mathbb{N}^+ \rightarrow [0,1]$. 
 A single-decryptor encryption scheme satisfies $\gamma$-anti-piracy security with random challenge plaintexts, if for any QPT adversary $\cA$,  there exists a negligible function $\negl(\cdot)$ such that the following holds for all $\lambda \in \N$: 
  \begin{align}
    \Pr\left[b = 1, b \gets \sf{AntiPiracyGuess}(1^{\lambda}, \mathcal{A}) \right]\leq \frac{1}{|\cM|} + \gamma(\lambda) + \negl(\lambda)
    \end{align}
where $\cM$ is the message space. 
\end{definition}

\begin{remark}
    In the rest of the section, we will mainly focus on \Cref{def:weak_ag} and the stronger version of it from the next section. We will appeal to \Cref{def:weak_ag_random} when we prove security of our copy-protection scheme for PRFs. %
\end{remark}

\subsection{Strong Anti-Piracy Security}
\label{sec: unclonable dec strong ag}

The stronger definition of anti-piracy security that we introduce in this section is more technically involved, and less intuitive, than the definitions in the previous section, but is easier to work with when proving security of our constructions. This section relies on preliminary concepts introduced in Section \ref{sec:unclonable dec ati}. We will refer to the anti-piracy security notions defined in the previous section as \emph{regular} anti-piracy'' to distinguish them from \emph{strong} anti-piracy defined in this section.

In order to describe the anti-piracy game in this section, we first need to formalize a procedure to test good quantum decryptors and the notion of a \emph{good quantum decryptor}. Again, the following definitions are implicitly with respect to some single-decryptor encryption scheme $(\sf{Setup, QKeyGen, Enc, Dec} )$.

\iffalse
\begin{definition}[Good quantum decryptor]
\label{def: good quantum decryptor}
Let $\gamma \in [0,1]$. Let $\pk$ be a public key, and $(m_0, m_1)$ a pair of messages. Let $(\rho, U)$ be a quantum decryptor (as in Definition \ref{def: quantum decryptor}). We say that $(\rho, U)$ is a $\gamma$-good quantum decryptor with respect to $\pk$ and $(m_0, m_1)$ if the following procedure outputs $1$ with probability at least $1/2+\gamma$:
\begin{itemize}
    \item Sample a uniform $b \leftarrow \{0,1\}$. Compute $c \leftarrow \enc(\pk, m_b)$.
       \item Run the quantum decryptor $(\rho',U')$ on input $c$. Check whether the outcome is $m_b$. If so, output $1$, otherwise output $0$.
\end{itemize}
\end{definition}
\fi

%

We first describe a procedure to test good quantum decryptors. The procedure is parametrized by a threshold value $\gamma$. We are guaranteed that, if the procedure passes, then the post-measurement state is a $\gamma$-good decryptor.

\begin{definition}[Testing a quantum decryptor] 
\label{def:gamma_good_decryptor}
   Let $\gamma \in [0,1]$. Let $\pk$ be a public key, and $(m_0, m_1)$ a pair of messages. We refer to the following procedure as a {test for a $\gamma$-good quantum decryptor} with respect to $\pk$ and $(m_0, m_1)$:
   \begin{itemize}
       \item The procedure takes as input a quantum decryptor $(\rho, U)$.
       \item Let $\mathcal{P} = (P, I - P)$ be the following mixture of projective measurements (in the sense of Definition \ref{def:mixture_of_projective}) acting on some quantum state $\rho'$:
       \begin{itemize}
       \item Sample a uniform $b \leftarrow \{0,1\}$. Compute $c \leftarrow \enc(\pk, m_b)$.
       \item Run the quantum decryptor $(\rho', U)$ on input $c$. Check whether the outcome is $m_b$. If so, output $1$, otherwise output $0$.
      \end{itemize}
       \item Let $\ti_{1/2 + \gamma}(\cP)$ be the threshold implementation of $\cP$ with threshold value $\frac{1}{2} + \gamma$, as defined in \Cref{def:thres_implement}. Run $\ti_{1/2 + \gamma}(\cP)$ on $\rho$, and output the outcome. If the output is $1$, we say that the test passed, otherwise the test failed.
   \end{itemize}
\end{definition}

By Lemma \ref{lem:threshold_implementation}, we have the following corollary. 

\begin{corollary}[$\gamma$-good Decryptor]
\label{cor: gamma good dec}
    Let $\gamma \in [0,1]$. Let $(\rho, U)$ be a quantum decryptor. 
    Let $\ti_{1/2 + \gamma}(\cP)$ be the test for a $\gamma$-good decryptor defined above. Then, the post-measurement state conditioned on output $1$ is a mixture of states which are in the span of all eigenvectors of $P$ with eigenvalues at least $1/2+\gamma$. We refer to 
    the latter state as a $\gamma$-good decryptor with respect to $(m_0, m_1)$. %
\end{corollary}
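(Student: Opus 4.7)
The plan is to derive the corollary almost immediately from the structural properties of projective implementations established in Lemma~\ref{lem:proj_implement} and the definition of the threshold implementation (Definition~\ref{def:thres_implement}). First I would observe that the binary POVM $\cP = (P, I-P)$ defined in Definition~\ref{def:gamma_good_decryptor} is a mixture of projective measurements in the sense of Definition~\ref{def:mixture_of_projective}, with the randomness $(b,r)$ indexing the projective measurement that decrypts $c = \enc(\pk, m_b; r)$ with $(\rho', U)$ and compares to $m_b$. In particular, $P$ and $I-P$ commute, so they admit a common eigenbasis.

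Next I would invoke Lemma~\ref{lem:proj_implement} to obtain a projective implementation $\projimp(\cP)$ whose first step is the projective measurement $\cE$ in this common eigenbasis. The lemma guarantees that upon obtaining outcome $(p, 1-p)$, the post-measurement state is a mixture of eigenvectors of $P$ with eigenvalue $p$ (equivalently, eigenvectors of $I-P$ with eigenvalue $1-p$). Crucially, this reduces the problem of analyzing the post-measurement state of any binary-outcome implementation of $\cP$ to analyzing which eigenvalues $p$ can survive the second step.

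Finally, by Definition~\ref{def:thres_implement}, $\ti_{1/2+\gamma}(\cP)$ runs exactly the projective measurement $\cE$ and then outputs $1$ precisely when the observed eigenvalue $p$ satisfies $p \geq 1/2+\gamma$. Therefore, conditioning on the outcome $1$ projects the post-$\cE$ state onto the span of eigenvectors of $P$ whose eigenvalues lie in $[1/2+\gamma, 1]$, yielding the stated characterization as a mixture of such eigenvectors.

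I do not anticipate any real obstacle: the whole statement is essentially repackaging Lemma~\ref{lem:proj_implement} plus the definition of $\ti_{\gamma}$, so the proof should consist of a single short paragraph verifying that $\cP$ is indeed a binary POVM (so that the lemma applies) and then restating what the threshold implementation does on the eigenbasis decomposition.
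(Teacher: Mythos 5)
Your proposal is correct and follows the same route as the paper, which justifies the corollary with a single appeal to Lemma~\ref{lem:threshold_implementation}; you have simply unpacked that one-liner into its constituent ingredients, Lemma~\ref{lem:proj_implement} together with Definition~\ref{def:thres_implement}. One minor remark: the observation that $\cP$ is a mixture of projective measurements is superfluous for this particular corollary, since Lemma~\ref{lem:proj_implement} applies to any binary POVM and $P$, $I-P$ always commute; that extra structure only becomes relevant later, when one needs the efficient approximation $\ati$.
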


Now we are ready to define the strong $\gamma$-anti-piracy game. 

\begin{definition}[Strong Anti-Piracy Game]
\label{def:gamma_anti_piracy_game}
 Let $\lambda \in \mathbb{N}^+$, and $\gamma \in [0,1]$.
The strong $\gamma$-anti-piracy game is the following game between a challenger and an adversary $\mathcal{A}$.
\begin{enumerate}
   \item \textbf{Setup Phase}: The challenger samples keys $(\sk, \pk) \gets \setup(1^\lambda)$.

    \item \textbf{Quantum Key Generation Phase}:
    The challenger sends $\cA$ the classical public key $\pk$ and one copy of quantum decryption key $\qsk \leftarrow \keygen(\sk)$. %

    \item \textbf{Output Phase}: $\As$ outputs a pair of distinct messages $(m_0, m_1)$. It also outputs a (possibly mixed and entangled) state $\sigma$ over two registers $R_1, R_2$ and two general quantum circuits $U_1$ and $U_2$. We interpret $\As$'s output as two (possibly entangled) quantum decryptors $\D_1 = (\sigma[R_1], U_1)$ and $\D_2 = (\sigma[R_2],U_2)$.
    
    \item \textbf{Challenge Phase}: The challenger runs the test for a $\gamma$-good decryptor
    with respect to $\pk$ and $(m_0, m_1)$ on $\D_1$ and $\D_2$. The challenger outputs $1$ if both tests pass, otherwise outputs $0$.
\end{enumerate}
We denote by $\sf{StrongAntiPiracy}(1^{\lambda}, \gamma, \mathcal{A})$ a random variable for the output of the game. 
\end{definition}

\begin{definition}[Strong Anti-Piracy-Security] \label{def:strong_ag}
 Let $\gamma: \mathbb{N}^+ \rightarrow [0,1]$. A single-decryptor encryption scheme satisfies strong $\gamma$-anti-piracy security, if for any QPT adversary $\cA$,  there exists a negligible function $\negl(\cdot)$ such that the following holds for all $\lambda \in \N$: 
  \begin{align}
    \Pr\left[b = 1, b \gets \sf{StrongAntiPiracy}(1^{\lambda}, \gamma(\lambda), \mathcal{A}) \right]\leq \negl(\lambda)
    \end{align}
\end{definition}

\Cref{def:strong_ag} implies \Cref{def:weak_ag}. %
\begin{theorem}
\label{thm: strong ap implies regular}
Let $\gamma: \mathbb{N}^+ \rightarrow [0,1]$. Suppose a single-decryptor encryption scheme satisfies strong $\gamma$-anti-piracy security (Definition \ref{def:strong_ag}). Then, it also satisfies $\gamma$-anti-piracy security (Definition \ref{def:weak_ag}).
\end{theorem}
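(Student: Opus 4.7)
The plan is to prove the contrapositive: if some QPT adversary $\cA$ wins the regular anti-piracy game (\Cref{def: regular antipiracy cpa}) with probability at least $1/2 + \gamma(\lambda) + \delta(\lambda)$ for a non-negligible $\delta$, then the same adversary wins the strong anti-piracy game (\Cref{def:gamma_anti_piracy_game}) with non-negligible probability, contradicting strong $\gamma$-anti-piracy. First I would fix notation: let the bipartite state $\sigma$ on registers $R_1, R_2$, the message pair $(m_0, m_1)$, and the circuits $U_1, U_2$ denote $\cA$'s output, and let $\cP = (P, I-P)$ be the POVM from \Cref{def:gamma_good_decryptor}, so that applying $\cP$ to register $R_i$ (together with $U_i$) exactly implements the $i$-th decryptor's test in the regular game. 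Consequently the regular-game winning probability equals $\Tr[(P \otimes P)\sigma]$, while the strong-game winning probability equals $\Tr[(\Pi \otimes \Pi)\sigma]$, where $\Pi$ is the spectral projector of $P$ onto eigenvalues at least $1/2 + \gamma$ (equivalently, $\ti_{1/2+\gamma}(\cP)$).

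The main tool is the common-eigenbasis projective implementation $\cE$ of $\cP$ given by \Cref{lem:proj_implement}. Applying $\cE \otimes \cE$ to $\sigma$ produces a joint distribution over pairs $(p_1, p_2) \in [0,1]^2$, and expanding $P = \sum_p p \cdot E_p$ and $\Pi = \sum_{p \geq 1/2 + \gamma} E_p$ in this basis gives $\Tr[(P \otimes P)\sigma] = \E[p_1 p_2]$ and $\Tr[(\Pi \otimes \Pi)\sigma] = \Pr[p_1 \geq 1/2+\gamma \text{ and } p_2 \geq 1/2+\gamma]$. Since each $p_i \in [0,1]$, one has the pointwise bound $p_1 p_2 \leq \min(p_1, p_2)$, and hence $\E[p_1 p_2] \leq \E[\min(p_1, p_2)]$.

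Combining these observations with a one-line Markov-style computation, $\E[\min(p_1, p_2)] \leq (1/2+\gamma) + (1/2-\gamma)\cdot q$, where $q$ is the strong-game winning probability. Chaining the inequalities yields $1/2 + \gamma + \delta \leq \E[p_1 p_2] \leq \E[\min(p_1, p_2)] \leq 1/2 + \gamma + (1/2 - \gamma)\, q$, so $q \geq \delta/(1/2 - \gamma) \geq 2\delta$, which is non-negligible, contradicting strong $\gamma$-anti-piracy. The argument is essentially routine and I do not foresee any real obstacle; the one subtlety to watch is that both winning probabilities must be computed against the \emph{same} joint projective implementation of $\cP \otimes \cP$, and that the strong-game challenger is permitted to be inefficient, so using the (in general inefficient) common-eigenbasis implementation $\cE$ is legitimate even though $\cA$ itself must remain QPT.
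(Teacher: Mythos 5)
Your proof is correct and uses the same key idea as the paper's: rewrite both winning probabilities as statistics of the joint distribution $(p_1, p_2)$ induced by the common-eigenbasis projective implementation on the two registers, then bound $\E[p_1 p_2]$ by casing on whether both eigenvalues exceed $1/2+\gamma$. The paper proves the implication directly rather than by contrapositive, and your notation slightly conflates the two POVMs (they differ on $R_1$ and $R_2$ because $U_1 \neq U_2$, so strictly it should be $P_1 \otimes P_2$), but these are cosmetic differences and the substance matches.
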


\begin{proof}
We refer the reader to appendix (\Cref{sec:unclonable dec strong_implies_weak}) for the proof.
\end{proof} 

In a similar way, one can define a stronger version of random challenge anti-piracy security (\Cref{def:weak_ag_random}). We leave the details to (\Cref{sec:strong_anti_piracy_random}).

\subsection{Construction from Strong  Monogamy Property}
\label{sec: unclonable dec from stronger monogamy}

In this section, we give our first construction of a single-decryptor encryption scheme, whose security relies on the strong monogamy-of-entanglement property from \Cref{sec: monogamy conjectured}.

In the rest of the paper, to simplify notation, whenever it is clear from the context, we will denote a program that checks membership in a set $S$ simply by $S$.
\begin{construction} 
$\,$
\label{cons: unclonable dec}
\begin{itemize}
    \item $\setup(1^\lambda) \to (\sk, \pk):$
    \begin{itemize}
        \item Sample $\kappa$  random $(n/2)$-dimensional subspaces $A_i \subseteq \F^n_2$ for $i = 1, 2, \cdots, \kappa$, where $n = \lambda$ and $\kappa =  \kappa(\lambda)$ is a polynomial in $\lambda$. 
        
        \item For each $i \in [\kappa]$, choose two uniformly random vectors $s_i, s_i' \in \F^n_2$. %
        
        \item Prepare the programs $\iO(A_i + s_i)$ and $\iO(A^\perp_i + s'_i)$ (where we assume that the programs $A_i + s_i$ and $A^\perp_i + s'_i$ are padded to some appropriate length).

        \item Output $\sk = \{A_i, s_i, s_i'\}_{i \in [\kappa]}, \pk = \{\iO({A_i+s_i}), \iO({A_i^\perp+s_i'})\}_{i \in [\kappa]}$.
    \end{itemize}
    
    \item $\keygen(\sk) \to \qsk: $ %
    on input $\sk =  \{A_i, s_i, s_i'\}_{i \in [\kappa]}$, output the ``quantum secret key'' $\qsk = \{ \ket{A_{i, s_i,s_i'}} \}_{i \in [\kappa]}$. \revise{ Recall that each $\ket {A_{i, s_i, s'_i}}$ is 
    \begin{align*}
        \ket {A_{i, s_i, s'_i}} = \frac{1}{\sqrt{|A_i|}} \sum_{a \in A_i} (-1)^{\langle a, s'_i\rangle} \ket {a + s_i}.
    \end{align*}
    }
    
    %
    %
    %
\iffalse
    \begin{itemize}
        \item prepare state $\ket{A_{i, s_i}} = \frac{1}{\sqrt{N}} \sum_{a \in A_i} \ket{a+s}$, i.e. an equal superposition of elements in $A_i+s_i$. We denote $ N = 2^{n/2}$ and we will omit normalization in the following expressions for simplicity.
        
        \item apply QFT to $\ket{A_{i,s_i}}$ to get %
        %
        \iffalse
        \begin{align*}
            \sum_{b} \sum_{a \in A_i} \omega_N^{<b, a+s_i>} \ket{b}   & =   \sum_{b} \omega_N^{<b, s_i>} (\sum_{a \in A_i} \omega_N^{<b,a>}\ket{b}) \\
            & = \sum_{b \in A_i^\perp} \omega_N^{<b,s_i>} \ket{b}
        \end{align*}
    \fi
    
    \item add shift $s_i'$ to the state $ \sum_{b \in A_i^\perp} \omega_N^{<b,s_i>} \ket{b}$ in superposition to obtain $ \sum_{b \in A_i^\perp} \omega_N^{<b,s_i>} \ket{b+s_i'}$
    
    \item finally, apply QFT again to state  $ \sum_{b \in A_i^\perp} \omega_N^{<b,s_i>} \ket{b+s_i'}$ and obtain the following state:
    $$
    \ket{A_{i,s_i,s_i'}} = \sum_{a \in A_i} \omega_N^{<a,s_i'>} \ket{a+s_i}
    $$
   
    \item output $\qsk = \{\ket{A_{i,s_i,s_i'}}\}_{i \in [\kappa]}$.
   
    \end{itemize}
  \fi
    
    \item $\enc(\pk, m) \to \ct:$ %
    on input a public key $\pk = \{\iO({A_i+s_i}), \iO({A_i^\perp+s_i'})\}_{i \in [\kappa]}$ and message $m$:
    \begin{itemize}
        \item Sample a uniformly random string $r \gets \{0,1\}^{\kappa}$.
        \item Let $r_i$ be the $i$-th bit of $r$. Define $R_i^{0} = \iO({A_i+s_i})$ and $R_i^1 = \iO({A^\perp_i+s'_i})$. Let $\P_{m,r}$ be the following program:
\begin{figure}[hpt]
\centering
\begin{mdframed}[
  linecolor=black,
  leftmargin =8em,
  rightmargin=8em,
  usetwoside=false,
]
\revise{ 
On input $u = u_1 || u_2 || \cdots || u_\kappa$ (where each $u_i \in \mathbb{F}_2^n$):
\begin{enumerate}
\item If for all $i \in [\kappa]$, $R_i^{r_i}(u_i) = 1$:

    \quad Output $m$
\item Else:

    \quad Output $\bot$
\end{enumerate}
}
\end{mdframed}
\caption{Program $P_{m,r}$}
\label{fig:program_P_mr}
\end{figure}

        \item Let $\hatP_{m,r} = \iO(\P_{m,r})$. Output ciphertext $\ct = (\hatP_{m,r}, r)$.
    \end{itemize}
    
     \item $\dec(\qsk, \ct) \to m/\bot:$ %
     on input $\qsk = \{\ket{A_{i, s_i, s_i'}}\}_{i \in [\kappa]}$ and $\ct = (\hatP_{m,r}, r)$: 
     \begin{itemize}
         \item For each $i \in [\kappa]$, if $r_i = 1$, apply $H^{\otimes n}$ to the $i$-th state $\ket{A_{i, s_i, s_i'}}$; if $r_i = 0$, leave the $i$-th state $\ket{A_{i, s_i, s_i'}}$ unchanged. Denote the resulting state by $\qsk^*$.
         
         \item Evaluate the program $\hatP_{m,r}$ on input $\qsk^*$ in superposition; measure the evaluation register and denote the outcome by $m'$. Output $m'$.
         
         \revise{
         \item Rewind by applying the operations in the first step again.

         }
     \end{itemize}
\end{itemize}
\end{construction}

\paragraph{Correctness.} %

Honest evaluation applies $H^{\otimes n}$ to $\ket{A_{i, s_i, s_i'}}$ whenever $r_i = 1$. Clearly, the coherent evaluation of $\iO(A_i + s_i)$ on $\ket{A_{i, s_i, s_i'}}$ always outputs $1$, and likewise the coherent evaluation of $\iO(A^\perp_i + s'_i)$ on $H^{\otimes n} \ket{A_{i, s_i, s_i'}}$ also always outputs $1$. Therefore, by definition of $\hatP_{m,r}$, the evaluation $\hatP_{m,r}(\qsk^*)$ outputs $m$ with probability $1$.

%

%
%
%
%

\iffalse
\jiahui{remove the following CPA theorem later}
\begin{theorem}[CPA Security] \label{lem:cpa}
Assuming the quantum-hardness of LWE and the existence of post-quantum indistinguishability obfuscation and one-way functions, the single-decryptor encryption scheme has CPA security against QPT adversaries.
\end{theorem}
\fi

\begin{theorem}[Strong Anti-Piracy]
\label{thm:strong_antipiracy_unclonable_dec1}

\unpredictableassumptions, the single-decryptor encryption scheme of Construction \ref{cons: unclonable dec} has strong $\gamma$-anti-piracy security for any inverse polynomial $\gamma$.

\subexpassumptions, the single-decryptor encryption scheme of Construction \ref{cons: unclonable dec} has strong $\gamma$-anti-piracy security for any inverse polynomial $\gamma$.

\end{theorem}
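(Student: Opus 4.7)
The plan is to prove strong $\gamma$-anti-piracy by contradiction: assume a QPT adversary $\As$ wins the game of \Cref{def:gamma_anti_piracy_game} with non-negligible probability $\eta$, producing messages $(m_0, m_1)$ and a bipartite state $\sigma$ on registers $R_1, R_2$ that, together with unitaries $U_1, U_2$, both pass the test for a $\gamma$-good decryptor with respect to $(m_0,m_1)$. The goal is to derive from $\As$ a strategy violating the strong monogamy-of-entanglement property (\Cref{conj:strong_monogamy}). The first step is to invoke the bipartite approximation \Cref{lem:ati_2d}, applied to the POVM $\cP$ of \Cref{def:gamma_good_decryptor} with $D$ sampling $(b,r,\text{enc-randomness})$: setting $\eps,\delta$ inverse polynomial, we obtain that, with probability $\geq \eta - 2\delta$, the joint measurement $\ati^{\eps,\delta}\otimes \ati^{\eps,\delta}$ succeeds, and the normalized post-measurement state $\sigma'$ is such that \emph{both} $\ti_{1/2 + \gamma - 2\eps}$ succeed simultaneously with probability $\geq 1 - 4\delta$. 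We now hold a bipartite state on which both registers, independently, decrypt a fresh random ciphertext of $m_{b}$ and output $b$ with probability at least $1/2 + \gamma - 2\eps$.

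The second step is a ciphertext-side hybrid. Using \Cref{lem:io_shO_CC_equivalent}, replace each coordinate program $A_i + s_i$ (resp.~$A_i^\perp + s_i'$) used to build $\hatP_{m_b, r}$ by the functionally equivalent $\CC[\can_{A_i}, \can_{A_i}(s_i)]$ (resp.~$\CC[\can_{A_i^\perp}, \can_{A_i^\perp}(s_i')]$), and then, by $\iO$ security of the outer obfuscation of $\P_{m_b, r}$, rewrite the whole program as a single compute-and-compare program $\CC[f_r, y_r, m_b]$ where $f_r$ evaluates the $\kappa$ canonical maps selected by $r$ and concatenates, and $y_r$ is the concatenation of the corresponding $\can_{A_i}(s_i)$ or $\can_{A_i^\perp}(s_i')$. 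Obfuscate this program with $\ccobf$. By \Cref{thm:ti_different_distribution}, these $\iO$-level hybrids change the $1$-outcome probability of $\ti$ (and hence of $\ati^{\eps,\delta}$) by at most a negligible amount, so the post-measurement state $\sigma'$ still yields $\gamma$-good decryption against the modified ciphertext distribution.

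The third step is the reduction to monogamy. If the distribution over $(\CC[f_r, y_r, m_b], \aux)$ — with $\aux$ consisting of the register $R_j$ of $\sigma'$, the public key, and $r$ — were unpredictable (\Cref{def:cc_unpredictable_dist}), then by \Cref{def: cc obf} we could replace $\ccobf(\CC[f_r, y_r, m_b])$ with $\Sim$'s simulated zero program, under which the ciphertext is independent of $b$ and no decryptor distinguishes with advantage better than negligible, contradicting $\gamma$-goodness. Hence the distribution is predictable: there exists an efficient extractor $E_j$ that, given $f_r$ and $\aux$ on register $R_j$, outputs $y_r$ with inverse-polynomial probability. Applying this on \emph{both} registers of $\sigma'$ (with independent fresh $r^{(1)}, r^{(2)}$ sampled internally), and then guessing a coordinate $i^\star \in [\kappa]$ together with an event $r^{(1)}_{i^\star} = 0, r^{(2)}_{i^\star} = 1$ (which occurs with probability $1/(4\kappa)$), we build the monogamy adversary $(\Advz, \Advf, \Advs)$: $\Advz$ receives a single challenge coset state $\ket{A_{s,s'}}$, embeds it at position $i^\star$, samples the other $\kappa - 1$ coset states and all obfuscated programs itself, runs $\As$ and $\ati\otimes\ati$, and hands register $R_j$ to $\As_j$; given $A$, each $\As_j$ can compute both canonical maps, run $E_j$, and output a vector lying in $A + s$ or $A^\perp + s'$ depending on the forced value of $r^{(j)}_{i^\star}$. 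This violates \Cref{conj:strong_monogamy}. The sub-exponential variant of the theorem follows by the same structure but with all negligible gaps replaced by $1/\subexp$, invoking \Cref{thm:CC_subexp_from_LWE_iO} and \Cref{conj:subexp_strong_monogamy}.

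The main obstacle will be step three: compute-and-compare obfuscation security must hold with a \emph{quantum} auxiliary input that itself depends on the subspaces $\{A_i\}$, so we need the quantum-auxiliary-input version developed in \Cref{sec:CC_quantum_aux}, and the unpredictability reduction must yield an extractor whose success is jointly over the \emph{two} registers — this is exactly why we needed the bipartite \Cref{lem:ati_2d} in step one, so that $\sigma'$ is a genuine bipartite state on which both halves decrypt well. A secondary subtlety is the $1/\kappa$ factor from guessing the coordinate $i^\star$ and the constant factor from the event $(r^{(1)}_{i^\star}, r^{(2)}_{i^\star}) = (0,1)$: these turn an inverse-polynomial breaking probability in the CC-obfuscation step into an inverse-polynomial break of the monogamy game, which is fine in the polynomial setting but, in the sub-exponential regime, forces one to keep track of quantitative losses and choose $\kappa = \poly(\lambda)$ compatible with the sub-exponential security of $\iO$ and OWFs.
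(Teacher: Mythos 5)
Your proposal follows the paper's own proof quite closely: bipartite $\ati$ approximation (via \Cref{lem:ati_2d}) to obtain a post-measurement state $\sigma'$ on which both halves are good decryptors; a hybrid from the original ciphertext distribution $D$ to the distribution $D'$ over compute-and-compare ciphertexts, justified by $\iO$ security and \Cref{thm:ti_different_distribution}; and extraction of the lock value $y_r = \can_{1,r_1}(s_{1,r_1})\|\cdots\|\can_{\kappa,r_\kappa}(s_{\kappa,r_\kappa})$ from each register to build a monogamy adversary. The route is essentially identical to the paper's.

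There is, however, a genuine gap in how you justify \emph{simultaneous} extraction on both registers. You correctly note that the reduction needs $E_1$ and $E_2$ to succeed jointly, but you attribute this to \Cref{lem:ati_2d} alone. That lemma only establishes that $\sigma'$ has weight $\geq 1-4\delta$ under $\ti_{1/2+\gamma-2\epsilon-\epsilon'}(\cP_{1,D'})\otimes\ti_{1/2+\gamma-2\epsilon-\epsilon'}(\cP_{2,D'})$; it does not by itself say that, after $E_1$ acts on $R_1$ and you condition on its output, the residual state on $R_2$ still supports a good decryptor. In general two marginal events, each of inverse-polynomial probability on entangled halves, need not co-occur with any non-negligible probability.

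The missing step is \Cref{claim:unchanged_r2}: expand $\sigma'$ in the common eigenbasis of $P_{1,D'}\otimes I$ and $I\otimes P_{2,D'}$; since $\sigma'$ is supported (up to the negligible tail) inside the tensor product of eigenspaces with eigenvalue $\geq 1/2+\gamma/4$, \emph{any} POVM applied to $R_1$ leaves the conditional reduced state on $R_2$ with support still contained in its $(\gamma/4)$-good eigenspace. Only with this observation can you conclude that $E_2$ retains inverse-polynomial success probability after $E_1$ has run, so that both lock values are recovered simultaneously. Once you insert this claim, the proposal closes and matches the paper. (Two harmless quibbles: the $1/(4\kappa)$ loss is more pessimistic than needed, since the extracted lock value already contains all $\kappa$ coordinates and the only loss is the constant probability that $r^{(1)}_{i^*}\ne r^{(2)}_{i^*}$ in the right order; and your remark that the compute-and-compare obfuscator must handle quantum auxiliary input dependent on $\{A_i\}$ is correct and is exactly what \Cref{sec:CC_quantum_aux} supplies.)
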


In the above theorem, ELFs and the quantum hardness of LWE are for building the corresponding compute-and-compare obfuscation (see \Cref{thm:CC__from_ELF_iO} and \Cref{thm:CC_subexp_from_LWE_iO}). 
We will prove this theorem in Section \ref{sec: strong antipiracy proof}.

We remark that this does \emph{not} immediately imply that there exists a negligible $\gamma$ such that strong $\gamma$-anti-piracy holds. The slightly subtle reason is that the parameter $\gamma$ in strong $\gamma$-anti-piracy is actually a parameter of the security game (rather than a measure of the success probability of an adversary in the game).

From Theorem \ref{thm: strong ap implies regular}, we know that strong $\gamma$-anti-piracy security implies regular $\gamma$-anti-piracy security. Thus, for any inverse-polynomial $\gamma$, the scheme of Construction \ref{cons: unclonable dec} has regular $\gamma$-anti-piracy security. For regular anti-piracy security, it is straightforward to see that a scheme that satisfies the notion for any inverse-polynomial $\gamma$, also satisfies it for $\gamma = 0$. Thus, we have the following.

\begin{corollary}[Regular Anti-Piracy]
\label{thm:antipiracy_unclonable_dec1}
\unpredictableassumptions, the single-decryptor encryption scheme of Construction \ref{cons: unclonable dec} has regular $\gamma$-anti-piracy security for $\gamma = 0$.

\subexpassumptions, the single-decryptor encryption scheme of Construction \ref{cons: unclonable dec} has regular $\gamma$-anti-piracy security for $\gamma = 0$.

\end{corollary}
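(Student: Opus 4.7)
The plan is to deduce the corollary by chaining the two results that have just been established: strong $\gamma$-anti-piracy (\Cref{thm:strong_antipiracy_unclonable_dec1}) and the implication from strong to regular anti-piracy (\Cref{thm: strong ap implies regular}). First I would instantiate \Cref{thm: strong ap implies regular} with Construction~\ref{cons: unclonable dec}, which gives, under either of the two assumption sets, that for every inverse polynomial $\gamma$ the scheme enjoys regular $\gamma$-anti-piracy, i.e.\ the adversary's winning probability in $\mathsf{AntiPiracyCPA}$ is at most $\tfrac{1}{2} + \gamma(\lambda) + \mathsf{negl}_\gamma(\lambda)$ for some negligible $\mathsf{negl}_\gamma$ that may depend on $\gamma$.

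Next I would argue by a diagonalization (or ``for all inverse polynomials implies negligible'') step that this upgrade is enough to give $\gamma=0$. Suppose for contradiction that there exist a QPT adversary $\cA$ and a polynomial $p$ such that, for infinitely many $\lambda$,
\[
\Pr\!\left[b=1,\; b\gets \mathsf{AntiPiracyCPA}(1^{\lambda},\cA)\right] \;\geq\; \tfrac{1}{2} + \tfrac{1}{p(\lambda)}.
\]
Apply the previous step with the fixed inverse polynomial $\gamma(\lambda) = 1/(2p(\lambda))$ to obtain a negligible function $\mathsf{negl}_\gamma$ with
\[
\tfrac{1}{2} + \tfrac{1}{p(\lambda)} \;\leq\; \tfrac{1}{2} + \tfrac{1}{2p(\lambda)} + \mathsf{negl}_\gamma(\lambda),
\]
which rearranges to $\mathsf{negl}_\gamma(\lambda) \geq 1/(2p(\lambda))$ for infinitely many $\lambda$, contradicting negligibility. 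Hence the adversary's advantage above $1/2$ must be negligible, which is exactly the $\gamma = 0$ case of \Cref{def:weak_ag}.

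I do not expect any real obstacle here: the heavy lifting has already been done inside the two referenced theorems, so the corollary reduces to the elementary observation that a bound of the form ``$\mathrm{adv}(\lambda) \leq \gamma(\lambda) + \mathsf{negl}_\gamma(\lambda)$ for every inverse polynomial $\gamma$'' collapses to ``$\mathrm{adv}(\lambda) \leq \mathsf{negl}(\lambda)$''. The only mild subtlety worth flagging in the write-up is that the negligible function in the strong anti-piracy bound is allowed to depend on $\gamma$, which is why a contradiction argument (rather than taking a single limit) is the cleanest route. The two bullet points of the corollary, under post-quantum $\iO$/OWFs plus compute-and-compare obfuscation and the strong monogamy conjecture, versus sub-exponentially secure $\iO$/OWFs plus LWE and the strong monogamy conjecture, follow by applying this same diagonalization to each of the two guarantees of \Cref{thm:strong_antipiracy_unclonable_dec1}.
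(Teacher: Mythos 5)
Your proposal is correct and follows exactly the route the paper itself takes: invoke \Cref{thm:strong_antipiracy_unclonable_dec1} together with \Cref{thm: strong ap implies regular} to obtain regular $\gamma$-anti-piracy for every inverse polynomial $\gamma$, and then observe that this collapses to the $\gamma = 0$ case. The paper dismisses the last step as ``straightforward to see''; your diagonalization argument simply spells out that straightforward step (correctly handling the $\gamma$-dependence of the negligible term), which is a reasonable choice but not a different proof.
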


As mentioned earlier, it is not clear whether anti-piracy security, CPA-style (Definition \ref{def: regular antipiracy cpa}) implies anti-piracy with random challenge inputs (Definition \ref{def: regular antipiracy random challenges}). Thus, we will also separately prove the latter, since in Section \ref{sec:cp_wPRF} we will reduce security of our PRF copy-protection scheme to it. 

\begin{theorem}[Regular Anti-Piracy, For Random Challenge Plaintexts]
\label{thm:antipiracy_random_unclonable_dec1}
\unpredictableassumptions, the single-decryptor encryption scheme has $\gamma$-anti-piracy security against random challenge plaintexts for $\gamma = 0$.

\subexpassumptions, the single-decryptor encryption scheme has $\gamma$-anti-piracy security against random challenge plaintexts for $\gamma = 0$.

\end{theorem}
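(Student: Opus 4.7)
The plan is to parallel the proof of strong anti-piracy (Theorem \ref{thm:strong_antipiracy_unclonable_dec1}) via a hybrid argument, ultimately reducing to the strong monogamy-of-entanglement property. The key conceptual simplification compared to the CPA-style proof is that we do not need the hidden-trigger / puncturing machinery: since the challenge plaintexts $m_1, m_2$ are already uniformly random from $\mathcal{M}$, their unpredictability is built into the game rather than needing to be engineered from indistinguishability of chosen-plaintext ciphertexts.

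First, suppose for contradiction that some QPT adversary $\mathcal{A}$ wins $\mathsf{AntiPiracyGuess}(1^\lambda, \mathcal{A})$ with probability at least $1/|\mathcal{M}| + \varepsilon(\lambda)$ for some inverse polynomial $\varepsilon$. By definition, both quantum decryptors $\D_1, \D_2$ simultaneously output the correct random $m_1, m_2$ given their respective ciphertexts $(\hat{P}_{m_1, r_1}, r_1)$ and $(\hat{P}_{m_2, r_2}, r_2)$. By the functionality of $\hat{P}_{m,r}$, any successful extraction of $m_b$ must in effect evaluate the obfuscated program on a joint input $(u_1,\ldots,u_\kappa)$ with $u_i \in A_i + s_i$ when $r_i = 0$ and $u_i \in A_i^\perp + s_i'$ when $r_i = 1$.

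Next I would apply the indistinguishabilities from Lemma \ref{lem:io_shO_CC_equivalent} coordinate-by-coordinate to replace each membership check inside $P_{m,r}$ with an equivalent test $\mathsf{Can}_{A_i}(\cdot) \stackrel{?}{=} \mathsf{Can}_{A_i}(s_i^{(r_i)})$, where $s_i^{(0)} = s_i$ and $s_i^{(1)} = s_i'$. The resulting program is (up to $\iO$) a compute-and-compare program that unlocks the random message $m$ on matching the target tuple $\bigl(\mathsf{Can}_{A_i}(s_i^{(r_i)})\bigr)_{i\in[\kappa]}$. Invoking the security of $\mathsf{CC.Obf}$ (Definition \ref{def: cc obf}) then lets us replace $\hat{P}_{m_b,r_b}$ with a simulated obfuscation of the zero program, independent of $m_b$. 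After both ciphertexts are replaced, the adversary's view is independent of $m_1, m_2$ and its success probability drops to exactly $1/|\mathcal{M}|$, contradicting the assumption.

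The main obstacle, and the step that invokes the strong monogamy conjecture, is verifying the unpredictability hypothesis required by $\mathsf{CC.Obf}$ with respect to the quantum auxiliary state held by each decryptor. Concretely, I must argue that no QPT algorithm, given the subspace descriptions $\{A_i\}_{i \in [\kappa]}$ (which are implicit in $\mathsf{Can}_{A_i}$) and the decryptor's register, can produce the target tuple with non-negligible probability; if both decryptors could, that would yield two spatially separated QPTs which, for the coordinate $i^\star$ where $r_1$ and $r_2$ differ (which exists except with probability $2^{-\kappa}$ over $r_1, r_2 \in \{0,1\}^\kappa$), simultaneously produce $\mathsf{Can}_{A_{i^\star}}(s_{i^\star}) \in A_{i^\star} + s_{i^\star}$ and $\mathsf{Can}_{A_{i^\star}}(s_{i^\star}') \in A_{i^\star}^\perp + s_{i^\star}'$, directly contradicting Theorem \ref{conj:strong_monogamy} (or Theorem \ref{conj:subexp_strong_monogamy} in the sub-exponential setting). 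A careful hybrid over the coordinate $i^\star$ and over the pair $(r_1, r_2)$ loses at most a polynomial factor in $\kappa$, which is absorbed into the negligible slack and preserves the contradiction.
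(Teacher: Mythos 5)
Your high-level plan — replace each membership check with a canonical-representative compute-and-compare test, invoke $\mathsf{CC.Obf}$ to go to a simulated zero-program, and reduce unpredictability to the strong monogamy game — is indeed the engine driving the paper's proof of Theorem~\ref{thm:strong_antipiracy_unclonable_dec1}, and the paper's route to the present theorem is via the analogous strong statement for random plaintexts (Section~\ref{sec:strong_anti_piracy_random}) together with the ``strong implies regular'' implication. But you try to run that engine directly on the \emph{regular} $\mathsf{AntiPiracyGuess}$ game, and this introduces a genuine gap in the step that matters most.

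The gap is the simultaneous-extraction step. Having the adversary win $\mathsf{AntiPiracyGuess}$ with probability $1/|\cM| + \varepsilon$ tells you that, on average over $m_1, m_2, r_1, r_2$ and over the measurement randomness, \emph{both} decryptors answered correctly in a single run. It does not give you a state that is a reliably good decryptor on \emph{each} register: the registers are entangled, and a high \emph{joint} success probability in a single sample is compatible with the state being a superposition of ``$R_1$ good, $R_2$ bad'' and ``$R_1$ bad, $R_2$ good.'' In that situation each marginal is a good decryptor with noticeable probability, yet there is no measurement on $R_1$ whose successful outcome leaves a usable residual state on $R_2$. Your sentence ``if both decryptors could, that would yield two spatially separated QPTs which \dots{} simultaneously produce $\can_{A_{i^\star}}(s_{i^\star})$ and $\can_{A_{i^\star}^\perp}(s_{i^\star}')$'' asserts exactly the thing that needs to be proved: that a predictor on $R_1$ (extracted from the $\mathsf{CC.Obf}$ contrapositive) and a predictor on $R_2$ can both succeed on the \emph{same} instance with noticeable probability. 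The paper spends the bulk of Section~\ref{sec: strong antipiracy proof} on this point: it first applies the approximate threshold implementation $\ati$ on each register, conditions on both passing, and then uses Claim~\ref{claim:unchanged_r2} (that the post-measurement state on $R_2$ remains a $\gamma$-good decryptor after any POVM on $R_1$) to justify sequential extraction. Without this machinery your reduction to Theorem~\ref{conj:strong_monogamy} does not go through, and the polynomial losses you mention over $i^\star$ and $(r_1,r_2)$ are orthogonal to the actual obstacle.

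A secondary issue is that your replacement argument is circular as written. You want to invoke $\mathsf{CC.Obf}$ security to replace both ciphertexts, which requires that each target be unpredictable given the quantum auxiliary input; you then ``verify'' this by saying that a predictor on both sides would break monogamy. But to exhibit a predictor on both sides you first need the winning adversary to give you one, and without first characterizing the post-$\ati$ conditional state you cannot conclude that \emph{both} registers admit predictors (you can only conclude that at least one does). The paper resolves the order of quantifiers correctly: it fixes the post-test state $\sigma'$, shows $\sigma'[R_1]$ distinguishes $\widehat{D}'$ from $\widehat{D}''$, extracts $\cM_1$ from the $\mathsf{CC.Obf}$ contrapositive with aux $\sigma'[R_1]$, uses Claim~\ref{claim:unchanged_r2} to keep $\sigma''[R_2]$ good, extracts $\cM_2$ with aux $\sigma''[R_2]$, and only then plugs into the monogamy game. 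Your proposal should follow the paper's route: prove the strong random-plaintext statement in Definition~\ref{def:strong_ag_random} by the $\ati$-based hybrid of Section~\ref{sec: strong antipiracy proof} (changing $\CC[f,y,m_b]$ to $\CC[f,y,m]$ for a uniformly random $m$, as the paper remarks), and then derive the regular version exactly as in the proof of Theorem~\ref{thm: strong ap implies regular}.
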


\begin{proof}
We refer the reader to \Cref{sec:strong_anti_piracy_random}. The proof follows a similar outline as the proof of CPA-style anti-piracy.
\end{proof}

\subsection{Proof of Strong Anti-Piracy Security of Construction \ref{cons: unclonable dec}}
\label{sec: strong antipiracy proof}
In this section, we prove \Cref{thm:strong_antipiracy_unclonable_dec1}.
We only focus on the first half of the theorem, as the second half follows the same outline of the first one. The only differences between them are: 
\begin{itemize}
    \item They either base on  strong monogamy-of-entanglement or \emph{sub-exponentially} strong monogamy-of-entanglement. 
    \item Thus, they rely on either compute-and-compare obfuscation for any unpredictable distribution (post-quantum ELFs) or sub-exponentially unpredictable distributions (the quantum hardness of LWE).
\end{itemize}

The proof proceeds via two hybrids.
We will mark changes between consecutive hybrids in {\color{red} red}. 
We denote the advantage of adversary $\cA$ in Hybrid $i$ by $\advantage_{\cA, i}$. Let $\gamma$ be any inverse-polynomial.

\paragraph{Hybrid 0.}
\revise{This is the strong $\gamma$-anti-piracy game from Definition \ref{def:gamma_anti_piracy_game}}:

\begin{enumerate}
    \item The challenger samples $(\sk, \pk) \gets \setup(1^\lambda)$ and sends $\pk$ to adversary $\cA$. 
    Let $\sk = \{A_{i},s_i, s_i'\}_{i \in [\kappa]}$ and $\pk = \{\iO({A_i+s_i}), \iO({A_i^\perp+s_i'})\}_{i \in [\kappa]}$.
    
    \item The challenger prepares the quantum key $\qsk =  \{\ket{A_{i,s_i, s_i'}}\}_{i \in [\kappa]}  \gets \keygen(\sk)$, and sends $\qsk$ to $\cA$.
    
    \item $\cA$ produces a quantum state $\sigma$ over two registers $R_1, R_2$, unitaries $U_{1}, U_{2}$. and a pair of messages $(m_0, m_1)$. Sends this information to the challenger. 
    
     \item For $i \in \{1,2\}$, let $\mathcal{P}_{i,D}$ be the following mixture of projective measurements acting on some quantum state $\rho'$:
       \begin{itemize}
       \item Sample a uniform $b \leftarrow \{0,1\}$. Compute $c \leftarrow \enc(\pk, m_b)$.
       \item Run the quantum decryptor $(\rho', U_i)$ on input $c$. Check whether the outcome is $m_b$. If so, output $1$, otherwise output $0$.
       \end{itemize}
       Formally, let $D$ be the distribution over pairs $(b,c)$ of (bit, ciphertext) defined in the first bullet point, and let $\cP_i = \{M^i_{(b,c)}\}_{b,c}$ be a collection of projective measurements where $M^i_{(b,c)}$ is the projective measurement described in the second bullet point. Then, $\mathcal{P}_{i,D}$ is the mixture of projective measurements associated to $D$ and $\cP_i$ (as in Definition \ref{def:mixture_of_projective}).

    \item The challenger runs $\ti_{\frac{1}{2}+\gamma}(\cP_{1,D})$ and $\ti_{\frac{1}{2}+\gamma}(\cP_{2,D})$ on quantum decryptors $(\sigma[R_1],U_{1})$  and $(\sigma[R_2], U_{2})$ respectively. $\cA$ wins if both measurements output $1$.
\end{enumerate}

\paragraph{Hybrid 1.}
\revise{Hybrid 1 is the same as Hybrid 0, except in step 5 the challenger runs the \emph{approximate} threshold implementations $\ati^{\epsilon, \delta}_{\cP_1, D, \frac12 + \gamma}$ and $\ati^{\epsilon, \delta}_{\cP_2, D, \frac12 + \gamma}$, where $\epsilon = \frac{\gamma}{4}$, and $\delta$ is some negligible function of $\lambda$.}

\begin{enumerate}
    \item  The challenger samples $(\sk, \pk) \gets \setup(1^\lambda)$ and sends $\pk$ to adversary $\cA$. 
    Let $\sk = \{A_{i},s_i, s_i'\}_{i \in [\kappa]}$ and $\pk = \{\iO({A_i+s_i}), \iO({A_i^\perp+s_i'})\}_{i \in [\kappa]}$.
    
    \item The challenger prepares the quantum key $\qsk =  \{\ket{A_{i,s_i, s_i'}}\}_{i \in [\kappa]}  \gets \keygen(\sk)$, and sends $\qsk$ to $\cA$.
    
    \item $\cA$ produces a quantum state $\sigma$ over two registers $R_1, R_2$, unitaries $U_{1}, U_{2}$. and a pair of messages $(m_0, m_1)$. Sends this information to the challenger.  
    
    \revise{
    \item Let $\cP_{1,D}$ and $\cP_{2,D}$ be as in Hybrid 0. 
    }
    
    \item The challenger {\color{red} runs $\ati^{\epsilon, \delta}_{\cP_1, D, \frac{1}{2} + \gamma}$ and $\ati^{\epsilon, \delta}_{\cP_2, D, \frac{1}{2} + \gamma}$} on $(\sigma[R_1],U_{1})$ and $(\sigma[R_2], U_{2})$. $\cA$ wins if both measurements output $1$.
\end{enumerate}

By \Cref{lem:ati_2d}, we have $\advantage_{\As, 1} \geq \advantage_{\As, 0} - 2\delta$. Moreover, by Lemma \ref{cor:ati_thresimp}, for each $i \in \{1,2\}$, $\ati^{\epsilon, \delta}_{\cP_i, D, \frac12 + \gamma}$ runs in time $\poly(\log (1/\delta), 1/\epsilon)$. The latter is polynomial for our choice of $\epsilon$ and $\delta$.

\vspace{1em}

We complete the proof of \Cref{thm:strong_antipiracy_unclonable_dec1} by showing that the advantage $\advantage_{\As,1}$ in Hybrid 1 is negligible. 

\begin{lemma}\label{lemm:adv1isnegl}
    $\advantage_{\As,1}$ is negligible. 
\end{lemma}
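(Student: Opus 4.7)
The plan is to reduce, via a sequence of hybrids, to the computational strong monogamy-of-entanglement property (\Cref{conj:strong_monogamy}). The high-level idea is to gradually strip the challenge ciphertext of all information about the message $m_b$, so that in the end neither decryptor can pass the threshold test with noticeable probability, and to argue that any non-negligible loss along the way yields a simultaneous extractor that violates strong monogamy.

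First I would rewrite the encryption program $\hatP_{m,r}$ as a compute-and-compare program. By \Cref{lem:io_shO_CC_equivalent}, for each index $i$ the program $A_i+s_i$ is $\iO$-indistinguishable from $\CC[\can_{A_i},\can_{A_i}(s_i)]$, and similarly on the dual side; hybridizing in each coordinate, the whole program $P_{m,r}$ is functionally equivalent (after padding) to the single compute-and-compare program $\CC[F_r, y_r, m]$, where $F_r(u_1,\ldots,u_\kappa)$ concatenates $\can_{A_i}(u_i)$ when $r_i=0$ and $\can_{A_i^{\perp}}(u_i)$ when $r_i=1$, and $y_r$ is the corresponding concatenation of the canonical representatives of $s_i$ or $s_i'$. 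Because each intermediate change is only between computationally indistinguishable ciphertext distributions, \Cref{thm:ti_different_distribution} (promoted to the bipartite setting by the same technique as \Cref{lem:ati_2d}) lets me carry adversarial advantage across these hybrids at the cost of shrinking $\gamma$ by an inverse polynomial additive slack, which is absorbed into the $\epsilon$-parameter of the ATI.

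Next I would invoke the security of compute-and-compare obfuscation against (sub-exponentially) unpredictable distributions from \Cref{def: cc obf}, replacing $\iO(\CC[F_r,y_r,m])$ with the simulator's output, which depends only on the public parameters. In this final hybrid the ciphertext is independent of $m_b$, so the probability that either decryptor outputs $m_b$ is exactly $1/2$; by \Cref{cor:ati_thresimp} this makes $\Pr[\ati^{\epsilon,\delta}_{\cP_i,D,1/2+\gamma}=1]\le \delta+\negl(\lambda)$ for each $i$, and hence the joint ATI passes with at most negligible probability by \Cref{lem:ati_2d}. A non-negligible $\advantage_{\cA,1}$ therefore forces a non-negligible distinguishing gap for the compute-and-compare obfuscation, which by \Cref{def: cc obf} implies that the distribution on $(\CC[F_r,y_r,m],\aux)$ we feed to the reduction is \emph{not} unpredictable: there must exist a QPT extractor that, given $F_r$ and the decryptor's auxiliary state, predicts $y_r$ with noticeable probability.

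Finally I would plug this into the bipartite ATI to obtain simultaneous extraction on both sides. Concretely, I would sample $r$ uniformly, run $\ati^{\epsilon,\delta}_{\cP_1,D,1/2+\gamma}\otimes \ati^{\epsilon,\delta}_{\cP_2,D,1/2+\gamma}$ on $\sigma$, apply the compute-and-compare extractor to each leftover register, and read off one coset element per index $i$ on each side. An averaging argument over the uniformly random $r$ and over $i\in[\kappa]$ isolates some index $i^\ast$ on which, with non-negligible probability, the first extractor outputs a vector in $A_{i^\ast}+s_{i^\ast}$ and the second a vector in $A_{i^\ast}^{\perp}+s_{i^\ast}'$ (or vice versa, symmetrically). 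The reduction plays the strong monogamy game on index $i^\ast$, samples all other $A_i,s_i,s_i'$ internally, and uses these (together with the received $\ket{A_{i^\ast,s_{i^\ast},s_{i^\ast}'}}$) to build the public key and quantum key fed to $\cA$; the resulting outputs break \Cref{conj:strong_monogamy}. The hard part will be the last step: ensuring that the compute-and-compare extractor's guarantees (which only give the target $y_r$ with non-negligible probability, and only when plugged into a single decryptor) combine coherently with the bipartite ATI of \Cref{lem:ati_2d} so that \emph{both} extractions succeed on the same joint post-measurement state, and controlling the various inverse-polynomial slacks ($\epsilon$, $\delta$, the compute-and-compare unpredictability bound, and the $1/\kappa$ loss from the averaging over $i^\ast$) so that the final contradiction with the strong monogamy bound is quantitatively clean.
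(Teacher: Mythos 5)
Your high-level plan matches the paper's proof: hybridize to a compute-and-compare ciphertext, use the contrapositive of CC-obfuscation security to obtain a lock-value extractor, combine extraction on both registers, and reduce to the strong monogamy game at a randomly embedded index $i^\ast$. However, the step you correctly flag as ``the hard part'' is exactly the one you do not fill in, and it is the crux of the paper's argument: why does extracting the lock value from register $R_1$ (which is an arbitrary POVM on $R_1$) leave $R_2$ in a state from which the second extractor still succeeds with non-negligible probability? The paper resolves this with a dedicated claim (\Cref{claim:unchanged_r2}): if $\sigma$ is such that both $(\sigma[R_1],U_1)$ and $(\sigma[R_2],U_2)$ are $\gamma$-good decryptors, then $\sigma$ lies in the span of product eigenvectors $\ket{x_i}\otimes\ket{y_j}$ of $P_{1,D'}\otimes P_{2,D'}$ with \emph{both} eigenvalues $\ge 1/2+\gamma$, and any POVM applied on $R_1$ cannot move the reduced state on $R_2$ outside the span of $\{\ket{y_j}\}$ — so the post-measurement state on $R_2$ is still a $\gamma$-good decryptor (for any outcome, including ``extraction succeeded''). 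Without this support-preservation argument, the bipartite ATI of \Cref{lem:ati_2d} alone does not deliver simultaneous extraction; it only gives you a state that is $\gamma$-good on both marginals at the moment of measurement, not after an adaptive operation on one side. This is the missing idea, and until it is supplied the reduction to \Cref{conj:strong_monogamy} does not go through.

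Two smaller points worth flagging. First, your phrase ``in this final hybrid the ciphertext is independent of $m_b$'' is not a clean hybrid step: you cannot assume CC-obfuscation security to swap $\obfCC$ for $\Sim$ unless you already know the distribution is unpredictable, which is precisely what is in question. You do recover the correct logic in the next sentence (distinguishing forces predictability, hence an extractor), but the ``final hybrid'' framing invites a circularity — better to say, as the paper does, that the $\gamma/4$-good decryptor is itself a distinguisher between $D'$ and $D''$, and security of CC-obfuscation then yields the extractor by contrapositive. Second, there is no $1/\kappa$ loss from ``averaging over $i^\ast$'': each extractor returns the entire $\kappa$-component lock value, the reduction embeds the monogamy challenge at a single uniformly chosen $i^\ast$, and $\Pr[r_{R_1,i^\ast}\neq r_{R_2,i^\ast}]=1/2$ since $r_{R_1},r_{R_2}$ are sampled independently; the only losses are the $\epsilon,\delta$ slacks in the ATI and the inverse-polynomial gap coming from the CC-obfuscation contrapositive.
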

\begin{proof}
    Suppose for a contradiction that $\advantage_{\As,1}$ is non-negligible. Then, applying $\ati^{\epsilon, \delta}_{\cP_1, D, \frac{1}{2} + \gamma}$ and $\ati^{\epsilon, \delta}_{\cP_2, D, \frac{1}{2} + \gamma}$ on $\sigma[R_1]$ and $\sigma[R_2]$ results in two outcomes $1$ with non-negligible probability. Let $\sigma'$ be the bipartite state conditioned on both outcomes being $1$. %
    
    From the second bullet point of \Cref{lem:ati_2d}, we have the following:
    \begin{equation}
        \Tr\big[ \big(\ti_{\frac{1}{2} + \gamma-2\epsilon}(\cP_{1, D})\otimes \ti_{\frac{1}{2} + \gamma-2\epsilon}(\cP_{2, D})\big) \sigma' \big]\geq  1- 4 \delta, \label{eq: ti bound}
    \end{equation}
    where recall that, for ease of notation, when we write $\ti$ inside of a trace we are referring to the projection on the $1$ outcome.
    The observation says that the collapsed state $\sigma'[R_1]$ is negligibly close to being a $(\gamma-2\epsilon)$-good decryptor with respect to ciphertexts generated according to distribution $D$. 
    Similarly for $\sigma'[R_2]$.
    
    We then define a different but computationally close distribution $D'$ over pairs $(b,c)$ of (bit, ciphertext). Let $(m_0, m_1)$ be the pair of messages chosen by $\mathcal{A}$. %

        \begin{enumerate}
        \item Sample $b \gets \{0,1\}$ and $r \gets \{0,1\}^\kappa$.  %

        \item Let $\can_{i, 0}(\cdot) = \can_{A_i}(\cdot)$ and $\can_{i,1}(\cdot) = \can_{A^\perp_i}(\cdot)$ where $ \can_{A_i}(\cdot),\can_{A^\perp_i}(\cdot)$ are the functions defined in \Cref{def:canonical_vec_func}.
        
        \item Define function $f$ as follows:
        \begin{align*}
            f(u_1, \cdots, u_\kappa) = \can_{1, r_1}(u_1) || \cdots || \can_{\kappa, r_\kappa}(u_\kappa).
        \end{align*}
         Let $s_{i, 0} = s_i$ and $s_{i, 1} = s_i'$. Let the ``lock value'' $y$ be the following:
        \begin{align*}
            y = \can_{1, r_1}(s_{1, r_1}) || \cdots || \can_{\kappa, r_\kappa}(s_{\kappa, r_\kappa}).
        \end{align*}
       Let $C_{m_b, r}$ be the compute-and-compare program $\CC[f, y, m_b]$. 

        \item Run the obfuscation algorithm $\CC.\obf$ on $C_{m_b,r}$ and obtain the obfuscated program $\obfCC_{m_b,r} = \CC.\obf(C_{m_b,r})$. Let $\hatCC_{m_b,r} = \iO(\obfCC_{m_b,r})$. 
        
       \item Let $c = (\hatCC_{m_b,r}, r)$. Output $(b, c)$.
       \end{enumerate}

\begin{comment}
\begin{itemize}
    
        \item Given randomness $\rand = b || r'$, it first chooses $m_b$ and does the rest of the computation using randomness $r'$. The ciphertext is generated in the following way: 
        %
        
        \begin{enumerate}
        \item Sample $r \gets \{0,1\}^\kappa$. %

        \item Denote $\can_{i, 0}(\cdot) = \can_{A_i}(\cdot)$ and $\can_{i,1}(\cdot) = \can_{A^\perp_i}(\cdot)$ where $ \can_{A_i}(\cdot),\can_{A^\perp_i}(\cdot)$ are the functions defined in \Cref{def:canonical_vec_func}.
        
        \item Prepare a compute-and-compare circuit $ C_{m, r}$ as follows. Let $s_{i, 0} = s_i$ and $s_{i, 1} = s_i'$. Define function $f$ as follows:
        \begin{align*}
            f(u_1, \cdots, u_\kappa) = \can_{1, r_1}(u_1) || \cdots || \can_{\kappa, r_\kappa}(u_\kappa).
        \end{align*}
        Let the ``lock value'' $y$ be the following:
        \begin{align*}
            y = \can_{1, r_1}(s_{1, r_1}) || \cdots || \can_{\kappa, r_\kappa}(s_{\kappa, r_\kappa}).
        \end{align*}
        Then $C_{m, r}$ is defined as the compute-and-compare program $\CC[f, y, m]$. 
        %
        
        \item Run $\CC.\obf$ obfuscation algorithm on $C_{m,r}$ and obtain an obfuscated program $\obfCC_{m,r} = \CC.\obf(C_{m,r})$.
        
        \item Run $\iO$ on $\obfCC_{m,r}$ and obtain $\hatCC_{m,r} = \iO(\obfCC_{m,r})$. 
        
       \item Let ciphertext be $c = (\hatCC_{m,r}, r)$.
       \end{enumerate}
       
       \item Output $(m_b, c)$. 
\end{itemize}
\end{comment}

Since the programs $C_{m_b,r}$ and $P_{m_b, r}$ (from Fig. \ref{fig:program_P_mr}) are functionally equivalent, the two distributions $D$ and $D'$ are computationally indistinguishable assuming post-quantum security of $\iO$. A direct application of \Cref{thm:ti_different_distribution}, together with \eqref{eq: ti bound}, gives the following corollary. Let $\cP_{1, D'}$ be the same mixture of projective measurements as $\cP_{1, D}$, except that pairs of (bit, ciphertext) are sampled according to distribution $D'$ instead of $D$.
\begin{corollary}
\label{cor: sigma'}
Let $D, D'$ be the distributions defined above. 
Let $\sigma'$ be the post-measurement state conditioned on $\ati^{\epsilon, \delta}_{\cP_1, D, \frac{1}{2} + \gamma}$ and $\ati^{\epsilon, \delta}_{\cP_2, D, \frac{1}{2} + \gamma}$ %
both outputting $1$ on $\sigma[R_1]$ and $\sigma[R_2]$. For any inverse polynomial $\epsilon'$, there exists a negligible function $\delta'$ such that: 
    \begin{align*}
        \Tr\big[ \big(\ti_{\frac{1}{2} + \gamma-2\epsilon-\epsilon'}(\cP_{1, D'})\otimes \ti_{\frac{1}{2} + \gamma-2\epsilon-\epsilon'}(\cP_{2, D'})\big) \sigma' \big]\geq  1- 4 \delta - \delta'.
    \end{align*}
\end{corollary}
Intuitively, the above corollary says that if $\sigma'[R_1]$ and $\sigma'[R_2]$ are both negligibly close to being $(\gamma-2 \epsilon)$-good decryptors with respect to ciphertexts generated according to $D$, then, for any inverse polynomial $\epsilon'$, they are also negligibly close to being $(\gamma- 2 \epsilon - \epsilon')$-good decryptors with respect to ciphertexts generated according to $D'$. By setting $\epsilon' = \epsilon = \frac{\gamma}{4}$, we have that there exists a negligible function $\negl$ such that:
\begin{align} \label{eq:sigma_prime_good}
    \Tr\big[ \big(\ti_{\frac{1}{2} + \frac{\gamma}{4}}(\cP_{1, D'})\otimes \ti_{\frac{1}{2} + \frac{\gamma}{4}}(\cP_{2, D'})\big) \sigma' \big]\geq  1- \negl(\lambda). 
\end{align}

\vspace{1em}

The rest of the proof amounts to showing that all of the above implies that there exists an efficient algorithm breaking the computational strong monogamy-of-entanglement property. Before giving the full details, we provide a sketch of how the proof proceeds.
\begin{itemize}
\item[(i)] First notice that if the ``lock value'' $y$, as defined in the description of $D'$, is computationally unpredictable given the quantum decryptor $\sigma[R_1]$ (and the additional classical auxiliary information), then the security of compute-and-compare obfuscation implies that we can replace $D'$ with a distribution $D''$ that contains no information at all about the plaintext (with respect to which no quantum decryptor can have any advantage beyond random guessing).
\item[(ii)] From \eqref{eq:sigma_prime_good}, we know that conditioned on the (approximate) threshold implementation measurement accepting on both sides (which happens with non-negligible probability), each side is (close to) a $\frac{\gamma}{4}$-good decryptor with respect to $D'$. Notice that this implies the existence of an efficient algorithm that takes $\sigma[R_1]$ as auxiliary information, and distinguishes between $D'$ and $D''$. By the security of compute-and-compare obfuscation, this implies that the lock value in the left ciphertext must be predictable given $\sigma[R_1]$ (and the classical auxiliary information). Similarly the lock value in the right ciphertext must be predictable given $\sigma[R_2]$ (and the classical auxiliary information).
\item[(iii)] Since lock values consist of concatenations of canonical representatives of either the coset $A_i + s_i$ or $A_i^{\perp} + s_i'$, one would like to conclude that it is possible to extract (with non-negligible probability) one representative on each side (i.e. one using the information in register $R_1$, and one using the information in register $R_2$). However, one has to be cautious, since this deduction does not work in general! In fact, successfully extracting on $R_1$'s side might destroy the (entangled) quantum information on $R_2$'s side, preventing a successful simultaneous extraction. 
\item[(iv)] The key is that if each side is (close to) a $\frac{\gamma}{4}$-good decryptor, then no matter what measurement is performed on the left side, and no matter what outcome is obtained, the state on the right side is still in the support of $\frac{\gamma}{4}$-good decryptors. This means that extraction will still succeed with non-negligible probability on the right side. This implies a strategy that succeeds at extracting canonical representatives simultaneously on both sides. Finally, since for each $i$ the choice of whether to encrypt using $A_i +s_i$ or $A_i^{\perp} + s_i'$ is independent and uniformly random, with overwhelming probability there will be some $i$ such that the extracting algorithm will recover $s_i$ and $s_i'$ simultaneously, breaking the strong monogamy-of-entanglement property.
\end{itemize}

\paragraph{\textbf{Extracting from register $R_1$.}}
Let $\cP_{1, D'} = (P_{1, D'}, I- P_{1, D'})$. Recall that from \Cref{eq:sigma_prime_good} we have
\begin{align*}
    \Tr[\ti_{1/2 + \frac{\gamma}{4}}(\cP_{1, D'}) \, \sigma'[R_1]] \geq 1 - \negl(\lambda).
\end{align*}
This implies that $\sigma'[R_1]$ has $1-\negl(\lambda)$ weight over eigenvectors of $P_{1, D'}$ whose eigenvalues are at least $1/2 + \frac{\gamma}{4}$. Therefore,
\begin{align}
\label{eq: D' bound}
    \Tr[P_{1, D'} \, \sigma'[R_1]] \geq  \frac{1}{2} + \frac{\gamma}{4} - \negl. 
\end{align}

Hence, if we view $(\sigma'[R_1], U_{1})$ as a quantum decryptor, its advantage on challenges sampled from $D'$ is noticeably greater than random guessing. 

Let $\Sim$ be an efficient simulator for the compute-and-compare obfuscation scheme that we employ (using the notation of Definition \ref{def: cc obf}). We define $D''$ to be the following distribution over pairs $(b,c)$ of (bit, ciphertext).
\begin{itemize}
    \item  Let $\widetilde{\Sim} = \Sim(1^\lambda, {\sf param})$ where ${\sf param}$ consists of the parameters of the compute-and-compare program being obfuscated in the description of $D'$ (input size, output size, circuit size - these are the parameters of $C_{m_b, r}$).

    \item Let $c = (\iO(\widetilde{\Sim}), r)$. Output $(b, c)$. %
\end{itemize}

Because the simulated ciphertext generated in $D''$ is independent of $m_b$, the quantum decryptor $(\sigma'[R_1], U_{1})$ cannot enable guessing $b$ with better than $1/2$ probability. More concretely, let $\mathcal{P}_{1, D''} = (P_{1,D''}, P_{2,D''})$ be the mixture of projective measurements which is the same as $\mathcal{P}_{1, D'}$ except pairs of (bit, ciphertext) are sampled according to $D''$ instead of $D'$. Then,
\begin{align}
\label{eq: D'' bound}
    \Tr[P_{1, D''} \, \sigma'[R_1]] = \frac{1}{2}. 
\end{align}

Since the quantum decryptor $(\sigma'[R_1], U_{1})$ behaves noticeably differently on distributions $D'$ and $D''$, it can be used as a distinguisher for the two related distributions $\widehat{D}'$ and $\widehat{D}''$, defined as follows (these implicitly depend on some adversary $\mathcal{A}$):

\begin{enumerate}
        \item $\widehat{D}'$: a distribution over pairs of programs and auxiliary information
        $$ (C, \AUX) \,,$$
 where $\AUX = (\pk, m_b, r, \sigma'[R_1], U_1)$, with the latter being sampled like the homonymous parameters in Hybrid $0$, and $C = \obfCC_{m_b,r}$ is an obfuscated compute-and-compare as in $D'$.

        \item $\widehat{D}''$: a distribution over pairs of programs and auxiliary information
         $$(C, \AUX)\,,$$
         where $\AUX = (\pk, m_b, r, \sigma'[R_1], U_1)$, with the latter being sampled like the homonymous parameters in Hybrid $0$, and $C = \widetilde{\Sim}$ as in $D''$.
\end{enumerate}

To distinguish $\widehat{D}'$ and $\widehat{D}''$, a distinguisher runs the quantum decryptor $(\sigma'[R_1], U_1)$ on input $(\iO(C), r)$ and checks whether the output is equal to $m_b$. By the definition of $\widehat{D}'$ and $\widehat{D}''$, it is straightforward to see that in the first case this procedure is equivalent to performing the measurement $\cP_{1, D'}$ (recall that the latter depends on $U_1$) on $\sigma'[R_1]$, and in the second case this procedure is equivalent to performing the measurement $\cP_{1, D''}$ on $\sigma'[R_1]$. By \eqref{eq: D' bound} and \eqref{eq: D'' bound}, the advantage of this distinguisher is at least $\frac{\gamma}{4} - \negl$, which is noticeable.  %

Thus, by the security of compute-and-compare obfuscation, the distribution $\widehat{D}'$ over pairs of (compute-and-compare program, auxiliary information) is not computationally unpredictable (as in Definition \ref{def:cc_unpredictable_dist}). In particular, the contrapositive of Definition \ref{def:cc_unpredictable_dist} is that there exists an efficient algorithm $\cM_1$ that succeeds at the following with non-negligible probability:
\begin{itemize}
    \item Let $(\CC[f, y, m], \AUX) \gets \widehat{D}'$. 
    \item $\cM_1$ receives $f$ and $\AUX$, and outputs $y'$. $\cM_1$ is successful if $y' = y$.
\end{itemize}

    In our case, for a fixed $\{A_i, s_i, s'_i\}_{i \in [\kappa]}$ and $r$, the function $f$ is defined as: 
\begin{align*}
   f(u_1, \cdots, u_\kappa) &= \can_{1, r_1}(u_1) || \cdots || \can_{\kappa, r_\kappa}(u_\kappa)   \end{align*}
Notice that this function is efficiently computable given descriptions of the subspaces $A_i$. Thus, in our case, the contrapositive of Definition \ref{def:cc_unpredictable_dist} says that there exists an adversary which receives the description of the function $f$ and $\AUX$, which in particular includes the description of the subspaces $A_i$, and is able to guess the appropriate coset representatives, depending on the bits of $r$. The existence of this adversary will be crucial in our reduction to an adversary for the computational strong monogamy-of-entanglement game. Notice that in the monogamy-of-entanglement game, each of the two parties $\mathcal{A}_1$ and $\mathcal{A}_2$ receives the descriptions of the subspaces $A_i$, but not of the cosets, which they have to guess.

\paragraph{\textbf{Extracting on register $R_2$.}} If two registers are entangled, then performing measurements on one register will generally result in destruction of the quantum information on the other side. We show that this does not happen in our case, and that one can extract coset representatives from register $R_1$, in a way that the leftover state on $R_2$ also allows for extraction of coset representatives. Recall the definition of $\gamma$-good decryptor from Corollary \ref{cor: gamma good dec}. Informally,  a quantum decryptor $(\rho, U)$ is a $\gamma$-good decryptor with respect to a distribution $D$ over pairs of (bit, ciphertext) if $\rho$ is a mixture of states which are all in the span of eigenvectors of the $\gamma$-good decryptor test with respect to $D$, with  eigenvalues greater than $\frac12+\gamma$.

\begin{claim}
\label{claim:unchanged_r2}
Let $\rho$ be a bipartite state on registers $R_1$ and $R_2$. Let $U_1$ and $U_2$ be general quantum circuits acting respectively on $R_1$ and $R_2$ (plus a register containing ciphertexts). Suppose that $(\rho[R_1], U_1)$ and $(\rho[R_2], U_2)$ are both $\gamma$-good decryptors with respect to a distribution $D$ over pairs of (bit, ciphertext). Let $M$ be any POVM on $R_1$. Then, the post-measurement state on $R_2$ (together with $U_2$) conditioned on any outcome is still a $\gamma$-good decryptor with respect to distribution $D$.
\end{claim}

\begin{proof}
Assume $\rho$ is a pure state. The general statement follows from the fact that a mixed state is a convex mixture of pure states.

We can write $\rho$ in an eigenbasis of products of eigenvectors of $P_{1,D}$ and $P_{2,D}$. The hypothesis that both $(\rho[R_1], U_1)$ and $(\rho[R_2], U_2)$ are both $\gamma$-good decryptors implies that $\rho$ can be written as follows: $\rho = \sum_{i,j: p_i,q_j \geq 1/2 + \gamma} \alpha_{i,j} \ket{x_i} \otimes \ket{y_{j}}$, where $x_i$  is an eigenvector of $P_{1,D}$ with eigenvalue $p_i$ and $y_j$ is an eigenvector of $P_{2, D}$ with eigenvalue $q_j$. In particular, note that only the eigenvectors corresponding to eigenvalues $p_i, q_j \geq 1/2 + \gamma$ have non-zero weight. %
Finally, notice that applying any POVM $M$ on $R_1$ (or in general any quantum operation on $R_1$) does not change the support of the resulting traced out state on $R_2$: the support still consists of eigenvectors of $P_{2,D}$ with eigenvalues $\geq 1/2 + \gamma$.

\end{proof}

Now, let $\sigma'$ be as defined in Corollary \ref{cor: sigma'}. Then, \Cref{eq:sigma_prime_good} implies that both $\sigma'[R_1]$ and $\sigma'[R_2]$ are negligibly close to being $\frac{\gamma}{4}$-good decryptors with respect to $D'$. %
By \Cref{claim:unchanged_r2} (together with simple triangle inequalities) this implies that,  conditioned on algorithm $\cM_1$ successfully outputting the lock value (which happens with non-negligible probability), 
the remaining state $\sigma''[R_2]$ is still a $\frac{\gamma}{4}$-good decryptor with respect to $D'$. 

By the same argument as for $\cM_1$, there exists an algorithm $\cM_2$ that takes the description of the function $f$ and auxiliary information $\AUX = (\pk, m_b, r, \sigma''[R_2], U_2)$ and outputs the lock value with non-negligible probability. Thus, $\cM_1$ and $\cM_2$ simultaneously output a lock value with non-negligible probability. 

\paragraph{\textbf{Breaking the strong monogamy-of-entanglement property}.} We now give a formal description of an adversary $(\As_0, \As_1, \As_2)$ breaking the strong monogamy-of-entanglement property (\Cref{conj:strong_monogamy}) of coset states, given an adversary $\mathcal{A}$ that breaks the $\gamma$-anti-piracy-game, and algorithms $\cM_1$ and $\cM_2$ as described above. %
\begin{enumerate}
    \item The challenger picks a uniformly random subspace $A \subseteq \mathbb{F}_2^n$ and two uniformly random elements $s, s' \in \mathbb{F}_2^n$. It sends $\ket {A_{s, s'}}$ and $\iO(A+s), \iO(A^\perp + s')$ to $\As_0$.
    \item $\As_0$ simulates the game in Hybrid 1: 
    \begin{itemize}
        \item Samples $i^* \leftarrow [\kappa]$. Generates $\rho_\sk = \{\ket {A_i, s_i, s'_i}\}_{i \in [\kappa]}$ and $\pk = \{\iO(A_i + s_i), \iO(A^\perp + s'_i)\}_{i \in [\kappa]}$ where $A_i, s_i, s_i'$ are uniformly random except $\ket {A_{i^*, s_{i^*}, s_{i^*}'}} = \ket {A_{s, s'}}$,  $\iO(A_{i^*} + s_{i^*}) = \iO(A + s)$ and 
        $\iO(A_{i^*}^\perp + s'_{i^*}) = \iO(A^\perp + s')$. 
        
        \item $\As_0$ gives $\rho_\sk$ and $\pk$ to the adversary $\As$ for the $\gamma$-anti-piracy game. $\As_0$ obtains $\sigma[R_1], \sigma[R_2]$ together with $U_{1}, U_{2}$ and $(m_0, m_1)$. 
        $\As_0$ applies $\ati^{\epsilon, \delta}_{\cP_1, D, \frac{1}{2} + \gamma}$ and $\ati^{\epsilon, \delta}_{\cP_2, D, \frac{1}{2} + \gamma}$ to $\sigma$, where $D$ is the distribution over pairs of (bit, ciphertext) defined in Hybrid 0 (which is efficiently sampleable given $\pk$.) If if any of the two outcomes is $0$, %
        $\As_0$ halts. If both outcomes are $1$, let the collpased state be $\sigma'$. 

    \end{itemize}
    It sends $\sigma'[R_1], \pk$ to $\As_1$ and $\sigma'[R_2], \pk$ to $\As_2$. Both $\As_1, \As_2$ also get the description of $A_i$ (for all $i \ne i^*$). 
    
    \item The challenger gives the description of $A$ (equivalently, $A_{i^*}$) to $\As_1$ and $\As_2$.
    
    \item $\As_1$ samples $m_{b_1, R_1}$ as a uniformly random message in $\{m_0, m_1\}$ and $r_{R_1} \gets \{0,1\}^\kappa$. Let $\AUX_{R_1} = (\pk, m_{b_1, R_1}, r_{R_1}, \sigma'[R_1])$ and $f_{R_1}$ be the function corresponding to $r_{R_1}$.  Because $\As_1$ gets the description of all $A_i$, the description of $f_{R_1}$ is efficiently computable. It runs $\cM_1$ on $f_{R_1}, \AUX_{R_1}$ and gets the outcome $y_{R_1}$.

    \item Similarly for $\As_2$, it prepares $f_{R_2}, \AUX_{R_2}$, runs $\cM_2$ and gets the outcome $y_{R_2}$.  
\end{enumerate}

It follows from the previous analysis that, with non-negligible probability, both $y_{R_1}$ and $y_{R_2}$ are correct lock values. Since $r_{R_1, i^*} \ne r_{R_2, i^*}$ with overwhelming probability, this violates the strong monogamy-of-entanglement property. %

\end{proof}

Thus, the advantage in Hybrid 1 is negligible. 
This implies that the advantage in Hybrid 0 is also negligible, which concludes the proof of \Cref{thm:strong_antipiracy_unclonable_dec1}.
\qed

\subsection{Construction from Extractable Witness Encryption} %
\label{sec: unclonable dec witness enc}

In this section, we give an alternative construction of a single-decryptor encryption scheme. This construction uses a quantum signature token scheme as a black box. The construction is conceptually very similar to that of Section \ref{sec: unclonable dec from stronger monogamy}, but it uses extractable witness encryption instead of compute-and-compare obfuscation to deduce simultaneous extraction. Because the extraction guarantee from extractable witness encryption is stronger than the one from compute-and-compare obfuscation (we elaborate on this difference in Section \ref{sec: security uncl dec construction 2}), we do not need to reduce security of the scheme to the strong monogamy-of-entanglement property, but instead we are able to reduce security of the scheme to security of the signature token scheme (which, recall, is a primitive that we show how to construct using the computational direct product hardness property of coset states in Section \ref{sec: signature tokens}).

In the following construction, let $\we = (\we.\enc, \we.\enc)$ be an extractable witness encryption scheme (as in Definition \ref{def: extractable witness enc}), and let $\TS = (\TS.\kgen, \TS.\tokengen, \TS.\sign, \TS.\verify)$ be an unforgeable signature token scheme (as in Definitions \ref{def: ts} and \ref{def: ts unforgeability}). The construction below works to encrypt single bit messages, but can be extended to messages of polynomial length without loss of generality.

\begin{construction} 
$\,$
\label{cons: unclonable dec 2}
\begin{itemize}
\item {$\setup(1^\param) \to (\pk, \sk)$}: Let $\kappa = \kappa(\lambda)$ be a polynomial.
\begin{itemize}
    \item For each $i \in [\kappa]$, compute $(\TS.\sk_i, \TS.\pk_i) \gets \TS.\kgen(1^\lambda)$. 
    
    \item Output $\pk = \{\TS.\pk_i\}_{i \in [\kappa]}$ and $\sk = \{\TS.\sk_i\}_{i \in [\kappa]}$.
\end{itemize}

\item $\keygen(\sk) \to \qsk:$ On input $\sk = \{\TS.\sk_i\}_{i \in [\kappa]}$:
\begin{itemize}
    \item For $i \in [\kappa]$, compute $\ket{\tk_i} \gets \TS.\tokengen(\TS.\sk_i)$
    
    \item Output $\qsk = \{\ket{\tk_i}\}_{i \in [\kappa]}$
\end{itemize}

\item {$\enc(\pk, m) \to \ct$:}
On input a public key $\pk =  \{\TS.\pk_i\}_{i \in [\kappa]}$ and a message $m \in \{0,1\}$; 
\begin{itemize}
    \item Sample a random string $r \gets \{0,1\}^\kappa$
    
    \item Compute $\ct_{r,m} \gets \we.\enc(1^n, r, m)$, where $r$ is an instance of the language $L$, defined by the following $NP$ relation $R_L$. In what follows, let $w$ be parsed as $w = w_1||\cdots ||w_\kappa $, for $w_i$'s of the appropriate length.
\begin{align}
\label{eq:function_p}
R_L(r, w)=\begin{cases}
1 &\text{if } \TS.\verify(\TS.\pk_i, r_i, w_i) = 1 \text{ for all } i \in [\kappa],\\
0 &\text{ otherwise}.
\end{cases}
\end{align}
That is, each $w_i$ should be a valid signature of $r_i$. 

\item Output the ciphertext $\ct = (\ct_{r,m}, r)$.
\end{itemize}

\item {$\dec(\ct, \qsk) \to m/ \bot$:} On input a ciphertext $\ct = (\ct_{r,m}, r)$ and a quantum secret key $\qsk = \{\ket{\tk_i}\}_{i \in [\kappa]}$. 

\begin{itemize}
    \item For each $i \in [\kappa]$, sign message $r_i$ by running $(r_i, \sig_i) \gets \TS.\sign(\ket{\tk_i}, r_i)$. Let $ w = \sig_1 || \cdots || \sig_\kappa$.
    
    \item Output $m/\bot \gets\we.\dec(\ct_{r,m},\sig_1 || \cdots || \sig_\kappa)$.

\end{itemize}
 Note in the decryption algorithm $\dec$, we run $\TS.\sign$ and $\we.\dec$ coherently, so that (by the gentle measurement lemma) an honest user can rewind and use the quantum key polynomially many times.

\end{itemize}
\end{construction}

\jiahui{comment out the following section for camera ready}
\subsection{Security of Construction \ref{cons: unclonable dec 2}}
\label{sec: security uncl dec construction 2}
The proofs of security are straightforward, and similar to the proofs given in \cite{georgiou-zhandry20}, except that here we have a new definition of $\gamma$-anti-piracy security, and we use a tokenized signature scheme instead of a one-shot signature scheme. The proof also resembles our proof of security for the construction from the strong monogamy-of-entanglement property. 

We sketch the proofs here and omit some details.

\paragraph{Correctness and Efficiency. } It is straightforward to see that all procedures are efficient and
that correctness follows from the correctness 
of the $\we$ and $\TS$ schemes.

\paragraph{CPA Security. } 
CPA security relies on extractable security of the witness encryption scheme and on unforgeability of the tokenized signature scheme.
Suppose that there exists a QPT adversary $\cA$ that succeeds with non-negligible probability in its CPA security game, by the extractable security of witness encryption, there exists an extractor that extracts witness $w = \sig_i || \cdots || \sig_\kappa$, where each $\sig_i$ is the signature of a random bit $r_i$ that can $\TS.\verify$. This clearly violates the unforgeability of $\TS$, since the adversary $\cA$ in CPA security game is not given any tokens.

\paragraph{(Strong) $\gamma$-Anti-Piracy. }
Strong $\gamma$-anti-piracy security for any inverse-polynomial $\gamma$ also follows from extractable security of the witness encryption scheme and unforgeability of tokenized signature scheme scheme.

Suppose that there exists a QPT adversary $(\cA_0, \cA_1, \cA_2)$ that succeeds with non-negligible probability in the $\gamma$-anti-piracy game. Then, with non-negligible probability over the randomness of the challenger, $\cA_0$ outputs a state $\sigma$ such that $\sigma[R_1]$ and $\sigma[R_2]$ simultaneously pass the $\gamma$-good decryptor test with non-negligible probability. Therefore, by applying the corresponding approximation threshold implementation ($\ati$), the resulting state $\sigma'[R_1]$ and $\sigma'[R_2]$ are negligibly close to being $(\gamma-\epsilon)$-good decryptors, for any inverse polynomial $\epsilon$. %

Since $(\gamma-\epsilon)$ is inverse-polynomial, then by the extractable security of the witness encryption scheme, there must exist an extractor $E_1$ on the $R_1$ side that extracts, with non-negligible probability, a witness $w_1 = \sig_{1,1} || \cdots || \sig_{1,\kappa}$ (where each $\sig_{1,i}$ is a signature for bit $r_{R_1,i}$). Similarly, by \Cref{claim:unchanged_r2}, there also exists an extractor $E_2$ on the $R_2$ side that extracts witness $w_2 = \sig_{2,1} || \cdots || \sig_{2,\kappa}$ (where each $\sig_{2,i}$ is a signature for bit $r_{R_2,i}$) from the leftover state after extraction on $R_1$. Since $r_{R_1}, r_{R_2}$ are independently sampled, with probability $(1- 1/2^\kappa)$, there exists a  position $i^*$ where $r_{R_1,i^*} \neq r_{R_2,i^*}$. We can then construct an adversary that breaks the 1-unforgeability of tokenized signatures by getting one token $\ket{\tk_{i^*}}$ and successfully producing signatures on two different messages $r_{R_1,i^*} \neq r_{R_2,i^*}$.
\section{Copy-Protection of Pseudorandom Functions}
\label{sec:cp_wPRF}

In this section, we formally define copy-protection of pseudorandom functions. 
Then, we describe a construction that essentially builds on the single-decryptor encryption scheme described in Section \ref{sec: unclonable dec from stronger monogamy} (together with post-quantum sub-exponentially secure one-way functions and $\iO$). The same construction can be based on the single-decryptor encryption scheme from Section \ref{sec: unclonable dec witness enc}, but we omit the details to avoid redundancy.
In Section \ref{sec: PRF prelim}, we give definitions of certain families of PRFs which we use in our construction. We remark that all of the PRFs that we use can be constructed from post-quantum one-way functions.

\subsection{Definitions}
In what follows, the PRF $F: [K]\times [N] \rightarrow [M]$, implicitly depends on a security parameter $\lambda$. We denote by $\setup(\cdot)$ the procedure which on input $1^{\lambda}$, outputs a PRF key. 

\begin{definition}[Copy-Protection of PRF]
    A copy-protection scheme for a  PRF $F:[K] \times [N] \to [M]$ 
    consists of the following QPT algorithms: 
    
    \begin{description}
        \item[] $\keygen(K)$: takes a key $K$ and outputs a quantum key $\rho_K$; %
        \item[] $\eval(\rho_K, x)$: takes a quantum key $\rho_K$ and an input $x \in [N]$. It outputs a classical string $y \in [M]$.
    \end{description}
\end{definition}

A copy-protection scheme should satisfy the following properties: 
\begin{definition}[Correctness]
There exists a negligible function $\negl(\cdot)$, such that for all $\lambda$, all $K \gets \setup(1^\lambda)$,
all inputs $x$, 
    
$$\Pr[ \eval(\rho_K, x) = F(K,x) : \rho_K \leftarrow \keygen(K)] \geq 1 - \negl(\lambda) \, .
$$ 
\end{definition}
Note that the correctness property implies that the evaluation procedure has an ``almost unique'' output. This means that the PRF can be evaluated (and rewound) polynomially many times, without disturbing the quantum key $\rho_K$, except negligibly.

\begin{definition}[Anti-Piracy Security] \label{def:weak anti piracy}
   Let $\lambda \in \mathbb{N}^+$. Consider the following game between a challenger and an adversary $\As$:
    \begin{enumerate}
        \item The challenger samples $K \gets \setup(1^\lambda)$ and $\rho_K \gets \keygen(K)$.  It gives $\rho_K$ to $\As$; 
        \item $\As$ returns to the challenger a bipartite state $\sigma$ on registers $R_1$ and $R_2$, as well as general quantum circuits $U_{1}$ and $U_{2}$.%
        \item The challenger samples uniformly random $u, w \gets [N]$. Then runs $U_1$ on input $(\sigma[R_1], u)$, and runs $U_2$ on input $(\sigma[R_2], w)$. The outcome of the game is $1$ if and only if the outputs are $F(K, u)$ and $F(K, w)$ respectively. 
    \end{enumerate}
    
    Denote by $\sf{CopyProtectionGame}(1^{\lambda}, \mathcal{A})$ a random variable for the output of the game.
    
    We say the scheme has anti-piracy security if for every polynomial-time quantum algorithm $\As$, there exists a negligible function $\negl(\cdot)$, for all $\lambda \in \mathbb{N}^+$, 
    \begin{align*}
        \Pr\left[b = 1, b \gets \sf{CopyProtectionGame}(1^{\lambda}, \mathcal{A}) \right] = \negl(\lambda) \,.
    \end{align*}
\end{definition}

We give a stronger anti-piracy definition, which is an indistinguishability definition and is specifically for copy-protecting PRFs. We will show that our construction also satisfies this definition. 

\begin{definition}[Indistinguishability Anti-Piracy Security for PRF]
\label{def:indistinguishable_anti_piracy_PRF}
    Let $\lambda \in \mathbb{N}^+$. Consider the following game between a challenger and an adversary $\As$: 
    \begin{enumerate}
        \item The challenger runs $K \gets \setup(1^\lambda)$, and $\rho_K \gets \keygen(K)$. It gives $\rho_K$ to $\As$; 
        \item  $\As$ returns to the challenger a bipartite state $\sigma$ on registers $R_1$ and $R_2$, as well as general quantum circuits $U_{1}$ and $U_{2}$.%
        \item The challenger samples two uniformly random inputs $u, w \gets [N]$ and two uniformly random strings $y_1, y_2 \gets [M]$ (these are of the same length as the PRF output).
        
        \item The challenger flips two coins independently: $b_1, b_2 \gets \{0,1\}$. If $b_1 = 0$, it gives $(u, F(K, u), \sigma[R_1])$ as input to $U_{1}$; else  it gives $(u, y_1, \sigma[R_1])$ as input to $U_{1}$. Let $b_1'$ be the output. Similarly, if 
         $b_2 = 0$, it gives $(w,F(K, w), \sigma[R_2])$ as input to $U_{2}$; else it gives $(w,y_2, \sigma[R_2])$ as input to $U_{2}$. Let $b_2'$ be the output.
        \item The outcome of the game is $1$ if $b_1' = b_1$ and $b_2' = b_2$.
    \end{enumerate}
    
    Denote by $\sf{IndCopyProtectionGame}(1^{\lambda}, \mathcal{A})$ a random variable for the output of the game.
    
    We say the scheme has indistinguishability anti-piracy security if for every polynomial-time quantum algorithm $\As$, there exists a negligible function $\negl(\cdot)$, for all $\lambda \in \mathbb{N}^+$, 
    \begin{align*}
        \Pr\left[b = 1, b \gets \sf{IndCopyProtectionGame}(1^{\lambda}, \mathcal{A}) \right] = \frac{1}{2} + \negl(\lambda) \,.
    \end{align*}
\end{definition}

Similarly to the relationship between CPA-style unclonable decryption (\Cref{def: regular antipiracy cpa}) and anti-piracy with random challenge inputs (Definition \ref{def: regular antipiracy random challenges}), it is not clear whether \Cref{def:indistinguishable_anti_piracy_PRF} implies \Cref{def:weak anti piracy} (this subtlety arises due to the fact that there are two parties involved, having to simultaneously make the correct guess). Thus, we will give separate statements and security proofs in the next section. 

\subsection{Preliminaries: Puncturable PRFs and related notions}
\label{sec: PRF prelim}

A \emph{puncturable} PRF is a PRF augmented with a procedure that allows to ``puncture'' a PRF key $K$ at a set of points $S$, in such a way that the adversary with the punctured key can evaluate the PRF at all points except the points in $S$. Moreover, even given the punctured key, an adversary cannot distinguish between a uniformly random value and the evaluation of the PRF at a point $S$ with respect to the original unpunctured key. Formally:

\begin{definition}[(Post-quantum) Puncturable PRF]
A PRF family $F: \{0,1\}^{n(\lambda)} \rightarrow \{0,1\}^{m(\lambda)}$ with key generation procedure $\kgen_F$ is said to be puncturable if there exists an algorithm $\puncture_F$, satisfying the following conditions:

\begin{itemize}
    \item \textbf{Functionality preserved under puncturing:} %
Let $S \subseteq \{0,1\}^{n(\lambda)}$. For all $x \in \{0, 1\}^{n(\lambda)}$ where $x \notin S$, we have that:
\begin{align*}
    \Pr[F(K,x) = F(K_S, x): K \gets \kgen(1^\lambda), K_S \gets \puncture_F(K, S)] = 1
\end{align*}

\item \textbf{Pseudorandom at punctured points:}
For every $QPT$ adversary $(A_1, A_2)$, there exists a negligible function $\negl$ such that the following holds. Consider an experiment where $K \gets \kgen_F(1^\lambda)$, $(S,\sigma) \gets A_1(1^{\lambda})$, and $K_S \gets \puncture_F(K, S)$. Then, for all $x \in S$,
\begin{align*}\left|\Pr[A_2(\sigma, K_S, S, F(K, x)) = 1] - \Pr_{r \gets \{0,1\}^{m(\lambda)}}[A_2(\sigma, K_S, S, r) = 1] \right|  \leq \negl(\lambda)
\end{align*}

\end{itemize}
\end{definition}

\begin{comment}
\begin{definition}[(Post-quantum) Puncturable PRF]
A puncturable PRF family of PRFs $F$ mapping is given by a triple of algorithms $\kgen_F$,
$\puncture_F$, and a pair of computable functions $n(\cdot)$ and $m(\cdot)$, satisfying the following conditions:
%

\begin{itemize}
    \item \textbf{Functionality preserved under puncturing:} For every $QPT$ adversary $A$ such that $A(1^\lambda)$ outputs
a set $S \subseteq \{0,1\}^{n(\lambda)}$, then for all $x \in \{0, 1\}^{n(\lambda)}$ where $x \notin S$, we have that:
\begin{align*}
    \Pr[F(K,x) = F(K_S, x): K \gets \kgen(1^\lambda), K_S \gets \puncture_F(K, S)] = 1
\end{align*}

\item \textbf{Pseudorandom at punctured points:}
For every $QPT$ adversary $(A_1, A_2)$ such that $A_1(1^\lambda)$ outputs a set $S \subseteq \{0,1\}^{n(\lambda)}$ and state $\sigma$, consider an experiment where $K \gets \kgen_F(1^\lambda)$ and $K_S = \puncture_F(K, S)$. Then for all $x_i \in S$, we have:
\begin{align*}
\left|\Pr[A_2(\sigma, K_S, S, F(K, x_i)) = 1] - \Pr_{r \gets \{0,1\}^{m(\lambda)}}[A_2(\sigma, K_S, S, r) = 1] \right|  \leq \negl(\lambda)
\end{align*}

\end{itemize}
\end{definition}
\end{comment}

\begin{definition}
A statistically injective (puncturable) PRF family with (negligible) failure probability $\epsilon(\cdot)$ is a (puncturable) PRF family $F$ such that with probability $1-\epsilon(\lambda)$ over the random choice of key $K \gets \kgen_F (1^\lambda)$,
we have that $F(K, \cdot)$ is injective.
\end{definition}

We will also make use of extracting PRFs: these are PRFs that are strong extractors on their inputs in the following sense.

\begin{definition}[Extracting PRF]
An extracting (puncturable) PRF with error $\epsilon(\cdot)$ for min-entropy $k(\cdot)$ is a (puncturable)
PRF $F$ mapping $n(\lambda)$ bits to $m(\lambda)$ bits such that for all $\lambda$, if $X$ is any distribution over $n(\lambda)$
bits with min-entropy greater than $k(\lambda)$, then the statistical distance between $(K, F(K, X))$
and $(K, r \gets \{0,1\}^{m(\lambda)})$ is at most $\epsilon(\cdot)$, where $K \gets \kgen(1^\lambda)$. 
\end{definition}

Puncturable PRFs can be straightforwardly built by modifying the \cite{GGM86} tree-based construction of PRFs, which only assumes one-way functions. \cite{sahai2014use} showed that puncturable statistically injective PRFs and extracting puncturable PRFs with the required input-output size can be built from one-way functions as well.
These constructions can all be made post-quantum as shown in \cite{zhandry2012quantumprf}. Thus, the following theorems from \cite{sahai2014use} hold also against bounded quantum adversaries.

\begin{theorem}[\cite{sahai2014use} Theorem 1,  \cite{GGM86}]
If post-quantum one-way functions exist, then for all efficiently computable
functions $n(\lambda)$ and $m(\lambda)$, there exists a post-quantum puncturable PRF family that maps $n(\lambda)$ bits to $m(\lambda)$ bits.
\end{theorem}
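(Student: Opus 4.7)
The plan is to follow the classical GGM construction and show that both the construction and the natural puncturing procedure go through in the post-quantum setting. First, I would use a post-quantum length-doubling pseudorandom generator $G : \{0,1\}^\lambda \to \{0,1\}^{2\lambda}$, which can be built from any post-quantum one-way function via the HILL-style construction (whose quantum-security analysis is established in prior work). Write $G(s) = G_0(s) \,\|\, G_1(s)$ where $G_b : \{0,1\}^\lambda \to \{0,1\}^\lambda$. For input length $n = n(\lambda)$, define the GGM PRF $F(K, x) = G_{x_n}(G_{x_{n-1}}(\cdots G_{x_1}(K)\cdots))$ with key $K \in \{0,1\}^\lambda$. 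To obtain output length $m = m(\lambda)$, I would either iterate the PRG once more on the final label (if $m > \lambda$, stretching via a post-quantum PRG) or truncate (if $m < \lambda$); both preserve PRF security by a standard reduction.

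Next I would specify $\puncture_F(K, S)$. Think of the key $K$ as the label of the root of a depth-$n$ binary tree, where each internal node $v$ carries a label $\ell_v \in \{0,1\}^\lambda$ and its two children carry $G_0(\ell_v), G_1(\ell_v)$. The leaf indexed by $x \in \{0,1\}^n$ carries $F(K,x)$. Puncturing at the set $S$ amounts to outputting labels for a minimal collection of internal nodes whose subtrees together cover exactly $\{0,1\}^n \setminus S$. Concretely, for each $x \in S$, along the root-to-leaf path of $x$, at every level $i$ output the sibling label $G_{1-x_i}(\ell_{x_1\cdots x_{i-1}})$, then de-duplicate/consolidate across the $|S|$ paths. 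The punctured key is this list of $O(n|S|)$ labels. Functionality preservation for $y \notin S$ is immediate: the root-to-leaf path of $y$ branches off every $x \in S$ at some level, and at the first such branching from any path in $S$, the corresponding sibling label has been released, so $F(K,y)$ can be recomputed from there on down using $G_0, G_1$.

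Pseudorandomness at punctured points is then shown by a standard hybrid argument over the $n$ levels of the tree. Fix $x^* \in S$ and consider the labels along the root-to-leaf path of $x^*$: $\ell_0 = K$, $\ell_1 = G_{x^*_1}(\ell_0)$, \ldots, $\ell_n = F(K,x^*)$. In hybrid $H_i$, replace $\ell_i$ with a uniformly random string (independent of the sibling labels already released at levels $1,\ldots,i$) and continue the GGM evaluation from $\ell_i$ using $G$. The indistinguishability $H_{i-1} \approx_c H_i$ follows from post-quantum PRG security: a distinguisher is handed the PRG output (pseudorandom vs.\ uniform) as the ``on-path'' half $\ell_i$, and simulates the rest of the punctured key and the challenge value using $\ell_i$ together with the independently sampled sibling $\ell_{1-x^*_i\text{-sibling}}$. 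After $n$ hops, $\ell_n = F(K,x^*)$ is uniformly random, which is exactly what is needed. The same argument, applied to each $x \in S$ in turn (or in a single hybrid that replaces all labels along the tree frontier), handles arbitrary polynomial-size $S$.

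The most delicate point will be the post-quantum security of the reductions, since the outer PRF-security game involves quantum superposition queries. I would invoke the result that GGM-style constructions remain PRFs against adversaries making quantum queries when the underlying PRG is post-quantum secure (the standard reduction is actually a classical reduction to PRG security, and quantum superposition queries are simulated transparently). The puncturing reduction itself only requires classical access to $F(K, x^*)$, so only the plain quantum security of the PRG is needed and the hybrid argument is otherwise identical to the classical one. Finally, I would pad $n(\lambda)$ and $m(\lambda)$ appropriately via PRG stretching at the leaves to meet the stated input/output-length requirements.
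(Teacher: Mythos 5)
The paper itself does not prove this statement; it cites Sahai--Waters (Theorem~1), GGM, and Zhandry's 2012 work, with the one-line justification that ``these constructions can all be made post-quantum as shown in~\cite{zhandry2012quantumprf}.'' Your proposal correctly reconstructs the standard GGM-tree construction, the sibling-label puncturing procedure, and the level-by-level hybrid argument for pseudorandomness at punctured points. That part is fine, and you are also right that the punctured-point reduction only involves classical access to $F(K,x^*)$, so it reduces straightforwardly to post-quantum PRG security.

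The gap is in the parenthetical justification for the outer PRF security, where you claim that ``the standard reduction is actually a classical reduction to PRG security, and quantum superposition queries are simulated transparently.'' This is not correct, and it is exactly the reason Zhandry's result is a theorem rather than an observation. The classical GGM proof is a hybrid over tree levels: at level~$i$, the reduction must identify which of the (up to $q$) nodes the adversary's queries touch and embed PRG challenges only there. When the adversary queries in superposition, a single query touches all $2^i$ nodes at level~$i$, so the reduction cannot do the bookkeeping the classical argument relies on, and the per-level hybrid step does not go through. Zhandry's actual proof of quantum-query security for GGM uses a genuinely different argument (oracle indistinguishability via small-range distributions) rather than a transparent lifting of the classical reduction. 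Since you already say you ``would invoke the result,'' the fix is simply to drop the erroneous justification and cite the quantum-PRF theorem as a black box, which is also what the paper does. With that correction your proof matches the intended argument.
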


\begin{theorem}[\cite{sahai2014use} Theorem 2]
\label{sw14_thm2}
If post-quantum one-way functions exist, then for all efficiently computable functions $n(\lambda)$, $m(\lambda)$, and $e(\lambda)$
such that $m(\lambda) \geq 2n(\lambda) + e(\lambda)$, there exists a post-quantum puncturable statistically injective PRF family with failure
probability $2^{-e(\lambda)}$
that maps $n(\lambda)$ bits to $m(\lambda)$ bits.
\end{theorem}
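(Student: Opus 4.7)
The plan is to augment the standard puncturable PRF from Theorem~\ref{sw14_thm2} (really, the puncturable PRF theorem of GGM+OWFs cited just above) with a pairwise independent hash, sampled as part of the key, in order to enforce statistical injectivity. Concretely, let $F': \{0,1\}^{n(\lambda)} \to \{0,1\}^{m(\lambda)}$ be a post-quantum puncturable PRF, which exists by the preceding theorem assuming post-quantum one-way functions. Let $\mathcal{H}_\lambda$ be a family of pairwise independent hash functions from $\{0,1\}^{n(\lambda)}$ to $\{0,1\}^{m(\lambda)}$; such families admit succinct descriptions and are efficiently evaluable (for example, affine maps over $\mathbb{F}_{2^{m(\lambda)}}$ restricted to the first $n(\lambda)$ input bits). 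I would then define the new key to be $K = (K',H)$ with $K' \gets \kgen_{F'}(1^\lambda)$ and $H \gets \mathcal{H}_\lambda$, and set $F(K,x) := F'(K',x) \oplus H(x)$.

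Puncturing is straightforward: set $K_S := (K'_S, H)$, where $K'_S$ is the puncturing of $K'$ at $S$. Functionality preservation at $x \notin S$ is immediate from that of $F'$. For pseudorandomness at a punctured point $x \in S$, note that $F(K,x) = F'(K',x) \oplus H(x)$ and $H(x)$ is a fixed string determined by $K_S$, so any QPT distinguisher between $F(K,x)$ and uniform given $(K_S, S)$ would immediately yield a distinguisher between $F'(K',x)$ and uniform given $(K'_S, S, H)$, contradicting the puncturable PRF security of $F'$.

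The key step is the statistical injectivity argument. Fix distinct $x_1, x_2 \in \{0,1\}^{n(\lambda)}$. Then $\Pr_K[F(K,x_1) = F(K,x_2)] = \Pr_{K',H}[\, H(x_1) \oplus H(x_2) = F'(K',x_1) \oplus F'(K',x_2)\,]$. Conditioning on any fixed $K'$, the random variable $H(x_1) \oplus H(x_2)$ is uniformly distributed over $\{0,1\}^{m(\lambda)}$ by pairwise independence of $\mathcal{H}_\lambda$, so the conditional probability equals $2^{-m(\lambda)}$. A union bound over the at most $2^{2n(\lambda)-1}$ unordered pairs of inputs gives $\Pr_K[\, F(K,\cdot) \text{ not injective}\,] \leq 2^{2n(\lambda)-1-m(\lambda)} \leq 2^{-e(\lambda)-1} \leq 2^{-e(\lambda)}$, using the hypothesis $m(\lambda) \geq 2n(\lambda) + e(\lambda)$.

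There is no real obstacle here: the only thing to check carefully is that the pairwise independent hash family can be described and evaluated in time polynomial in $\lambda, n(\lambda), m(\lambda)$ (standard), and that pseudorandomness is not disturbed by publishing $H$ as part of the punctured key (it is not, since the reduction to $F'$'s pseudorandomness can sample $H$ itself). Post-quantum security is inherited directly from the post-quantum security of $F'$, as the pairwise independent hash contributes only an information-theoretic argument. This yields a post-quantum puncturable statistically injective PRF with failure probability $2^{-e(\lambda)}$ mapping $n(\lambda)$ bits to $m(\lambda)$ bits, completing the proof.
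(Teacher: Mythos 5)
Your proof is correct and is essentially the same construction used in the cited source (Sahai--Waters, Theorem~2): XOR a puncturable PRF with an independently keyed pairwise-independent hash, puncture only the PRF component, and bound the collision probability by a union bound over unordered input pairs. The paper itself only cites this theorem without reproducing the argument, and your write-up faithfully recovers it, including the observation that the reduction for pseudorandomness at punctured points can sample $H$ itself and that the pairwise-independent hash contributes only information-theoretically, so post-quantum security is inherited from $F'$.
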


\begin{theorem}[\cite{sahai2014use} Theorem 3]
\label{sw14_thm3}
If post-quantum one-way functions exist, then for all efficiently computable functions $n(\lambda)$, $m(\lambda)$, $k(\lambda)$, and
$e(\lambda)$ such that $n(\lambda) \geq k(\lambda) \geq m(\lambda) + 2e(\lambda) + 2$, there exists a post-quantum extracting puncturable PRF family that
maps $n(\lambda)$ bits to $m(\lambda)$ bits with error $2^{-e(\lambda)}$
for min-entropy $k(\lambda)$.
\end{theorem}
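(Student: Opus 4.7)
The plan is to reduce to Theorem~1 (existence of a post-quantum puncturable PRF) and then apply the Leftover Hash Lemma to get the extraction property essentially for free. Specifically, let $F' : \{0,1\}^{n(\lambda)} \to \{0,1\}^{m(\lambda)}$ be a post-quantum puncturable PRF supplied by Theorem~1, with key generation $\kgen_{F'}$ and puncturing algorithm $\puncture_{F'}$. Let $\{h_s : \{0,1\}^{n(\lambda)} \to \{0,1\}^{m(\lambda)}\}_{s \in \mathcal{S}_\lambda}$ be an (unconditionally existing) family of pairwise-independent hash functions, e.g.\ random affine maps over $\mathbb{F}_2$. Define the extracting PRF $F$ to have key space $\mathcal{S}_\lambda \times \mathcal{K}_\lambda$, with $\kgen_F$ sampling $s \gets \mathcal{S}_\lambda$ and $K \gets \kgen_{F'}(1^\lambda)$, and evaluation $F((s,K), x) := h_s(x) \oplus F'(K, x)$. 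Puncturing is defined coordinatewise on $F'$, namely $\puncture_F((s,K), S) := (s, \puncture_{F'}(K, S))$.

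Next I would verify each of the three required properties. Functionality preservation under puncturing at a point $x \notin S$ is immediate since $h_s(x)$ is computable from $s$ and $F'(K,x) = F'(K_S, x)$ by the corresponding property of $F'$. For pseudorandomness at a punctured point $x^* \in S$: given $(s, K_S, x^*)$, the mask $h_s(x^*)$ is efficiently computable, so a QPT distinguisher between $h_s(x^*) \oplus F'(K, x^*)$ and uniform would, by a standard one-step reduction that XORs in $h_s(x^*)$, yield a QPT distinguisher between $F'(K, x^*)$ and uniform given the punctured key $K_S$, contradicting the post-quantum puncturability of $F'$.

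For the extraction property, fix any distribution $X$ on $\{0,1\}^{n(\lambda)}$ of min-entropy at least $k(\lambda) \geq m(\lambda) + 2e(\lambda) + 2$. By the Leftover Hash Lemma applied to the universal family $\{h_s\}$, the joint distribution $(s, h_s(X))$ is within statistical distance $2^{-e(\lambda)}$ of $(s, U_m)$, where $U_m$ is uniform on $\{0,1\}^{m(\lambda)}$ and independent of $s$. Since $K$ is sampled independently of $(s, X)$, tensoring with $K$ preserves the bound. Finally, applying the deterministic, statistical-distance-nonincreasing map
\[
(s, K, z) \;\longmapsto\; \bigl((s,K),\; z \oplus F'(K, X)\bigr)
\]
sends $(s, K, h_s(X))$ to $((s,K), F((s,K), X))$ and $(s, K, U_m)$ to $((s,K), U_m)$ (using that $U_m \oplus F'(K, X)$ is uniform and independent of $(s,K)$), so we conclude that $((s,K), F((s,K), X))$ is within statistical distance $2^{-e(\lambda)}$ of $((s,K), U_m)$, as required.

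There is no genuine obstacle here: the construction is a textbook composition of a puncturable PRF with a universal-hash extractor, and the only ingredient that has to be post-quantum is $F'$, which Theorem~1 provides from post-quantum one-way functions. The only mild care needed is to confirm that the reduction for pseudorandomness at the punctured point is black-box and does not disturb the puncturing structure, which it does not since $h_s$ is a public, key-independent object that the reduction can simulate itself.
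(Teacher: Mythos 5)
Your construction $F((s,K),x)=h_s(x)\oplus F'(K,x)$ is sound, and since the paper does not actually prove this theorem---it imports it from~\cite{sahai2014use}---I am judging your argument on its own. The functionality-preservation and punctured-point-pseudorandomness reductions are fine, but the final step of your extraction argument is not, and the gap is genuine. The appeal to a data-processing inequality with the ``map'' $(s,K,z)\mapsto\bigl((s,K),z\oplus F'(K,X)\bigr)$ is illegal: this is not a function of $(s,K,z)$, because it references $X$, a random variable correlated with $z=h_s(X)$ that is no longer visible in the triple you are transporting. Making $X$ visible does not help either: the tuples $(s,K,X,h_s(X))$ and $(s,K,X,U_m)$ are at statistical distance essentially $1$, since in the first $z$ is determined by $(s,X)$. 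And carrying $F'(K,X)$ as side information in a conditional Leftover Hash Lemma loses up to $m$ bits of conditional min-entropy to $F'(K,X)$, which only yields error around $2^{-(k-2m)/2}$ and would force $k\geq 2m+2e+O(1)$, not the claimed $k\geq m+2e+2$.

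The fix is short and hits your parameters exactly: for each fixed $K$, the family $\{h'_{s,K}\}_{s}$ with $h'_{s,K}(x):=h_s(x)\oplus F'(K,x)$ is itself pairwise independent over $s$, since for $x\neq y$ the event $h'_{s,K}(x)=h'_{s,K}(y)$ is equivalent to $h_s(x)\oplus h_s(y)=F'(K,x)\oplus F'(K,y)$, and for a random affine $h_s$ the left-hand side is uniform on $\{0,1\}^m$ over $s$, so the collision probability is $2^{-m}$. Applying the Leftover Hash Lemma directly to $\{h'_{s,K}\}_s$ gives $\bigl(s,F((s,K),X)\bigr)$ within $\tfrac12\, 2^{-(k-m)/2}\leq 2^{-(e+2)}\leq 2^{-e}$ of $(s,U_m)$ for each fixed $K$, and averaging over $K$, which is independent of $(s,X)$, finishes. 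Your construction and parameter choice are correct; the proof of the extraction property must go via the pairwise independence of the $K$-shifted family rather than via deterministic postprocessing of the LHL conclusion.
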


\subsection{Construction}
In this section, we describe a construction of a copy-protection scheme for a class of PRFs. We will eventually reduce security of this construction to security of the single-decryptor encryption scheme of Section \ref{sec: unclonable dec from stronger monogamy}, and we will therefore inherit the same assumptions. A similar construction can be based on the single-decryptor encryption scheme of Section \ref{sec: unclonable dec witness enc}. %

Let $\lambda$ be the security parameter. Our construction copy-protects a PRF $F_1: [K_{\lambda}] \times  [N_{\lambda}] \rightarrow [M_{\lambda}]$ where $N = 2^{n(\lambda)}$ and $M = 2^{m(\lambda)}$, for some polynomials $n(\lambda)$ and $m(\lambda)$, satisfying $n(\lambda) \geq m(\lambda) + 2 \lambda + 4$. For convenience, we will omit writing the dependence on $\lambda$, when it is clear from the context. Moreover, $F_1$ should be a puncturable extracting PRF with error $2^{-\lambda-1}$ for min-entropy $k(\lambda) = n(\lambda)$ (i.e., a uniform distribution over all possible inputs). %
By \Cref{sw14_thm3}, such PRFs exist assuming post-quantum one-way functions. 

In our construction, we will parse the input $x$ to $F_1(K_1, \cdot)$ as three substrings $x_0 || x_1 || x_2$, where each $x_i$ is of length $\ell_i$ for $i \in \{0, 1, 2\}$ and $n = \ell_0 + \ell_1 + \ell_2$. $\ell_2 - \ell_0$ should also be large enough (we will specify later how large). Our copy-protection construction for $F_1$ will make use of the following additional building blocks:

\begin{enumerate}

    \item A puncturable statistically injective PRF $F_2$ with failure probability $2^{-\lambda}$ that accepts inputs of length $\ell_2$ and outputs strings of length $\ell_1$. By \Cref{sw14_thm2}, such a PRF exists assuming one-way functions exist, and as long as $\ell_1 \geq 2 \ell_2 + \lambda$.

    \item A puncturable PRF $F_3$ that accepts inputs of length $\ell_1$ and outputs strings of length $\ell_2$. By \Cref{lem:sw14_lem1} in \cite{sahai2014use},  assuming one-way functions exist, $F_3$ is a puncturable PRF. 
\end{enumerate}

Note that PRF $F_1$ is the PRF that we will copy-protect. The PRFs $F_2$ and $F_3$ are just building blocks in the construction.

\revise{Next, we describe a copy-protection scheme for the PRF $F_1$, using the above building blocks. The description is contained in Figures \ref{fig:prf_key} and \ref{fig:program_p}.  %
}

\begin{figure}[hpt]
    \centering
    \begin{gamespec}
    \begin{description}

    \item[]{ $\keygen(K_1)$:} Sample uniformly random subspaces $A_i$ of dimension $\lambda/2$ %
    and vectors $s_i, s'_i$ for $i = 1, 2, \cdots, \ell_0$.
   Sample PRF keys $K_2, K_3$ for $F_2, F_3$. Let $P$ be the program described in Figure \ref{fig:program_p}. Output the quantum key $\rho_{K} = (\{\ket{A_{i, s_i, s'_i}} \}_{i \in [\ell_0]}, \iO(P))$. 
    
    \item[] {$\eval(\rho_K, x)$:} Let $\rho_{K} = (\{\ket{A_{i, s_i, s'_i}} \}_{i \in [\ell_0]}, \iO(P))$. Parse $x$ as $x = x_0 || x_1 || x_2$ where $x_0$ is of length $\ell_0$. For all $i\in [\ell_0]$, if $x_{0, i}$ is $1$, apply $H^{\otimes n}$ to $\ket{A_{i, s_i, s'_i}}$. Otherwise, leave the state unchanged. 
        
        Let $\sigma$ be the resulting state (which can be interpreted as a superposition over tuples of $l_0$ vectors). Run $\iO(P)$ coherently on input $x$ and $\sigma$, and measure the final output register to obtain $y$. 

    \end{description}
    \end{gamespec}
    \caption{Quantum copy-protection scheme for PRFs.}
    \label{fig:prf_key}
\end{figure}

The program $P$, described in Figure \ref{fig:program_p}, takes as input $x$ and $\ell_0$ vectors $v_1, \cdots, v_{\ell_0}$, and has two modes. 
If $x$ is not in the sparse hidden trigger set (not passing the `if' check in the first line), the program is in the \emph{normal mode}: it outputs a PRF evaluation of $x$ if and only if every $v_i$ is in the appropriate coset. Otherwise, the program is in the \emph{hidden trigger mode}. It will compute a circuit $Q'$ from the input $x$ and output $Q'(v_1, \cdots, v_{\ell_0})$.
On almost all inputs except a sparse set of hidden triggers, the program runs in its normal mode. For $i \in [l_0]$, define the programs $R^0_i = \iO(A_i+s_i)$ and $R^1_i = \iO(A^\perp_i + s'_i)$ (where the inputs to $\iO$ should be appropriately padded).

\begin{figure}[hpt]
\centering
\begin{gamespec}
\textbf{Hardcoded:} Keys $K_1, K_2, K_3$, $R^0_i, R^1_i$ for all $i \in [\ell_0]$. 

On input $x = x_0 || x_1 || x_2$ and vectors $v_1, \cdots, v_{\ell_0}$:  
\begin{enumerate}
\item If $F_3(K_3, x_1) \oplus x_2 = x_0' || Q'$  and $x_0 = x'_0$ and $x_1 = F_2(K_2, x'_0 || Q')$: 

    \quad \textbf{Hidden Trigger Mode}: Treat $Q'$ as a (classical) circuit and output $Q'(v_1, \cdots, v_{\ell_0})$. 
\item Otherwise, check if the following holds: for all $i \in [\ell_0]$, $R^{x_{0, i}}_i(v_i) = 1$ (where $x_{0, i}$ is the $i$-th bit of $x_0$).
    
    \quad \textbf{Normal Mode}: If so, output $F_1(K_1, x)$. Otherwise, output $\bot$. 
\end{enumerate}
\end{gamespec}
\caption{Program $P$}
\label{fig:program_p}
\end{figure}

We prove the following theorem:%

\revise{
\begin{theorem}
\label{thm:PRF_antipiracy}
\unpredictableassumptions, our construction satisfies 
anti-piracy security (as in \Cref{def:weak anti piracy}).

\subexpassumptions, our construction satisfies 
anti-piracy security.
\end{theorem}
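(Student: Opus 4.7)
The plan is to reduce anti-piracy security (Definition~\ref{def:weak anti piracy}) of the PRF copy-protection construction to random-challenge anti-piracy security of the single-decryptor encryption scheme of \Cref{sec: unclonable dec from stronger monogamy} (\Cref{thm:antipiracy_random_unclonable_dec1}). The match is natural: the single-decryptor scheme has exactly the coset-state structure $\{\ket{A_{i,s_i,s_i'}}\}_{i\in[\ell_0]}$ together with obfuscated coset-membership programs, which is precisely the data hardcoded into program $P$ as the $R_i^0, R_i^1$. The bits $u_0, w_0$ of the PRF challenges will play the role of the random branch strings $r$ in Construction~\ref{cons: unclonable dec}.

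First I would run a sequence of hybrids based on the hidden trigger technique of Sahai--Waters, closely following the template used in \cite{sahai2014use,hofheinz2016generate}. Starting from the real anti-piracy game, after the pirate outputs $(\sigma[R_1], U_1), (\sigma[R_2], U_2)$ and the challenger has sampled uniform $u, w \in \{0,1\}^n$: by statistical injectivity of $F_2$, pseudorandomness of $F_3$, and $\iO$-security, I can indistinguishably replace $u$ by a hidden-trigger input $u^\star = u_0 \| u_1 \| u_2$ encoding an arbitrary circuit $Q'_u$ of our choice, and likewise replace $w$ by $w^\star$ encoding $Q'_w$, while keeping $u_0, w_0$ uniform and independent. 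Simultaneously, using puncturable-PRF security, I puncture the PRF key $K_1$ at $\{u, w\}$ inside the hardcoded copy of $F_1$ in $P$ and replace $F_1(K_1, u), F_1(K_1, w)$ by freshly sampled uniform strings $z_u, z_w$; by $\iO$-security and the pseudorandomness-at-punctured-points property, the quantum key given to the pirate changes only negligibly in these hybrids.

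The point of all this rewiring is that we are now free to \emph{program} the behavior of $\iO(P)$ at $u^\star$ and $w^\star$. I would choose $Q'_u$ to be (a suitably padded obfuscation of) the program $P_{z_u, u_0}$ from \Cref{fig:program_P_mr} --- i.e.\ the program that, on input $(v_1,\ldots,v_{\ell_0})$, outputs $z_u$ iff $v_i \in A_i+s_i$ when $u_{0,i}=0$ and $v_i \in A_i^\perp + s_i'$ when $u_{0,i}=1$ --- and analogously for $Q'_w$ with $(z_w, w_0)$. In the final hybrid, the pirate's task is therefore exactly: given the single-decryptor quantum key $\{\ket{A_{i,s_i,s_i'}}\}_i$ together with the public key $\{\iO(A_i+s_i), \iO(A_i^\perp+s_i')\}_i$ (which can be extracted from $\iO(P)$ via a further $\iO$-hybrid that strips the hidden-trigger and normal branches down to what the pirate actually uses), produce two decryptors that simultaneously decrypt the independent ciphertexts $(\iO(P_{z_u, u_0}), u_0)$ and $(\iO(P_{z_w, w_0}), w_0)$ for uniformly random plaintexts $z_u, z_w$. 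Non-negligible success in the original anti-piracy game therefore yields non-negligible success in the random-challenge anti-piracy game of the single-decryptor scheme, contradicting \Cref{thm:antipiracy_random_unclonable_dec1}.

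The main obstacle will be the bookkeeping of the hidden-trigger hybrid chain: one has to puncture $F_2, F_3$ (and later $K_1$) at the precise points $u_1, w_1$ and $u_0\|Q'_u, w_0\|Q'_w$, argue that hidden triggers remain statistically sparse so that the ``if'' branch of $P$ is never accidentally taken on the relevant inputs, and verify functional equivalence between consecutive obfuscated programs so that $\iO$-security applies. A subtlety is that $u_0$ must be kept uniform throughout (it is revealed to the pirate through $x_0$ in $u^\star$), so the coset states for slots where $u_{0,i}=1$ must be Hadamard-transformed before extraction just as in honest evaluation; this is what makes the reduction output two ciphertexts whose branch strings $u_0, w_0$ are uniform and independent, matching the single-decryptor game exactly. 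Once these hybrids are in place, the reduction itself is black-box and the sub-exponential case follows by the same argument with sub-exponentially secure primitives throughout.
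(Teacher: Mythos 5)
Your high-level target and final reduction are the right ones: reduce to the \emph{random-challenge} anti-piracy game of the Construction~\ref{cons: unclonable dec} single-decryptor scheme, using the hidden-trigger technique so that the challenge inputs can be replaced by hidden-trigger encodings of the appropriate coset-membership circuits, with $u_0,w_0$ playing the role of the branch strings $r$. That is exactly how the paper structures the proof and the final reduction $\mathcal{B}$.

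There is, however, a genuine gap in the intermediate step. You propose, as a hybrid of the anti-piracy game, to ``puncture the PRF key $K_1$ at $\{u,w\}$ inside the hardcoded copy of $F_1$ in $P$.'' This is not available to you: in the anti-piracy game, the pirate receives $\iO(P)$ \emph{before} the challenger samples $u,w$, so at the time the program is committed one cannot puncture at points that do not yet exist. This is precisely the selective-vs-adaptive obstacle that the hidden-trigger technique is designed to circumvent, and the circumvention is not ``puncture $K_1$ after the fact''; it has to be packaged differently. The paper isolates the hidden-trigger manipulation into a standalone distinguishing game (\Cref{lem:hidden_trigger}), where $\rho_K$ and the challenge inputs are sampled \emph{together} by the challenger and handed to the adversary in one shot --- there it is legitimate to puncture $K_1,K_2,K_3$ inside $P$, and indeed that is where those puncturings happen. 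Back in the anti-piracy hybrids, the replacement of $F_1(K_1,u),F_1(K_1,w)$ by uniform $y_u,y_w$ is done \emph{without} touching $\iO(P)$ at all, via the extracting-PRF property: after the hidden-trigger swap the pirate only learns $u_0,w_0$, so $u_1\|u_2$ and $w_1\|w_2$ retain $\ell_1+\ell_2 \geq m+2\lambda+4$ bits of min-entropy, and the extracting property makes $(K_1,F_1(K_1,u))$ statistically close to $(K_1,\text{uniform})$ even when $K_1$ is fully known. Your argument should use this statistical step in place of the $K_1$-puncturing hybrid (or, equivalently, move all $K_1$-puncturing into the proof of the hidden-trigger indistinguishability lemma where the ordering permits it).

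A secondary issue is the direction of the final $\iO$ manipulation. You suggest a ``further $\iO$-hybrid that strips the hidden-trigger and normal branches down'' to expose the single-decryptor public key inside $\iO(P)$. This is backwards relative to how the reduction is actually run: the reduction $\mathcal{B}$ \emph{receives} $\rho_{\sf sk}$ and $\pk=\{\iO(A_i+s_i),\iO(A_i^\perp+s_i')\}_i$ from the single-decryptor challenger, then itself samples $K_1,K_2,K_3$, assembles the program $P$ using those $R_i^b$ as subroutines, and hands $\iO(P)$ to the pirate. No stripping hybrid is needed and none would be straightforward to justify, since $\iO$ gives no guarantee that a sub-circuit of a programs can be ``extracted''; the reduction must be constructed so that $\iO(P)$ is synthesized on top of the public key, not recovered from it.

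Once these two points are repaired --- use the extracting-PRF/min-entropy argument in place of puncturing $K_1$ at the challenges, and have $\mathcal{B}$ simulate $\iO(P)$ from the received public key --- your reduction to \Cref{thm:antipiracy_random_unclonable_dec1} goes through, and the sub-exponential case follows by substituting the sub-exponentially secure versions of the building blocks as you noted.
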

}

We show correctness of our construction in \Cref{sec: PRF correctness}, and anti-piracy security in \Cref{sec: PRF proof anti piracy}.

 The following theorem states that our construction also satisfies \Cref{def:indistinguishable_anti_piracy_PRF}.

\begin{theorem}
\label{thm:PRF_indistinguishable_antipiracy}
\unpredictableassumptions, our construction satisfies   indistinguishability-based anti-piracy security (as in Definition \ref{def:indistinguishable_anti_piracy_PRF}).

\subexpassumptions, our construction satisfies  indistinguishability-based anti-piracy security. 
\end{theorem}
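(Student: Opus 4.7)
The plan is to reduce indistinguishability anti-piracy of the PRF copy-protection scheme to the CPA-style anti-piracy security of the single-decryptor encryption scheme of Construction \ref{cons: unclonable dec}, established in Corollary \ref{thm:antipiracy_unclonable_dec1}. The reduction closely parallels the proof of Theorem \ref{thm:PRF_antipiracy}, but targets the indistinguishability flavor of anti-piracy and exploits more heavily the hidden-trigger machinery \cite{sahai2014use} baked into the program $P$ of Figure \ref{fig:program_p}. The observation that makes the reduction natural is that the coset states $\{\ket{A_{i,s_i,s_i'}}\}_{i\in[\ell_0]}$ used by the copy-protection scheme are exactly the quantum secret key of the single-decryptor encryption scheme, and the subspace-membership programs $R_i^0, R_i^1$ hardcoded in $P$ coincide with the public key $\pk$ there.

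Given a QPT adversary $\As$ for the indistinguishability anti-piracy game, I would build a triple $(\Bs_0, \Bs_1, \Bs_2)$ breaking CPA-style anti-piracy of Construction \ref{cons: unclonable dec}. $\Bs_0$ receives the single-decryptor $\pk$ and quantum key $\qsk$, samples PRF keys $K_1, K_2, K_3$ for $F_1, F_2, F_3$, assembles the program $P$ of Figure \ref{fig:program_p} plugging in the $\iO(A_i+s_i), \iO(A_i^\perp+s_i')$ from $\pk$ as $R_i^0, R_i^1$, and hands $\rho_K = (\qsk, \iO(P))$ to $\As$. It receives back $(\sigma, U_1, U_2)$, picks uniformly $u, w \gets [N]$ and $y_1, y_2 \gets [M]$, and submits to the single-decryptor challenger a pair of messages $(m_0, m_1)$ encoding the indistinguishability challenge format: roughly, $m_0 = (u, F_1(K_1,u), w, F_1(K_1,w))$ and $m_1$ differs by swapping real PRF values for the random $y_i$. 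It then forwards $(\sigma, U_1, U_2)$, along with wrapper circuits that translate single-decryptor ciphertexts into the hidden-trigger inputs expected by the copy-protected decryptors, as the two quantum decryptors $\D_1, \D_2$.

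The heart of the proof is a sequence of hybrids, applied one side at a time, that show $\As$'s view when it sees a genuine PRF challenge $(u, z_1)$ is computationally indistinguishable from its view when it instead sees a hidden-trigger input $x^*$ of the form $x^*_0 \| x^*_1 \| x^*_2$ with $x^*_0 = u_0$, $x^*_1 = F_2(K_2, u_0 \| Q^*)$ and $x^*_2 = F_3(K_3, x^*_1) \oplus (u_0 \| Q^*)$, where $Q^*$ is a circuit that, on vectors $(v_1, \ldots, v_{\ell_0})$ satisfying the coset checks dictated by $u_0$, outputs $z_1 \stackrel{?}{=} F_1(K_1, u)$ (i.e.\ the bit the adversary must guess). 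Each transition is the standard Sahai-Waters hidden-trigger argument: puncture $K_2$ and $K_3$ at the relevant points, invoke iO security and pseudorandomness at punctured points to switch between normal-mode and hidden-trigger-mode execution without observable change. The same replacement is performed independently for the second input $w$, relying on the fact that puncturing at $u$ and $w$ can be combined. Once both challenges are hidden-trigger inputs carrying ciphertexts of $m_{b_1}$ and $m_{b_2}$, winning the indistinguishability anti-piracy game translates bit-for-bit into winning the single-decryptor CPA-style anti-piracy game, contradicting Corollary \ref{thm:antipiracy_unclonable_dec1}.

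The main technical obstacle is two-fold. First, one must thread the hidden-trigger hybrids carefully because $\As$'s bipartite state may be entangled, so substituting the input on the first side must not spoil the statistics on the second; this is handled by the fact that the hidden-trigger hybrids are distinguishability statements about classical inputs and the iO/PRF security absorbs any entanglement on the auxiliary side uniformly. Second, one needs the extracting property of $F_1$ (together with $n \geq m + 2\lambda + 4$) so that the real PRF value $F_1(K_1, u)$ is statistically close to random at freshly sampled $u$; this is precisely why the construction insists on using an extracting puncturable PRF for $F_1$, and it is what lets the translation between the PRF indistinguishability challenge and the CPA-style single-decryptor challenge be lossless up to negligible terms.
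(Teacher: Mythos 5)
Your high-level plan is right: reduce the indistinguishability anti-piracy game for the copy-protected PRF to CPA-style anti-piracy of the single-decryptor scheme of Construction \ref{cons: unclonable dec}, using the hidden-trigger hybrids and the extracting property of $F_1$ to pass to a world where the PRF challenge is a hidden-trigger input carrying a single-decryptor ciphertext. That is exactly the architecture the paper follows (the paper factors the hidden-trigger step through \Cref{lem:hidden_trigger}, while you propose redoing the puncturing hybrids inline; that is a stylistic difference).

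The problem is in the final reduction. First, the message pair you submit to the single-decryptor challenger packs \emph{both} challenge sides: $m_0 = (u, F_1(K_1,u), w, F_1(K_1,w))$, $m_1 = (u,y_1,w,y_2)$. This breaks the hidden-trigger embedding at a basic typing level. The hidden-trigger input $u'$ must carry a program $Q'$ such that $F(K,u') = Q'(\text{good vectors})$ equals precisely the value against which the PRF decryptor compares $z_1$; but if $Q'$ is a single-decryptor ciphertext of the packed 4-tuple, then $F(K,u')$ is a 4-tuple and $z_1 \in [M]$ is a single value, so the comparison never succeeds and the PRF game degenerates. In the paper's reduction the two sides get separate message pairs, and each pair consists simply of two fresh uniformly random strings $(y_u, y_u')$ and $(y_w,y_w')$ in $[M]$; the ciphertext encrypts one of the two, and the hidden-trigger's normal-mode value $F(K,u')$ is that random string, matching the PRF game.

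Second, your description of $Q^*$ as a circuit that ``outputs $z_1 \stackrel{?}{=} F_1(K_1,u)$ (i.e., the bit the adversary must guess)'' does not match the semantics of either game: the single-decryptor message is not the PRF adversary's answer bit, and the hidden-trigger circuit must output a plaintext in $[M]$, not a Boolean. What is missing is the explicit translation layer between the PRF coin and the encryption coin: the paper samples a fresh coin $\delta_i$ deciding which of the two random strings to hand the PRF decryptor as $z$, and then XORs the PRF decryptor's guess with $\delta_i$ to obtain a guess for the single-decryptor bit $b_i$. (One can check that fixing $\delta_i=0$ also works, but you need to say \emph{some} consistent rule and verify that the induced joint distribution on $(u', z, F(K,u'))$ matches the distribution in the hybrid where the PRF challenge is a hidden-trigger input with $F_1(K_1,u)$ already swapped for a fresh random $y_u$; your proposal is silent on this point.) Without the correct pair-of-messages structure and the explicit guess-translation, the reduction is not ``bit-for-bit'' as claimed, and the proof does not go through as written.
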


We include the proof of the latter theorem in \Cref{sec:indistinguishable_anti_piracy_PRF}.

\subsection{Proof of Correctness}
\label{sec: PRF correctness}

First, it is easy to see that all procedures are efficient. 
We  then show that our construction satisfies correctness.

\begin{lemma}
    The above construction has correctness.%
\end{lemma}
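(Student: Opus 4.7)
The plan is to reduce correctness to two essentially independent facts: (a) the post-Hadamard state has support contained in the correct product of cosets, so the \emph{normal mode} of $P$ always returns $F_1(K_1,x)$; and (b) for any fixed input $x$, the probability over $\keygen$'s randomness that $x$ accidentally satisfies the \emph{hidden trigger} check is negligible.

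First, I would analyze the state $\sigma$ prepared in $\eval(\rho_K,x)$. Using the identity $H^{\otimes\lambda}\ket{A_{i,s_i,s'_i}}=\ket{A^\perp_{i,s'_i,s_i}}$ (which is immediate from the formula in \Cref{sec: affine subspaces defs}), the state $\sigma$ is a tensor product over $i\in[\ell_0]$ whose $i$-th tensor factor is supported on $A_i+s_i$ if $x_{0,i}=0$ and on $A_i^\perp+s'_i$ if $x_{0,i}=1$. Therefore, for every basis vector $(v_1,\dots,v_{\ell_0})$ in the support of $\sigma$, the membership check $R_i^{x_{0,i}}(v_i)=1$ in the normal mode of $P$ succeeds for all $i$ (using perfect correctness of $\iO$ and of the membership programs). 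Hence, \emph{conditioned on $x$ not triggering the hidden-trigger branch}, running $\iO(P)$ coherently on $(x,\sigma)$ computes the map $|x\rangle|\sigma\rangle|0\rangle\mapsto |x\rangle|\sigma\rangle|F_1(K_1,x)\rangle$ exactly, and measuring the output register deterministically yields $F_1(K_1,x)$.

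The main step, and where most of the work lies, is to bound the probability (over the fresh keys $K_2,K_3$ sampled in $\keygen$) that a fixed input $x=x_0\|x_1\|x_2$ satisfies the hidden-trigger conditions
\[
F_3(K_3,x_1)\oplus x_2 = x_0'\|Q',\qquad x_0=x_0',\qquad x_1=F_2(K_2,x_0'\|Q').
\]
I would argue in two steps using the PRF security of $F_3$ and $F_2$. Replacing $F_3(K_3,\cdot)$ by a truly random function changes the probability by at most $\negl(\lambda)$; under a truly random $F_3$, the first $\ell_0$ bits of $F_3(K_3,x_1)\oplus x_2$ equal the fixed string $x_0$ with probability exactly $2^{-\ell_0}$, which fixes $Q'$. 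Conditioned on this, replacing $F_2(K_2,\cdot)$ by a truly random function again changes the probability by at most $\negl(\lambda)$, and the probability that the random value equals $x_1$ is $2^{-\ell_1}$. The total probability of triggering the hidden mode on the fixed input $x$ is therefore at most $2^{-\ell_0-\ell_1}+\negl(\lambda)=\negl(\lambda)$.

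Combining the two facts by a union bound yields $\Pr[\eval(\rho_K,x)=F_1(K_1,x)]\geq 1-\negl(\lambda)$ for every fixed $x$, as required. The only subtlety to be careful about is to make sure that the PRF-security reductions are applied to \emph{fixed} $x$ (rather than attempting a union bound over exponentially many inputs), which is legitimate because the correctness definition quantifies the probability per input.
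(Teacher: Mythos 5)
Your proof is correct and essentially reproduces the paper's argument: both reduce correctness to showing that a fixed input $x$ rarely hits the hidden-trigger branch (via a PRF reduction keyed on that fixed $x$), with the normal-mode branch being exact on coset states. The only differences are cosmetic --- the paper invokes only the pseudorandomness of $F_3$ (truncated to its first $\ell_0$ bits, which is still a PRF), making your additional $F_2$ hybrid step unnecessary, and it spells out what your final remark only alludes to: since the distinguisher receives the worst-case $x$ as advice, the PRF security being used is inherently non-uniform, so correctness formally rests on non-uniformly secure post-quantum one-way functions.
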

\begin{comment}
\begin{proof}
    By Lemma \ref{lem:sw14_lem1}, for all $K_1, K_3$ and almost all except a negligible fraction of $K_2$, there are at most $2^{\ell_2}$ values of $x$ (out of  all $2^{\ell_0+\ell_1+\ell_2}$ $x$ values) that can cause the step 1 check to be satisfied. 
     %
    
    For those inputs $x$ that do not cause the step 1 check to pass, $\eval(\rho_K, x)$ will output $F_1(K_1, x)$ with probability $1$.  Thus, correctness is satisfied.
\end{proof}
\end{comment}

\begin{proof}
First, we observe that for an input $x$, keys $K_2, K_2$, if the step 1 check in the program $P$ is not met, then the output of $\eval(\rho_K, x)$ will be the same as $F_1(K_1, \cdot)$ with probability $1$.

Therefore, let us assume there exists a fixed input $x^* = x_0^* || x_1^* || x_2^*$ such that for an inverse polynomial fraction of possible keys $K_2, K_3$, the step 1 check is passed. Define $\hat{x}_2^*$ be the first $\ell_0$ bits of $x_2^*$ and $\hat{F}_3(K_3, \cdot)$ be the function that outputs the first $\ell_0$ bits of $F_3(K_3, \cdot)$. $\hat{F}_3$ is a PRF because it is a truncation of another PRF $F_3$. To pass the step 1 check, $(x^*_0, x^*_1, \hat{x}^*_2)$ should at least satisfy: 
\begin{align*}
    \hat{F}_3(K_3, x^*_1) \oplus x^*_0 = \hat{x}^*_2. 
\end{align*}

Thus, for an inverse polynomial fraction of $K_3$, the above equation holds. This gives a non-uniform algorithm for breaking the security of $\hat{F}_3$ and violates the security of $F_3$ as a consequence: given oracle access to $\hat{F}_3(K_3, \cdot)$ for a random $K_3$, or a truly random function $f(\cdot)$, the algorithm simply queries on $x^*_1$ and checks if the output is $x^*_0 \oplus \hat{x}^*_2$; if yes, it outputs $1$ (indicating the function is $\hat{F}_3(K_3,\cdot))$; otherwise, it outputs $0$ (indicating the function is a truly random funtion).  Since the above equation holds for an inverse polynomial fraction of $K_3$, our non-uniform algorithm succeeds with an inverse polynomial probability. 

Since non-uniform security of PRFs can be based on non-uniform security of OWFs, the correctness of our construction relies on the existence of non-uniform secure post-quantum OWFs. 
\end{proof}

\subsection{Proof of Anti-Piracy Security}
\label{sec: PRF proof anti piracy}

In this subsection, we prove the anti-piracy security.  
Before proving anti-piracy, we give the following helper lemma from \cite{sahai2014use}. 
\begin{lemma}[Lemma 1 in \cite{sahai2014use}] \label{lem:sw14_lem1}
    Except with negligible probability over the choice of the key $K_2$, the following two statements hold: 
    \begin{enumerate}
        \item For any fixed $x_1$, there exists at most one pair $(x_0, x_2)$ that will cause the step 1 check in Program $P$ to pass. 
        \item There are at most $2^{\ell_2}$ values of $x$ that can cause the step 1 check to pass. 
    \end{enumerate}
\end{lemma}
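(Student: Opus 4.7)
The plan is to reduce both parts of the lemma to the statistical injectivity of $F_2$. Recall that $F_2$ is a puncturable statistically injective PRF with failure probability $2^{-\lambda}$, mapping $\ell_2$-bit inputs to $\ell_1$-bit outputs. Thus, except with probability $2^{-\lambda}$ over the choice of $K_2$, the function $F_2(K_2, \cdot)$ is injective on $\{0,1\}^{\ell_2}$, and I will condition on this event throughout.

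For part~1, fix an arbitrary $x_1 \in \{0,1\}^{\ell_1}$ and suppose $(x_0, x_2)$ causes the step 1 check to pass. The check forces
\[
x_1 = F_2(K_2, x_0' \,\|\, Q'), \quad \text{where } x_0' \,\|\, Q' = F_3(K_3, x_1) \oplus x_2,
\]
together with $x_0 = x_0'$. Since $F_2(K_2, \cdot)$ is injective, at most one string $u \in \{0,1\}^{\ell_2}$ satisfies $F_2(K_2, u) = x_1$. Parsing $u = x_0' \,\|\, Q'$ then uniquely determines $x_0 = x_0'$ (from the second check) and uniquely determines $x_2 = F_3(K_3, x_1) \oplus u$ (since $x_1$ and $K_3$ are fixed). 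Hence at most one pair $(x_0, x_2)$ can pass the check for the fixed $x_1$.

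For part~2, I bound the total count by summing over admissible $x_1$. Every $x = x_0 \,\|\, x_1 \,\|\, x_2$ passing the check produces some preimage $u = x_0 \,\|\, Q'$ under $F_2(K_2, \cdot)$, so the set of admissible $x_1$ values is contained in the image of $F_2(K_2, \cdot)$, which has cardinality at most $2^{\ell_2}$ (the domain size). Combined with part~1, which gives at most one valid $(x_0, x_2)$ per admissible $x_1$, the total number of triples $x$ satisfying the step 1 check is at most $2^{\ell_2}$.

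The only substantive step is invoking statistical injectivity, which already holds with $1 - 2^{-\lambda}$ probability over $K_2$ by the construction of $F_2$ (Theorem~\ref{sw14_thm2} applied with the parameter choice $\ell_1 \geq 2\ell_2 + \lambda$). Everything else is direct bookkeeping of the step 1 check, so there is no real obstacle; the main thing to be careful about is confirming that the bit-length parsing $(x_0', Q')$ of $F_3(K_3, x_1) \oplus x_2$ matches the input length $\ell_2$ of $F_2$, which is built into the construction's parameter choices.
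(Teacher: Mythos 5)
The paper does not prove this lemma itself; it restates and cites it as Lemma~1 of \cite{sahai2014use}, so there is no in-paper argument to compare against. Your reconstruction is correct and matches the proof in the cited source: conditioning on the $1-2^{-\lambda}$-probability event that $F_2(K_2,\cdot)$ is injective (guaranteed by \Cref{sw14_thm2} with $\ell_1 \geq 2\ell_2 + \lambda$), injectivity gives at most one $F_2$-preimage $u$ of any fixed $x_1$, which in turn forces $x_0$ and $x_2$ uniquely via $x_0 = x_0'$ and $x_2 = F_3(K_3,x_1)\oplus u$, and the total count is then bounded by the image size of $F_2(K_2,\cdot)$, i.e.\ $2^{\ell_2}$; the length bookkeeping you flag also checks out since $F_3(K_3,x_1)\oplus x_2$ has length $\ell_2 = \ell_0 + (\ell_2 - \ell_0)$, precisely the input length of $F_2$.
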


The proof will exploit the sparse hidden triggers in the program $P$. Intuitively, we want to show that sampling a unifomly random input is indistinguishable from sampling an element from the sparse hidden trigger set. Then, we will reduce an adversary that successfully evaluates on hidden triggers to an adversary that breaks the single decryptor encryption scheme of Section \ref{sec:unclonable_dec}.

\begin{definition}[Hidden Trigger Inputs]
   An input $x$ is a hidden trigger input of the program $P$ (defined in \Cref{fig:program_p}) if it makes the step 1 check in the program be satisfied. 
\end{definition}

We will prove a lemma says that no efficient algorithm, given the quantum key, can distinguish between the following two cases: (i) sample two uniformly random inputs, and (ii) sample two inputs in the hidden trigger set.

Before describing the lemma, we describe an efficient procedure which takes as input an input/output pair for $F_1$, PRF keys $K_2, K_3$, descriptions of cosets, and produces a hidden trigger input.
\begin{definition}
\label{def: sample hidden triggers}
The procedure $\gentrigger$ takes as input $x_0$ (of length $\ell_0$), $y$ (of length $m$, where $m$ is the length of the output of $F_1$), two PRF keys $K_2, K_3$ and hidden cosets $\{A_i, s_i, s'_i\}_{i \in [\ell_0]}$: %
    \begin{enumerate} 
        \item Let $Q$ be the program (padded to length $\ell_2 - \ell_0$) that takes as input $v_1, \cdots, v_{\ell_0}$ and outputs $y$ if and only if for every input $v_i$, if $x_{0, i} = 0$, then $v_i$ is in $A_i + s_i$ and otherwise it is in $A^\perp_i + s'_i$. 
        \item $x'_1 \gets F_2(K_2, x_0 || Q)$;
        \item $x'_2 \gets F_3(K_3, x'_1) \oplus (x_0 ||Q)$.
        \item Output $x' = x_0||x'_1||x'_2$. 
    \end{enumerate}
\end{definition}
Note that for any $x_0, y$, $\gentrigger$ will produce an input $x'$ such that it starts with $x_0$ and the evaluation of $P$ on input $x'$ and valid vectors $v_1, \cdots, v_{\ell_0}$ is $y$.

\vspace{0.5em}

The following lemma says that any efficient algorithm cannot distinguish if it gets two inputs sampled  uniformly at random, or two hidden trigger inputs (sampled according to Definition \ref{def: sample hidden triggers}):
\begin{lemma}\label{lem:hidden_trigger}
    Assuming post-quantum $\iO$ and one-way functions, any efficient QPT algorithm $\As$ cannot win the following game with non-negligible advantage:
    \begin{itemize}
    \item  A challenger samples $K_1 \gets \setup(1^\lambda)$ and prepares a quantum key $\rho_K = (\{\ket{A_{i, s_i, s'_i}} \}_{i \in [\ell_0]}, \iO(P))$ (where recall that $P$ has keys $K_1, K_2, K_3$ hardcoded).

        \item The challenger then samples a random input $u \gets [N]$. Let $y_u = F_1(K_1, u)$. Parse the input as $u = u_0||u_1||u_2$. 
        
        Let $u' \gets \gentrigger(u_0, y_u, K_2, K_3, \{A_i, s_i, s'_i\}_{i \in [\ell_0]})$.
        
        \item Similarly, it samples a random input $w \gets [N]$. Let $y_w = F_1(K_1, w)$. Parse the input as $w = w_0||w_1||w_2$. 
        
        Let $w' \gets \gentrigger(w_0, y_w, K_2, K_3, \{A_i, s_i, s'_i\}_{i \in [\ell_0]})$.
        
        \item The challenger flips a coin $b$, and sends $(\rho_K, u, w)$ or $(\rho_K, u', w')$ to $\adv$, depending on the outcome. $\adv$ wins if it guesses $b$.
    \end{itemize}
\end{lemma}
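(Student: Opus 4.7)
\bigskip

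\noindent\textbf{Proof proposal.} The plan is a standard hidden-trigger hybrid argument, adapted from the indistinguishability obfuscation literature of \cite{sahai2014use,hofheinz2016generate} to the coset-state setting. We argue symmetrically on the two inputs $u$ and $w$; to keep the description readable I describe the transformation for $u$ only, since $w$ is handled identically (and in parallel in a single sequence of hybrids).

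\medskip

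\noindent\emph{The sequence of hybrids.} Starting from the ``trigger'' distribution, where $u' = u_0 \| u'_1 \| u'_2$ with $u'_1 = F_2(K_2, u_0\|Q_u)$ and $u'_2 = F_3(K_3, u'_1) \oplus (u_0\|Q_u)$, I would move to the uniform distribution via the following steps. (i) Puncture $K_2$ at the single point $u_0\|Q_u$ to obtain $K_2^\ast$, and modify $P$ so that on input $x$ with $x_1 = u'_1$ it uses the hardcoded value $u'_1$ for the step-1 comparison and otherwise calls $F_2(K_2^\ast,\cdot)$; by \Cref{lem:sw14_lem1}, $F_2$ is statistically injective with overwhelming probability over $K_2$, so this modified program is functionally equivalent to the original, and the switch is justified by post-quantum $\iO$ security. (ii) Invoke puncturable PRF security of $F_2$ at the punctured point $u_0\|Q_u$: the hardcoded value $u'_1$ can be replaced by a uniformly random string $r_1$; the step-1 check inside $P$ now compares $x_1$ against $r_1$. (iii) Now puncture $K_3$ at the point $r_1$, hardcode the value $F_3(K_3,r_1)$ into $P$, and again invoke $\iO$ security. (iv) Replace the hardcoded value $F_3(K_3,r_1)$ with a uniformly random string $r_2$ using puncturable PRF security of $F_3$. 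In this last hybrid, the trigger string presented to $\mathcal{A}$ has the form $u_0 \| r_1 \| (r_2 \oplus (u_0\|Q_u))$ with $r_1, r_2$ uniform and independent, so it is distributed identically to a uniform $u$. Finally I undo the changes to $P$ in reverse (unpuncturing and re-using the actual PRF keys), again using $\iO$ and PRF security, so that the program in the final hybrid is exactly the original $P$ given to $\mathcal{A}$ in the uniform-input game.

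\medskip

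\noindent\emph{Handling both inputs simultaneously.} Because $u$ and $w$ are sampled independently, I would puncture $K_2$ at both $u_0\|Q_u$ and $w_0\|Q_w$ at once (and symmetrically for $K_3$), yielding a single program $\tilde P$ with a constant number of hardcoded exceptions. The $\iO$ and PRF reductions go through with only a small constant-factor loss in advantage, and it suffices to check that the hardcoded exception points are pairwise distinct except with negligible probability --- this is exactly the content of \Cref{lem:sw14_lem1}, which guarantees that for each fixed $x_1$ there is at most one $x_0\|x_2$ triggering the step-1 check, so a collision between the two trigger points would imply a PRF collision and is negligible.

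\medskip

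\noindent\emph{What I expect to be the main obstacle.} The nontrivial point is ensuring functional equivalence of the successive programs across the $\iO$ hops; one must verify that after puncturing $K_2$ at $u_0\|Q_u$ and hardcoding $u'_1$, the step-1 branch of $P$ agrees on every input (including inputs unrelated to $u$) with the original program, and likewise for $K_3$. Functional equivalence for the $K_2$ step follows from statistical injectivity of $F_2$: on any $x$ with $x_1 \neq u'_1$, the punctured program's comparison using $F_2(K_2^\ast,\cdot)$ agrees with $F_2(K_2,\cdot)$, while on $x$ with $x_1 = u'_1$, injectivity forces $x_0\|Q' = u_0\|Q_u$, so the hardcoded value gives the correct answer. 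A parallel (and slightly subtler) argument handles the $K_3$ step, where one must use that the value $r_1$ is fresh and, with overwhelming probability, not equal to $F_2(K_2^\ast,\cdot)$ on any other $x$. Once these functional-equivalence facts are in place, the rest of the argument is a routine bookkeeping of $\iO$ and puncturable-PRF advantages, which accumulate to a negligible total by a standard hybrid bound, giving the lemma under post-quantum $\iO$ and post-quantum one-way functions.
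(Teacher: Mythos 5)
Your hybrid sequence has a real gap, and it is not a bookkeeping issue but a missing idea: you never decouple the trigger string $u'$ from the obfuscated program $P$, so your final hybrid is not information-theoretically closed and your ``undo'' step cannot repair it.

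Concretely, trace your hybrids. After steps (i)--(iv) you have a program $P$ whose step-1 branch is modified with punctured $K_2,K_3$ and \emph{hardcoded} exception values $r_1,r_2$, and you have a challenge string $u'=u_0\|r_1\|(r_2\oplus(u_0\|Q_u))$. The marginal of $u'$ is indeed uniform, but $u'_1=r_1$ and $u'_2\oplus(u_0\|Q_u)=r_2$ are exactly the values wired into $P$, so in the $b=1$ branch $(\iO(P),u')$ is a correlated pair whereas in the $b=0$ branch $(\iO(P),u)$ has $u_1,u_2$ independent of $P$'s internals. Indistinguishability of these two joints is not implied by $\iO$ security (which says nothing about hiding hardwired constants of a single program) nor by any PRF hop you have remaining. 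Your proposed fix --- ``undo the changes to $P$ in reverse'' --- does not work: to unpuncture $K_2$ (resp.~$K_3$) via an $\iO$ hop you would need the hardcoded values $r_1,r_2$ to coincide with $F_2(K_2,u_0\|Q_u)$ and $F_3(K_3,r_1)$, i.e.~you would have to first reverse the PRF replacements, which sends $u'$ straight back to being a hidden trigger and defeats the purpose. There is no ordering of $\iO$ and PRF hops that simultaneously keeps $u'$ uniform and removes $r_1,r_2$ from $P$.

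The paper's proof avoids this by an extra, essential first move that you do not have. Before touching $K_2,K_3$, it punctures $K_1$ at \emph{both} $u$ and $u'$ and adds to $P$ the branch ``if $x\in\{u,u'\}$, output $Q(\cdot)$'' (an $\iO$ hop justified because $Q$ replicates both the trigger behavior on $u'$ and the normal-mode behavior on $u$). It then uses puncturable-PRF security of $F_1$ to replace $y_u=F_1(K_1,u)$ by a fresh uniform $y$, so that $Q$ depends only on $u_0$ and $y$ and no longer on the rest of $u$. After the $K_3,K_2$ hops, the final program hardcodes the \emph{unordered set} $\{u,u'\}$ and a $Q$ independent of $u_1,u_2,u'_1,u'_2$, so the map $u\leftrightarrow u'$ is a symmetry of the entire joint distribution $(\iO(P),\text{challenge})$. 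That symmetry, not mere marginal uniformity of $u'$, is what makes $b$ perfectly hidden in the final hybrid. Your proposal omits both the $K_1$ puncturing/symmetrization and the randomization of $y_u$, and without them the argument does not close.

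A secondary, smaller remark: when you describe puncturing $K_2$ by ``on input $x$ with $x_1=u'_1$'', the correct condition is on the argument $x_0'\|Q'$ fed to $F_2$, not on $x_1$; injectivity of $F_2$ is what lets you collapse the two, and you should phrase the modified program so that functional equivalence is manifest before invoking $\iO$.
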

One might wonder whether it is sufficient to just show a version of the above lemma which says that any efficient algorithm cannot distinguish if it gets \emph{one} uniformly random input or \emph{one} random hidden trigger input, and use a hybrid argument to show indistinguishability in the case of two samples. However, this is not the case, as one cannot efficiently sample a random hidden trigger input when given only the public information in the security game (in particular $\gentrigger$ requires knowing $K_2, K_3$), and so the typical reduction would not go through.

\vspace{1em}

Next, we show that if \Cref{lem:hidden_trigger} holds, then our construction satisfies anti-piracy security \Cref{thm:PRF_antipiracy}. 
After this, to finish the proof, we will only need to prove \Cref{lem:hidden_trigger}. The core of the latter proof is the ``hidden trigger'' technique used in~\cite{sahai2014use}, which we will prove in \Cref{sec: PRF hidden trigger}. %

\iffalse
\begin{lemma}
    Assuming post-quantum $\iO$, one-way functions and the single-decryptor encryption scheme from the strong monogamy property, the above construction has anti-piracy security.
\end{lemma}
\fi

%

%
%

%
\begin{proof}[Proof for \Cref{thm:PRF_antipiracy}]
  We mark the changes between the current hybrid and the previous {\color{red} in red}.

     \paragraph{Hybrid 0.} 
     Hybrid 0 is the original anti-piracy security game.
     
    \begin{enumerate}
        \item  The challenger samples $K_1 \gets \setup(1^\lambda)$ and $\rho_K = (\{\ket{A_{i, s_i, s'_i}} \}_{i \in [\ell_0]}, \iO(P)) \gets \keygen(K_1)$. Note that here $P$ hardcodes $K_1, K_2, K_3$.

        \item $\As$ upon receiving $\rho_K$, it runs and prepares a pair of (potentially entangled) quantum states $\sigma[R_1], \sigma[R_2]$ as well as quantum circuits $U_1, U_2$. 
        \item The challenger also prepares two inputs $u, w$ as follows:
            \begin{itemize}
                \item It samples $u$ uniformly at random. 
                Let $y_u  = F_1(K_1, u)$.

                \item It samples $w$ uniformly at random.
                Let $y_w = F_1(K_1, w)$.

            \end{itemize}
        \item The outcome of the game is 1 if and only if both quantum programs successfully produce $y_u$ and $y_w$ respectively. 
    \end{enumerate}

    \paragraph{Hybrid 1} 
    The difference between Hybrids 0 and 1 corresponds exactly to the two cases that the adversary needs to distinguish between in the game of \Cref{lem:hidden_trigger}.

    \begin{enumerate}
        \item  The challenger samples $K_1 \gets \setup(1^\lambda)$ and $\rho_K = (\{\ket{A_{i, s_i, s'_i}} \}_{i \in [\ell_0]}, \iO(P)) \gets \keygen(K_1)$. Note that here $P$ hardcodes $K_1, K_2, K_3$.

        \item $\As$ upon receiving $\rho_K$, it runs and prepares a pair of (potentially entangled) quantum states $\sigma[R_1], \sigma[R_2]$ as well as quantum circuits $U_1, U_2$. 
        \item The challenger also prepares two inputs $u', w'$ as follows:
            \begin{itemize}
                \item It samples $u = u_0||u_1||u_2$ uniformly at random. 
                Let $y_u  = F_1(K_1, u)$. 
                
                {\color{red}
                Let $u' \gets \gentrigger(u_0, y_u, K_2, K_3, \{A_i, s_i, s'_i\}_{i \in [\ell_0]})$. }

                \item It samples $w = w_0||w_1||w_2$ uniformly at random.
                Let $y_w = F_1(K_1, w)$. 
                
                {\color{red}
                Let $w' \gets \gentrigger(w_0, y_w, K_2, K_3, \{A_i, s_i, s'_i\}_{i \in [\ell_0]})$. }

            \end{itemize}
        \item The outcome of the game is 1 if and only if both quantum programs successfully produce $y_u$ and $y_w$ respectively. 
    \end{enumerate}

 Assume there exists an algorithm that  distinguishes Hybrid 0 and 1 with \emph{non-negligible} probability $\epsilon(\lambda)$, then these exists
      an algorithm that breaks the game in \Cref{lem:hidden_trigger} with probability $\epsilon(\lambda) - \negl(\lambda)$.

     The reduction algorithm receives $\rho_k$ and $u, w$ or $u', w'$ from the challenger in \Cref{lem:hidden_trigger}; it computes $y_u, y_w$ using $\iO(P)$ on the received inputs respectively and gives them to the quantum decryptor states $\sigma[R_1], \sigma[R_2]$. If they both decrypt correctly, then the reduction outputs 0 (i.e. it guess that sampling was uniform), otherwise it outputs 1 (i.e. it guesses that hidden trigger inputs were sampled).

    \paragraph{Hybrid 2.}  In this hybrid, if $u_0 \ne w_0$ (which happens with overwhelming probability),  $F_1(K_1, u)$ and $F_1(K_1, w)$ are replaced with truly random strings. Since both inputs have enough min-entropy $\ell_1 + \ell_2 \geq m + 2\lambda  + 4$ (as $u_1||u_2$ and $w_1||w_2$ are completely uniform and not given to the adversary) and $F_1$ is an extracting puncturable PRF, both outcomes $y_u, y_w$ are statistically close to independently random outcomes.
    Thus, Hybrid 1 and Hybrid 2 are statistically close. 
    
    \begin{enumerate}
        \item  The challenger samples $K_1 \gets \setup(1^\lambda)$ and $\rho_K = (\{\ket{A_{i, s_i, s'_i}} \}_{i \in [\ell_0]}, \iO(P)) \gets \keygen(K_1)$. Note that here $P$ hardcodes $K_1, K_2, K_3$. 

        \item $\As$ upon receiving $\rho_K$, it runs and prepares a pair of (potentially entangled) quantum states $\sigma[R_1], \sigma[R_2]$ as well as quantum circuits $U_1, U_2$. 
        \item The challenger also prepares two inputs $u', w'$ as follows:
            \begin{itemize}
                \item It samples {\color{red} $u_0$ uniformly at random}. It then samples {\color{red} a uniformly random $y_u$}. 
                
                Let $u' \gets \gentrigger(u_0, y_u, K_2, K_3, \{A_i, s_i, s'_i\}_{i \in [\ell_0]})$.

                \item It samples {\color{red} $w_0$ uniformly at random}. It then samples {\color{red} a uniformly random $y_w$}. 
                
                Let $w' \gets \gentrigger(w_0, y_w, K_2, K_3, \{A_i, s_i, s'_i\}_{i \in [\ell_0]})$. 

            \end{itemize}
        \item The outcome of the game is 1 if and only if both quantum programs successfully produce $y_u$ and $y_w$ respectively. 
    \end{enumerate}

        \paragraph{Hybrid 3.} The game in this hybrid has exactly the same distribution as that of Hybrid 2 (in the sense that all sampled values are distributed identically). We only change the order in which some values are sampled, and recognize that certain procedures become identical to encryptions in our single-decryptor encryption scheme from Section \ref{sec:unclonable_dec}.
        Thus, $\As$ wins the game with the same probability as in Hybrid 2. 

        \begin{enumerate}
            \item {\color{red} The challenger first samples $\{A_i, s_i, s'_i\}_{i \in [\ell_0]}$ and prepares the quantum states $\{\ket{A_{i, s_i, s'_i}} \}_{i \in [\ell_0]}$}. It treat the the quantum states $\{\ket{A_{i, s_i, s'_i}} \}_{i \in [\ell_0]}$ as the quantum decryption key $\rho_{\sk}$ for our single-decryptor encryption scheme and the secret key $\sk$ is $\{A_i, s_i, s'_i\}_{i \in [\ell_0]}$.  Similarly, let $\pk = \{R^0_i, R^1_i\}_{i \in [\ell_0]}$ where $R^0_i = \iO(A_i+s_i)$ and $R^1_i = \iO(A^\perp_i + s'_i)$. 
        
            \item  It samples $y_u, y_w$  uniformly at random. {\color{red} Let $(u_0, Q_0) \gets \enc(\pk, y_u)$ and $(w_0, Q_1) \gets \enc(\pk, y_w)$ where $\enc(\pk, \cdot)$ is the encryption algorithm of the single-decryptor encryption scheme of Construction \ref{cons: unclonable dec}}. 
            
            \item The challenger sets $\rho_K = (\{\ket{A_{i, s_i, s'_i}} \}_{i \in [\ell_0]}, \iO(P))$.
            
            \item $\As$ upon receiving $\rho_K$, it runs and prepares a pair of (potentially entangled) quantum states $\sigma[R_1], \sigma[R_2]$ as well as quantum circuits $U_1, U_2$. 
            \item The challenger also prepares two inputs $u', w'$ as follows (as $\gentrigger$ does):
                \begin{itemize}
                    \item 
                    Let $u_1' \gets F_2(K_2, u_0 || Q_0)$ and $u_2'\gets F_3(K_3, u_1') \oplus (u_0||Q_0)$. Let $u' = u_0 || u_1' || u_2'$. 
                    
                    \item 
                    Let $w_1' \gets F_2(K_2, w_0 || Q_1)$ and $w_2'\gets F_3(K_3, w_1') \oplus (w_0||Q_1)$. Let $w' = w_0 || w_1' || w_2'$. 
                \end{itemize}
            \item The outcome of the game is 1 if and only if both quantum programs successfully produce ${y_u}$ and ${y_w}$ respectively. 
        \end{enumerate}
        Note that the only differences of Hybrids 2 and 3 are the orders of executions. Namely, in Hybrid 3, $\{A_i, s_i, s'_i\}$ are sampled much earlier than when $\rho_k$ is prepared. Similarly, the obfuscation programs sampled in $\gentrigger$ are now sampled much earlier than sampling $u'$ and $w'$. We write Hybrid 3 in a way that is similar to the weak anti-piracy security game of the single-decryptor encryption scheme of Construction \ref{cons: unclonable dec}.
        
        \vspace{1em}
    
        Given an algorithm $\As$ that wins the game in Hybrid 3 with non-negligible probability $\gamma(\lambda)$, we can build another algorithm $\Bs$ that breaks the (regular) $\gamma$-anti-piracy security with random challenge plaintexts (see \Cref{def:weak_ag_random}) of the underlying single-decryptor encryption scheme. 
        \begin{itemize}
            \item $\Bs$ plays as the challenger in the game of Hybrid 3. 
            \item $\Bs$ receives $\rho_{\sk} = \{\ket{A_{i, s_i, s'_i}} \}_{i \in [\ell_0]}$ and $\pk = \{\iO(A_i + s_i), \iO(A^\perp_i + s'_i)\}_{i \in [\ell_0]}$ in the anti-piracy game of single-decryptor encryption.
            
            \item $\Bs$ prepares $K_1, K_2, K_3$ and the program $P$. Let $\rho_K = (\{\ket{A_{i, s_i, s'_i}} \}_{i \in [\ell_0]}, \iO(P))$.
            
            \item $\Bs$ gives $\rho_{K}$ to $\As$, and $\As$ prepares a pair of (potentially entangled) quantum states $\sigma[R_1], \sigma[R_2]$ as well as quantum circuits $U_1, U_2$.  

            \item $\Bs$ outputs the decryptors $(\sigma[R_1], \P_1)$ and $(\sigma[R_2], \P_2)$, where $\P_1$ and $\P_2$ are defined as follows: on input ($\rho_1,\ct_1 = (u_0 || Q_1))$ and  $(\rho_2,\ct_2 = (w_0 || Q_2))$ respectively (where $\ct_1$ and $\ct_2$ represent encryptions of random $y_u$ and $y_w$), $\P_1$ and $\P_2$ behave respectively as follows:
            \begin{itemize}
                    \item $\P_1$:
                   Let $u_1' \gets F_2(K_2, u_0 || Q_0)$ and $u_2'\gets F_3(K_3, u_1') \oplus (u_0||Q_0)$. Let $u' = u_0 || u_1' || u_2'$. Run $(\rho_1, U_1)$ on $u'$.
                    
                    \item $\P_2$:
                    Let $w_1' \gets F_2(K_2, w_0 || Q_1)$ and $w_2'\gets F_3(K_3, w_1') \oplus (w_0||Q_1)$. Let $w' = w_0 || w_1' || w_2'$. Run $(\rho_2, U_{2})$ on $w'$ respectively. 
                \end{itemize}  

\begin{comment}
            \anote{previous last bullet:}
            \item $\Bs$ outputs the programs $\P_1, \P_2$ defined as follows: on input $\ct_1 = (u_0, Q_1)$ and  $\ct_2 = (w_0, Q_2)$ respectively (which represent encryptions of random $(y_u, y_w)$), $\P_1$ and $\P_2$ behave respectively as follows:
            \begin{itemize}
                    \item 
                   Let $u_1' \gets F_2(K_2, u_0 || Q_0)$ and $u_2'\gets F_3(K_3, u_1') \oplus (u_0||Q_0)$. Let $u' = u_0 || u_1' || u_2'$. 
                    
                    \item 
                    Let $w_1' \gets F_2(K_2, w_0 || Q_1)$ and $w_2'\gets F_3(K_3, w_1') \oplus (w_0||Q_1)$. Let $w' = w_0 || w_1' || w_2'$. 
                \end{itemize}
            Both programs then run $(\sigma[R_1], U_{1})$ on $u'$ and $(\sigma[R_2], U_{2})$ on $w'$ respectively. 
\end{comment}
        \end{itemize}

        We know that whenever $\As$ succeeds in the game of Hyb 3, it outputs $y_u, y_w$ correctly. Thus, the programs prepared by $\Bs$ successfully decrypts encryptions of uniformly random plaintexts.  Thus, $\Bs$ breaks $\gamma$-anti-piracy security with random challenge plaintexts. 
\end{proof}

\printbibliography

\appendix

\section{Additional Preliminaries}
\iffalse
A quantum system $Q$ is defined over a finite set $B$ of classical states. In this work we will consider $B = \{0,1\}^n$. A \textbf{pure state} over $Q$ is a unit vector in $\mathbb{C}^{|B|}$, which assigns a complex number to each element in $B$. In other words, let $|\phi\rangle$ be a pure state in  $Q$, we can write $|\phi\rangle$ as:
\begin{equation*}
    |\phi\rangle = \sum_{x \in B} \alpha_x |x\rangle
\end{equation*}
where $\sum_{x \in B} |\alpha_x|^2 = 1$ and $\{|x\rangle\}_{x \in B}$ is called 
the ``\textbf{computational basis}'' of $\mathbb{C}^{|B|}$. The computational basis forms an orthonormal basis of $\mathbb{C}^{|B|}$. 

Given two quantum systems $Q_1$ over $B_1$ and $Q_2$ over $B_2$, we can define a \textbf{product} quantum system $Q_1 \otimes Q_2$ over the set $B_1 \times B_2$. Given $|\phi_1\rangle \in Q_1$ and $|\phi_2\rangle \in Q_2$, we can define the product state $|\phi_1\rangle \otimes |\phi_2\rangle \in Q_1 \otimes Q_2$. 

We say $|\phi\rangle \in Q_1 \otimes Q_2$ is \textbf{entangled} if there does not exist 
$|\phi_1\rangle \in Q_1$ and $|\phi_2\rangle \in Q_2$ such that $|\phi\rangle = |\phi_1\rangle \otimes |\phi_2\rangle$. For example, consider $B_1 = B_2 = \{0,1\}$
and $Q_1 = Q_2 = \mathbb{C}^2$, $|\phi\rangle = \frac{|00\rangle + |11\rangle}{\sqrt{2}}$ is entangled. Otherwise, we say $|\phi\rangle$ is un-entangled. 

A pure state $|\phi\rangle \in Q$ can be manipulated by a unitary transformation $U$. The resulting state $|\phi'\rangle = U |\phi\rangle$. 
\fi
%

\subsection{Quantum Computation and Information}
\label{appendix:quantum_info}
A quantum system $Q$ is defined over a finite set $B$ of classical states. In this work we will consider $B = \{0,1\}^n$. A \textbf{pure state} over $Q$ is a unit vector in $\mathbb{C}^{|B|}$, which assigns a complex number to each element in $B$. In other words, let $|\phi\rangle$ be a pure state in  $Q$, we can write $|\phi\rangle$ as:
\begin{equation*}
    |\phi\rangle = \sum_{x \in B} \alpha_x |x\rangle
\end{equation*}
where $\sum_{x \in B} |\alpha_x|^2 = 1$ and $\{|x\rangle\}_{x \in B}$ is called 
the ``\textbf{computational basis}'' of $\mathbb{C}^{|B|}$. The computational basis forms an orthonormal basis of $\mathbb{C}^{|B|}$.

Given two quantum systems $R_1$ over $B_1$ and $R_2$ over $B_2$, we can define a \textbf{product} quantum system $R_1 \otimes R_2$ over the set $B_1 \times B_2$. Given $|\phi_1\rangle \in R_1$ and $|\phi_2\rangle \in R_2$, we can define the product state $|\phi_1\rangle \otimes |\phi_2\rangle \in R_1 \otimes R_2$. 

We say $|\phi\rangle \in R_1 \otimes R_2$ is \textbf{entangled} if there does not exist 
$|\phi_1\rangle \in R_1$ and $|\phi_2\rangle \in R_2$ such that $|\phi\rangle = |\phi_1\rangle \otimes |\phi_2\rangle$. For example, consider $B_1 = B_2 = \{0,1\}$
and $R_1 = R_2 = \mathbb{C}^2$, $|\phi\rangle = \frac{|00\rangle + |11\rangle}{\sqrt{2}}$ is entangled. Otherwise, we say $|\phi\rangle$ is un-entangled. 

A mixed state is a collection of pure states $\ket{\phi_i}$ for $i\in [n]$, each with associated probability $p_i$, with the condition $p_i\in [0,1]$ and $\sum_{i=1}^n p_i = 1$. A mixed state can also be represented by the density matrix: $\rho:= \sum_{i=1}^n p_i \ket{\phi_i}\bra{\phi_i}$. 

\textbf{Partial Trace}. 
For two subsystems $R_1$ and $R_2$ making up the composite system described
by the density matrix $\rho$. The partial trace over the $R_2$ subsystem, denoted $\Tr_{R_2}$, is defined as
$\Tr_{R_2}[\rho] :=  \sum_j
(I_{R_1} \otimes \langle j|_{R_2}) \rho (I_{R_1} \otimes |j\rangle_{R_2})$. 
where $\{|j\rangle \}$ is any orthonormal basis for subsystem $R_2$.

For a quantum state $\sigma$ over two registers $R_1, R_2$, we denote the state in $R_1$ as $\sigma[R_1]$, where $\sigma[R_1]= \Tr_{R_2}[\sigma]$ is a partial trace of $\sigma$. Similarly, we denote $\sigma[R_2]= \Tr_{R_1}[\sigma]$.

\textbf{Purification of mixed states}. For a mixed state $\rho$ over system $Q$, there exists another space $Q'$ and a pure state $\ket \psi$ over $Q \otimes Q'$ such that $\rho$ is a partial trace of $\ket \psi \bra \psi$ with respect to $Q'$.

\vspace{1em}

A pure state $|\phi\rangle$ can be manipulated by a unitary transformation $U$. The resulting state $|\phi'\rangle = U |\phi\rangle$. 

We can extract information from a state $|\phi\rangle$ by performing a \textbf{measurement}. A measurement specifies an orthonormal basis, typically the computational basis, and the probability of getting result $x$ is $|\langle x | \phi \rangle|^2$. After the measurement, $|\phi\rangle$ ``collapses'' to the state $|x\rangle$ if the result is $x$. 
                
                For example, given the pure state $|\phi\rangle = \frac{3}{5} |0\rangle + \frac{4}{5} |1\rangle$ measured under $\{|0\rangle ,|1\rangle \}$, with probability $9/25$ the result is $0$ and $|\phi\rangle$ collapses to $|0\rangle$; with probability $16/25$ the result is $1$ and $|\phi\rangle$ collapses to $|1\rangle$.

    We finally assume a quantum computer can  implement any unitary transformation (by using these basic gates, Hadamard, phase, CNOT and $\frac{\pi}{8}$ gates), especially the following two unitary transformations:
        \begin{itemize}
            \item \textbf{Classical Computation:} Given a function $f : X \to Y$, one can implement a unitary $U_f$ over $\mathbb{C}^{|X|\cdot |Y|} \to \mathbb{C}^{|X| \cdot |Y|}$ such that for any $|\phi\rangle = \sum_{x \in X, y \in Y} \alpha_{x, y} |x, y\rangle$, 
            \begin{equation*}
                U_f |\phi\rangle = \sum_{x \in X, y \in Y} \alpha_{x, y} |x, y \oplus f(x)\rangle
            \end{equation*}
            
            Here, $\oplus$ is a commutative group operation defined over $Y$.
            
            \item \textbf{Quantum Fourier Transform:} Let $N = 2^n$. Given a quantum state $|\phi\rangle = \sum_{i=0}^{2^n-1} x_i |i\rangle$, by applying only $O(n^2)$ basic gates, 
    	one can compute $|\psi\rangle =  \sum_{i=0}^{2^n-1} y_i |i\rangle$ where the sequence $\{y_i\}_{i=0}^{2^n-1}$ is the sequence achieved by applying the 
	classical Fourier transform ${\sf QFT}_N$ to the sequence $\{x_i\}_{i=0}^{2^n-1}$: 
		\begin{equation*}
			y_k = \frac{1}{\sqrt{N}} \sum_{i=0}^{2^n-1} x_i \omega_n^{i k} 
		\end{equation*} 
	where $\omega_n = e^{2 \pi i / N}$, $i$ is the imaginary unit.

	One property of {\sf QFT} is that by preparing $|0^n\rangle$ and 
	applying ${\sf QFT}_2$ to each qubit, 	$\left({\sf QFT}_2 |0\rangle\right)^{\otimes n} = \frac{1}{\sqrt{2^n}} \sum_{x \in \{0,1\}^n} |x\rangle$ which is a uniform superposition over all possible $x \in \{0,1\}^n$.
        \end{itemize}
    
    For convenience, we sometimes omit writing the normalization of a pure state.

\section{Compute-and-Compare Obfuscation for (Sub-Exponentially) Unpredictable Distributions}
\label{sec:CC_quantum_aux}

In this section, we prove compute-and-compare obfuscation for sub-exponentially unpredictable distributions exists assuming the existence of post-quantum \iO{} and the quantum hardness of LWE. We show a similar statement about compute-and-compare obfuscation for any unpredictable distributions assuming post-quantum $\iO$ and post-quantum extremely lossy functions. We focus on the first result and it extends to the second case with little effort. 

Our proof follows the steps below:

\begin{enumerate}
    \item Assuming the quantum hardness of LWE, there exist lossy functions with any sub-linear  residual leakage~\cite{peikert2011lossy}. 
    \item Assuming lossy functions with any sub-linear residual leakage, there exist PRGs with sub-exponentially unpredictable seeds (quantum auxiliary input). The proof constitutes that of \cite{zhandry2019magic}, with the building block ELFs (extremely loss functions) being replaced with plain lossy functions and the last step of invoking Goldriech-Levin \cite{goldreich1989hard} being replaced with a quantum version of Goldreich-Levin \cite{adcock2002quantum}. For the quantum version of Goldreich-Levin, we prove a variant which holds against quantum auxiliary input.  
    \item Finally, assuming PRGs with sub-exponentially unpredictable seeds and post-quantum \iO{}, there exists compute-and-compare obfuscation for sub-exponentially unpredictable distributions~\cite{wichs2017obfuscating}. 
\end{enumerate}
As proved in \cite{wichs2017obfuscating}, in Step 3, we can build such compute-and-compare obfuscation solely based on the quantum hardness of LWE. However, as all the constructions in this work require \iO{} as a building block, we focus on the simpler construction which is based on \iO. Thus, we have the following theorem: 
\begin{theorem}\label{thm:CC_subexp_from_LWE_iO}
    Assuming the existence of post-quantum $\iO$ and the
quantum hardness of LWE, there exist obfuscators as in \Cref{def: cc obf}.  for sub-exponentially unpredictable distributions.
\end{theorem}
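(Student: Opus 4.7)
The plan is to carry out the three-step strategy outlined in the preamble to the theorem, where the main technical work lies in quantizing Zhandry's PRG construction so that it remains secure against \emph{quantum} auxiliary input at a sub-exponential level of unpredictability.

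For Step 1, I would invoke the construction of Peikert and Waters from quantum-hard LWE to obtain a family of lossy trapdoor functions whose residual leakage in lossy mode can be made any sub-linear function of the input length. Post-quantum security of these lossy functions reduces in a black-box way to the quantum hardness of LWE, so nothing new is needed here; this step simply supplies the combinatorial object that will replace Zhandry's extremely lossy functions (ELFs) in the next step. The qualitative difference is that we only have polynomially many branches in lossy mode (rather than polynomially bounded absolute image size as in ELFs), which is exactly why we will only obtain \emph{sub-exponentially} unpredictable PRG seeds, not fully unpredictable ones.

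For Step 2, I would adapt Zhandry's construction of a PRG whose seed is hard to recover given the image (as in \cite{zhandry2019magic}), with two modifications. First, plug in the plain lossy functions from Step 1 in place of ELFs; tracing through the hybrid argument, the loss in unpredictability from the lossy-mode branching count translates the ELF-based result into a sub-exponential unpredictability guarantee. Second, and this is the main obstacle, the original argument concludes by applying a classical Goldreich--Levin step to extract the seed from a distinguisher for the hardcore bit; in our setting the auxiliary input can be a quantum state $\aux$, so we instead appeal to the quantum Goldreich--Levin theorem of Adcock and Cleve \cite{adcock2002quantum}. The care point is that we need a version that is sound even when the distinguisher's advice is a quantum state depending on the seed: given a quantum algorithm that, for $(s, \aux)$ drawn from some distribution, predicts $\langle s, r \rangle$ with noticeable advantage over random $r$ using $\aux$, there is a quantum extractor that, given $\aux$ alone, outputs $s$ with polynomially related probability. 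This is essentially a rewinding / coherent-amplitude-estimation argument compatible with quantum auxiliary input; I would verify that the standard quantum Goldreich--Levin proof goes through without modification because the advice is not rewound, only the reduction's own workspace. Quantifying the loss shows that if the seed has $1/\subexp$ unpredictability, then the PRG output is $1/\subexp'$-indistinguishable from uniform against QPT distinguishers with quantum advice, for a (possibly smaller) sub-exponential function $\subexp'$.

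For Step 3, I would invoke the Wichs--Zirdelis construction of compute-and-compare obfuscation from \iO{} plus a PRG with unpredictable seeds, plugging in the PRG from Step 2. Their construction and security proof are generic in the seed-unpredictability parameter: if the PRG is secure whenever the seed is $\alpha$-unpredictable, then the resulting obfuscator is secure for the class of $\alpha$-unpredictable compute-and-compare distributions. Instantiating this with $\alpha = 1/\subexp$ from Step 2, together with post-quantum \iO{}, yields an obfuscator satisfying \Cref{def: cc obf} for the class of sub-exponentially unpredictable distributions (as in \Cref{def:cc_subexp_unpredictable_dist}). The quantum-auxiliary-input guarantee is preserved throughout because the Wichs--Zirdelis reduction is a black-box reduction to PRG security, and the PRG security from Step 2 already accommodates quantum advice. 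The analogous proof of \Cref{thm:CC__from_ELF_iO} is identical except that ELFs are used directly in Step 1 (skipping Step 2's residual-leakage bookkeeping), yielding full (not merely sub-exponential) unpredictability.
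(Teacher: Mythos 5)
Your plan follows exactly the paper's three-step route: Peikert--Waters lossy functions from quantum-hard LWE, Zhandry's seed-unpredictable PRG construction with ELFs replaced by plain lossy functions and classical Goldreich--Levin replaced by a quantum Goldreich--Levin with quantum auxiliary input (the paper proves this as \Cref{lem:quantum_goldreich_levin}), then the Wichs--Zirdelis compute-and-compare obfuscator from $\iO$ and that PRG. The only slight imprecision is your gloss that ``the advice is not rewound'': in the paper's quantum Goldreich--Levin argument the inverse unitary $U^\dagger$ does act on the advice register $\ket{\psi_x}$, but the phase-kickback amplitude-estimation argument still recovers $x$ in a single pass, so this does not affect the correctness of your plan.
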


In the rest of the section, we will introduce all building blocks and prove Step 2. Step 1 and 3 follow directly from previous work. 

Similarly, we can prove the following theorem: 
\begin{theorem}\label{thm:CC__from_ELF_iO}
    Assuming the existence of post-quantum $\iO$ and post-quantum
extremely lossy functions, there exist obfuscators as in \Cref{def: cc obf}. for any unpredictable distributions.
\end{theorem}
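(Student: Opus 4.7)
The plan is to follow essentially the same three-step template used for \Cref{thm:CC_subexp_from_LWE_iO}, but with the stronger primitive (post-quantum ELFs) replacing the chain (LWE $\Rightarrow$ lossy functions with sublinear residual leakage). Since ELFs provide an image size that can be tuned to be polynomial in the security parameter while preserving computational indistinguishability from injective mode, they are strong enough to handle \emph{arbitrary} unpredictable distributions rather than only sub-exponentially unpredictable ones, which is precisely why this version yields CC-obfuscation for the larger class in \Cref{def: cc obf}.

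First, I would instantiate the Zhandry-style PRG construction (as used in the proof of \Cref{thm:CC_subexp_from_LWE_iO}) directly from the post-quantum ELF, bypassing Step 1 entirely. Concretely, for a seed $(x, h)$ where $x$ is an ELF input and $h$ is a pairwise-independent hash, one outputs $(\ELF(x), h, h(x))$ (or a Goldreich–Levin-style hardcore output derived from $x$). The goal of this step is to show that the seed of this PRG is unpredictable whenever the underlying distribution on $x$ is unpredictable given quantum auxiliary input. The key quantitative point is that, because ELFs can be switched into a mode with polynomial-size image, the adversary's ``view'' of $x$ through $\ELF(x)$ collapses to a polynomial amount of information, so any noticeable predictor for the hardcore bit can be lifted (via quantum Goldreich–Levin with quantum advice) into a noticeable predictor for $x$ itself, contradicting unpredictability of the input distribution.

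Second, I would invoke the quantum Goldreich–Levin theorem in the presence of quantum auxiliary input (the analogue of Adcock–Cleve, extended to quantum advice states, which already appears implicitly in the sub-exponential version above). This is the place where the proof differs most meaningfully from the classical case: the reduction must rewind or otherwise extract a preimage from a quantum distinguisher while only holding a quantum auxiliary state $\aux$, and one must argue the extraction succeeds with non-negligible probability whenever the distinguishing advantage is non-negligible. Here the tuning of the ELF's image size to match the adversary's advantage $\epsilon$ is essential: by choosing the image to be $\poly(1/\epsilon)$, one preserves the information-theoretic argument that $x$ is nearly uniform conditioned on $(\ELF(x), \aux)$, which then plugs into the quantum GL extractor.

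Third, having produced a PRG whose seed is unpredictable against QPT adversaries with quantum auxiliary input from any unpredictable distribution, I would apply the Wichs–Zirdelis compiler in its $\iO$-based form: obfuscate a program that, on input $x$, computes $\PRG(f(x))$ and compares against a hardcoded $\PRG(y)$, releasing $z$ on equality. Functionality-preservation is immediate (up to negligible PRG-collision probability), and distributional indistinguishability from the simulator (which hardcodes $\PRG(r)$ for uniform $r$) reduces, via $\iO$ security on a pair of functionally equivalent programs, to pseudorandomness of $\PRG(y)$ given $\aux$ — which is exactly the property established in Step 2. The main obstacle throughout is Step 2: ensuring the PRG security reduction goes through against \emph{quantum} auxiliary input for \emph{arbitrary} (not sub-exponential) unpredictable distributions, which hinges on a careful choice of ELF parameters as a function of the distinguishing advantage and on a quantum-advice-robust Goldreich–Levin extractor.
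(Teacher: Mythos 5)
Your proposal matches the paper's approach: \Cref{thm:CC__from_ELF_iO} is obtained by instantiating Zhandry's PRG construction directly with the post-quantum ELF (rather than with a plain lossy function derived from LWE), invoking the quantum Goldreich--Levin lemma with quantum auxiliary input (\Cref{lem:quantum_goldreich_levin}), and then applying the Wichs--Zirdelis $\iO$-based compiler. The one caveat is that the PRG is not the single-output expression $(f(x),h,h(x))$ you sketch: it is the iterated Construction~\ref{constr:hcbit}, whose hybrid argument replaces one bit of $\Rm\cdot\xv$ at a time and extracts a Goldreich--Levin hardcore bit at each level --- but your central observation (the ELF image size can be tuned to $\poly(1/\epsilon)$ as a function of the adversary's advantage, turning the $2^r$ loss in Claim~\ref{claim:hc3} from sub-exponential into polynomial) is exactly what upgrades the conclusion from sub-exponentially unpredictable distributions to arbitrary unpredictable ones.
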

\Cref{thm:CC__from_ELF_iO} directly follows all three steps above without even replacing the building block ELFs with plain lossy functions. Thus, we omit the proof here.  However, currently we do not know any post-quantum construction for extremly lossy functions. 

\subsection{Preliminaries}

We first introduce lossy functions. For the purpose of this work, we ignore the need of trapdoors in the definition. The definition is taken verbatim from \cite{peikert2011lossy}. 

Define the following quantities: the security
parameter is $\lambda$, $n(\lambda) = \poly(\lambda)$ represents the input length of the function, $m(\lambda) = \poly(\lambda)$ represents the output length and $k(\lambda) \leq n(\lambda)$ represents
the lossiness of the collection. For convenience, we also define the residual leakage $r(\lambda) := n(\lambda)-k(\lambda)$. For all these quantities, we often omit the dependence on $\lambda$.
\begin{definition}[Lossy Functions~\cite{peikert2011lossy}]
A collection of $(n, k)$-lossy functions is given by a tuple of (possibly probabilistic) polynomial-time algorithms $(S_{\lf}, F_{\lf})$ having the properties below. For notational convenience, define the algorithms $S_{\inj}(\cdot) := S_{\lf}(\cdot, 1)$ and $S_\lossy(\cdot) := S_{\lf}(\cdot, 0)$.
\begin{enumerate}
    \item \emph{Easy to sample an injective function:} $S_\inj(1^\lambda)$ outputs $s$ where $s$ is a function index, with overwhelming probability, $F_{\lf}(s, \cdot)$ computes a (deterministic) injective function $f_s(\cdot)$ over the domain $\{0,1\}^{n(\lambda)}$. 
    
    For notational convenience, we assume $S_\inj(1^\lambda)$ samples a function description $f_s(\cdot)$.  
    \item \emph{Easy to sample a lossy function:} $S_\lossy(1^\lambda)$ outputs $s$ where $s$ is a function index, $F_{\lf}(s, \cdot)$ computes a (deterministic) function $f_s(\cdot)$ over the domain $\{0,1\}^{n(\lambda)}$ whose image has size at most $2^{r} = 2^{n - k}$, with overwhelming probability. 
    
    For notational convenience, we also assume $S_\lossy(1^\lambda)$  samples a function description $f_s(\cdot)$.  
    \item \emph{Hard to distinguish injective from lossy:} the outputs (function descriptions) of  $S_\inj(1^\lambda)$ and $S_\lossy(1^\lambda)$ are computationally indistinguishable. 
\end{enumerate}
\end{definition}

\begin{theorem}[Theorem 6.4, \cite{peikert2011lossy}]
    Assuming ${\sf LWE}_{q, \chi}$ is hard for some $q, \chi$, there exists a collection of $(n , k)$ lossy functions where the residual leakage $r$ is $r = n^c$ for any constant $c > 0$.
\end{theorem}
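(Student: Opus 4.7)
The plan is to give a direct construction of lossy functions from LWE using the matrix-based template of Peikert--Waters, tuning parameters so that the lossy branch only leaks $n^c$ bits. A function index is a matrix $A \in \mathbb{Z}_q^{m \times n}$, and evaluation is defined by $F_{\lf}(A, x) = Ax \bmod q$ for inputs $x \in \{0,1\}^n$. The injective sampler $S_{\inj}(1^\lambda)$ outputs a uniformly random $A$; choosing $m$ so that $q^m \gg 3^n$ and applying a union bound over nonzero differences $x - x' \in \{-1, 0, 1\}^n$ shows that, with overwhelming probability, no two binary inputs collide, so $f_A$ is injective on $\{0,1\}^n$.

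Next I would define the lossy sampler $S_{\lossy}(1^\lambda)$ to output $A = BC + E$, where $B \leftarrow \mathbb{Z}_q^{m \times k}$ and $C \leftarrow \mathbb{Z}_q^{k \times n}$ are uniform and $E \leftarrow \chi^{m \times n}$ is a fresh error matrix. Computational indistinguishability of the two modes is a column-wise hybrid argument against LWE in dimension $k$ with modulus $q$ and noise $\chi$: each column of $A$ in the lossy mode is a (multi-sample) LWE instance with secret equal to the corresponding column of $C$, which is pseudorandom by assumption, and $n$ invocations of the hybrid step give the indistinguishability bound with only a polynomial loss. For lossiness, write $f_A(x) = BCx + Ex$; the first summand takes at most $q^k$ values since $Cx$ ranges over $\mathbb{Z}_q^k$, and the second lies in a small ball determined by $\chi$ and the weight of $x$. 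To eliminate the contribution of the noise term I would apply a rounding trick, replacing $F_{\lf}(A, x)$ by $\lfloor Ax \rceil_p$ for an appropriately chosen smaller modulus $p$, so that with overwhelming probability $\lfloor BCx + Ex \rceil_p = \lfloor BCx \rceil_p$. This caps the image size by $q^k$ and therefore yields residual leakage $r \le k \log q + O(\log \lambda)$.

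To get $r = n^c$ for any desired constant $c > 0$, one can choose, for instance, $k = 1$ and $q = 2^{\lceil n^c \rceil}$, or keep $q$ polynomial in $n$ and set $k = \lceil n^c / \log n \rceil$; in either case the product $k \log q$ is $\Theta(n^c)$. The main obstacle I anticipate is parameter juggling: one needs $(k, q, \chi)$ to simultaneously satisfy \textbf{(i)} worst-case-to-average-case quantum hardness of LWE (so Regev's reduction applies, constraining the noise-to-modulus ratio), \textbf{(ii)} enough noise growth that the rounding step kills $Ex$ except with negligible probability, and \textbf{(iii)} $m$ large enough for injectivity in the injective branch yet small enough that the rounded lossy image is dominated by the $q^k$ factor. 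The rest—injectivity from the union bound, indistinguishability from the LWE hybrid, and the image-size computation—is essentially linear algebra once the parameters line up, so the hard part is really just verifying that the above three constraints can be met jointly for every constant $c > 0$.
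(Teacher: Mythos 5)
The paper does not reprove this result; it is imported wholesale from Peikert and Waters, and the only paper-internal content is the adjacent remark that Theorem~6.4 of \cite{peikert2011lossy} with parameter choice $c_1 = n^{1-c}$, $c_2$ a constant, $c_3 = 1/c$ yields residual leakage $n^c$. The Peikert--Waters construction is also quite different from what you sketch: they build the lossy mode by componentwise GGH-style ``matrix concealment'' of either the identity (injective) or the all-zero matrix (lossy), with a lattice trapdoor for inversion, rather than the low-rank-plus-noise template $A = BC + E$ combined with rounding, which is the more recent LWR-flavored approach you are reinventing. That alternative route is legitimate in principle, but your sketch has a gap you would need to close before it could stand in for a citation.

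The problem is that your injectivity argument and your lossiness argument are made for two different functions. You prove injectivity for the \emph{linear} map $x \mapsto Ax$ via a union bound over nonzero $d = x - x' \in \{-1,0,1\}^n$, using that $Ad$ is uniform for nonzero $d$; but you then define $F_{\lf}(A,x) = \lfloor Ax \rceil_p$, and rounding destroys linearity. Two distinct values $Ax \neq Ax'$ can easily land in the same bucket, so injectivity of the rounded map is a quantitatively stronger claim that caps $q/p$ from above --- while your lossiness claim, that $\lfloor BCx + Ex \rceil_p = \lfloor BCx \rceil_p$ for every binary $x$ simultaneously, forces $q/p$ to be superpolynomially (indeed exponentially, after a union bound over $2^n$ inputs) large. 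You assert that the constraints ``line up'' but never show joint feasibility, and both concrete instantiations you propose fail: $k=1$ with huge $q$ makes the reduction rely on LWE in dimension one, which is trivially solvable; and polynomial $q$ with $k = \lceil n^c/\log n\rceil$ keeps the LWE dimension sound but makes $q/p$ polynomial, which is too small for the rounding step. You either need to prove the three constraints can be met simultaneously for every constant $c > 0$ (which, for small $c$, is exactly where this approach is most strained), or replace the rounding gadget with a construction that avoids it, as Peikert--Waters do.
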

\begin{remark}
This can be done by carefully choosing parameters $c_1 = n^{1-c}$, $c_2$ as some constant, $c_3 = 1/c$ in Theorem 6.4 of \cite{peikert2011lossy}. 
\end{remark}
\begin{remark}
For the definition of extremely lossy functions, 
\begin{itemize}
    \item In bullet (2): $S_\lossy(1^\lambda)$ takes another parameter $r \in [2^{n}]$ and $f_s$ sampled from $S_\lossy(1^\lambda, r)$ has image size $r$, with overwhelming probability; 
    \item In bullet (3): For any polynomial $p$ and inverse polynomial function $\delta$ (in $n$), there is a polynomial $q$ such that: for any adversary $\As$ running in time at most $p$, and any $r \in [q(n), M]$, it can not distinguish the outputs of $S_\inj(1^\lambda)$ between $S_\lossy(1^\lambda)$ with advantage more than $\delta$. 
\end{itemize}
\end{remark}

\vspace{1em}

Second, we introduce PRGs with sub-exponentially unpredictable seeds~\cite{zhandry2019magic, wichs2017obfuscating}. 
\begin{definition}[PRG with Sub-Exponentially Unpredictable Seeds~\cite{zhandry2019magic}]
A family of pseudorandom generators $H: \Xs \to \Ys$ is secure for sub-exponentially unpredictable seeds if, for any sub-exponentially unpredictable distribution on $(X, \mathcal{H}_Z)$, no efficient
adversary can distinguish $(H, \rho_z, H(x))$ from $(H, \rho_z, S)$ where $(x, \rho_z) \gets D$ and $S \gets U_Y$, where $\rho_z$ is a quantum auxiliary input.
\end{definition}

The following theorem follows from Appendix A in \cite{wichs2017obfuscating}. Moreover, \iO{} in the theorem statement can be further replaced with LWE using the construction in their work.  
\begin{theorem}[\cite{wichs2017obfuscating}]
    Assuming PRGs with sub-exponentially unpredictable seeds and \iO{}, there exists compute-and-compare obfuscation for sub-exponentially unpredictable distributions. 
\end{theorem}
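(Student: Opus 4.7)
The plan is to follow the Wichs--Zirdelis template, lifted to the quantum-auxiliary-input setting using the PRG security notion introduced in the preceding definition. Given a program $\CC[f, y, z]$ sampled from a sub-exponentially unpredictable distribution (where $y \in \{0,1\}^m$ is the lock), the obfuscator will (i) sample a length-doubling PRG $G : \{0,1\}^m \to \{0,1\}^{2m}$ secure for sub-exponentially unpredictable seeds, (ii) compute $v \gets G(y)$, and (iii) output $\iO(\widehat{P})$, where $\widehat{P}$ is the program that on input $x$ computes $y' = f(x)$, tests $G(y') \stackrel{?}{=} v$, and outputs $z$ if equal and $\bot$ otherwise. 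Functionality preservation follows because $G(f(x)) = G(y)$ iff $f(x) = y$ except with negligible probability over the PRG choice, by injectivity of a random stretched PRG on a fixed input pair.

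For distributional indistinguishability, the simulator $\Sim$ on input $P.\mathsf{param}$ simply outputs $\iO$ of the zero program padded to the right size. I would prove indistinguishability by a standard three-hybrid sequence. In Hybrid $0$ the adversary sees $(\iO(\widehat{P}), \aux)$ with $v = G(y)$. In Hybrid $1$ we replace $v$ with a uniformly random string $v^\star \in \{0,1\}^{2m}$; this step is justified by the assumed PRG security against sub-exponentially unpredictable seeds \emph{with quantum auxiliary input} $\aux$, which is exactly the hypothesis that $(f, \aux)$ leaves $y$ sub-exponentially unpredictable and the PRG output therefore indistinguishable from uniform. In Hybrid $2$ we observe that for a uniformly random $v^\star$, the probability that there exists any $y'$ with $G(y') = v^\star$ is at most $2^m / 2^{2m} = 2^{-m}$, so with overwhelming probability the program is functionally equivalent to the constant-$\bot$ (zero) program; invoking post-quantum $\iO$ security then switches to $\iO$ of the zero program, which is $\Sim(1^\lambda, P.\mathsf{param})$.

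The main obstacle is the first hop, which must hold against a distinguisher that holds quantum auxiliary input $\aux$ correlated with $y$. This is why we explicitly defined PRG security for sub-exponentially unpredictable seeds to permit a quantum auxiliary state $\rho_z$; once that definition is in hand the reduction is syntactic, taking the PRG challenge $v$ and plugging it into the program $\widehat{P}$ together with the honestly sampled $\aux$. The only subtlety worth double-checking is the parameter setting: because unpredictability is sub-exponential (guessing probability $1/\subexp(\lambda)$), we can afford to take the PRG seed length $m$ polynomial in $\lambda$ as long as the PRG itself is secure against QPT adversaries, which is guaranteed by the assumed PRG construction built in Step~2 of the outline from lossy functions and the quantum Goldreich--Levin theorem.

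Finally, the overall theorem statement combines this construction with Steps $1$ and $2$ of the road map already laid out in the surrounding text: post-quantum $\iO$ plus quantum-hard LWE give lossy functions with arbitrary sub-linear residual leakage, which in turn give PRGs secure for sub-exponentially unpredictable seeds with quantum auxiliary input, and the construction above then yields the claimed compute-and-compare obfuscator. No new ideas beyond those cited are needed; the proof proposal is essentially ``port Wichs--Zirdelis to the quantum-auxiliary setting using the PRG primitive established earlier in this appendix.''
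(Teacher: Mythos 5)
The paper does not prove this theorem; it is stated as a citation, with the proof deferred entirely to Appendix~A of \cite{wichs2017obfuscating}. Your proposal is therefore a reconstruction of the cited construction rather than a route that can be compared against an in-paper argument. The three-hybrid skeleton you describe --- hardcode $v = G(y)$, invoke PRG security against the unpredictable-seed distribution to swap $v$ for uniform $v^\star$, and then use that a uniform $v^\star$ has no $G$-preimage to apply $\iO$ and collapse to the simulator's zero program --- is the right skeleton and matches that reference.

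Two details in your sketch need patching. First, functionality preservation. You claim $G(f(x)) = G(y)$ iff $f(x) = y$ ``except with negligible probability over the PRG choice, by injectivity of a random stretched PRG on a fixed input pair.'' This is not a property of PRGs: one can always plant a collision in a length-doubling PRG without hurting pseudorandomness. You need a \emph{statistically injective} PRG family (constructible from one-way functions, analogous to the statistically injective PRFs of \Cref{sw14_thm2}), or you must verify directly that the lossy-function-based PRG of \Cref{constr:hcbit}, instantiated in injective mode, is itself injective with overwhelming probability over the sampler's coins.

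Second, the reduction in your first hop does not typecheck as written. To go from Hybrid~0 to Hybrid~1, the PRG adversary must construct $\iO(\widehat{P})$, and $\widehat{P}$ hardcodes $z$; so $z$ must appear in the quantum auxiliary state $\rho_z$ fed to the PRG challenger, and the seed $y$ must remain sub-exponentially unpredictable given $(f, z, \aux)$. But \Cref{def:cc_subexp_unpredictable_dist} only requires $y$ to be unpredictable given $(f, \aux)$, not given $z$: an adversarially chosen $z = y$ is consistent with the paper's definition yet renders the required PRG-challenge seed distribution trivially predictable. The standard formulation in \cite{wichs2017obfuscating,goyal2017lockable} hands the predictor $z$ as well, and their theorem is stated under that stronger hypothesis; in this paper's applications the discrepancy is moot because $z = m_b$ always sits inside $\AUX$ anyway. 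But a self-contained proof of the theorem as literally stated here --- where your reduction ``takes the PRG challenge $v$ and plugs it into the program $\widehat{P}$ together with the honestly sampled $\aux$'' --- has nowhere to obtain $z$ from, and you should either strengthen the unpredictability hypothesis to include $z$ or rework the construction so that $z$ is one-time-padded away before the PRG step.
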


\subsection{PRGs with Sub-Exponentially Unpredictable Seeds}
To prove \Cref{thm:CC_subexp_from_LWE_iO}, we only need to prove PRGs with sub-exponentially unpredictable seeds can be built from lossy functions. 

\label{sec:hcbit}

Most of the proof follows \cite{zhandry2019magic}, except we are working with plain lossy functions (not extremely lossy functions), sub-exponentially unpredictable distributions and potentially quantum auxiliary information. We first look at the construction.

\begin{construction}\label{constr:hcbit}Let $q$ be the input length and $m$ be the output length.  Let $\lambda$ be a security parameter.  We will consider inputs $x$ as $q$-dimensional vectors $\xv\in\F_2^q$.  Let $\LF$ be a lossy function (with some sub-linear residual leakage, which will be specified later).  Let $M=2^{m+\lambda+1}$, and let $n$ be the output length of the lossy function.  Set $N=2^{n}$.  Let $\ell$ be some polynomial in $m,\lambda$ to be determined later.  First, we will construct a function $H'$ as follows.

Choose random $f_1,\dots,f_\ell\gets \LF.S_\inj(1^\lambda)$ where $f_i:[M]\rightarrow[N]$, and let $h_1,\dots,h_{\ell-1}:[N]\rightarrow[M/2]=[2^{m+\lambda}]$ and $h_\ell:[N]\rightarrow[2^m]$  be sampled from pairwise independent and uniform function families.  Define $\fv=\{f_1,\dots,f_\ell\}$ and $\hv=\{h_1,\dots,h_\ell\}$. Define $H'_i:\{0,1\}^i\rightarrow [M/2]$ (and $H'_\ell:\{0,1\}^\ell\rightarrow[2^m]$) as follows:
\begin{itemize}
	\item $H'_0()=1\in[2^{m+\lambda}]$
	\item $H'_i(\bv_{[1,i-1]},b_i):$ compute $y_i=H'_{i-1}(\bv_{[1,i-1]})$, $z_i\gets f_i(y_i||b_i)$, and output $y_{i+1}\gets h_i(z_i)$
\end{itemize}

Then we set $H'=H'_\ell$.  To define $H$, choose a random matrix $\Rm\in\F_2^{\ell\times q}$.  The description of $H$ consists of $\fv,\hv,\Rm$.  We set $H(x)=H'(\Rm\cdot \xv)$.  A diagram of $H$ is given in Figure~\ref{fig:hcbit}.
\end{construction} 

\begin{figure}[ht]
	\begin{center}\includegraphics[scale=0.45]{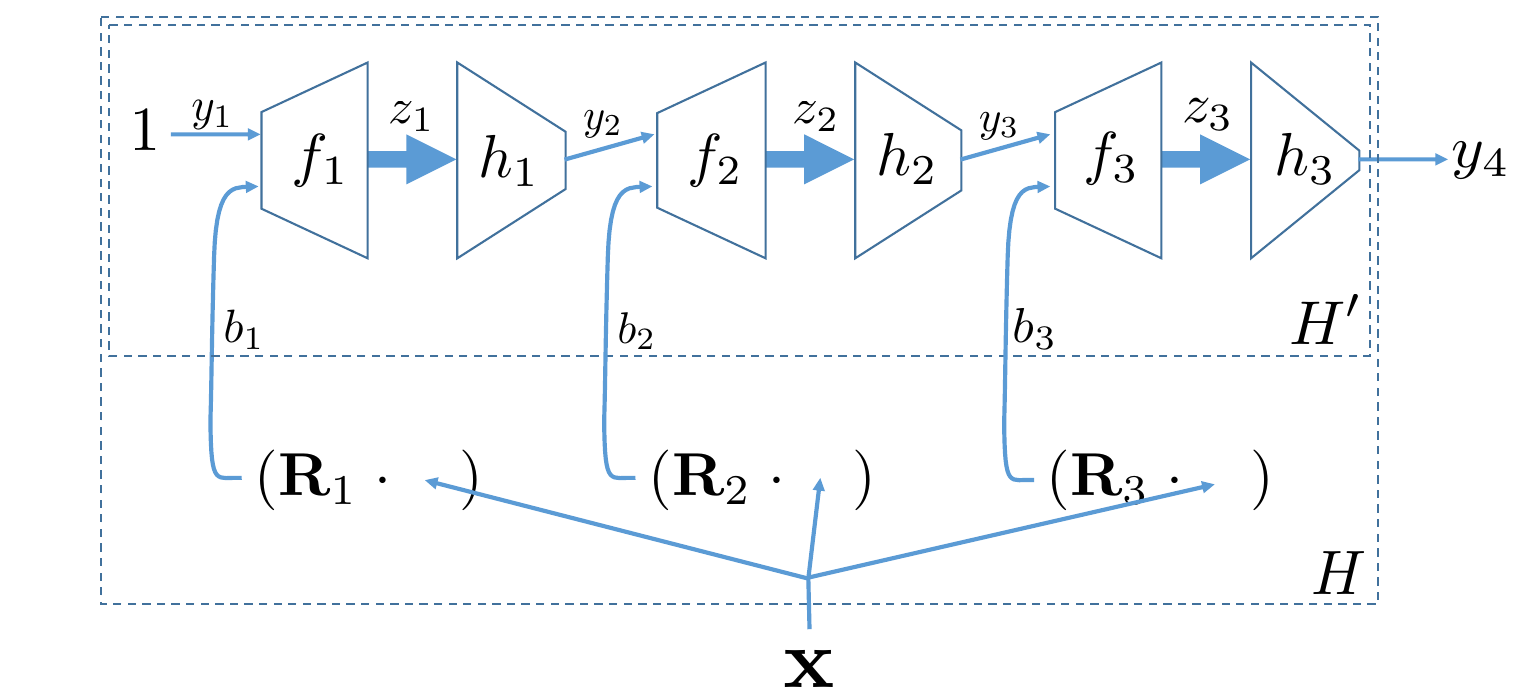}\end{center}
	\caption{An example instantiation for $\ell=3$, from \cite{zhandry2019magic}. } \label{fig:hcbit}
\end{figure}

We have the following theorem, which will finish the proof for \Cref{thm:CC_subexp_from_LWE_iO}. 
\begin{theorem} \label{thm:LF_to_PRG}
    Assuming lossy functions with sub-linear residual leakage, there exist PRGs with sub-exponentially unpredictable seeds (quantum auxiliary input).    
\end{theorem}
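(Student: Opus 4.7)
The plan is to follow the template of Zhandry's proof for the ELF-based construction, making two substantive modifications: (i) replace the ELF-based hardness argument by a hybrid argument that swaps each injective $f_i$ for a lossy one, taking advantage of sub-exponential unpredictability to absorb the $\ell$-fold hybrid loss; and (ii) replace the classical Goldreich--Levin step used to argue that $\Rm\cdot\xv$ looks uniform by a quantum Goldreich--Levin theorem that is robust to quantum auxiliary input. Concretely, fixing a sub-exponentially unpredictable distribution $(X,\mathcal{H}_Z)$ with seed $\xv$ and quantum side information $\rho_z$, I will argue
\begin{align*}
(H,\rho_z,H(\xv)) \;\approx_c\; (H,\rho_z,H'(U_\ell)) \;\approx_c\; (H,\rho_z,U_Y),
\end{align*}
where the first indistinguishability uses quantum Goldreich--Levin on $\Rm$ and the second is the chain-of-lossy-functions argument applied to a uniform input.

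For the first step (Step A), I would prove and invoke the following quantum Goldreich--Levin with quantum auxiliary input: if $\xv$ is $\delta$-unpredictable given $\rho_z$ (even by quantum adversaries), then $(\Rm,\rho_z,\Rm\xv)$ is $O(\ell\cdot\poly(1/\epsilon)\cdot\sqrt{\delta})$-indistinguishable from $(\Rm,\rho_z,U_\ell)$ for any distinguishing advantage $\epsilon$. The single-bit version is essentially Adcock--Cleve~\cite{adcock2002quantum}; the extension to quantum auxiliary input can be obtained by replacing the classical rewinding / list-decoding steps with coherent extraction (Jordan's lemma / the projective-implementation machinery already used in \Cref{sec:unclonable dec ati} suffices to argue one can extract $\xv$ with the required success probability without destroying $\rho_z$, contradicting unpredictability). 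The $\ell$-bit statement is obtained by a standard row-by-row hybrid; crucially, because $\xv$ is \emph{sub-exponentially} unpredictable, the $\ell$-fold loss remains negligible after setting $\ell$ to any fixed polynomial. This is the step I expect to be the main technical obstacle, since quantum auxiliary input breaks the naive classical reduction and care is needed to show that, after extracting one hard-core bit, the residual side information still has the unpredictability required to extract the next.

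For the second step (Step B), I would run the standard hybrid over the $\ell$ levels of the chain, replacing $f_i \gets \LF.S_\inj(1^\lambda)$ by $f_i \gets \LF.S_\lossy(1^\lambda)$ one index at a time. Each adjacent hybrid is indistinguishable by the injective-vs-lossy indistinguishability of $\LF$. In the fully-lossy hybrid $\ell$, each $f_i$ has image of size at most $2^{r}$ with $r = n^c$ sub-linear, so I can analyze the chain statistically: fix the uniform input $\yv = \Rm\xv \in\F_2^\ell$; at level $i$, the intermediate value $y_i$ lives in a set of size at most $2^{r}\cdot 2^i$ when one additionally tracks the free bit $b_i$ appended at each step. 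Applying the pairwise-independent hash $h_i:[N]\to[2^{m+\lambda}]$ to $z_i = f_i(y_i\|b_i)$ and using the leftover hash lemma (with source min-entropy at least $i$ and range $\gg 2^{r}$), the image $h_i(z_i)$ is statistically close to uniform over $[2^{m+\lambda}]$ conditioned on the preceding level, with collision probability bounded by $2^{r-m-\lambda}$, which is negligible whenever $r$ is sub-linear and $m,\lambda$ are polynomial. Telescoping these bounds across the $\ell$ levels and using the final contraction $h_\ell:[N]\to[2^m]$ yields statistical closeness of $H'(U_\ell)$ to $U_{2^m}$.

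Chaining Steps A and B gives the theorem for any fixed polynomial seed length and polynomial $\ell$; the parameters $\ell, m, \lambda$ and the residual leakage exponent $c$ are chosen so that the statistical errors in Step B and the sub-exponential loss in Step A are simultaneously negligible in $\lambda$. Combined with the lossy functions from LWE of Peikert--Waters and the compute-and-compare construction of Wichs--Zirdelis, this establishes \Cref{thm:CC_subexp_from_LWE_iO}; the analogous \Cref{thm:CC__from_ELF_iO} follows by the same template using ELFs in place of plain sub-linear lossy functions and removing the sub-exponential restriction on the distribution, exactly as in \cite{zhandry2019magic}.
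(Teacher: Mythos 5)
Your decomposition is a genuinely different one from the paper's, and unfortunately both halves of it fail.

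\textbf{Step A cannot work as stated.} You propose to show $(\Rm,\rho_z,\Rm\xv)\approx_c(\Rm,\rho_z,U_\ell)$ purely by iterating a quantum Goldreich--Levin for $\ell = m+\lambda$ rows, with the observation that sub-exponential unpredictability ``absorbs the $\ell$-fold hybrid loss.'' This is false for an unavoidable reason: $\Rm\xv$ is an $\ell$-bit linear function of $\xv$, and if $\xv$ only has, say, $\lambda^{c_1}\ll\ell$ bits of computational entropy conditioned on $\rho_z$ (which a sub-exponentially unpredictable distribution certainly permits), then $\Rm\xv$ has at most $\lambda^{c_1}$ bits of entropy and cannot be pseudorandom over $\{0,1\}^\ell$ regardless of how cleverly one argues. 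Hardness amplification by GL alone does not manufacture entropy; it converts guessing advantage into indistinguishability advantage for a \emph{single} bit. This is exactly why the lossy chain exists in the construction: at each hybrid level $i$, one function in the chain is switched to lossy so that the ``leakage'' $H'_{i-1}(\Rm_{[i-1]}\cdot\xv)$ about $\xv$ is bounded by $r$ bits, and then Claim~\ref{claim:hc3}-style reasoning says that $\xv$ remains $2^{r}\cdot 2^{-\lambda^{c_1}}$-unpredictable given $(\rho_z, H'_{i-1}(\Rm_{[i-1]}\cdot\xv))$, which is still inverse sub-exponential because $r\ll\lambda^{c_1}$. That re-amplification is what licenses applying the single-bit GL at every level $i$; without it you cannot get past the first $\approx\lambda^{c_1}$ rows.

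\textbf{Step B has the direction of lossiness backwards.} You want to show $(H,\rho_z,H'(U_\ell))\approx(H,\rho_z,U_Y)$, and you propose to do this by switching each $f_i$ from injective to lossy and then invoking a leftover-hash-lemma argument. But lossy $f_i$ with image size $2^r$ \emph{destroys} entropy: $z_i=f_i(y_i\|b_i)$ then has at most $r$ bits of max-entropy no matter how much entropy went in, so the claim that ``$z_i$ has source min-entropy at least $i$'' fails precisely once $i>r$. The uniform-input step in the paper (\Cref{claim:hc1}, which is Claim~6.4 of~\cite{zhandry2019magic}) is a \emph{purely statistical} statement with the $f_i$ kept \emph{injective}: it uses the expanding pairwise-independent $h_i$ to show that the chain accumulates an input bit of fresh entropy each level without collisions, so that $(H',H'(\bv))\approx_s(H',R)$ once $\ell\geq m+\lambda$. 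No lossy switching is needed or wanted there.

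In short, the paper does one row-by-row hybrid in which lossiness and Goldreich--Levin are interleaved at every level: lossy replacement caps the leakage from the lower rows at $r$ bits, sub-exponential unpredictability survives that cap, and then a single-bit quantum GL (your \Cref{lem:quantum_goldreich_levin}, which is the genuinely new ingredient) shows $\Rm_i\cdot\xv$ is hardcore. Your proposal instead tries to separate GL and lossiness into independent stages, which decouples exactly the two mechanisms whose interaction makes the construction work.
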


First, we have the claim from \cite{zhandry2019magic}. 

\begin{claim}[Claim 6.4, \cite{zhandry2019magic}]\label{claim:hc1} If $\ell\geq m+\lambda$, and if $\bv$ is drawn uniformly at random, then $(H',H'(\bv))$ is statistically close to $(H',R)$  where $R$ is uniformly random in $[2^m]$.\end{claim}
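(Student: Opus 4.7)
The plan is to prove this via a standard collision-probability (i.e.\ leftover-hash-lemma) argument, tracking the Rényi $2$-entropy of the intermediate values $y_i$ through the chain. Since all $f_i$ are sampled via $\LF.S_\inj$, each $f_i$ is injective with overwhelming probability, so we may condition on this event at a $\negl(\lambda)$ loss and treat the $f_i$ as deterministic injections; the only source of compression in the chain is the pairwise-independent hashes $h_i$.

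Let $p_i$ denote the collision probability of $y_i$, taken over the joint randomness of $\mathbf{b}_{[1,i]}$, the $f_j$'s, and the $h_j$'s for $j\le i$. I would prove by induction the bound
\[
p_{i+1} \;\le\; \tfrac{1}{2}\, p_i \;+\; 2^{-(m+\lambda)}.
\]
The inductive step works as follows: let $(\mathbf{b},\mathbf{b}')$ be independent uniform, and let $y_i,y_i'$ be the corresponding intermediate values. Because $b_{i+1}$ is a fresh uniform bit independent of $y_i$, $\Pr[y_i\|b_{i+1}=y_i'\|b_{i+1}']=p_i/2$. Injectivity of $f_{i+1}$ preserves this, so $\Pr[z_{i+1}=z_{i+1}']=p_i/2$. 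Finally, since $h_{i+1}$ is drawn from a pairwise-independent family into $[2^{m+\lambda}]$ and is independent of everything before it, conditioning on $z_{i+1}\neq z_{i+1}'$ yields collision probability $2^{-(m+\lambda)}$, giving the recurrence. Iterating, $p_{i} \le 2^{-i} + 2\cdot 2^{-(m+\lambda)}$, and in particular $p_{\ell-1} \le 2^{-(m+\lambda-1)}$ once $\ell\ge m+\lambda$.

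For the final layer, the same calculation with $h_\ell:[N]\to[2^m]$ (instead of $[2^{m+\lambda}]$) gives
\[
\Pr[H'(\mathbf{b})=H'(\mathbf{b}')] \;\le\; \tfrac{1}{2}p_{\ell-1} + 2^{-m} \;\le\; 2^{-m}\bigl(1+2^{-(\lambda-O(1))}\bigr).
\]
Converting collision probability to statistical distance via the standard bound
$\mathrm{SD}\bigl((H',H'(\mathbf b)),(H',R)\bigr)^2 \;\le\; \tfrac{1}{4}\bigl(2^m\cdot \mathbb{E}_{H'}[\mathrm{CP}(H'(\mathbf b)\mid H')]-1\bigr)$
(which follows from Jensen's inequality together with the $L_2$--$L_1$ inequality), the right-hand side is at most $\tfrac14\cdot 2^{-\Omega(\lambda)}$, so the statistical distance is $2^{-\Omega(\lambda)}$, which is negligible. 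The matrix $\mathbf{R}$ plays no role here, since the claim concerns $H'$, not $H$.

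The only mildly subtle point is making sure the conditional independences are used correctly: in particular that $h_i$ is independent of $y_i$ (true because $h_i$ appears nowhere in the definition of $y_i$), and that the collision probability I compute is indeed an average over the choice of $H'$, so that Jensen is available when passing from collision probability to expected statistical distance. Once this bookkeeping is set up, the recurrence and its conclusion are essentially mechanical, so I do not anticipate a serious obstacle; the argument is a direct adaptation of Claim~6.4 of \cite{zhandry2019magic} to the setting where $f_i$ come from a (possibly LWE-based) lossy-function family rather than an ELF, and no property of $\LF$ beyond injectivity of $\LF.S_\inj$ is used.
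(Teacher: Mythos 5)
The paper states Claim~\ref{claim:hc1} only as a citation to Claim~6.4 of \cite{zhandry2019magic} and does not reproduce a proof, so there is no in-paper argument to compare against; your collision-probability route is correct and is essentially the standard one for this statement. A minor nit: since $y_1$ is the constant $1$ we have $p_1=1$, and the recurrence $p_{i+1}\le p_i/2 + 2^{-(m+\lambda)}$ gives the closed form $p_i \le 2^{-(i-1)} + 2^{-(m+\lambda-1)}$ rather than $2^{-i} + 2\cdot 2^{-(m+\lambda)}$, but with $\ell = m+\lambda$ this still yields a final collision probability of at most $2^{-m}\bigl(1 + 2^{-\Omega(\lambda)}\bigr)$ and hence negligible statistical distance via the leftover-hash bound you quote. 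Your observation that only injectivity of $\LF.S_\inj$ (and not any lossiness property) enters this claim is also correct, and is exactly what justifies the paper's substitution of plain lossy functions for ELFs in this construction.
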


We will thus set $\ell=m+\lambda$ in our construction of $H'$. We now present our main theorem (\Cref{thm:LF_to_PRG}) of the section.

\begin{proof}[\Cref{thm:LF_to_PRG}] We show that $H$ in Construction~\ref{constr:hcbit} is a pseudorandom generator that is secure for sub-exponentially unpredictable seeds. 

Let $\lambda$ be the security parameter.
Let $(x, \rho_z) \gets D$ be a sub-exponentially unpredictable distribution where $|x| = q$ (the input length of the PRG). In other words, there is no efficient algorithm that given $\rho_z$, outputs $x$ with probability more than $2^{-\lambda^{c_1}}$ for some constant $0 < c_1 \leq 1$. 
Let $\LF$ be a $(m+\lambda+1, k)$-lossy function with sub-linear residual leakage $r$ such that $2^{-r} \gg 2^{-\lambda^{c_1}}$. Note that there always exists a constant $c$ such that $r = 2^{(m+\lambda+1)^c} \ll 2^{\lambda^{c_1}}$. 

Recall that $H(\xv)=H'(\Rm\cdot\xv)$, and that $H'(\bv)$ is statistically close to random when $\bv$ is random (by \Cref{claim:hc1}).  Therefore, it suffices to show that the following distributions are indistinguishable:
\begin{align*}
    (\fv,\hv,\Rm,\rho_z,H'(\Rm\cdot\xv)) \text{ v.s. }  (\fv,\hv,\Rm,\rho_z,H'(\bv)) \text{ for a uniformly random $\bv$}.
\end{align*}

Suppose an adversary $\adv$ has non-negligible advantage $\epsilon$ in distinguishing the two distributions.  Define $\bv^{(i)}$ so that the first $i$ bits of $\bv^{(i)}$ are equal to the first $i$ bits of $\Rm\cdot\xv$, and the remaining $\ell-i$ bits are chosen uniformly at random independently of $\xv$.  Define {\bf Hybrid $i$} to be the case where $\adv$ is given the distribution $(\fv,\hv,\Rm,\rho_z,H'(\bv^{(i)}))$.

We know that $\adv$ distinguishes {\bf Hybrid 0} from {\bf Hybrid $\ell$} with probability $\epsilon$.  Choose an $i$ uniformly at random from $[\ell]$.  Then the adversary distinguishes {\bf Hybrid $(i-1)$} from {\bf Hybrid $i$} with expected advantage at least $\epsilon/\ell$.  Next, observe that since bits $i+1$ through $\ell$ are random in either case, they can be simulated independently of the challenge.  Moreover, $H'(\bv)$ can be computed given $H_{i-1}'(\bv_{[i-1]})$, $b_i$ (be random or equal to $\Rm_i,\xv$), and the random $b_{i+1},\dots,b_\ell$.  Thus, we can construct an adversary $\adv'$ that distinguishes the following distributions:
\[(i,\fv,\hv,\Rm_{[i-1]},\rho_z,H_{i-1}'(\Rm_{[i-1]}\cdot\xv),\Rm_i,\Rm_i\cdot\xv)\text{ and }(i,\fv,\hv,\Rm_{[i-1]},\rho_z,H_{i-1}'(\Rm_{[i-1]}\cdot\xv),\Rm_i,b_i)\]
with advantage $\epsilon/\ell$, where $i$ is chosen randomly in $[\ell]$, $\Rm_{[i-1]}$ consists of the first $i-1$ rows of $\Rm$, $\Rm_i$ is the $i$th row of $\Rm$, and $b_i$ is a random bit.

$\adv'$ cannot distinguish $f_i$ generated as $\LF.S_\lossy(1^\lambda)$ from the honest $f_i$ generated from $\LF.S_\inj(1^\lambda)$, except with negligible probability.  This means, if we generate $f_i\gets\LF.S_\lossy(1^\lambda)$, we have that $\adv'$ still distinguishes the distributions
\begin{equation}\label{eqn:1}(i,\fv,\hv,\Rm_{[i-1]},\rho_z,H_{i-1}'(\Rm_{[i-1]}\cdot\xv),\Rm_i,\Rm_i\cdot\xv)\text{ and }(i,\fv,\hv,\Rm_{[i-1]},\rho_z,H_{i-1}'(\Rm_{[i-1]}\cdot\xv),\Rm_i,b_i)\end{equation}
with advantage $\epsilon'=\epsilon/\ell-2 \cdot \negl$.  Thus, given $(\fv,\hv,\Rm_{[i-1]},\rho_z,H_{i-1}'(\Rm_{[i-1]}\cdot\xv),\Rm_i)$, $\adv'$ is able to compute $\Rm_i\cdot \xv$ with probability $\frac{1}{2}+\epsilon'$.  Note that $\epsilon'$ is non-negligible. 

Now fix $\fv,\hv,\Rm_{[i-1]}$, which fixes $H_{i-1}'$.  Let $y_i=H_{i-1}'(\Rm_{[i-1]}\cdot\xv)$.  Notice that since $\fv,\hv$ are fixed, there are at most $2^r$ possible values for $y_i$.  We now make the following claim:

\begin{claim}\label{claim:hc3} Let $\Ds$ be a sub-exponentially unpredictable distribution on $\Xs\times\mathcal{H}_Z$, with guessing probability no more than $2^{-\lambda^{c_1}}$.  Suppose $T:\Xs\rightarrow\Rs$ is drawn from a family $\Ts$ of efficient functions where the size of the image of $T$ is $2^r$.  Then the following distribution is also computationally unpredictable: $(x, (T,\rho_z,T(x)))$ where $T\gets\Ts$, $(x,\rho_z)\gets\Ds$, with guessing probability no more than $2^r \cdot 2^{-\lambda^{c_1}}$ (as long as $r \ll \lambda^{c_1}$).
\end{claim}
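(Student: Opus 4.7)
The plan is to prove the contrapositive via a direct reduction to the unpredictability of $\Ds$. Suppose for contradiction that there is an efficient quantum adversary $\Bs$ which, on input $(T, \rho_z, T(x))$ drawn with $T \gets \Ts$ and $(x, \rho_z) \gets \Ds$, outputs $x$ with probability strictly greater than $p := 2^r \cdot 2^{-\lambda^{c_1}}$. From $\Bs$ I will build an efficient $\As$ that, given only $\rho_z$, outputs $x$ with probability strictly greater than $2^{-\lambda^{c_1}}$, contradicting the sub-exponential unpredictability of $\Ds$.

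The reduction is the standard ``guess the auxiliary value'' argument. On input $\rho_z$, the algorithm $\As$ samples $T \gets \Ts$ honestly, samples a value $v$ uniformly at random from $\mathrm{Image}(T)$ independently of $\rho_z$, and outputs $\Bs(T, \rho_z, v)$. Since $|\mathrm{Image}(T)| \leq 2^r$ by assumption, for every realization of $(T, x)$ we have
$$\Pr[v = T(x)] \;=\; \frac{1}{|\mathrm{Image}(T)|} \;\geq\; \frac{1}{2^r},$$
and conditioned on the good event $\{v = T(x)\}$ the joint distribution of $(T, \rho_z, v, x)$ exactly matches the one in the hypothesized game for $\Bs$ (because $v$ was drawn independently of $(x, \rho_z)$, so conditioning on $v = T(x)$ does not change the marginal of $(T, x, \rho_z)$). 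Averaging over $(T, x, \rho_z)$ and restricting to this event gives
$$\Pr[\As(\rho_z) = x] \;\geq\; \frac{1}{2^r}\, \Pr_{T, x, \rho_z}[\Bs(T, \rho_z, T(x)) = x] \;>\; \frac{p}{2^r} \;=\; 2^{-\lambda^{c_1}},$$
which is the desired contradiction.

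The main and essentially only nontrivial point is that $\As$ must be QPT, which requires that $v$ be sampleable uniformly from $\mathrm{Image}(T)$ in polynomial time. In the one place this claim is invoked (in the proof of \Cref{thm:LF_to_PRG}), the function $T$ is $f_i$ (or $f_i$ composed with a prior stage) in \emph{lossy} mode, and one can efficiently obtain a near-uniform sample from $\mathrm{Image}(f_i)$ by applying $f_i$ to a uniformly random preimage: concentration of the preimage-size distribution then allows a standard rejection-sampling step to flatten to the uniform distribution on the image with only polynomial overhead (any loss being absorbed into a negligible term, since the hypothesis $r \ll \lambda^{c_1}$ guarantees $2^r \cdot 2^{-\lambda^{c_1}}$ is still sub-exponentially small). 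Alternatively, one can bypass the flattening entirely by noting that, because $v$ is independent of $(x, \rho_z)$, any sampler that places at least $1/2^r$ weight on each element of $\mathrm{Image}(T)$ suffices for the inequality above; for the concrete lossy functions used here this is easy to arrange directly from the sampling algorithm of $\LF.S_\lossy$.
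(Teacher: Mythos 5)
Your reduction has the right overall shape --- guess a value $v$ independently of $\rho_z$ and run $\Bs(T,\rho_z,v)$ --- but your choice of guess distribution, uniform over $\mathrm{Image}(T)$, introduces a real gap that the paper's choice avoids. Uniform image sampling makes the bookkeeping clean, since $\Pr[v=T(x)] = 1/|\mathrm{Image}(T)| \geq 1/2^r$ pointwise, but lossy functions do not come with an efficient image sampler, and neither of your attempted fixes holds up. Preimage sizes of a lossy function need not be concentrated, so pushing forward a random domain element and rejection-sampling to flatten can incur exponential overhead; and in the one place the claim is invoked, $T$ is the whole composed map $H'_{i-1}(\Rm_{[i-1]}\cdot\,)$, not a single $f_i$, so the image is even less accessible. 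Likewise, $\LF.S_\lossy$ samples a function description, not an image element, so it does not by itself give you a sampler with min-weight $1/2^r$ on $\mathrm{Image}(T)$.

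The paper avoids the efficiency issue entirely: its $\advC$ draws a fresh independent $(x',\rho_{z'})\gets\Ds$ and sets $a:=T(x')$. This is always implementable (if $\Ds$ is not efficiently samplable, a good sample can be hardwired non-uniformly) and is supported inside $\mathrm{Image}(T)$ by construction. The price is that $\Pr[a=T(x)]$ is now a collision probability rather than a pointwise $1/2^r$; the paper's inference that, conditioned on the collision event, $\advB$ still succeeds with probability $\gamma$ is itself informal, since conditioning skews $(T,x,\rho_z)$ toward heavy preimages. A clean repair is to keep only the $a=T(x)$ term and apply Cauchy--Schwarz over the at most $2^r$ image elements, which yields $\Pr[\advC\text{ succeeds}]\geq \gamma^2/2^r$ and hence $\gamma\leq 2^{(r-\lambda^{c_1})/2}$; this is weaker than the stated $2^{r-\lambda^{c_1}}$ but still sub-exponential under $r\ll\lambda^{c_1}$, which is all the downstream application needs. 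In short: your probabilistic core is correct and in fact sharper, but the efficient-image-sampling premise it relies on is unjustified, and the intended way to recover efficiency is the resample-from-$\Ds$ trick rather than image sampling.
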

\begin{proof} Suppose we have an efficient adversary $\advB$ that predicts $x$ with non-negligible probability $\gamma$ given $T,\rho_z,T(x)$, and suppose $T$ has  image size $2^r$.  We then construct a new adversary $\advC$ that, given $x$, samples a random $T$, samples $(x',\rho_{z'})\gets\Ds$, and sets $a=T(x')$.  It then runs $\advB(T,\rho_z,a)$ to get a string $x''$, which it outputs.  Notice that $a$ is sampled from the same distribution as $T(x)$, so with probability at least $1/2^r$, $a=T(x)$.  In this case, $x''=x$ with probability $\gamma$.  Therefore, $\advC$ outputs $x$ with probability $\gamma/2^r$, which can not be greater than $2^{-\lambda^{c_1}}$. Thus, $\gamma$ is at most $2^r \cdot 2^{-\lambda^{c_1}}$.  
\end{proof}

\noindent Using Claim~\ref{claim:hc3} with $T=H_{i-1}'(\Rm_{[i-1]}\cdot\xv)$, we see that $(\xv,(i,\fv,\hv,\Rm_{[i-1]},\rho_z,\allowbreak H_{i-1}'(\Rm_{[i-1]}\cdot\xv)))$ is computationally unpredictable.  Moreover, $\Rm_i\cdot\xv$ is a Goldreich-Levin~\cite{goldreich1989hard} hardcore bit. We rely on the following lemma, which we will prove in the next section: 
\begin{lemma}[Quantum Goldreich-Levin] \label{lem:quantum_goldreich_levin}
If there exists a quantum algorithm, that given a random $r$ and an auxiliary quantum input $\ket {\psi_x}$, it computes $\langle x, r\rangle$ with probability at least $1/2 + \epsilon$ (where the probability is taken over the choice of $x$ and random $r$); then there exists a quantum algorithm that takes $\ket {\psi_x}$ and  extracts $x$ with probability $4 \cdot \epsilon^2$. 

The same lemma holds if the quantum auxiliary input is a mixed state, by convexity. 
\end{lemma}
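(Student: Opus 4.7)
The plan is to adapt the quantum Goldreich--Levin argument of Adcock and Cleve~\cite{adcock2002quantum} to accommodate a quantum auxiliary input $\ket{\psi_x}$ that depends on the secret $x$. The key conceptual observation is that the Adcock--Cleve reduction uses the predictor as a single-shot coherent subroutine (no rewinding, no multiple queries), so it is compatible with a one-use quantum auxiliary state. For each fixed $x$, let $\epsilon_x$ be such that the given algorithm on input $(r,\ket{\psi_x})$ with uniformly random $r$ outputs $\langle x,r\rangle$ with probability $1/2+\epsilon_x$; by hypothesis $\E_x[\epsilon_x] = \epsilon$. I will design an extractor that on input $\ket{\psi_x}$ outputs $x$ with probability at least $4\epsilon_x^2$; averaging over $x$ and applying Jensen's inequality $\E_x[4\epsilon_x^2] \geq 4(\E_x[\epsilon_x])^2 = 4\epsilon^2$ then yields the claim.

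I will first assume, without loss of generality, that the predictor is a unitary $U$ acting on the input register $R$ (holding $r$), the auxiliary register holding $\ket{\psi_x}$, fresh ancillae, and a single output qubit, followed by a standard-basis measurement of the output qubit; this is justified by Stinespring dilation and deferred measurement. The extractor then proceeds as follows: (i) initialize $R$ to the uniform superposition $\frac{1}{\sqrt{2^n}}\sum_r \ket{r}$, attach $\ket{\psi_x}$, fresh ancillae, and an output qubit initialized to $\ket{-}$; (ii) apply $U$ coherently on $(R, \ket{\psi_x}, \text{ancillae}, \text{out})$; (iii) XOR the output qubit into the $\ket{-}$ qubit so as to imprint a phase $(-1)^b$ on each branch on which the predictor would have classically output $b$; (iv) apply $U^\dagger$ to uncompute; (v) apply $H^{\otimes n}$ to $R$ and measure $R$ in the standard basis.

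The amplitude analysis is the standard Adcock--Cleve computation: steps (i)--(iv) implement an operation that is close to the ideal phase oracle $V_x: \ket{r}\mapsto (-1)^{\langle x,r\rangle}\ket{r}$ on the $R$ register, with the auxiliary registers returned near their initial state. Writing
$$U\ket{r}\ket{\psi_x}\ket{0} = \alpha_{r,x}\ket{\langle x,r\rangle}\ket{\phi^+_{r,x}} + \beta_{r,x}\ket{1\oplus\langle x,r\rangle}\ket{\phi^-_{r,x}},$$
with $|\alpha_{r,x}|^2 = 1/2 + \epsilon_{x,r}$ and $\E_r[\epsilon_{x,r}] = \epsilon_x$, one tracks that after steps (iii)--(iv) the state can be decomposed into a ``good'' component $(-1)^{\langle x,r\rangle}\ket{r}\ket{\psi_x}\ket{0}$ plus an orthogonal error term whose contribution to the overlap with $\ket{x}\ket{\psi_x}\ket{0}$ (after Hadamard on $R$) can be controlled. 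The resulting amplitude on $\ket{x}$ after step (v) is at least $2\epsilon_x$, so the probability of measuring $x$ is at least $4\epsilon_x^2$.

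The main technical subtlety, and the part I expect to require the most care, is the amplitude bookkeeping in the above step: the auxiliary registers do not return exactly to their initial state after the uncomputation, because the phase kickback introduces an asymmetry between the ``correct'' and ``incorrect'' branches of the output register. Controlling the residual error reduces to a Cauchy--Schwarz / convexity bound applied to the signed sum $\E_r[(-1)^{\langle x,r\rangle}(2p_{x,r}-1)] = 2\epsilon_x$. Once this is in place, the lemma for pure auxiliary inputs follows, and the extension to mixed auxiliary states is immediate by linearity of the success probability in the density operator of the auxiliary state.
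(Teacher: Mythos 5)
Your proposal is correct and follows essentially the same route as the paper's proof: phase-kickback on the output bit (equivalently a $Z$ gate), uncompute with $U^\dagger$, $H^{\otimes n}$ on the $r$ register, and a per-$x$ amplitude bound of $2\epsilon_x$ on $\ket{x}\ket{\psi_x}\ket{\mathbf{0}}$ followed by Jensen's inequality $\E_x[4\epsilon_x^2]\geq 4\epsilon^2$. One small slip: the quantity you need after the final Hadamard is $\E_r[2p_{x,r}-1]=2\epsilon_x$ (the phase $(-1)^{\langle x,r\rangle}$ produced by the predictor-uncomputation cancels against the $(-1)^{\langle x,r\rangle}$ from $H^{\otimes n}$ evaluating the amplitude at $\ket{x}$), not $\E_r[(-1)^{\langle x,r\rangle}(2p_{x,r}-1)]$ as written; this is purely notational and does not affect the argument.
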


Applying the quantum Goldreich-Levin theorem to the computationally unpredictable distribution $(\xv,(i,\fv,\hv,\Rm_{[i-1]},\rho_z, H_{i-1}'(\Rm_{[i-1]}\cdot\xv))\;)$, we see that there exists an algorithm that extracts $x$ with probability at least $4 \cdot \epsilon'^2$.  This contradicts the computationally unpredictability of the underlying distribution, proving \Cref{thm:LF_to_PRG}.

\end{proof}

\subsection{Quantum Goldreich-Levin, with Quantum Auxiliary Input}

In this section, we prove the final step:

\begin{replemma}{lem:quantum_goldreich_levin}
If there exists a quantum algorithm, that given random $r$ and auxiliary quantum input $\ket {\psi_x}$, it computes $\langle x, r\rangle$ with probability at least $1/2 + \epsilon$ (where the probability is taken over the choice of $x$ and random $r$); then there exists a quantum algorithm that takes $\ket {\psi_x}$ and  extracts $x$ with probability $4 \cdot \epsilon^2$. 

The same lemma holds if the quantum auxiliary input is a mixed state, by convexity. 
\end{replemma}

The proof is the same as that in \cite{adcock2002quantum}, but quantum auxiliary input about $x$ is considered. 
\begin{proof}
Assume there exists a unitary $U$, given $r$ and an auxiliary quantum state $\ket {\psi_x}$, it computes $\langle x, r\rangle$ with probability more than $1/2+\epsilon$ . Since $r$ is classical information, $U$ can be modeled as: read $r$, applies $U_{r}$. For every $x, r$, we have:
\begin{align*}
    U \ket r \ket {\psi_x} \ket {\mathbf{0}^m} = &  \ket r U_{r} \ket {\psi_x} \ket {\mathbf{0}^m} \\
    = & \ket r \left( \alpha_{x, r} \ket {\langle x, r \rangle} \ket {\phi_{x, r}} + \beta_{x, r} \ket {\overline{\langle x, r \rangle} } \ket {\phi'_{x, r}} \right) = \ket r \ket {\Phi_{x, r}},
\end{align*}
where $\ket {\mathbf{0}^m}$ is the working space, $\alpha_{x, r}$ is the coefficient for computing $\langle x, r\rangle$ correctly and $\beta_{x,r}$ for an incorrect answer. 

Let $\epsilon_x$ be the probability that the quantum algorithm answers correctly on $x$ and $R$ be the space of all $r$, we have the success probability as: 
\begin{align*}
    \mathbb{E}_r\left[ |\alpha_{x,r}|^2 \right] = \frac{1}{|R|} \sum_r |\alpha_{x,r}|^2 = 1/2 + \epsilon_x.
\end{align*}

Now we fix an $x$ and $r$. Our algorithm for extracting $x$ does the following: it starts with the state above, then (1.) it applies a $Z$-gate(phase-flip gate) to get 
\begin{align*}
    & \ket r \left( \alpha_{x, r} (-1)^{\langle x, r\rangle} \ket {\langle x, r \rangle} \ket {\phi_{x, r}} + \beta_{x, r} (-1)^{\overline{\langle x, r\rangle}} \ket {\overline{\langle x, r \rangle} } \ket {\phi'_{x, r}} \right) \\
    = &  \ket r  (-1)^{\langle x, r\rangle}  \left( \alpha_{x, r}\ket {\langle x, r \rangle} \ket {\phi_{x, r}} - \beta_{x, r} \ket {\overline{\langle x, r \rangle} } \ket {\phi'_{x, r}} \right) \\
    = & \ket r  \ket {\Phi_{x,r}'}.
\end{align*}
Then (2.) it applies $U^{\dagger}$, because we have $\langle \Phi_{x, r} \,|\, \Phi'_{x,r}\rangle = (-1)^{\langle x, r\rangle} (|\alpha_{x,r}|^2 - |\beta_{x,r}|^2)$, 
\begin{align*}
    U^{\dagger} \ket r \ket {\Phi'_{x,r}} =  \ket r \left( (-1)^{\langle x, r\rangle} (|\alpha_{x,r}|^2 - |\beta_{x,r}|^2) \ket{\psi_x} \ket {\mathbf{0}^m} + \ket {{\sf err}_{x,r}}   \right), 
\end{align*}
where $ \ket {{\sf err}_{x,r}}$ is orthogonal to $\ket {\psi_x} \ket {\mathbf{0}^m}$. 

In the first two step, we actually compute everything over a uniform superposition of $r$. 
Next (3.) it applies {\sf QFT} on $r$ register, 
\begin{align*}
    &{\sf QFT}\, \frac{1}{\sqrt{|R|}} \sum_r \ket r \left( (-1)^{\langle x, r\rangle} (|\alpha_{x,r}|^2 - |\beta_{x,r}|^2) \ket{\psi_x} \ket {\mathbf{0}^m} + \ket {{\sf err}_{x,r}}   \right) \\
    =& \frac{1}{|R|}  \sum_y \sum_r \ket y  (-1)^{\langle y, r\rangle} \left( (-1)^{\langle x, r\rangle} (|\alpha_{x,r}|^2 - |\beta_{x,r}|^2) \ket{\psi_x} \ket {\mathbf{0}^m} + \ket {{\sf err}_{x,r}}   \right). 
\end{align*}

Therefore, the phase on $\ket x \ket {\psi_{x}} \ket {\mathbf{0}^m}$ is at least, 
\begin{align*}
    \frac{1}{|R|} \sum_{r} \left( |\alpha_{x,r}|^2 - |\beta_{x,r}|^2 \right) \geq 2 \cdot \epsilon_x. 
\end{align*}
It measures $r$ register and with probability at least $4 \cdot \epsilon^2_x$, it extracts $x$. 

By convexity, the quantum algorithm succeeds in extracting $x$ is at least $4 \cdot \epsilon^2$. 
\end{proof}

\section{Proofs of Coset State Properties}
\subsection{Proof for \texorpdfstring{\Cref{thm: monogamy info}}{the Information-Theoretical Monogamy-of-Entanglement Property}}
\label{proof:monogamy}

\paragraph{Proof of \Cref{thm: monogamy info}. }

In this section, we prove Theorem \ref{thm: monogamy info}, the information-theoretic monogamy property of coset states. 
The proof resembles the proof of monogamy for BB84 states in~\cite{tomamichel2013monogamy}. However, the extra algebraic structure of subspace states requires a more refined analysis. We first state the lemmas that are required for the main theorem.

Assume $A \subseteq \mathbb{F}_2^n$ is of dimension $n/2$. 
We use $\AS(A)$ to denote the set of all cosets of $A$. Since $\dim(A) = n/2$, $|\AS(A)| = 2^{n/2}$. 
Note that if $A+s \ne A+s_0$, then they are disjoint.
Because each coset $A+s$ of $A$ has a canonical form, which is $\can_A(s)$, we will identify $\AS(A)$ with the set of all canonical vectors (where cosets are identified with their canonical vectors). 

We use $R_2^n$ to denote the set of all subspaces of dimension $n/2$ in $\mathbb{F}_2^n$.

\begin{lemma}
    Fixing a subspace $A$, the coset states $\ket {A_{s,s'}}$ and $\ket {A_{s_0, s'_0}}$ are orthogonal if and only if $A+s \ne A+s_0$ or $A'+s' \ne A'+s_0'$.
\end{lemma}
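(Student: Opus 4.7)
The plan is to directly compute the inner product $\langle A_{s_0,s'_0} | A_{s,s'} \rangle$ and show that it vanishes precisely under the claimed condition. (I will read the second condition in the statement as $A^\perp + s' \ne A^\perp + s'_0$; the primes on the inner $A$'s appear to be a typo, since only one subspace is in play.) Expanding both coset states in the computational basis gives
\[
\langle A_{s_0,s'_0} | A_{s,s'}\rangle = \frac{1}{|A|}\sum_{a,a_0 \in A} (-1)^{\langle s'_0, a_0\rangle + \langle s', a\rangle}\, \langle a_0 + s_0 \,|\, a + s\rangle,
\]
and the orthogonality of standard basis vectors kills every term in which $a_0 + s_0 \ne a + s$, i.e., every term in which $a - a_0 \ne s_0 - s$.

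For the first direction, I would argue that if $A+s \ne A+s_0$, then $s_0 - s \notin A$, so no pair $(a,a_0) \in A \times A$ satisfies $a - a_0 = s_0 - s$. Hence the entire sum collapses to zero and the two states are orthogonal.

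For the converse, suppose $A + s = A + s_0$, and write $s_0 = s + t$ with $t \in A$. Then the surviving pairs are precisely those of the form $(a, a - t)$ as $a$ ranges over $A$. Substituting, the inner product simplifies to
\[
\frac{(-1)^{\langle s'_0, t\rangle}}{|A|}\sum_{a \in A} (-1)^{\langle s' + s'_0,\, a\rangle}.
\]
The remaining character sum over $A$ equals $|A|$ when $s' + s'_0 \in A^\perp$ and vanishes otherwise. Thus the inner product is nonzero exactly when $A + s = A + s_0$ \emph{and} $A^\perp + s' = A^\perp + s'_0$; equivalently, the states are orthogonal exactly when at least one of the two cosets differs, which is the claim.

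The argument is essentially a routine character-orthogonality calculation; I do not anticipate a real obstacle. The only mild care required is tracking the overall phase $(-1)^{\langle s'_0, t\rangle}$ cleanly and noting that the value is independent of the choice of representative $t$ modulo what matters for the vanishing criterion, since we only care about whether the sum is zero.
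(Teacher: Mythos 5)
Your proof is correct, and it takes a genuinely different route from the paper's. The paper's argument is a two-line observation: if $A+s \ne A+s_0$, the two states have disjoint support in the computational basis and are therefore orthogonal; if $A^\perp + s' \ne A^\perp + s'_0$, apply $H^{\otimes n}$ and run the same disjoint-support argument in the Fourier basis (where the states become $\ket{A^\perp_{s',s}}$ and $\ket{A^\perp_{s'_0,s_0}}$). The paper's route is slicker and avoids any sum, but as written it only establishes the ``if'' direction of the biconditional; the converse (that agreement of both cosets forces the states to coincide, hence to be non-orthogonal) is left implicit. Your direct character-sum computation is a bit heavier but delivers both directions of the iff in a single calculation, and along the way produces the exact value of the overlap (it is $\pm 1$ when both cosets agree and $0$ otherwise), which the disjoint-support argument does not. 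You also correctly read the $A'$ in the statement as $A^\perp$. One minor bookkeeping remark: over $\mathbb{F}_2$ addition and subtraction coincide, so the signs you carry ($s_0 - s$, $s' + s'_0$ vs.\ $s' - s'_0$, and the leading phase $(-1)^{\langle s'_0, t\rangle}$) are all interchangeable and, as you note, the phase is irrelevant to the vanishing criterion — worth saying explicitly, but not a gap.
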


\begin{proof}
    If $A+s \ne A+s_0$,  then $|A_{s, s'}\rangle$ has support over $A + s$ but $|A_{s_0, s'_0}\rangle$ has support over $A + s_0$. Because  they have disjoint support, it is easy to see they are orthogonal.  
    
    If $A'+s' \ne A'+s'_0$, we can apply {\sf QFT} and use the same argument in the Fourier domain.
\end{proof}

\begin{lemma}
    Fixing $A$, $|A_{s,s'}\rangle$ for all $s, s' \in \AS(A), \AS(A^\perp)$ form a basis.
\end{lemma}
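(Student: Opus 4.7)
The plan is a straightforward dimension-counting argument combined with the orthogonality result of the preceding lemma. First, I would observe that the ambient Hilbert space $\mathbb{C}^{2^n}$ has dimension $2^n$, so it suffices to exhibit $2^n$ mutually orthogonal unit vectors of the claimed form. Since $A$ has dimension $n/2$, the number of distinct cosets of $A$ in $\mathbb{F}_2^n$ is $|\AS(A)| = 2^{n/2}$, and likewise $|\AS(A^\perp)| = 2^{n/2}$ (as $A^\perp$ also has dimension $n/2$). Thus the collection $\{\ket{A_{s,s'}} : s \in \AS(A),\ s' \in \AS(A^\perp)\}$ contains exactly $2^{n/2} \cdot 2^{n/2} = 2^n$ vectors, matching the dimension of the space.

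Next, I would verify that each $\ket{A_{s,s'}}$ is a unit vector. This follows from the definition
\[
\ket{A_{s,s'}} = \frac{1}{\sqrt{|A|}} \sum_{a \in A} (-1)^{\langle a, s' \rangle} \ket{a+s},
\]
since the $2^{n/2}$ vectors $\{\ket{a+s}\}_{a \in A}$ are orthonormal in the computational basis and each appears with coefficient of magnitude $1/\sqrt{|A|}$.

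Then I would invoke the previous lemma: whenever $s \ne s_0$ as coset representatives (i.e., $A+s \ne A+s_0$) or $s' \ne s'_0$ as coset representatives (i.e., $A^\perp+s' \ne A^\perp+s'_0$), the states $\ket{A_{s,s'}}$ and $\ket{A_{s_0,s'_0}}$ are orthogonal. Consequently, the $2^n$ vectors in our collection are pairwise orthogonal unit vectors.

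Finally, since we have $2^n$ mutually orthonormal vectors in a $2^n$-dimensional Hilbert space, they necessarily form an orthonormal basis, completing the proof. There is no substantive obstacle; the only subtlety is being careful about indexing by cosets rather than by arbitrary representatives, which is exactly why the lemma statement identifies $\AS(A)$ with the set of canonical coset representatives introduced just before.
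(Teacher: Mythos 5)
Your proof is correct and takes essentially the same route as the paper: invoke the preceding orthogonality lemma, count $2^{n/2}\cdot 2^{n/2}=2^n$ states indexed by pairs of cosets, and conclude they span the $2^n$-dimensional space. The paper's version is terser (it omits the explicit unit-norm check), but the argument is identical.
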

\begin{proof}
    We already know that the states $\ket {A_{s, s'}}$ and $\ket {A_{s_0, s'_0}}$ are orthogonal if $s, s' \ne s_0, s'_0$.  Since there are total $2^{n/2} \times 2^{n/2} = 2^n$ states $\ket {A_{s, s'}}$, they form a basis. 
\end{proof}

\begin{lemma} \label{lem:EPR}
    Fixing $A$, $\frac{1}{2^{n/2}} \sum_{s, s' \in  \AS(A), \AS(A^\perp)} |A_{s, s'}, A_{s, s'}\rangle  = \frac{1}{2^{n/2}} \sum_{v \in \mathbb{F}_2^n} |v, v\rangle$. In other words, the summation is independent of $A$ and it is an EPR pair. 
\end{lemma}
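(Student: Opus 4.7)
} The plan is a direct computation: expand both tensor factors $\ket{A_{s,s'}}$ in the definition, exchange the order of summation so that the sum over $s'$ is carried out first, and use a Fourier-type argument to collapse that inner sum to a Kronecker delta. Concretely, writing $\ket{A_{s,s'}} = |A|^{-1/2} \sum_{a \in A} (-1)^{\langle a, s'\rangle} \ket{a+s}$, one gets
\begin{equation*}
\sum_{s, s'} \ket{A_{s,s'}} \otimes \ket{A_{s,s'}} = \sum_{s \in \AS(A)} \sum_{s' \in \AS(A^\perp)} \frac{1}{|A|} \sum_{a, b \in A} (-1)^{\langle a+b, s'\rangle} \ket{a+s, b+s}.
\end{equation*}

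The key step is the inner sum over $s'$. For any $a+b \in A$, the map $x \mapsto \langle a+b, x\rangle$ vanishes on $A^\perp$, so it descends to a well-defined linear functional on the quotient $\mathbb{F}_2^n / A^\perp$, and the coset representatives $s' \in \AS(A^\perp)$ run through a complete set of representatives of that quotient. By non-degeneracy of the induced pairing between $A$ and $\mathbb{F}_2^n / A^\perp$, this functional is identically zero exactly when $a+b = 0$. Therefore $\sum_{s' \in \AS(A^\perp)} (-1)^{\langle a+b, s'\rangle} = 2^{n/2}\cdot [a = b]$.

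Substituting this back, the sum collapses to
\begin{equation*}
\sum_{s \in \AS(A)} \frac{2^{n/2}}{|A|} \sum_{a \in A} \ket{a+s, a+s} = \sum_{s \in \AS(A)} \sum_{a \in A} \ket{a+s, a+s} = \sum_{v \in \mathbb{F}_2^n} \ket{v, v},
\end{equation*}
using $|A| = 2^{n/2}$ and the fact that every $v \in \mathbb{F}_2^n$ has a unique decomposition $v = a+s$ with $a \in A$ and $s$ a canonical coset representative. Dividing by $2^{n/2}$ on both sides yields the claim. No step looks like a real obstacle here: the only subtlety is the Fourier argument of step~2, but it is immediate once one recognizes that summing over coset representatives of $A^\perp$ is the same as summing over the dual group $\mathbb{F}_2^n/A^\perp \cong \widehat{A}$.
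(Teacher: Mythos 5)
Your proof is correct and takes essentially the same approach as the paper: expand the coset states, swap sums, and collapse the inner sum over $s'$ by a character-orthogonality argument. The only cosmetic difference is that you carry out the character sum directly over the quotient $\mathbb{F}_2^n/A^\perp$ (using that $a+b \in A$ makes the character well-defined there), whereas the paper first lifts $s,s'$ to range over all of $\mathbb{F}_2^n$, divides by the multiplicity $|A|\,|A^\perp|$, and then applies the standard full-group orthogonality $\sum_{s'\in\mathbb{F}_2^n}(-1)^{\langle a-b,s'\rangle}=2^n[a=b]$.
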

\begin{proof}
    \begin{align*}
       \sum_{s, s' \in \AS(A), \AS(A^\perp)} |A_{s, s'}, A_{s, s'}\rangle &=  \frac{1}{ |A| |A^\perp|} \sum_{s, s' \in \mathbb{F}_2^n} |A_{s, s'}, A_{s, s'}\rangle \\
         &=  \frac{1}{ |A| |A^\perp|} \sum_{s, s' \in \mathbb{F}_2^n} \frac{1}{|A|} \sum_{\substack{a \in A \\ b \in A}} (-1)^{\langle a - b, s'\rangle} |a + s\rangle |b + s\rangle \\
          &=  \frac{2^n}{|A| |A^\perp|} \sum_{s \in \mathbb{F}_2^n} \frac{1}{|A|} \sum_{\substack{a \in A }}  |a + s\rangle |a + s\rangle \\
          &= \sum_{s\in  S}  | s\rangle |s\rangle %
    \end{align*}
    where the first equality comes from the fact that for any vectors $s_0 \in A + s$ and $s_0' \in A^\perp+s'$, $\ket{A_{s, s'}} \ket{A_{s, s'}} = \ket {A_{s_0, s_0'}} \ket {A_{s_0, s_0'}}$. 
\end{proof}

We want to prove the following statement: 
\begin{theorem} \label{thm: monogam IT canonical form}
    Fix $n \in \mathbb{N}$. For any Hilbert spaces $\Hs_B, \Hs_C$, any collections of POVMs
    \begin{align*}
        \left\{ \left\{ P^A_{s, s'} \right\}_{s, s' \in \AS(A), \AS(A^\perp)} \right\}_{A \in R_2^n}  \text{ and } \left\{ \left\{ Q^A_{s, s'} \right\}_{s, s' \in \AS(A), \AS(A^\perp)} \right\}_{A \in R_2^n}
    \end{align*}
    on the Hilbert spaces, and any CPTP map that maps $|A_{s, s'}\rangle \langle A_{s, s'}|$ into $\Ds(\Hs_B) \otimes \Ds(\Hs_C)$, we have that, 
    \begin{align*}
        \mathbb{E}_{A \in R_2^n} \mathbb{E}_{s, s' \in \AS(A), \AS(A^\perp)} {\sf Tr} \left[ \left(P^A_{s,s'} \otimes Q^A_{s, s'} \right) \cdot \Phi(|A_{s,s'}\rangle\langle A_{s,s'}|)  \right] \leq 1/\subexp(n)
    \end{align*}
    where $\subexp$ is a sub-exponential function. 
\end{theorem}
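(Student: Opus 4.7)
The plan is to adapt the Tomamichel--Fehr--Kaniewski--Wehner (TFKW) framework~\cite{tomamichel2013monogamy} for the BB84 monogamy-of-entanglement game to the setting of coset states, while handling the extra algebraic structure (affine subspaces rather than individual qubits). First I would apply a Stinespring/purification reduction: absorbing the environment of the CPTP map $\Phi$ into the adversaries' registers, the winning probability can be written as $p_{\mathrm{win}} = \Tr[M\sigma]$ for some tripartite state $\sigma$ on a ``subspace'' register together with $\Hs_B \otimes \Hs_C$, where
\[
M \;:=\; \E_{A \in R_2^n}\;\E_{s,s'}\; |A_{s,s'}\rangle\langle A_{s,s'}| \,\otimes\, P^A_{s,s'} \,\otimes\, Q^A_{s,s'}.
\]
Consequently $p_{\mathrm{win}} \le \|M\|_{\infty}$, so the task reduces to bounding this operator norm.

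Second, decompose $M = \E_A\, M_A$ with $M_A := \sum_{s,s'} \tfrac{1}{|\AS(A)|\,|\AS(A^\perp)|}\,|A_{s,s'}\rangle\langle A_{s,s'}| \otimes P^A_{s,s'} \otimes Q^A_{s,s'}$. Because $\{\,|A_{s,s'}\rangle\,\}_{s,s'}$ is an orthonormal basis for each $A$ and because $\{P^A_{s,s'}\},\{Q^A_{s,s'}\}$ are POVMs, each $M_A$ is a positive contraction. I would then invoke a TFKW-style norm bound of the form
\[
\Bigl\|\,\E_A M_A\,\Bigr\|_\infty \;\le\; \max_{A}\; \E_{A'}\; \bigl\|\, M_A^{1/2}\, M_{A'}\, M_A^{1/2}\,\bigr\|_\infty^{1/2},
\]
which, after expanding in the coset-state basis and using the tensor-product structure on $\Hs_B \otimes \Hs_C$, reduces the problem to controlling the typical squared overlap $|\langle A_{s,s'}\,|\,A'_{t,t'}\rangle|^2$ between coset states associated with two distinct subspaces.

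Third, I would explicitly bound these overlaps. A direct expansion shows that $\langle A_{s,s'}\,|\,A'_{t,t'}\rangle$ vanishes unless $t-s \in A + A'$, in which case the overlap is given by a normalized sum of signs $(-1)^{\langle\,s'-t',\,\cdot\,\rangle}$ ranging over a coset of $A\cap A'$ inside $A$. For $n/2$-dimensional $A, A' \subseteq \F_2^n$ with small intersection, this average over uniformly random shifts is exponentially small in $n - \dim(A\cap A')$. Combined with \Cref{lem:EPR}, which already encodes the ``EPR-like'' symmetry that makes $M_A^{1/2} M_{A'} M_A^{1/2}$ analyzable without reference to a specific basis, this should deliver a bound of the form $\|M\|_\infty \le 2^{-\Omega(n)}$, and hence $p_{\mathrm{win}} \le 1/\subexp(n)$ as required.

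The main obstacle will be the overlap computation: while the BB84 analogue enjoys the clean fact that every pair of eigenbases has constant pairwise fidelity $1/\sqrt{2}$, for uniformly random $n/2$-dimensional subspaces $A, A'$ the dimension $\dim(A \cap A')$ is concentrated around $0$ but has polynomial-probability tails of larger intersection, which naively inflate the overlap bound. I would handle this by splitting the expectation over $A'$ according to $\dim(A \cap A')$: the typical case of small intersection contributes a bound that is already subexponentially small, while the tail of atypical intersections is absorbed by the small probability of sampling such an $A'$, using a counting argument in the spirit of the subspace-hiding analysis of~\cite{zhandry2019quantum}. Combining the two regimes should close the argument.
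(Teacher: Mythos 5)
Your plan shares the paper's backbone — purify the adversary's map so that $p_{\mathrm{win}}=\Tr[M\rho]$ for a fixed trace-one state $\rho$, then bound $\|M\|_\infty$ via pairwise coset-state overlaps — but the first step has a normalization error as written: with $M=\E_A\E_{s,s'}(\cdots)$, each $M_A$ has norm $2^{-n}$ and $p_{\mathrm{win}}\le\|M\|_\infty\le 2^{-n}$, which is below the trivial guessing probability. The correct reduction (this is exactly what \Cref{lem:EPR} buys, and it is the content of the paper's intermediate \Cref{thm: monogam2 IT canonical form}) yields $p_{\mathrm{win}}=\Tr[(\E_A\Pi^A)\rho]$ with $\Pi^A=\sum_{s,s'}|A_{s,s'}\rangle\langle A_{s,s'}|\otimes P^A_{s,s'}\otimes Q^A_{s,s'}$ — the sum, not the average, over $(s,s')$, a genuine projection after dilating the POVMs; the $2^{-n}$ lives in $\rho$, not in $M$. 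Your own justification (``orthonormal basis $\Rightarrow$ contraction'') is actually the right reason for the \emph{unnormalized} sum to be a contraction, so this reads as a slip rather than a conceptual gap, but it would need to be fixed before the later steps make sense.

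The more interesting point is that your norm inequality $\|\E_A M_A\|_\infty\le\max_A\E_{A'}\|M_A^{1/2}M_{A'}M_A^{1/2}\|_\infty^{1/2}$ is \emph{not} TFKW's Lemma 2 (\Cref{lem:tfkw13_helper}), which bounds $\|\sum_i A_i\|$ by a sum over a family of mutually orthogonal permutations. The paper's use of that lemma forces a conditioning on a random basis and a restriction to subspaces spanned by subsets of it, so that cycle covers of a Johnson graph can be built; that restriction is what limits the final bound to $O(2^{-\sqrt{n}})$ via $\frac{1}{\binom{n}{n/2}}\sum_t\binom{n/2}{t}^2 2^{-t}$. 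What you wrote down is instead a block Schur test on the Gram matrix $T_{A,A'}=\sqrt{M_A}\sqrt{M_{A'}}$ (which has the same nonzero spectrum as $\sum_A M_A$), and it is valid with no permutation structure at all — $A'$ can range directly over all of $R_2^n$. There $\Pr_{A'}[\dim(A\cap A')=k]$ decays like $2^{-k^2}$, far faster than the ``polynomial tails'' you worry about; those heavier tails are an artifact of the paper's subset restriction (hypergeometric), not of random subspaces. Combined with $\|\Pi^A\Pi^{A'}\|\le 2^{\dim(A\cap A')-n/2}$ — whose proof genuinely needs the two-register $\Hs_B\otimes\Hs_C$ tensor structure by promoting one POVM to the identity on each side, not merely the scalar overlap $|\langle A_{s,s'}|A'_{t,t'}\rangle|$ — the Schur route gives $\|\E_A\Pi^A\|=O(2^{-n/2})$, both more elementary (no cycle-cover construction) and quantitatively sharper than the paper's $O(2^{-\sqrt n})$, comfortably within the claimed $1/\subexp(n)$.
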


Note that this bound directly gives Theorem \ref{thm: monogamy info}, since both parties in Theorem \ref{thm: monogamy info} get the description of $A$, by applying $\can_A(\cdot)$, one could map any vectors in $A+s$ and $A^\perp+s'$ to $\can_A(s)$ and $\can_{A^\perp}(s')$.

To prove Theorem \ref{thm: monogamy info} (and the above Theorem \ref{thm: monogam IT canonical form}), we present the following theorem about the monogamy game. 
\begin{theorem} \label{thm: monogam2 IT canonical form}
    Fix $n \in \mathbb{N}$. For any Hilbert spaces $\Hs_B, \Hs_C$, any collections of POVMs
    \begin{align*}
        \left\{ \left\{ P^A_{s, s'} \right\}_{s, s' \in \AS(A), \AS(A^\perp)} \right\}_{A \in R_2^n}  \text{ and } \left\{ \left\{ Q^A_{s, s'} \right\}_{s, s' \in \AS(A), \AS(A^\perp)} \right\}_{A \in R_2^n}
    \end{align*}
    on the Hilbert spaces, and any state $\rho$,  we have 
    \begin{align*}
        \mathbb{E}_{A \in R_2^n} \sum_{s, s' \in \AS(A), \AS(A^\perp)} {\sf Tr} \left[ \left(|A_{s,s'}\rangle \langle A_{s,s'}| \otimes P^A_{s,s'} \otimes Q^A_{s, s'} \right) \cdot \rho \right] \leq  1/\subexp(n)
    \end{align*}
    where $\subexp$ is a sub-exponential function.
\end{theorem}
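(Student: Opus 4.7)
The strategy is a direct adaptation of the operator-norm technique used in the Tomamichel--Fehr--Kaniewski--Wehner analysis of the BB84 monogamy-of-entanglement game, modified to exploit the algebraic structure of coset states. For each $A \in R_2^n$, define the positive semidefinite operator
\[
\Pi_A := \sum_{s, s' \in \AS(A) \times \AS(A^\perp)} |A_{s,s'}\rangle\langle A_{s,s'}| \otimes P^A_{s,s'} \otimes Q^A_{s,s'}
\]
acting on the $\Hs_A \otimes \Hs_B \otimes \Hs_C$ registers. Since the coset states $\{|A_{s,s'}\rangle\}$ form an orthonormal basis for fixed $A$ (by the lemma preceding Lemma \ref{lem:EPR}), and $\{P^A_{s,s'}\}, \{Q^A_{s,s'}\}$ are POVMs, we have $0 \leq \Pi_A \leq I$. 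The quantity to bound equals $\Tr[(\mathbb{E}_A \Pi_A)\rho]$, and maximizing over all states $\rho$ reduces the claim to the operator-norm inequality $\|\mathbb{E}_A \Pi_A\| \leq 1/\subexp(n)$.

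The next step is to apply the standard operator-norm inequality for sums of positive semidefinite operators, namely $\|\sum_A M_A\| \leq \max_{\pi} \sum_A \|\sqrt{M_A}\sqrt{M_{\pi(A)}}\|$, where $\pi$ ranges over permutations of the index set. Using orthogonality of the $|A_{s,s'}\rangle$ for fixed $A$, one verifies that $\sqrt{\Pi_A} = \sum_{s,s'} |A_{s,s'}\rangle\langle A_{s,s'}| \otimes \sqrt{P^A_{s,s'}} \otimes \sqrt{Q^A_{s,s'}}$. The operator $\sqrt{\Pi_A}\sqrt{\Pi_{A'}}$ therefore factors through the inner products $\langle A_{s,s'} | A'_{t,t'}\rangle$ on the $A$-register; its operator norm is bounded, via Cauchy--Schwarz arguments on the $B$ and $C$ registers where the POVM square-roots have norms at most $1$, by a quantity controlled by the maximum overlap $\max_{s,s',t,t'} |\langle A_{s,s'}|A'_{t,t'}\rangle|$.

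The technical heart of the argument is therefore to bound these overlaps when $A \neq A'$. A direct computation, expanding each coset state in the computational basis and collapsing the delta constraint $a+s = a'+t$, shows that the overlap $|\langle A_{s,s'}|A'_{t,t'}\rangle|$ is either zero or at most $|A \cap A'|/\sqrt{|A||A'|} = 2^{\dim(A \cap A') - n/2}$, with an analogous factor arising in the Hadamard basis involving $A^\perp \cap A'^\perp$. For random pairs of $n/2$-dimensional subspaces $A, A'$ of $\F_2^n$, the probability that $\dim(A \cap A') \geq k$ decays exponentially in $k$; in particular, a typical pair has overlap at most $2^{-\Omega(n)}$. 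Plugging these overlap bounds back into the operator-norm inequality, and summing over $A$, should yield the desired bound of $1/\subexp(n)$.

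The main obstacle I anticipate is the worst-case permutation $\pi$ appearing in the operator-norm inequality: unlike the BB84 case, where tensor-product symmetry across the $n$ qubits allows one to collapse the permutation sum into a per-qubit overlap raised to the $n$-th power, the index set $R_2^n$ has a richer combinatorial structure, and a naive per-pair bound only yields polynomial rather than subexponential decay. Overcoming this requires a more delicate averaging argument: for each $\pi$, one must show that the number of $A$'s for which $\dim(A \cap \pi(A))$ is atypically large is itself small, so that the sum $\sum_A \|\sqrt{\Pi_A}\sqrt{\Pi_{\pi(A)}}\|$ is dominated by generic rather than worst-case contributions. Formalizing this, likely through a counting or entropy argument over $R_2^n$ that leverages the coset structure absent from the BB84 analysis, is where the bulk of the technical work lies.
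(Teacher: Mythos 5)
Your setup is right and matches the paper through the overlap bound: you define the same operator $\Pi^A$, reduce to the operator norm $\bigl\lVert\mathbb{E}_A\Pi^A\bigr\rVert$, invoke the Tomamichel--Fehr--Kaniewski--Wehner permutation inequality, and prove $\lvert\Pi^A\Pi^{A'}\rvert\leq 2^{\dim(A\cap A')-n/2}$. But you correctly flag, and then leave open, the crux: you cannot directly control the worst-case permutation on $R_2^n$, so you have no way to turn the pairwise overlap bound into a subexponential bound on the averaged norm. This is exactly what the paper's proof supplies and what your proposal is missing.

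The key idea you did not find is a change of the averaging structure: decompose the uniform distribution on $R_2^n$ by first sampling a uniformly random basis $v_1,\dots,v_n$ of $\F_2^n$ and then choosing a uniformly random subset of $n/2$ basis vectors to span $A$. By Jensen/convexity, bounding $\frac{1}{\binom{n}{n/2}}\bigl\lVert\sum_{A\in{\sf span}_{n/2}(v_1,\dots,v_n)}\Pi^A\bigr\rVert$ for every fixed basis suffices. Now the index set is the Hamming slice $\{\ell\in\{0,1\}^n:\lvert\ell\rvert=n/2\}$, i.e.\ the Johnson scheme, and the rich structure you worried about disappears: for $A,A'$ in this restricted family, $\dim(A\cap A')$ is simply $n/2-t$ where $t$ is the number of basis vectors swapped. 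The $\binom{n}{n/2}$ mutually orthogonal permutations are then built from disjoint-cycle decompositions of the Johnson graphs $G_{n,t}$, with exactly $\binom{n/2}{t}^2$ of them at ``distance'' $t$ (each contributing $\lvert\sqrt{\Pi^A}\sqrt{\Pi^{\pi(A)}}\rvert\leq 2^{-t}$), giving
\begin{align*}
\frac{1}{\binom{n}{n/2}}\Bigl\lVert\sum_A\Pi^A\Bigr\rVert\leq\frac{1}{\binom{n}{n/2}}\sum_{t=0}^{n/2}\binom{n/2}{t}^2 2^{-t}=O(2^{-\sqrt{n}}).
\end{align*}
Without this conditioning step, the permutation set over all of $R_2^n$ has no such clean parametrization, and your anticipated ``counting or entropy argument'' has no foothold; the random-basis decomposition is precisely what turns the problem into the tractable BB84-like combinatorics you were hoping to mimic.
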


Next, we show that to prove Theorem \ref{thm: monogamy info}, we only need to show Theorem \ref{thm: monogam2 IT canonical form}. 
\begin{lemma}
Theorem \ref{thm: monogam2 IT canonical form} implies Theorem  \ref{thm: monogam IT canonical form} (and hence Theorem \ref{thm: monogamy info}). 
\end{lemma}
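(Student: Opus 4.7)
The plan is to reduce the game of Theorem \ref{thm: monogam IT canonical form} to that of Theorem \ref{thm: monogam2 IT canonical form} by a Choi-style construction: given an adversary $(\Phi, \{P^A_{s,s'}\}, \{Q^A_{s,s'}\})$ for the setting of Theorem \ref{thm: monogam IT canonical form}, I will build a single fixed tripartite state $\rho_{XBC}$ such that its winning value in Theorem \ref{thm: monogam2 IT canonical form} is exactly the winning value of the original adversary in Theorem \ref{thm: monogam IT canonical form}. The heart of the reduction is that the maximally entangled state on two $n$-qubit registers ``contains'' a uniform superposition of coset states for every choice of $A$, which is precisely what Lemma \ref{lem:EPR} provides.

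Concretely, let $\ket{\Omega}_{XY} := 2^{-n/2} \sum_{v \in \mathbb{F}_2^n} \ket{v}_X\ket{v}_Y$ on two $n$-qubit registers, and define $\rho_{XBC} := (I_X \otimes \Phi_{Y \to BC})(\ket{\Omega}\bra{\Omega}_{XY})$. By Lemma \ref{lem:EPR}, for every subspace $A$ of dimension $n/2$ we also have $\ket{\Omega}_{XY} = 2^{-n/2} \sum_{s,s'} \ket{A_{s,s'}}_X\ket{A_{s,s'}}_Y$. Because coset states have real amplitudes (as $\ket{A_{s,s'}} = |A|^{-1/2}\sum_{a \in A}(-1)^{\langle s',a\rangle}\ket{a+s}$), applying $\bra{A_{s,s'}}_X$ teleports a coset state into $Y$:
\begin{equation*}
\bigl(\ket{A_{s,s'}}\!\bra{A_{s,s'}}_X \otimes I_Y\bigr)\ket{\Omega}_{XY} \;=\; 2^{-n/2}\, \ket{A_{s,s'}}_X \ket{A_{s,s'}}_Y.
\end{equation*}
Expanding $\rho_{XBC}$ against this identity and taking the trace against Bob's and Charlie's POVM elements yields, for every $A$ and every $(s,s') \in \AS(A)\times \AS(A^\perp)$,
\begin{equation*}
\Tr\!\left[\bigl(\ket{A_{s,s'}}\!\bra{A_{s,s'}}_X \otimes P^A_{s,s'} \otimes Q^A_{s,s'}\bigr)\rho_{XBC}\right] \;=\; 2^{-n}\,\Tr\!\left[\bigl(P^A_{s,s'} \otimes Q^A_{s,s'}\bigr)\Phi(\ket{A_{s,s'}}\!\bra{A_{s,s'}})\right].
\end{equation*}

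Now I sum this identity over all $(s,s') \in \AS(A) \times \AS(A^\perp)$; since there are $|\AS(A)|\cdot|\AS(A^\perp)| = 2^n$ such pairs, the factor $2^{-n}$ converts the sum on the right-hand side into an expectation over $(s,s')$. Averaging both sides over uniformly random $A \in R_2^n$ then gives exactly the equality
\begin{equation*}
\mathbb{E}_A \sum_{s,s'} \Tr[\ldots\rho_{XBC}] \;=\; \mathbb{E}_A\,\mathbb{E}_{s,s'}\, \Tr\!\left[(P^A_{s,s'} \otimes Q^A_{s,s'})\,\Phi(\ket{A_{s,s'}}\!\bra{A_{s,s'}})\right],
\end{equation*}
whose left side is bounded by $1/\subexp(n)$ by Theorem \ref{thm: monogam2 IT canonical form}, and whose right side is the quantity to be bounded in Theorem \ref{thm: monogam IT canonical form}. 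The reduction preserves the sub-exponential function $\subexp$ exactly, and Theorem \ref{thm: monogamy info} then follows since an adversary for it can be viewed as a special case of Theorem \ref{thm: monogam IT canonical form} by composing Bob's and Charlie's maps with $\can_A$ and $\can_{A^\perp}$. No step here is conceptually hard; the only thing to be careful about is the reality of coset amplitudes (so that $\ket{A_{s,s'}^*}=\ket{A_{s,s'}}$ and the teleportation identity applies with no extra complex conjugation), which is immediate from the definition.
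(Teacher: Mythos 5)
Your proposal is correct and essentially identical to the paper's argument: both prepare the Choi state $\rho = (I\otimes\Phi)(\ket{\Omega}\bra{\Omega})$, invoke Lemma~\ref{lem:EPR} to re-expand the maximally entangled state in the coset basis for every $A$, keep the same POVMs, and observe that orthogonality of coset states within a fixed $A$ collapses the sum to exactly the winning probability of Theorem~\ref{thm: monogam IT canonical form}. The only cosmetic difference is that the paper avoids writing the ricochet/transpose identity by expanding $\ket{\Omega}\bra{\Omega}$ directly in the coset basis (where the reality of the amplitudes that you flag is already baked into Lemma~\ref{lem:EPR}).
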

\begin{proof}
    For convenience, let $S = \mathbb{F}_2^n$.
    Assume there exists a strategy for the game in Theorem \ref{thm: monogam IT canonical form} which achieves advantage $\delta$. We construct a strategy (preparing $\rho$ and POVMs) for the game in Theorem \ref{thm: monogam2 IT canonical form} which achieves the same advantage. 
    
    \begin{enumerate}
        \item Prepare the state $\rho = \frac{1}{{|S|}} (I \otimes \Phi) \sum_{s, s' \in S } |s, s\rangle \langle s', s'|$,  which is equal to the following (for any subspace $A$) by Lemma \ref{lem:EPR}, 
        \begin{align*}
            (I \otimes \Phi) \sum_{s, s' \in S} |s, s\rangle \langle s', s'| &= (I \otimes \Phi)   \sum_{\substack{s_1, s_1' \in \AS(A), \AS(A^\perp) \\ s_2, s_2'  \in \AS(A), \AS(A^\perp)}} |A_{s_1, s_1'}, A_{s_1, s_1'}\rangle \langle A_{s_2, s_2'}, A_{s_2, s_2'}| \\
            &= \sum_{\substack{s_1, s_1' \in  \AS(A), \AS(A^\perp) \\ s_2, s_2'  \in \AS(A), \AS(A^\perp) }} |A_{s_1, s_1'}\rangle \langle A_{s_2, s_2'}| \otimes \Phi\left(|A_{s_1, s_1'}\rangle \langle  A_{s_2, s_2'}|\right) \\
        \end{align*}
        \item $\overline{P}^A_{s, s'} = P^A_{s, s'}$ and $\overline{Q}^A_{s, s'} = Q^A_{s, s'}$ where $P, Q$ are POVMs for the game in Theorem  \ref{thm: monogam IT canonical form} and $\overline{P}, \overline{Q}$ are the POVMs for the game in Theorem \ref{thm: monogam2 IT canonical form}.
    \end{enumerate}
    
    Thus, we have that the advantage is, 
    \begin{align*}
        \quad & \mathbb{E}_{A \in R_2^n} \sum_{s, s'  \in  \AS(A), \AS(A^\perp)} {\sf Tr} \left[ \left(|A_{s,s'}\rangle \langle A_{s,s'}| \otimes \overline{P}^A_{s,s'} \otimes \overline{Q}^A_{s, s'} \right) \cdot \rho \right] \\
        =\, & \mathbb{E}_{A \in R_2^n} \frac{1}{|S|} \sum_{s, s'  \in  \AS(A), \AS(A^\perp)}  {\sf Tr} \left[ |A_{s,s'}\rangle \langle A_{s,s'}| \otimes \left(\left( \overline{P}^A_{s,s'} \otimes \overline{Q}^A_{s, s'} \right) \cdot  \Phi\left(|A_{s, s'}\rangle \langle  A_{s, s'}|\right)  \right) \right] \\
        =\, & \mathbb{E}_{A\in R_2^n} \frac{1}{|S|} \sum_{s, s'  \in  \AS(A), \AS(A^\perp)}  {\sf Tr} \left[ |A_{s,s'}\rangle \langle A_{s,s'}| \otimes \left(\left( {P}^A_{s, s'} \otimes {Q}^A_{s, s'} \right) \cdot  \Phi\left(|A_{s, s'}\rangle \langle  A_{s,  s'}|\right)  \right) \right] \\
        =\, &  \mathbb{E}_{A \in R_2^n} \mathbb{E}_{s, s'  \in  \AS(A), \AS(A^\perp)} {\sf Tr} \left[ \left(P^A_{s,s'} \otimes Q^A_{s, s'} \right) \cdot \Phi(|A_{s,s'}\rangle\langle A_{s,s'}|)  \right] = \delta
    \end{align*}
\end{proof}

Without loss of generality, we can assume that the adversary's strategy is pure (see more discussion in Lemma 9, \cite{tomamichel2013monogamy}). In other words, all $P^A_{s, s'}$ and $Q^A_{s, s'}$ are projections. 
\begin{proof}[Proof of Theorem  \ref{thm: monogam2 IT canonical form}]
    First, we define $\Pi^A$ as
    \begin{align*}
        \Pi^A = \sum_{s, s' \in \AS(A), \AS(A^\perp)}   |A_{s, s'}\rangle \langle A_{s,s'}| \otimes P^A_{s, s'} \otimes Q^A_{s, s'}
    \end{align*}
    Note that $\Pi^A$ is a projection. 
    By definition, the advantage is 
    \begin{align*}
        \, & \frac{1}{|R_2^n|} \sum_{A \in R_2^n} \sum_{s, s' \in \AS(A), \AS(A^\perp)} {\sf Tr} \left[|A_{s, s'}\rangle \langle A_{s,s'}| \otimes P^A_{s, s'} \otimes Q^A_{s, s'} \cdot \rho \right] \\
        \leq \, & \mathbb{E}_{v_1, \cdots, v_n} \left[ \frac{1}{\binom{n}{n/2}} \sum_{A \in {\sf span}_{n/2}(v_1, \cdots, v_n)}  \sum_{s, s' \in \AS(A), \AS(A^\perp)}   {\sf Tr} \left[|A_{s, s'}\rangle \langle A_{s,s'}| \otimes P^A_{s, s'} \otimes Q^A_{s, s'} \cdot \rho \right] \right]
    \end{align*}
    where $(v_1, \cdots, v_n)$ range over all possible bases of the space, and ${\sf span}_{n/2}(v_1, \cdots, v_n)$ is the set of all subspaces spanned by exactly $n/2$ vectors in $(v_1, \cdots, v_n)$.

    In other words, we decompose the sampling procedure of $R_2^n$ into two steps: (1) sample a random basis; (2) choose $n/2$ vectors in the basis. 
    
    Then we have, for any fixed basis $v_1, \cdots, v_n$, 
     \begin{align*}
        \quad \, &  \frac{1}{\binom{n}{n/2}}  \sum_{A \in {\sf span}_{n/2}(v_1, \cdots, v_n)}   \sum_{s, s' \in \AS(A), \AS(A^\perp)}  {\sf Tr} \left[|A_{s, s'}\rangle \langle A_{s,s'}| \otimes P^A_{s, s'} \otimes Q^A_{s, s'} \cdot \rho \right]  \\
         = \, &  \frac{1}{\binom{n}{n/2}}   {\sf Tr} \left[ \sum_{A \in {\sf span}_{n/2}(v_1, \cdots, v_n)}  \sum_{s, s' \in \AS(A), \AS(A^\perp)}  |A_{s, s'}\rangle \langle A_{s,s'}| \otimes P^A_{s, s'} \otimes Q^A_{s, s'} \cdot \rho \right]   \\
          \leq \, &  \frac{1}{\binom{n}{n/2}}    \left\vert \sum_{A \in {\sf span}_{n/2}(v_1, \cdots, v_n)} \Pi^A \right\vert
    \end{align*}
    where $|\cdot|$ is the $\infty$-Schatten norm. 
    
    \begin{lemma} \label{lem:operator_norm_bound}
        For every  fixed basis $v_1, \cdots, v_n \in \mathbb{F}_2^n$, we have 
        \begin{align*}
         \frac{1}{\binom{n}{n/2}}    \left\vert \sum_{A \in {\sf span}_{n/2}(v_1, \cdots, v_n)} \Pi^A \right\vert \leq  \frac{1}{\binom{n}{n/2}}   \sum_{t=0}^{n/2} \binom{n/2}{t}^2  2^{-t} = O\left( 2^{-\sqrt{n}} \right)
        \end{align*}
    \end{lemma}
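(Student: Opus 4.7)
The plan is to bound the operator norm $\left\|\sum_{A} \Pi^A\right\|$ by reducing it to a sum of pairwise overlap norms $\|\Pi^A \Pi^{A'}\|$, which can then be estimated by counting subspaces according to their ``distance'' within the fixed basis. The starting point is the operator inequality: for any collection of projections $\{\Pi^A\}$, one has $\left\|\sum_A \Pi^A\right\| \leq \max_A \sum_{A'} \|\Pi^A \Pi^{A'}\|$. This follows from the fact that $\Pi^A \Pi^{A'} \Pi^A$ has norm equal to $\|\Pi^A \Pi^{A'}\|^2$ combined with a standard Schur-test style argument for sums of positive semi-definite operators. One must first verify that each $\Pi^A$ is indeed a projection; this holds because the coset states $\{|A_{s,s'}\rangle\}_{s,s'\in \AS(A),\AS(A^\perp)}$ form an orthonormal basis (for fixed $A$) and $P^A_{s,s'}, Q^A_{s,s'}$ are projections on the auxiliary registers.

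Second, I would establish the key overlap bound: for two subspaces $A, A' \in {\sf span}_{n/2}(v_1,\ldots,v_n)$ sharing exactly $n/2 - t$ basis vectors from the fixed basis,
\begin{equation*}
    \|\Pi^A \Pi^{A'}\| \leq 2^{-t}.
\end{equation*}
The POVM tensor factors $P^A_{s,s'} \otimes Q^A_{s,s'}$ and $P^{A'}_{u,u'} \otimes Q^{A'}_{u,u'}$ contribute at most $1$ in operator norm. The non-trivial content is the change-of-basis between $\{|A_{s,s'}\rangle\}$ and $\{|A'_{u,u'}\rangle\}$ in the first register. Concretely, I would diagonalize $\mathbb{F}_2^n$ along the fixed basis, split coordinates into the $n/2 - t$ shared indices and the $2t$ mismatched indices, and compute $|\langle A_{s,s'} | A'_{u,u'}\rangle|$ explicitly. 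The shared part contributes a factor of $1$, while the mismatched part effectively reduces to a product of BB84-style overlaps on $t$ qubits, yielding the $2^{-t}$ bound.

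Third, the combinatorial count: for fixed $A$ (specified by a subset $I \subseteq \{1,\ldots,n\}$ of size $n/2$), the number of $A'$ with $|I \triangle J| = 2t$ is exactly $\binom{n/2}{t}^2$ (choose $t$ indices to delete from $I$, and $t$ to add from its complement). Summing over $t$ gives
\begin{equation*}
    \frac{1}{\binom{n}{n/2}}\left\|\sum_{A} \Pi^A\right\| \leq \frac{1}{\binom{n}{n/2}} \sum_{t=0}^{n/2} \binom{n/2}{t}^2 \cdot 2^{-t}.
\end{equation*}
The asymptotic estimate $O(2^{-\sqrt{n}})$ is a routine Stirling calculation: using Vandermonde's identity $\sum_{t} \binom{n/2}{t}^2 = \binom{n}{n/2}$, the ratio $\binom{n/2}{t}^2/\binom{n}{n/2}$ peaks near $t = n/4$. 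The factor $2^{-t}$ shifts the effective peak to $t \approx c\sqrt{n}$, and a standard saddle-point analysis yields the $O(2^{-\sqrt{n}})$ decay.

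The main obstacle will be the overlap bound $\|\Pi^A \Pi^{A'}\| \leq 2^{-t}$. Unlike BB84 states, coset states have two ``degrees of freedom'' (the coset representative $s$ in the computational basis and the phase $s'$ in the Hadamard basis), and the change-of-basis between $A$-cosets and $A'$-cosets mixes both. The cleanest route is likely to perform a change of coordinates bringing the shared part of $A$ and $A'$ to a canonical form (e.g.\ the first $n/2-t$ coordinates), reducing the problem to understanding coset states on a $2t$-dimensional space where $A$ and $A'$ are complementary $t$-dimensional subspaces — at which point the calculation becomes a finite-dimensional linear-algebra exercise analogous to, but strictly more delicate than, the BB84 analysis of Tomamichel--Fehr--Kaniewski--Wehner. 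Handling the auxiliary projections $P^A_{s,s'}, Q^A_{s,s'}$ requires verifying that they can be absorbed by operator-norm submultiplicativity without disturbing this bound, which should follow from treating them as $\leq I$ throughout.
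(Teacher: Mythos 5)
Your Schur-test route for the first inequality is a genuinely different (and arguably cleaner) approach than the paper's. The paper invokes Lemma~2 of Tomamichel--Fehr--Kaniewski--Wehner, which requires constructing $\binom{n}{n/2}$ mutually orthogonal permutations of the subspace index set via a directed-cycle decomposition of a graph on $C_{n,n/2}$. Your alternative --- $\|\sum_A \Pi^A\| = \|V V^*\| \leq \max_A \sum_{A'} \|\Pi^A \Pi^{A'}\|$ for the synthesis operator $V\psi = (\Pi^1\psi,\ldots,\Pi^N\psi)$, which is just a Schur/Gershgorin test on the block Gram matrix --- avoids the permutation-construction bookkeeping entirely and yields the identical sum $\sum_t \binom{n/2}{t}^2 2^{-t}$. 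For projections this works because $\|(\Pi^A\Pi^B)^*\| = \|\Pi^B\Pi^A\|$ makes the row and column sums equal. The counting of $A'$ at ``distance'' $t$ and the coarse split of the sum at $t = \sqrt{n}$ match the paper (your remark that the weight $2^{-t}$ shifts the peak to $t \approx c\sqrt{n}$ is not right --- the peak is still at $t = \Theta(n)$ --- but this does not affect the $O(2^{-\sqrt{n}})$ upper bound, which comes from the crude split rather than a saddle point).

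However, there is a genuine gap in your plan for the overlap bound $\|\Pi^A\Pi^{A'}\| \leq 2^{-t}$, and it is precisely the step you flag as the ``cleanest route.'' You propose to bound the POVM tensor factors by $1$ in operator norm and then isolate the first-register overlap. But $\Pi^A = \sum_{s,s'} |A_{s,s'}\rangle\langle A_{s,s'}| \otimes P^A_{s,s'} \otimes Q^A_{s,s'}$ is a \emph{sum} over $(s,s')$, and if you replace every $P^A_{s,s'}$ and $Q^A_{s,s'}$ by $I$ the sum telescopes to the identity $I \otimes I \otimes I$ (the coset states form a basis), so the resulting bound on $\|\Pi^A\Pi^{A'}\|$ is the useless $1$, not $2^{-t}$. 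Treating the POVMs merely as ``$\leq I$'' in a submultiplicativity argument throws away exactly the information that makes the bound nontrivial. Nor does knowing that each individual amplitude $|\langle A_{s,s'}|A'_{u,u'}\rangle|$ is at most $2^{-t}$ suffice: the block matrix $\Pi^A\Pi^{A'}$ has, for each fixed $(s,s')$, about $2^{2t}$ nonzero blocks of size $2^{-t}$ each, so a naive Schur test on that block matrix gives $2^t$, which is worse than trivial. The paper's proof works because it introduces an \emph{asymmetric} relaxation: $\overline{\Pi}^A$ keeps the $P^A_{s,s'}$'s but replaces $Q^A_{s,s'}$ by $I$, while $\overline{\Pi}^{A'}$ keeps the $Q^{A'}_{u,u'}$'s but replaces $P^{A'}_{u,u'}$ by $I$. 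Then $\Pi^A \leq \overline{\Pi}^A$ and $\Pi^{A'} \leq \overline{\Pi}^{A'}$, and in the expansion of $\overline{\Pi}^A\overline{\Pi}^{A'}\overline{\Pi}^A$ the cross terms with $(s_1,s_1') \neq (s_3,s_3')$ are killed by the orthogonality $P^A_{s_1,s_1'}P^A_{s_3,s_3'} = 0$ (the $P^A$'s are projections summing to $I$, hence mutually orthogonal). What remains is a sum of mutually orthogonal projections weighted by $|\langle A_{s_1,s_1'}|A'_{s_2,s_2'}\rangle|^2$, whose operator norm is the maximum weight $\leq 2^{-2t}$, giving $\|\Pi^A\Pi^{A'}\| \leq 2^{-t}$. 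So the POVM projections are not an annoyance to be absorbed --- they are the mechanism. Your plan needs this asymmetric one-sided drop, not a uniform ``$\leq I$'' relaxation.
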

    If we can prove the above lemma, we finished our proof for Theorem \ref{thm: monogam2 IT canonical form}. 
    \begin{proof}[Proof of Lemma \ref{lem:operator_norm_bound}]
        We first show the upper bound is sub-exponentially small. 
        By the fact that $\left(\frac{ n}{k} \right)^k \leq \binom{n}{k} \leq \left(\frac{e n}{k} \right)^k$ for all $1 \leq k \leq n$, we have:
    \begin{align*}
        \frac{1}{\binom{n}{n/2}}   \sum_{t=1}^{n/2} \binom{n/2}{t}^2  2^{-t}  \leq &  \frac{1}{\binom{n}{n/2}} \sum_{t=1}^{ \sqrt{n} } \binom{n/2}{t}^2 + 2^{- \sqrt{n}} \\
         \leq & \frac{1}{2^{n/2}} \sum_{t=1}^{  \sqrt{n} } \left( \frac{e n}{2 t} \right)^{2 t}  + 2^{- \sqrt{n} } \\
         \leq & \frac{\sqrt{n} }{2^{n/2}} \cdot \left( \frac{e n}{2} \right)^{\sqrt{n}}  + 2^{- \sqrt{n} } \\
         = & \exp(- \Omega(n - \sqrt{n} \log n)) + 2^{- \sqrt{n} }
    \end{align*}
    
        Next, we prove the remaining part of the lemma. The idea is similar to that in~\cite{tomamichel2013monogamy}. 
        
        We  require the following lemma. 
        \begin{lemma}[Lemma 2 in~\cite{tomamichel2013monogamy}] \label{lem:tfkw13_helper}
        Let $A_1$, $A_2$, $\cdots$, $A_N \in P(\mathcal{H})$ (positive
        semi-definite operators on $\mathcal{H}$), and let $\{\pi_k\}_{k \in [N]}$ be a set of $N$ mutually orthogonal permutations of $[N]$. Then, 
        \begin{align*}
            \left\vert \sum_{i \in [N]} A_i \right\vert \leq  \sum_{k \in [N]} \max_{i \in [N]}\left\vert \sqrt{A_i} \sqrt{A_{\pi_k(i)}} \right\vert.
        \end{align*}
        \end{lemma}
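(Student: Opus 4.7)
The plan is to prove the operator inequality via a ``column-vector'' trick together with the Latin-square structure encoded by the mutually orthogonal permutations. First I would introduce the block-column operator
$$ T \,:=\, \sum_{i \in [N]} |i\rangle \otimes \sqrt{A_i}, $$
mapping $\mathcal{H}$ into $\mathbb{C}^N \otimes \mathcal{H}$. A direct calculation gives $T^\dagger T = \sum_i A_i$, so $\bigl\lVert \sum_i A_i \bigr\rVert_\infty = \lVert T\rVert_\infty^2 = \lVert T T^\dagger \rVert_\infty$, which shifts attention to
$$ T T^\dagger \,=\, \sum_{i,j \in [N]} |i\rangle\langle j| \otimes \sqrt{A_i}\sqrt{A_j}. $$

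Next I would use the hypothesis on the permutations. ``Mutually orthogonal'' means that for every pair $(i,j) \in [N]^2$ there is a \emph{unique} index $k \in [N]$ with $\pi_k(i) = j$ (the matrix $(\pi_k(i))_{i,k}$ forms a Latin square). This yields the clean partition
$$ T T^\dagger \,=\, \sum_{k \in [N]} B_k, \qquad B_k \,:=\, \sum_{i \in [N]} |i\rangle\langle \pi_k(i)| \otimes \sqrt{A_i}\sqrt{A_{\pi_k(i)}}. $$
Now a short computation using the fact that $\pi_k$ is a bijection shows
$$ B_k B_k^\dagger \,=\, \sum_{i} |i\rangle\langle i| \otimes \bigl(\sqrt{A_i}\sqrt{A_{\pi_k(i)}}\bigr)\bigl(\sqrt{A_i}\sqrt{A_{\pi_k(i)}}\bigr)^\dagger, $$
since the cross terms cancel by the orthogonality $\langle \pi_k(i)\,|\,\pi_k(j)\rangle = \delta_{ij}$. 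This operator is block-diagonal, so its operator norm is the maximum over its blocks, giving $\lVert B_k \rVert_\infty = \max_i \lVert \sqrt{A_i}\sqrt{A_{\pi_k(i)}}\rVert_\infty$.

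Finally, the triangle inequality for the operator norm delivers the desired estimate:
$$ \Bigl\lVert \sum_{i \in [N]} A_i \Bigr\rVert_\infty \,=\, \lVert T T^\dagger \rVert_\infty \,\leq\, \sum_{k \in [N]} \lVert B_k \rVert_\infty \,=\, \sum_{k \in [N]} \max_{i \in [N]} \bigl\lVert \sqrt{A_i}\sqrt{A_{\pi_k(i)}}\bigr\rVert_\infty. $$
I expect the main technical step to be step two: verifying carefully that the mutual orthogonality hypothesis is precisely the right combinatorial condition to guarantee both that $\{B_k\}_k$ covers all pairs $(i,j)$ exactly once (so that $\sum_k B_k = TT^\dagger$) and that $B_k B_k^\dagger$ collapses to a block-diagonal form. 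Everything else is either linear algebra or a straightforward triangle-inequality bookkeeping step, so no further obstacles should arise.
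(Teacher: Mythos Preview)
Your proof is correct. Note that the paper does not actually prove this lemma; it is quoted verbatim as Lemma~2 of \cite{tomamichel2013monogamy} and used as a black box. Your argument---the block-column operator $T$, the identity $\lVert \sum_i A_i\rVert_\infty = \lVert TT^\dagger\rVert_\infty$, the Latin-square decomposition $TT^\dagger = \sum_k B_k$ afforded by mutual orthogonality, and the block-diagonal computation of $\lVert B_k\rVert_\infty$---is essentially the proof given in that reference, so there is nothing to compare.
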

        A set $\{\pi_k\}_{k \in [N]}$ is called a set of mutually orthogonal permutations, if for every $\pi \ne \pi'$ in the set, $\pi(i) \ne \pi'(i)$ for all $i \in [N]$. %
        
        Fixing basis $v_1, \cdots, v_n$, there are a total of $\binom{n}{n/2}$ subspaces that can be sampled by picking a subset of $\{v_1, \cdots, v_n\}$ of size $n/2$. So, in our case, $N = \binom{n}{n/2}$.

        We define a collection of permutations on ${\sf span}_{n/2}(v_1, \cdots, v_n)$ through a graph:
        
        \begin{itemize}
           \item Recall that each subspace $A$ is described as $\{u_1, \cdots, u_{n/2}\}, \{u_{n/2+1}, \cdots, u_n\}$ where the subspace is spanned by $u_1, \cdots, u_{n/2}$.
            $\{u_{n/2+1}, \cdots, u_n\}$ are the vectors in $\{v_i\}_{i \in [n]}$ that are not in the subspace.  
            
            For convenience, we denote the basis choices for $A$ as a string $\ell \in \{0,1\}^n$ with Hamming weight $\frac{n}{2}$. We then define a set for all possible basis choices:
            \begin{align*}
                C_{n,n/2} = \left\{\ell \in \{0,1\}^n: |\ell| = \frac{n}{2} \right\}.
            \end{align*}
            
            \item Let $t \in \{0, \cdots, \frac{n}{2}\}$. Let  $G_{n,t}$ be a graph with vertex set in  $C_{n,n/2}$ and an edge between any $\ell, \ell' \in C_{n, n/2}$ where the number of positions $\ell, \ell'$ are both 1 is exactly $\frac{n}{2} - t$.
            
            \item We then turn the graph $G_{n,t}$ into a directed graph by taking each edge in $G_{n,t}$ into two directed edges. Denote $d_t$ as the least in-degree (and also least out-degree) for each vertex.  
            
            \item Then we can find $d_t$ directed cycles that cover all vertices with disjoint edges. We observe that each such directed cycle corresponds to a permutation $\pi_{t, j}$ of $C_{n, n/2}$ where $j \in [d_t], t \in [n/2]$.   
            
            \item For all $j \neq  j'$, $\pi_{t,},\pi_{t, j'}$ are orthogonal since in our construction the edges are disjoint. Moreover, for any $t \neq t'$, the permutations $\pi_{t,j},\pi_{t', j}$ are orthogonal.
            
            \item The degree $G_{n,t}$ is at least $\binom{n/2}{t}^2$:  because we have $\binom{n/2}{t}$ choices of positions where we can remove $t$  number of 1's from the current basis string $\ell$ (i.e. remove $t$ vectors from the current basis) and $\binom{n/2}{t}$ choices for us to flip $t$ number of 0-positions into 1's (i.e. add in $t$ vectors from outside the current basis set).
            
        \end{itemize}

        %
        \iffalse
        
         Roughly speaking,  the mapping does the following. It picks some basis  vectors in the subspace and swaps them with basis vectors that are not used. 
        \begin{itemize}
            \item Recall that each subspace $A$ is described as $\{u_1, \cdots, u_{n/2}\}, \{u_{n/2+1}, \cdots, u_n\}$ where the subspace is spanned by $u_1, \cdots, u_{n/2}$.
            $\{u_{n/2+1}, \cdots, u_n\}$ are the vectors in $\{v_i\}_{i \in [n]}$ that are not in the subspace. 
            
            \item The mapping is described by two subsets $I\subseteq \{1, 2, \cdots, n/2\}$ and $I' \subseteq \{n/2+1, \cdots, n\}$ of the same size. 
            
            \item The mapping does the following:
            \begin{align*}
                \pi_{I, I'}\left( \{u_1, \cdots, u_{n/2}\}, \{u_{n/2+1}, \cdots, u_n\} \right) = & \{u_1, \cdots, u_{n/2}\} - \bigcup_{i \in I}\{u_i\} + \bigcup_{i' \in I'}\{u_{i'}\},  \\
                & \{u_{n/2+1}, \cdots, u_n\} + \bigcup_{i \in I}\{u_i\} - \bigcup_{i' \in I'}\{u_{i'}\}
            \end{align*}
            In other words, it removes the vectors with indices in $I$ and adds vectors with indices in $I'$. 
        \end{itemize}
        \fi
        %
        %
        
        There are a total of $\sum_{t=0}^{n/2} \binom{n/2}{t}^2 = \binom{n}{n/2} = N$ permutations $\pi_{t, j}$. Therefore, by Lemma \ref{lem:tfkw13_helper} and  $\Pi^A$ are all projections, we have 
        \begin{align*}
             \frac{1}{\binom{n}{n/2}}    \left\vert \sum_{A \in {\sf span}_{n/2}(v_1, \cdots, v_n)} \Pi^A \right\vert \leq \frac{1}{\binom{n}{n/2}}  \sum_{\pi_{t, j}} \max_{A \in {\sf span}_{n/2}(v_1, \cdots, v_n)} \left\vert \Pi^A \Pi^{\pi_{t, j}(A)} \right\vert
        \end{align*}
        Because $\Pi^A$ is a projection, $\sqrt{\Pi^A} = \Pi^A$ for all $A$. 
        
        Next, we prove the following claim: for every $A, A' \in {\sf span}_{n/2}(v_1, \cdots, v_n)$, $|\Pi^A \Pi^{A'}| \leq 2^{\dim(A \cap A') - n/2}$. 
        
        Define 
        \begin{align*}
            \overline{\Pi}^A &=\sum_{s, s' \in \AS(A), \AS(A^\perp)}   |A_{s, s'}\rangle \langle A_{s,s'}| \otimes P^A_{s, s'} \otimes I \\
            \overline{\Pi}^{A'} &= \sum_{s, s' \in \AS(A), \AS(A^\perp)}   |A_{s, s'}\rangle \langle A_{s,s'}| \otimes I \otimes Q^A_{s, s'} 
        \end{align*}

        From the fact that (1) for two semi-definite operators $A, B$ such that $A \leq B$, their $\infty$-Schatten norm satisfies $|A| \leq |B|$; (2) for a semi-definite operator $A$, $|A|^2 = |A A^\dagger|$, we have: 
        \begin{align*} 
            |\Pi^A \Pi^{A'}|^2 \leq |\overline{\Pi}^A \overline{\Pi}^{A'}|^2 = \left|\overline{\Pi}^A \overline{\Pi}^{A'} \overline{\Pi}^{A'} \overline{\Pi}^{A} \right| = \left|\overline{\Pi}^A \overline{\Pi}^{A'} \overline{\Pi}^{A} \right|
        \end{align*}
        
        Then we have, 
        \begin{align*}
            \overline{\Pi}^A \overline{\Pi}^{A'} \overline{\Pi}^{A} &=\sum_{\substack{s_1, s_1' \in \AS(A), \AS(A^\perp) \\s_2, s_2' \in \AS(A'), \AS(A'^\perp) \\ s_3, s_3' \in \AS(A), \AS(A^\perp) }} |A_{s_1, s_1'}\rangle \langle A_{s_1, s_1'}| A'_{s_2, s_2'}\rangle \langle A'_{s_2, s_2'}| A_{s_3, s_3'} \rangle \langle A_{s_3, s_3'}| \\
            & \quad\quad\quad\quad\quad\quad\quad\quad\quad\quad\quad\quad\quad\quad\quad\quad  \otimes P^A_{s_1, s_1'} P^A_{s_3, s_3'} \otimes Q^{A'}_{s_2, s_2'} \\
            &= \sum_{\substack{s_1, s_1' \in \AS(A), \AS(A^\perp) \\s_2, s_2' \in \AS(A'), \AS(A'^\perp)  }}  |\langle A_{s_1, s_1'}| A'_{s_2, s_2'}\rangle|^2 \cdot |A_{s_1, s_1'}\rangle \langle A_{s_1, s_1'}| \otimes P^A_{s_1, s_1'} \otimes Q^{A'}_{s_2, s'_2}
        \end{align*}
        
        Since for all $s_1, s_1', s_2, s'_2$, $|A_{s_1, s_1'}\rangle \langle A_{s_1, s_1'}| \otimes P^A_{s_1, s_1'} \otimes Q^{A'}_{s_2, s'_2}$ are projections, its Schatten-$\infty$ norm is bounded by the largest $|\langle A_{s_1, s_1'}| A'_{s_2, s_2'}\rangle|^2$. 
        \begin{align*}
            |\langle A_{s_1, s_1'}| A'_{s_2, s_2'}\rangle| &\leq \frac{1}{2^{n/2}} \sum_{a \in S} [a \in A+s_1 \,\wedge\, a \in A'+s_2] = 2^{\dim(A \cap A')}/2^{n/2}
        \end{align*}
        This is because, for all basis vectors outside of $A\cap A'$, their coefficient is determined by $s_1, s_2$. Therefore, the only degree of freedom comes from the basis in $A\cap A'$. 
        
        Overall, $|\Pi^A \Pi^{A'}| \leq 2^{\dim(A \cap A')}/2^{n/2}$.  Thus, 
        \begin{align*}
             \frac{1}{\binom{n}{n/2}}  \sum_{\pi_{t, j}} \max_{A \in {\sf span}_{n/2}(v_1, \cdots, v_n)} \left\vert \Pi^A \Pi^{\pi_{t, j}(A)} \right\vert
             & \leq \frac{1}{\binom{n}{n/2}}  \sum_{t=0}^{n/2} \binom{n/2}{t}^2 2^{t - n/2} \\
             & = \frac{1}{\binom{n}{n/2}}  \sum_{t=0}^{n/2} \binom{n/2}{t}^2 2^{-t}
        \end{align*}
        Thus, we proved Lemma \ref{lem:operator_norm_bound}.
    \end{proof}
    
   This completes the proof of Theorem \ref{thm: monogam2 IT canonical form}, and thus of Theorem \ref{thm: monogamy info}.
\end{proof}

\subsection{Proof of \texorpdfstring{\Cref{thm: monogamy comp}}{the Computational Monogamy-of-Entanglement Property}}
\label{sec: appendix monogamy comp}

\paragraph{Proof.}
We consider the following hybrids.
\begin{itemize}
        \item {Hyb 0:}  This is the original security game $\mathsf{CompMonogamy}$. 
        
        \item {Hyb 1:} Same as Hyb 0 except $\adv_0$ gets $\iO(\shO(A)(\cdot-s))$, $\iO(A^\perp+s')$ and $\ket{A_{s,s'}}$, for a uniformly random superspace $B$ of $A$, of dimension $3/4n$. 
        
        \item {Hyb 2:} Same as Hyb 1 except $\adv_0$ gets $\iO(\shO(B)(\cdot-s))$, $\iO(A^\perp+s')$ and $\ket{A_{s,s'}}$, for a uniformly random superspace $B$ of $A$, of dimension $3/4n$.
        
        \item {Hyb 3:} Same as Hyb 2 except for the following. The challenger samples $s,s', A$, and a uniformly random superspace $B$ of $A$ as before. The challenger sets $t = s + w_B$, where $w_B \leftarrow B$. Sends $\iO(\shO(B)(\cdot-t))$, $\iO(A^\perp+s')$ and $\ket{A_{s,s'}}$ to $\adv_0$.
        
        \item {Hyb 4:} Same as Hyb 3 except $\adv_0$ gets $\iO(\shO(B)(\cdot-t))$, $\iO(\shO(A^{\perp})(\cdot-s'))$ and $\ket{A_{s,s'}}$.
        
        \item {Hyb 5:} Same as Hyb 4 except $\adv_0$ gets $\iO(\shO(B)(\cdot-t))$, $\iO(\shO(C^{\perp})(\cdot-s'))$ and $\ket{A_{s,s'}}$, for a uniformly random subspace $C \subseteq A$ of dimension $n/4$. 
        
        \item {Hyb 6:} Same as Hyb 5 except for the following. The challenger sets $t' = s' + w_{C^{\perp}}$, where $ w_{C^{\perp}} \leftarrow C^{\perp}$. $\adv_0$ gets $\iO(\shO(B)(\cdot-t))$, $\iO(\shO(C^{\perp})(\cdot-t'))$ and $\ket{A_{s,s'}}$. 
        
        \item {Hyb 7:} Same as Hyb 6 except the challenger sends $B, C, t,t'$ in the clear to $\adv_0$.
    \end{itemize}

\begin{lemma}
\label{lem: hyb 0-1}
For any QPT adversary $(\adv_0, \adv_1, \adv_2)$,
$$\left|\Pr[(\adv_0, \adv_1, \adv_2) \text{ wins in Hyb 1}] - \Pr[(\adv_0, \adv_1, \adv_2) \text{ wins in Hyb 0}] \right| = \negl(\lambda) \,.$$
\end{lemma}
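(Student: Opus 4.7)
The plan is to mirror the structure of Claim~\ref{lem: direct product hyb 0-1} from the direct product hardness proof, since the only change between Hyb~0 and Hyb~1 is that the challenge program $\iO(A+s)$ is replaced by $\iO(\shO(A)(\cdot - s))$, while everything else (including the bipartite split between $\adv_1, \adv_2$ and the later revelation of $A$) remains unchanged. The two programs compute the same Boolean function on every input: both decide whether the input lies in the coset $A+s$, the latter by first translating by $-s$ and then checking subspace membership in $A$ via $\shO(A)$. Hence the security of (post-quantum) $\iO$ applies directly.

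Concretely, I would proceed by contradiction: suppose some QPT $(\adv_0,\adv_1,\adv_2)$ causes the two winning probabilities to differ by a non-negligible function $\epsilon(\lambda)$. Build a QPT distinguisher $\adv'$ against the $\iO$ challenger as follows. $\adv'$ samples $A \subseteq \mathbb{F}_2^n$ uniformly of dimension $n/2$ and $s, s' \in \mathbb{F}_2^n$ uniformly, and submits the two (functionally equivalent) circuits $C_0 = A+s$ and $C_1 = \shO(A)(\cdot - s)$, appropriately padded to a common size, to the $\iO$ challenger, receiving back $P = \iO(C_b)$ for an unknown bit $b$. Note that $\shO(A)$ is itself efficiently sampleable given $A$. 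Then $\adv'$ prepares $\ket{A_{s,s'}}$ and the program $\iO(A^\perp + s')$ locally, and hands $(\ket{A_{s,s'}}, P, \iO(A^\perp + s'))$ to $\adv_0$. $\adv'$ next simulates the split of registers to $\adv_1$ and $\adv_2$, and internally reveals $A$ to both, obtaining $(s_1, s_1')$ and $(s_2, s_2')$. $\adv'$ checks (efficiently, since it knows $A, s, s'$) whether $s_i \in A+s$ and $s_i' \in A^\perp + s'$ for $i=1,2$, i.e.\ whether the monogamy game is won. If so, it outputs a guess for $b$ that corresponds to whichever of the two hybrids has the larger winning probability, and otherwise outputs the opposite guess.

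By construction, when $b = 0$ the view of $(\adv_0,\adv_1,\adv_2)$ is distributed exactly as in Hyb~0, and when $b = 1$ it is distributed exactly as in Hyb~1. Hence the distinguishing advantage of $\adv'$ equals the gap in winning probabilities between the two hybrids, which by assumption is $\epsilon(\lambda)$ and non-negligible. This contradicts the post-quantum security of $\iO$ for polynomial-size circuits, since $C_0$ and $C_1$ are of the same (padded) size and compute the same function on every input.

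The only step that requires any care is ensuring that $C_0$ and $C_1$ really are functionally equivalent and can be padded to a common size efficiently and transparently to the $\iO$ security game; this is handled in exactly the same way as in Claim~\ref{lem: direct product hyb 0-1} and is already encapsulated in Lemma~\ref{lem:io_shO_CC_equivalent}. No new quantum-information argument is needed beyond the fact that $\adv'$ can faithfully run $\adv_0, \adv_1, \adv_2$ and the verification step in polynomial time.
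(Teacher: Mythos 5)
Your proposal is correct and takes essentially the same approach as the paper's own proof: a direct reduction to $\iO$ security exploiting the functional equivalence of $A+s$ and $\shO(A)(\cdot-s)$, with $\adv'$ simulating the entire monogamy game and using the win/lose outcome to guess the $\iO$ challenge bit. The only cosmetic difference is that you have the $\iO$ sampler draw $A, s, s'$ uniformly at random, whereas the paper first applies a non-uniform averaging step to fix a maximizing $(A, s, s')$; both are valid under the paper's sampler-based $\iO$ definition.
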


\begin{proof}
Suppose for a contradiction there was a QPT adversary $\adv$ such that:
\begin{equation}
\label{eq: difference 1}
    \left|\Pr[(\adv_0, \adv_1, \adv_2) \text{ wins in Hyb 1}] - \Pr[(\adv_0, \adv_1, \adv_2) \text{ wins in Hyb 0}] \right| 
\end{equation}
is non-negligible. 

Such an adversary can be used to construct $\adv'$ which distinguishes $\iO(A+s)$ from $\iO(\shO(A)(\cdot -s)$, which is impossible by the security of the (outer) $\mathsf{iO}$, since $A+s$ and $\iO(A)(\cdot -s)$ compute the same functionality.

Fix $n$, let $A \subseteq \mathbb{F}_2^n$, $s,s' \in \mathbb{F}_2^n$ be such that the difference in \eqref{eq: difference 1} is maximized. Suppose $\Pr[\adv \text{ wins in Hyb 1}] > \Pr[\adv \text{ wins in Hyb 0}]$, the other case being similar.

$\adv'$ proceeds as follows:
\begin{itemize}
    \item Receives as a challenge a circuit $P$ which is either $\mathsf{iO}(A+s)$ or $ \mathsf{iO}(\mathsf{shO}(A)(\cdot - s))$. Creates the state $\ket{A_{s,s'}}$. Gives $P$, $\mathsf{shO}(A^{\perp}+s')$ and $\ket{A_{s,s'}}$ as input to $\adv_0$.
    \item $\adv_0$ returns a bipartite state. $\adv'$ forwards the first register to $\adv_1$ and the second to $\adv_2$. $\adv_1$ returns $(s_1, s_1')$ and $\adv_2$ returns $(s_2, s_2')$. $\adv'$ checks If $s_1, s_2 \in A+s$ and $s_1',s_2' \in A^{\perp} +s'$. If so, $\adv'$ guesses that $P = \mathsf{iO}(\mathsf{shO}(A)(\cdot - s))$, otherwise that $P = \mathsf{iO}(A+s)$.
\end{itemize}
It is straightforward to verify that $\adv'$ succeeds at distinguishing with non-negligible probability.
\end{proof}

\begin{lemma}
\label{lem: hyb 1-2}
For any QPT adversary $(\adv_0, \adv_1, \adv_2)$,
$$\left|\Pr[(\adv_0, \adv_1, \adv_2) \text{ wins in Hyb 2}] - \Pr[(\adv_0, \adv_1, \adv_2) \text{ wins in Hyb 1}] \right| = \negl(\lambda) \,.$$
\end{lemma}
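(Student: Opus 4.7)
The plan is to reduce distinguishability of Hyb 1 and Hyb 2 directly to the security of the subspace hiding obfuscator $\shO$. The only syntactic change between the two hybrids is that $\shO(A)$ is replaced by $\shO(B)$ inside the first obfuscated program, where $B$ is a uniformly random superspace of $A$ of dimension $3n/4$. Since $|\mathbb{F}_2|^{n - 3n/4} = 2^{n/4}$ is exponential in $n$, the subspace hiding security of $\shO$ applies in this parameter regime.

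Concretely, suppose for contradiction that there is a QPT adversary $(\adv_0, \adv_1, \adv_2)$ whose winning probabilities in Hyb 1 and Hyb 2 differ by a non-negligible amount $\eps(n)$. I would build a QPT distinguisher $\adv'$ against $\shO$ as follows. $\adv'$ samples $A \subseteq \mathbb{F}_2^n$ uniformly of dimension $n/2$ on its own, together with uniformly random $s,s' \in \mathbb{F}_2^n$, and submits $A$ as its challenge subspace to the $\shO$ challenger. Upon receiving back a program $P$ that is either $\shO(A)$ or $\shO(B)$ for a uniformly random $3n/4$-dimensional superspace $B \supseteq A$, $\adv'$ prepares the coset state $\ket{A_{s,s'}}$ efficiently from the description of $(A, s, s')$, constructs $\iO(P(\cdot - s))$ and $\iO(A^\perp + s')$ (padding programs to appropriate lengths to match the obfuscations used in the hybrids), and feeds the triple to $\adv_0$.

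After $\adv_0$ outputs its bipartite state on $(\mathsf{B}, \mathsf{C})$, $\adv'$ forwards $\mathsf{B}$ to $\adv_1$ and $\mathsf{C}$ to $\adv_2$, reveals $A$ to both, and receives $(s_1, s_1')$ and $(s_2, s_2')$. $\adv'$ then checks whether $s_1, s_2 \in A + s$ and $s_1', s_2' \in A^\perp + s'$; if so it guesses that $P = \shO(B)$, otherwise that $P = \shO(A)$. Since the view of $(\adv_0, \adv_1, \adv_2)$ in $\adv'$'s simulation is identically distributed to the Hyb 1 view when $P = \shO(A)$, and identically distributed to the Hyb 2 view when $P = \shO(B)$ (the marginal over $B$ matches in both experiments), the distinguishing advantage of $\adv'$ is at least $\eps(n)$, contradicting the security of $\shO$.

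The step that requires a little care is verifying that the reduction is faithful, in particular that sampling $B$ as a uniformly random $3n/4$-dimensional superspace of $A$ inside the $\shO$ challenger produces exactly the same distribution of programs as in Hyb 2, and that padding lengths are chosen so that the functionally equivalent programs fed to the outer $\iO$ have identical size. Both of these are standard bookkeeping, essentially identical to what was done in Claim \ref{lem: direct product hyb 1-2} of the direct product hardness proof, so no substantive new obstacle arises.
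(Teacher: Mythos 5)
Your reduction to the subspace-hiding security of $\shO$ is exactly the argument the paper gives for this lemma: sample $A,s,s'$ locally, query the $\shO$ challenger on $A$, wrap the returned program $P$ as $\iO(P(\cdot - s))$ alongside $\iO(A^\perp + s')$ and $\ket{A_{s,s'}}$, run $(\adv_0,\adv_1,\adv_2)$ (revealing $A$ to $\adv_1,\adv_2$ after $\adv_0$ splits its state), and output a guess based on whether all four returned vectors land in the correct cosets. The proof is correct; the extra remarks about the parameter regime $2^{n/4}$ and about padding are accurate but inessential bookkeeping, as you note.
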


\begin{proof}
Suppose for a contradiction there was a QPT adversary $\adv$ such that:
\begin{equation*}
    \left|\Pr[(\adv_0, \adv_1, \adv_2) \text{ wins in Hyb 2}] - \Pr[(\adv_0, \adv_1, \adv_2) \text{ wins in Hyb 1}] \right| \,, 
\end{equation*}
is non-neglibile. 

We argue that $\adv$ can be used to construct an adversary $\adv'$ that breaks the security of $\shO$. 

Fix $n$. Suppose $\Pr[(\adv_0, \adv_1, \adv_2) \text{ wins in Hyb 2}] > \Pr[(\adv_0, \adv_1, \adv_2) \text{ wins in Hyb 1}]$, the other case being similar. 

$\adv'$ proceeds as follows:
\begin{itemize}
    \item Sample $A \subseteq \mathbb{F}_2^n$ uniformly at random. Send $A$ to the challenger.
    \item The challenger returns a program $P$ which is either $\shO(A)$ or $\shO(B)$ for a uniformly sampled superspace $B$. $\adv'$ samples uniformly $s,s' \in \mathbb{F}_2^n$, and creates the state $\ket{A_{s,s'}}$. Gives $\iO(P(\cdot - s))$, $\iO(A^{\perp}+s')$ and $\ket{A_{s,s'}}$ as input to $\adv_0$. The latter returns a bipartite state. $\adv'$ forwards the first register to $\adv_1$ and the second register to $\adv_2$.
    \item $\adv_1$ returns a pair $(s_1, s_1')$ and $\adv_2$ returns a pair $(s_2,s_2')$. $\adv'$ checks that $s_1, s_2 \in A+s$ and  $s_1', s_2' \in A^{\perp} + s'$. If so, then $\adv'$ guesses that $P = \shO(B)$, otherwise that $P = \shO(A)$.
\end{itemize}
It is straightforward to verify that $\adv'$ succeeds at the security game for $\shO$ with non-negligible advantage.
\end{proof}

\begin{lemma}
\label{lem: hyb 2-3}
For any QPT adversary $\adv$,
$$\left|\Pr[\adv \text{ wins in Hyb 3}] - \Pr[\adv \text{ wins in Hyb 2}] \right| = \negl(\lambda) \,.$$
\end{lemma}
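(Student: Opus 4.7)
The plan is to reduce directly to the post-quantum security of the outer $\iO$, by observing that the two circuits being obfuscated in Hyb 2 and Hyb 3 are \emph{functionally equivalent}. Concretely, the only difference between the two hybrids is whether the first program handed to $\As_0$ is $\iO(\shO(B)(\cdot - s))$ or $\iO(\shO(B)(\cdot - t))$, where $t = s + w_B$ for a uniformly random $w_B \gets B$. Since $t - s = w_B \in B$, for every input $x$ the vectors $x - s$ and $x - t$ lie in the same coset of $B$, and hence $\shO(B)(x - s) = \shO(B)(x - t)$ with overwhelming probability over the sampling of $\shO$. Thus the two circuits compute the same function (up to a negligible failure probability from $\shO$'s correctness), which is exactly the setup needed for $\iO$ security.

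Given any QPT adversary $(\As_0, \As_1, \As_2)$ that distinguishes Hyb 2 from Hyb 3 with non-negligible advantage, the plan is to build a QPT $\iO$ distinguisher $\As'$ as follows. $\As'$ samples $A$, $s$, $s'$, a uniformly random superspace $B \supseteq A$ of dimension $3n/4$, and $w_B \gets B$; sets $t = s + w_B$; prepares the two circuits $C_0 := \shO(B)(\cdot - s)$ and $C_1 := \shO(B)(\cdot - t)$, appropriately padded to the same size; and submits the pair $(C_0, C_1)$ to the $\iO$ challenger, receiving back an obfuscation $P$ of one of them. $\As'$ then runs the monogamy game on $(\As_0, \As_1, \As_2)$ using $P$ in place of the first program, $\iO(A^\perp + s')$ in place of the second, and $\ket{A_{s,s'}}$ as the coset state. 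Finally, $\As'$ checks whether $(s_1, s_1') \in (A+s) \times (A^\perp + s')$ and $(s_2, s_2') \in (A+s) \times (A^\perp + s')$, and outputs its guess for the $\iO$ challenge bit accordingly.

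By construction, when $P = \iO(C_0)$ the view of $(\As_0, \As_1, \As_2)$ is exactly that of Hyb 2, and when $P = \iO(C_1)$ it is exactly that of Hyb 3. Since the hypothesis of $\iO$ security (functional equivalence of $C_0$ and $C_1$) holds with overwhelming probability over the sampling of $\shO(B)$, the distinguishing advantage of $\As'$ in the $\iO$ game is, up to a negligible additive loss, equal to the gap between $\Pr[(\As_0, \As_1, \As_2) \text{ wins in Hyb 3}]$ and $\Pr[(\As_0, \As_1, \As_2) \text{ wins in Hyb 2}]$. The (post-quantum) security of $\iO$ then forces this gap to be negligible. The only subtlety is to ensure that $C_0$ and $C_1$ really are of the same size regardless of whether the shift is $s$ or $t$, which is handled by standard padding; everything else is routine. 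This is the only real ``obstacle,'' and it is entirely syntactic.
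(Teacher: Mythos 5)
Your proof is correct and takes essentially the same approach as the paper. The paper's proof of \Cref{lem: hyb 2-3} is a one-line reference to the reduction used for \Cref{lem: hyb 0-1}, pointing out exactly what you observe: $\shO(B)(\cdot - s)$ and $\shO(B)(\cdot - t)$ are functionally equivalent because $t - s = w_B \in B$, so $\iO$ security closes the gap; your write-up just spells out the reduction explicitly, including the (correct and standard) padding caveat.
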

\begin{proof}
The proof is similar to the proof of Lemma \ref{lem: hyb 0-1}, and follows from the security of $\iO$ and the fact that  $\shO(B)(\cdot - s)$ and $\shO(B)(\cdot - t)$ compute the same functionality. 
\end{proof}

\begin{lemma}
For any QPT adversary $\adv$,
$$\left|\Pr[\adv \text{ wins in Hyb 4}] - \Pr[\adv \text{ wins in Hyb 3}] \right| = \negl(\lambda) \,.$$
\end{lemma}
\begin{proof}
The proof is analogous to that of Lemma \ref{lem: hyb 0-1}.
\end{proof}

\begin{lemma}
For any QPT adversary $\adv$,
$$\left|\Pr[\adv \text{ wins in Hyb 5}] - \Pr[\adv \text{ wins in Hyb 4}] \right| = \negl(\lambda) \,.$$
\end{lemma}
\begin{proof}
The proof is analogous to that of Lemma \ref{lem: hyb 1-2}.
\end{proof}

\begin{lemma}
For any QPT adversary $\adv$,
$$\left|\Pr[\adv \text{ wins in Hyb 6}] - \Pr[\adv \text{ wins in Hyb 5}] \right| = \negl(\lambda) \,.$$
\end{lemma}
\begin{proof}
The proof is analogous to that of Lemma \ref{lem: hyb 2-3}.
\end{proof}

\begin{lemma}
For any QPT adversary $\adv$ for Hyb 6, there exists an adversary $\adv'$ for Hyb 7 such that
$$ \Pr[\adv' \text{ wins in Hyb 7}] \geq \Pr[\adv \text{ wins in Hyb 6}] \,.$$
\end{lemma}
\begin{proof}
This is immediate.
\end{proof}

\begin{lemma}
\label{lem:comp_vector_hiding}
For any adversary $(\adv_0, \adv_1, \adv_2)$,
$$\Pr[(\adv_0, \adv_1, \adv_2) \text{ wins in Hyb 7}] = \negl(\lambda) \,.$$
\end{lemma}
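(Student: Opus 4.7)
The plan is to reduce this (statistical) claim to the information-theoretic monogamy property of coset states, \Cref{thm: monogamy info}, mirroring the final step of the proof of \Cref{thm: direct product comp}. First, by a random change-of-basis argument identical to the one carried out there, I may assume without loss of generality that $B = B_*$ is the subspace of vectors whose last $n/4$ coordinates vanish and $C = C_*$ is the subspace of vectors whose last $3n/4$ coordinates vanish. Concretely, given any Hyb~7 adversary $(\adv_0, \adv_1, \adv_2)$ for general $B,C$, I build one for $(B_*, C_*)$ by sampling a uniform isomorphism $\mathcal{T}$ with $\mathcal{T}(B_*) = B$ and $\mathcal{T}(C_*) = C$, applying $U_\mathcal{T}$ to the coset state, transforming $(t,t')$ to $(\mathcal{T}(t), (\mathcal{T}^{-1})^T(t'))$, and undoing the transformation on the two answer vectors (acting by $\mathcal{T}^{-1}$ on the ``$+s$'' answer and by $\mathcal{T}^T$ on the ``$+s'$'' answer, which flips $A \leftrightarrow A^\perp$ correctly). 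The distributions match exactly as verified in the analogous step of \Cref{thm: direct product comp}.

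With the canonical $B_*, C_*$ fixed, the subspace $A$ satisfying $C_* \subseteq A \subseteq B_*$ decomposes as $A = \{(a_L, a_M, 0) : a_L \in \mathbb{F}_2^{n/4},\, a_M \in A'\}$ for a uniformly random $A' \subseteq \mathbb{F}_2^{n/2}$ of dimension $n/4$, and the coset state factors as
\[
\ket{A_{s,s'}} \;=\; \Big(\sum_{a_L \in \mathbb{F}_2^{n/4}} (-1)^{\langle a_L, s'_L\rangle} \ket{a_L + s_L}\Big) \otimes \ket{A'_{s_M, s'_M}} \otimes \ket{s_R},
\]
where $s = (s_L, s_M, s_R)$ and $s' = (s'_L, s'_M, s'_R)$. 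Moreover, $t = s + w_{B_*}$ is distributed as $(u_L, u_M, s_R)$ for independent uniform $u_L, u_M$, and $t' = s' + w_{C_*^\perp}$ is distributed as $(s'_L, u'_M, u'_R)$ for independent uniform $u'_M, u'_R$. Thus all of $(\ket{A_{s,s'}}, t, t')$ can be efficiently assembled from $\ket{A'_{s_M, s'_M}}$ together with independently-sampled bits for $s_L, s'_L, s_R$, the parts of $t$ outside of $s_R$, and the parts of $t'$ outside of $s'_L$.

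This yields the reduction to \Cref{thm: monogamy info}. The IT-monogamy challenger hands the reduction's $\adv_0^{\mathrm{IT}}$ a state $\ket{A'_{s_M, s'_M}}$ on $n/2$ qubits; $\adv_0^{\mathrm{IT}}$ samples the extra uniform ingredients above, constructs $(\ket{A_{s,s'}}, B_*, C_*, t, t')$, runs the Hyb~7 adversary $\adv_0$, and forwards the two output registers to $\adv_1^{\mathrm{IT}}, \adv_2^{\mathrm{IT}}$ respectively. When the IT challenger later reveals $A'$, each party reconstructs $A$ (trivially, since $A$ is determined by $A'$ once $B_*, C_*$ are fixed), feeds it to $\adv_1$ (resp.\ $\adv_2$), obtains answers $(s_1, s'_1)$ (resp.\ $(s_2, s'_2)$), and returns the ``middle block'' pair $(s_{1,M}, s'_{1,M})$ (resp.\ $(s_{2,M}, s'_{2,M})$). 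Whenever $s_i \in A + s$ and $s'_i \in A^\perp + s'$, the block structure of $A$ forces $s_{i,M} \in A' + s_M$ and $s'_{i,M} \in A'^\perp + s'_M$. Consequently, the winning probability in Hyb~7 is bounded by the winning probability in the IT monogamy game over $\mathbb{F}_2^{n/2}$, which by \Cref{thm: monogamy info} is at most $1/\subexp(n/2) = \negl(n)$.

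The only delicate point is the change-of-basis reduction to canonical $(B_*, C_*)$; this is essentially identical to the one already carried out in the proof of \Cref{thm: direct product comp}, and the main thing to check is that $\mathcal{T}(t)$ and $(\mathcal{T}^{-1})^T(t')$ are distributed as uniform samples from $z + B$ and $z' + C^\perp$ respectively (where $z = \mathcal{T}(s)$ and $z' = (\mathcal{T}^{-1})^T(s')$). Once this is in hand, the rest is bookkeeping about the block decomposition and a direct appeal to \Cref{thm: monogamy info}.
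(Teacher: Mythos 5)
Your proof is correct and follows essentially the same route as the paper's: a change-of-basis reduction to the canonical $(B_*, C_*)$, followed by the observation that the coset state factors across the three coordinate blocks so that the Hyb~7 instance can be assembled from an $n/2$-qubit coset state plus independently sampled padding, and finally a direct appeal to \Cref{thm: monogamy info}. The paper arrives at the same block decomposition by building up $\ket{W} = \ket{\phi}\otimes\ket{A_{s,s'}}\otimes\ket{\hat{s}}$ and verifying it equals $\ket{\tilde{A}_{z,z'}}$, which is the same calculation written in the opposite direction.
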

\begin{proof}
Suppose there exists an adversary $(\adv_0, \adv_1, \adv_2)$ for Hyb 7 that wins with probability $p$. 

We first show that, without loss of generality, one can take $B$ to be the subspace of vectors such that the last $n/4$ entries are zero (and the rest are free), and one can take $C$ to be such that the last $3/4n$ entries are zero (and the rest are free). We construct the following adversary $(\adv_0', \adv_1', \adv_2')$ for the game where $B$ and $C$ have the special form above with trailing zeros, call these $B_*$ and $C_*$, from an adversary $(\adv_0, \adv_1, \adv_2)$ for the game of Hyb 7.
\begin{itemize}
\item $\adv_0'$ receives a state $\ket{A_{s,s'}}$, together with $t$ and $t'$, for some $C_* \subseteq A \subseteq B_*$, where $t = s + w_{B_*}$ for $w_{B_*} \leftarrow B_*$, and $t' = s' + w_{C_*^{\perp}}$, where $w_{C_*^{\perp}} \leftarrow C_*^{\perp}$.
\item $\adv_0'$ picks uniformly random subspaces $B$ and $C$ of dimension $\frac{3}{4}n$ and $\frac{n}{4}$ respectively such that $C \subseteq B$, and a uniformly random isomorphism $\mathcal{T}$ mapping $C_*$ to $C$ and $B_*$ to $B$. We think of $\mathcal{T}$ as a change-of-basis matrix. $\adv_0'$ applies to  $\ket{A_{s,s'}}$ the unitary $U_{\mathcal{T}}$ which acts as $\mathcal{T}$ on the standard basis elements. $\adv_0'$ gives $U_{\mathcal{T}}\ket{A}$ to $\adv_0$ together with $B$, $C$, $\mathcal{T}(t)$ and $(\mathcal{T}^{-1})^T(t')$. $\adv_0'$ receives a bipartite state from $\adv_0$. Forwards the first register to $\adv_1'$ and the second register to $\adv_2'$.
\item $\adv_1'$ forwards the received register to $\adv_1$, and receives a pair $(s_1, s_1')$ as output. $\adv_1'$ returns $(\mathcal{T}^{-1}(s_1),  \mathcal{T}^{T}(s_1'))$ to the challenger. $\adv_2'$ proceeds analogously.
\end{itemize}

First, notice that
\begin{align*}
    U_{\mathcal{T}} \ket{A_{s,s'}} &= U_{\mathcal{T}}  \sum_{v \in A} (-1)^{\langle v,s' \rangle}\ket{v+s} \\
    & =  \sum_{v \in A} (-1)^{\langle v,s' \rangle}\ket{\mathcal{T}(v)+\mathcal{T}(s)} \\
    & =  \sum_{w \in \mathcal{T}(\mathcal{A})} (-1)^{\langle \mathcal{T}^{-1}(w), s' \rangle}\ket{w+\mathcal{T}(s)} \\
    & = \sum_{w \in \mathcal{T}(A)} (-1)^{\langle w, (\mathcal{T}^{-1})^T(s') \rangle}\ket{w+\mathcal{T}(s)} \\
    & = \ket{\mathcal{T}(A)_{z, z'}} \,,
\end{align*}
where $z = \mathcal{T}(s)$ and $z' = (\mathcal{T}^{-1})^T (s')$.

Notice that $\mathcal{T}(A)$ is a uniformly random subspace between $C$ and $B$, and that $z$ and $z'$ are uniformly random vectors in $\mathbb{F}_2^n$. Moreover, we argue that:
\begin{itemize}
\item[(i)] $\mathcal{T}(t)$ is distributed as a uniformly random element of $z+B$.
\item[(ii)] $(\mathcal{T}^{-1})^T(t')$ is distributed as a uniformly random element of $z' + C^{\perp}$. 
\end{itemize}

For (i), notice that 
$$\mathcal{T}(t) = \mathcal{T}(s+w_{B_*}) = \mathcal{T}(s)+\mathcal{T}(w_{B_*}) = z + \mathcal{T}(w_{B_*})\,,$$ where $w_{B_*}$ is uniformly random in $B_*$. Since $\mathcal{T}$ is an isomorphism with $\mathcal{T}(B_*) = B$, then $\mathcal{T}(w_{B_*})$ is uniformly random in $B$. Thus, $\mathcal{T}(t)$ is distributed as a uniformly random element in $z+B$.

For (ii), notice that 
$$(\mathcal{T}^{-1})^T(t') = (\mathcal{T}^{-1})^T(s'+w_{C_*^{\perp}}) = (\mathcal{T}^{-1})^T(s') + (\mathcal{T}^{-1})^T(w_{C_*^{\perp}}) = z' + (\mathcal{T}^{-1})^T(w_{C_*^{\perp}}) \,,$$
where $w_{C_*^{\perp}}$ is uniformly random in $C_*^{\perp}$. We claim that $(\mathcal{T}^{-1})^T(w_{C_*^{\perp}})$ is uniformly random in $C^{\perp}$. Notice, first, that the latter belongs to $C^{\perp}$. Let $x \in C$, then 
$$\langle (\mathcal{T}^{-1})^T(w_{C_*^{\perp}}), x \rangle = \langle w_{C_*^{\perp}}, \mathcal{T}^{-1} (x) \rangle  = 0\,,$$
where the last equality follows because $w_{C_*^{\perp}} \in C_*^{\perp}$, and $\mathcal{T}^{-1}(C) = C_*$. The claim follows from the fact that $(\mathcal{T}^{-1})^T$ is a bijection.

Hence, $\adv_0$ receives inputs from the correct distribution, and thus, with probability $p$, both $\adv_1$ and $\adv_2$ return the pair $(z = \mathcal{T}(s), z' (\mathcal{T}^{-1})^T (s'))$. Thus, with probability $p$, $\adv_1'$ and $\adv_2'$ both return $(\mathcal{T}^{-1}(z),  \mathcal{T}^{T}(z')) = (s, s')$ to the challenger, as desired.
\vspace{2mm}

So, we can now assume that $B$ is the space of vectors such that the last $\frac{n}{4}$ entries are zero, and $C$ is the space of vectors such that the last $\frac34 n$ entries are zero. Notice then that the sampled subspace $A$ is uniformly random subspace subject to the last $\frac{n}{4}$ entries being zero, and the first $\frac{n}{4}$ entries being free. From an adversary $(\adv_0, \adv_1, \adv_2)$ for Hybrid 7 with such $B$ and $C$, we will construct an adversary $(\adv_0', \adv_1', \adv_2')$ for the information-theoretic monogamy where the ambient subspace is $\mathbb{F}_2^{n'}$, where $n' = \frac{n}{2}$.

\begin{itemize}
    \item $\adv_0'$ receives $\ket{A_{s,s'}}$, for uniformly random $A \subseteq \mathbb{F}_2^{n'}$ of dimension $n'/2$ and uniformly random $s, s' \in \mathbb{F}_2^{n'}$. $\adv_0'$ samples $\tilde{s}, \tilde{s}', \hat{s}, \hat{s}' \leftarrow \mathbb{F}_2^{\frac{n}{4}}$. Let  $\ket{\phi} = \frac{1}{2^{n/8}} \sum_{x \in \{0,1\}^{n/4} } (-1)^{\langle x,\tilde{s}' \rangle}\ket{x + \tilde{s}}$. $\adv_0'$ creates the state 
$$ \ket{W} = \ket{\phi} \otimes \ket{A_{s,s'}} \otimes \ket{\hat{s}}\,,$$
$\adv_0'$ gives to $\adv_0$ as input the state $\ket{W}$, together with $t = 0^{3n/4}|| \hat{s} + w_B$ for $w_B \leftarrow B$ and $t' = \hat{s}'||0^{3n/4} + w_{C^{\perp}}$, for $w_{C^{\perp}} \leftarrow C^{\perp}$. $\adv_0$ returns a bipartite state. $\adv_0'$ forwards the first register to $\adv_1'$ and the second register to $\adv_2'$. 
\item $\adv_1'$ receives $A$ from the challenger. $\adv_1'$ sends to $\adv_1$ the previously received register, together with the the subspace $A' \subseteq \mathbb{F}_2^n$ whose first $n/4$ entries are free, the last $n/4$ entries are zero, and the middle $n/2$ entries belong to $A$. $\adv_1$ returns a pair $(s_1, s_1') \in \mathbb{F}_2^{n} \times \mathbb{F}_2^{n}$. Let $r_1 = [s_1]_{\frac{n}{4}+1, \frac{3}{4}n} \in \mathbb{F}_2^{n/2}$ be the ``middle'' $n/2$ entries of $s_1$. Let $r_1 =  [s'_1]_{\frac{n}{4}+1, \frac{3}{4}n} \in \mathbb{F}_2^{n/2} $. $\adv_1'$ outputs $(r_1, r'_1)$. 
\item $\adv_2'$ receives $A$ from the challenger. $\adv_2'$ sends to $\adv_2$ the previously received register, together with the the subspace $A' \subseteq \mathbb{F}_2^n$ whose first $n/4$ entries are free, the last $n/4$ entries are zero, and the middle $n/2$ entries belong to $A$. $\adv_2$ returns a pair $(s_2, s_2') \in \mathbb{F}_2^{n} \times \mathbb{F}_2^{n}$. Let $r_2 = [s_2]_{\frac{n}{4}+1, \frac{3}{4}n} \in \mathbb{F}_2^{n/2}$ be the ``middle'' $n/2$ entries of $s_2$. Let $r'_2 =  [s'_2]_{\frac{n}{4}+1, \frac{3}{4}n} \in \mathbb{F}_2^{n/2} $. $\adv_2'$ outputs $(r_2, r_2')$.
\end{itemize}

Notice that 
\begin{align*}
\ket{W} &=  \ket{\phi} \otimes \ket{A_{s,s'}} \otimes \ket{\tilde{s}} \\
&= \sum_{x \in \{0,1\}^{n/4}, v \in A} (-1)^{\langle x,\tilde{s}' \rangle} (-1)^{\langle v,s' \rangle}\Big| (x+\tilde{s})||(v+s)||\hat{s}\Big\rangle \\
&= \sum_{x \in \{0,1\}^{n/4}, v \in A} (-1)^{\langle (x||v||0^{n/4}), (\tilde{s}'|| s'||\hat{s}') \rangle} \Big| x||v||0^{n/4} + \tilde{s}||s||\hat{s} \Big \rangle \\
&= \sum_{w \in \tilde{A}} (-1)^{\langle w, z' \rangle} \ket{w + z} = \ket{\tilde{A}_{z,z'}} \,,
\end{align*}
where $z = \tilde{s}||s||\hat{s}$, $z' = \tilde{s}'||s'||\hat{s}'$, and $\tilde{A} \subseteq \mathbb{F}_2^n$ is the subspace in which the first $n/4$ entries are free, the middle $n/2$ entries belong to subspace $A$, and the last $n/4$ entries are zero.

Notice that the subspace $\tilde{A}$, when averaging over the choice of $A$, is distributed precisely as in the game of Hybrid 7 (with the special choice of $B$ and $C$); $z, z'$ are uniformly random in $\mathbb{F}_2^{n}$; $t$ is uniformly random from $z + B$, and $t'$ is uniformly random from $z'+C^{\perp}$. It follows that, with probability $p$, the answers returned by $\adv'_1$ and $\adv'_2$ are both correct. 

From the information-theoretic security of the monogamy game, Theorem \ref{thm: monogamy info}, it follows that $p$ must be negligible. 
\end{proof}

\section{More Discussions On Anti-Piracy Security}
\subsection{Anti-Piracy Implies CPA Security}

\label{sec:unclonable_dec_cpa_ag_implies_cpa}

\begin{lemma}
    If a single-decryptor encryption scheme satisfies CPA-style $\gamma$-anti-piracy security (\Cref{def:weak_ag}) for all inverse poly $\gamma$, it also satisfies CPA security. 
\end{lemma}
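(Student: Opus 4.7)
The plan is to prove the contrapositive: assuming there exists a QPT adversary $\cA'$ that breaks CPA security with some non-negligible advantage $\epsilon(\lambda)$, construct a QPT adversary $\cA$ for the anti-piracy game (\Cref{def: regular antipiracy cpa}) whose advantage is at least $\epsilon(\lambda)/2$ (for large enough $\lambda$), contradicting $\gamma$-anti-piracy security for the inverse polynomial $\gamma(\lambda) := \epsilon(\lambda)/4$.

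The key idea is that the anti-piracy adversary $\cA$ can use the quantum key $\qsk$ to serve as an honest ``decryptor'' in one register, while using the CPA distinguisher $\cA'$ (which requires only the public key) to serve as a second decryptor. Concretely, $\cA$ proceeds as follows: upon receiving $(\pk, \qsk)$, run $\cA'(\pk)$ to obtain the pair $(m_0, m_1)$ and its internal auxiliary state $\tau$. Define the bipartite state $\sigma = \qsk \otimes \tau$, with $\sigma[R_1] = \qsk$ and $\sigma[R_2] = \tau$. Let $U_1$ be the honest decryption circuit (which takes a ciphertext $c_1$, applies $\dec$ using $\qsk$, and outputs the resulting plaintext). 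Let $U_2$ be the circuit that, given a ciphertext $c_2$ and state $\tau$, runs $\cA'$'s distinguishing step to obtain a guess bit $b'$ and outputs $m_{b'}$. Output $(m_0, m_1)$ together with these decryptors.

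Next I would analyze the winning probability. By correctness of the scheme, on input $c_1 = \enc(\pk, m_{b_1})$, the first decryptor outputs $m_{b_1}$ with probability at least $1-\negl(\lambda)$. By the assumption on $\cA'$, on input $c_2 = \enc(\pk, m_{b_2})$, the guess $b'$ equals $b_2$ with probability at least $\tfrac12 + \epsilon(\lambda)$, and hence the second decryptor outputs $m_{b_2}$ with the same probability. Crucially, the challenger samples $(b_1, c_1)$ and $(b_2, c_2)$ independently, and $\sigma$ is a product state across $R_1$ and $R_2$, so the two success events are independent. Therefore
\begin{align*}
\Pr[\sf AntiPiracyCPA(1^{\lambda}, \cA) = 1] \;\geq\; \big(1-\negl(\lambda)\big)\cdot\Big(\tfrac12 + \epsilon(\lambda)\Big) \;\geq\; \tfrac12 + \tfrac{\epsilon(\lambda)}{2}
\end{align*}
for all sufficiently large $\lambda$. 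Taking $\gamma(\lambda) = \epsilon(\lambda)/4$, this contradicts the assumed $\gamma$-anti-piracy security, completing the proof.

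I do not anticipate any real obstacle here: the reduction is black-box and the fact that the product state on $R_1$ and $R_2$ has no entanglement means that the two decryptors' successes are genuinely independent, so no subtle quantum argument (e.g.\ threshold implementations or the machinery of \Cref{sec:unclonable dec ati}) is needed. The only mild care point is ensuring that $\cA'$'s internal state $\tau$ is well-defined; this is standard since we may assume $\cA'$ is given in the form (state-preparation, distinguisher circuit), and in any case we only need a single instance of $\cA'$'s workspace in register $R_2$.
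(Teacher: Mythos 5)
Your proposal is correct and matches the paper's proof essentially verbatim: both reductions pair the honest decryptor built from $\qsk$ with the CPA distinguisher's state on the other register, note that the product structure makes the two successes independent, and conclude the anti-piracy adversary wins with probability roughly $\frac12 + \epsilon$, contradicting $\gamma$-anti-piracy for some inverse polynomial $\gamma$. The only difference is that you put the honest decryptor on $R_1$ whereas the paper puts it on $R_2$, which is cosmetic.
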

\begin{proof}
    Let $\As$ be an adversary that breaks CPA security with advantage $\delta$. We construct the following adversary $\Bs$ that breaks its CPA-style $(\delta/2)$-anti-piracy security. 
    
     $\Bs$ upon receiving a public key $\pk$ and a quantum key $\rho_{\sk}$, it prepares the following programs: 
            \begin{itemize}
                \item It runs the stateful adversary $\As$ on $(1^\lambda, \pk)$, it outputs $(m_0, m_1)$.  
                \item Let $(\sigma[R_1], U_{1})$ be the stateful algorithm $\As$ in the CPA security game (after outputting $(m_0, m_1)$), except when it outputs a bit $b$, it outputs $m_b$; $(\sigma[R_2], U_{2})$ be the honest decryption algorithm using $\rho_{\sk}$;  $\aux = (m_0, m_1)$ be the output of $\As$.  
            \end{itemize}
        
    First, we observe that $\sigma[R_1]$ and $\sigma[R_2]$ are un-entangled. For $(\sigma[R_1], U_{1})$, because $\As$ wins CPA games with advantage $\delta$, here it also outputs the correct message with probability $1/2 + \delta$. For $(\sigma[R_2], U_{2})$, by the correctness of the scheme, it outputs the correct message with probability $1 - \negl(\lambda)$. Overall, $\Bs$ wins the game with probability $1/2 + \delta - \negl(\lambda) \gg 1/2 + \delta/2$. 
\end{proof}

\subsection{Strong Anti-Piracy Implies Regular Definition}
\label{sec:unclonable dec strong_implies_weak}

In this section, we show the notion of strong anti-piracy security from Definition \ref{def:strong_ag} implies that from Definition \ref{def:weak_ag}.
\begin{proof}
    Assume a single-decryptor encryption scheme satisfies the strong notation of anti-piracy. For any adversary $\As$, consider the game $\strongantipiracy$: %
    \begin{itemize}
    \item  At the beginning of the game, the challenger takes a security parameter $\lambda$ and obtains keys $(\sk, \pk) \gets \setup(1^\lambda)$.

    \item The challenger sends $\cA$   public-key $\pk$ and one copy of decryption key $\qsk$ corresponding to $\pk$.  
     
    \item $\As$ finally outputs two (entangled) quantum decryptors $\D_1 = (\sigma[R_1], U_{1})$ and $\D_2 = (\sigma[R_2],\allowbreak U_{2})$  and  $\aux = (m_0, m_1)$ ($m_0 \ne m_1$)
    
    \item The challenger outputs 1 (for $\cA$ winning) if and only if \emph{both} quantum decryptors $\D_1, \D_2$ are tested to be $\gamma$-good with respect to $\pk$ and $\aux$. 
    \end{itemize}
    
    It (the challenger outputs $1$) happens with only negligible probability.
    In other words, with overwhelming probability over the distribution of $(\sk, \pk)$ and $(m_0, m_1)$, by applying projective measurement (the projective measurement $\cE_1, \cE_2$ inside both threshold implementations) and obtaining $(d_1, d_0)$ and $(d'_1, d'_0)$ for $\D_1, \D_2$ respectively, at least one of $d_1, d'_1$  is smaller than $\frac{1}{2} + \gamma$, by \Cref{def:gamma_good_decryptor} (the definition of $\gamma$-good decryptor).

    Also note that, in \Cref{def:weak_ag}, the game only differs in the test phase, 
    \begin{itemize}
    \item  The first three steps are identical to those in the above game. 
    
    \item The challenger samples $b_1, b_2$ and $r_1, r_2$ uniformly at random and generates ciphertexts $c_1 = \enc(\pk, m_{b_1}; r_1)$ and $c_2 = \enc(\pk, m_{b_2}; r_2)$. The challenger runs $\sf D_1$ on $c_1$ and $\sf D_2$ on $c_2$ and it outputs $1$ (the game is won by the adversary) if and only if $\sf D_1$ outputs $m_{b_1}$ and $\sf D_2$ outputs $m_{b_2}$. 
    \end{itemize}
    By the definition of projective measurement (\Cref{def:project_implement}), the distribution of the second game, can be computed by its projective measurement. 
    In other words, the test phase can be computed in the following equivalent way:
    \begin{itemize}
        \item   Apply the projective measurement $\cE_1, \cE_2$ and obtain $(d_1, d_0)$ and $(d'_1, d'_0)$ for $\D_1, \D_2$ respectively. The challenger then samples two bits $b_1, b_2$ independently, where $b_1 = 1$ with probability $d_1$ and $b_2 = 1$ with probability $d'_1$.  It outputs $1$ if and only if $b_1 = b_2 = 1$. 
    \end{itemize}
    Since we know that with overwhelming probability over the distribution of $(\sk, \pk)$ and $(m_0, m_1)$, by applying projective measurement and obtaining $(d_1, d_0)$ and $(d'_1, d'_0)$ for $\D_1, \D_2$ respecitvely, at least one of $d_1, d'_1$  is smaller than $\frac{1}{2} + \gamma$, we can bound the probability of succeeding in the second game. 
    \begin{align*}
        \Pr[\text{$\As$ succeeds}] &\leq 1 \cdot \Pr\left[d_1 \geq \frac{1}{2} + \gamma \,\wedge\, d'_1 \geq \frac{1}{2} + \gamma \right]  \\
        & \quad\quad\quad\quad + \left(\frac{1}{2} + \gamma\right) \cdot \Pr\left[d_1 \leq \frac{1}{2} + \gamma \,\vee\, d'_1 \leq \frac{1}{2} + \gamma \right] \\
         &\leq \negl(\lambda) + \left(\frac{1}{2} + \gamma\right)
    \end{align*}
    Therefore, it also satisfies the weak definition (\Cref{def:weak_ag}).
\end{proof}

\subsection{Strong Anti-Piracy, with Random Challenge Plaintexts}

\label{sec:strong_anti_piracy_random}

\begin{definition}[Testing a quantum decryptor, with random challenge plaintexts] 
\label{def:gamma_good_decryptor_random}
   Let $\gamma \in [0,1]$. Let $\pk$ be a public key. We refer to the following procedure as a {$\gamma$-good test for a quantum decryptor} with respect to $\pk$ and random challenge plaintexts:
   \begin{itemize}
       \item The procedure takes as input a quantum decryptor $(\rho, U)$.
       \item Let $\mathcal{P} = (P, I - P)$ be the following mixture of projective measurements (in the sense of Definition \ref{def:mixture_of_projective}) acting on some quantum state $\rho'$:
       \begin{itemize}
       \item Sample a uniform random message $m \leftarrow \cM$. Compute $c \leftarrow \enc(\pk, m)$.
       \item Run the quantum decryptor $(\rho',U)$ on input $c$. Check whether the outcome is $m$. If so, output $1$, otherwise output $0$.
      \end{itemize}
       \item Let $\ti_{1/|\cM| + \gamma}(\cP)$ be the threshold implementation of $\cP$ with threshold value $\frac{1}{|\cM|} + \gamma$, as defined in \Cref{def:thres_implement}. Run $\ti_{1/|\cM| + \gamma}(\cP)$ on $(\rho, U)$, and output the outcome. If the output is $1$, we say that the test passed, otherwise the test failed.
   \end{itemize}
\end{definition}

Now we are ready to define the strong $\gamma$-anti-piracy game. 

\begin{definition}[(Strong) $\gamma$-Anti-Piracy Game, with Random Challenge Plaintexts]
\label{def:gamma_anti_piracy_game_random}
 A strong anti-piracy security game (for random plaintexts) for adversary $\cA$ is denoted as $\sf{StrongAntiPiracyGuess}(1^\lambda)$, which consists of the following steps: %
\begin{enumerate}
    \item \textbf{Setup Phase}: At the beginning of the game, the challenger takes a security parameter $\lambda$ and obtains keys $(\sk, \pk) \gets \setup(1^\lambda)$.

    \item \textbf{Quantum Key Generation Phase}:
    The challenger sends $\cA$  the public-key $\pk$ and one copy of decryption key $\qsk$. %

    \item \textbf{Output Phase}: Finally, $\As$ outputs a (possibly mixed and entangled) state $\sigma$ over two registers $R_1, R_2$ and two quantum circuits $(U_1, U_{2})$. They can be viewed as two quantum decryptors $\D_1 = (\sigma[R_1], U_{1})$ and $\D_2 = (\sigma[R_2],\allowbreak U_{2})$.
    
    \item \textbf{Challenge Phase}: The challenger outputs 1 (for $\cA$'s winning) if and only if \emph{both} quantum decryptors $\D_1, \D_2$ are tested to be $\gamma$-good with respect to $\pk$ and random challenge plaintexts. 
\end{enumerate}
\end{definition}

\begin{definition}[(Strong) $\gamma$-Anti-Piracy-Security] \label{def:strong_ag_random}
 A single-decryptor encryption scheme satisfies strong $\gamma$-anti-piracy security against random plaintexts, if for any QPT adversary $\cA$,  there exists a negligible function $\negl(\cdot)$ such that the following holds for all $\lambda \in \N$: 
  \begin{align}
    \Pr\left[b = 1, b \gets \sf{StrongAntiPiracyGuess}(1^\lambda) \right]\leq \negl(\lambda)
    \end{align}
\end{definition}

We claim \Cref{def:strong_ag_random} implies \Cref{def:weak_ag_random}. %
The proof is done in the same way as that in \Cref{sec:unclonable dec strong_implies_weak}. We omit the proof here. 

To prove both our constructions satisfy the strong anti-piracy security against random messages:
\begin{itemize}
    \item Construction based on strong monogamy property: the proof works in the exactly same way, except the compute-and-compare program $\CC[f, y, m_b]$ for a uniform bit $b$ should be replaced with $\CC[f, y, m]$ for a uniformly random message $m$.

    \item Construction based on extractable witness encryption: the proof works in the exact same way. 
\end{itemize}

\subsection{Comparing \texorpdfstring{\Cref{def:weak_ag}}{Regular Anti-Piracy Security} with \texorpdfstring{\Cref{def:weak_ag_random}}{Anti-Piracy Security with Random Ciphertexts}}
\label{sec:unclonable_dec_unified}

In \Cref{sec:unclonable_dec}, we define two anti-piracy security, namely \Cref{def:weak_ag} for chosen plaintexts and \Cref{def:weak_ag_random} for random plaintexts.  In this section, we discuss their relationship.

One would hope that anti-piracy security against chosen plaintexts (\Cref{def:weak_ag}) implies anti-piracy security against random plaintexts  (\Cref{def:weak_ag_random}), which is an analogue of security against chosen plaintext attack implies security against random plaintext attack (decrypting encryptions of random messages). However, we realize that it is unlikely to be the case for anti-piracy security. Although it is not a formal proof, this intuition explains where thinks might fail.

Consider an adversary that breaks \Cref{def:weak_ag_random}. Assume it outputs the following decryptor state: 
\begin{align*}
    \left(\sqrt{\gamma} \ket {\sf good} + \sqrt{1 - \gamma} \ket {\sf bad}\right)^{\otimes 2}, 
\end{align*}
where $\ket{\sf good}$ is a perfect decryptor state and $\ket{\sf bad}$ is a garbage state that is orthogonal to $\ket {\sf good}$. It is easy to see that it breaks \Cref{def:weak_ag_random} with advantage $\gamma^2$. However, each side can only win the CPA security game with advantage at most $1/2 + \gamma$ independently. Therefore, its advantage for \Cref{def:weak_ag} is $(1/2 + \gamma)^2$, which is smaller than the trivial advantage $1/2$.

\section{Proof \texorpdfstring{of \Cref{lem:hidden_trigger}}{Using Hidden Trigger Techniques}}
\label{sec: PRF hidden trigger}

We are going to show that an adversary can not distinguish a pair of  uniformly inputs from a pair of hidden trigger inputs by a sequence of hybrids. For simplicity, we first show the following lemma about the indistinguishability of a single random input or a single hidden trigger input.  We will then show how the proof for \Cref{lem:hidden_trigger_1d} translates to a proof for \Cref{lem:hidden_trigger} easily. 

Note that one can not get \Cref{lem:hidden_trigger} by simply applying \Cref{lem:hidden_trigger_1d} twice, as
one can not sample a random hidden trigger input by only given the public information in the security game ($\gentrigger$ requires knowing $K_2, K_3$), which is essentially required. 

\begin{lemma}\label{lem:hidden_trigger_1d}
    Assuming post-quantum $\iO$ and one-way functions, for every efficient QPT algorithm $\As$, it can not distinguish the following two cases with non-negligible advantage:
    \begin{itemize}
        \item  A challenger samples $K_1 \gets \setup(1^\lambda)$ and prepares a quantum key $\rho_K = (\{\ket{A_{i, s_i, s'_i}} \}_{i \in [\ell_0]}, \iO(P))$. Here $P$ hardcodes $K_1, K_2, K_3$. 

        \item It samples a random input $u \gets [N]$. Let $y = F_1(K_1, u)$. Parse the input as $u = u_0||u_1||u_2$.
        
        \item Let $u' \gets \gentrigger(u_0, y, K_2, K_3, \{A_i, s_i, s'_i\}_{i \in [\ell_0]})$.

        \item It flips a coin $b$ and outputs $(\rho_K, u)$ or $(\rho_K, u')$ depending on the coin.
    \end{itemize}
\end{lemma}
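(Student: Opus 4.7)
The plan is to prove \Cref{lem:hidden_trigger_1d} by a sequence of hybrids that interpolate between the uniform-input distribution and the hidden-trigger distribution, following the ``punctured programming'' paradigm of Sahai--Waters. At a high level, I would (i) use $\iO$-security to reshape the program $P$ so that, at one carefully chosen slot, its normal-mode and hidden-trigger-mode branches coincide; (ii) puncture the auxiliary PRF keys $K_2, K_3$ at the relevant points and appeal to their punctured pseudorandomness; and (iii) recognize that the resulting distribution is precisely the output of $\gentrigger$.

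Concretely, I would first sample $u = u_0\|u_1\|u_2$ uniformly and set $y = F_1(K_1, u)$; since $u_1\|u_2$ has min-entropy at least $m+2\lambda+4$ (and $F_1$ is an extracting puncturable PRF by \Cref{sw14_thm3}), $y$ is statistically close to a uniformly random string $y^*$ even conditioned on $K_1$ and $u_0$. Next, I would define the target hidden-trigger input $u^* = u_0 \| u_1^* \| u_2^*$ where $u_1^* = F_2(K_2, u_0\|Q^*)$ and $u_2^* = F_3(K_3, u_1^*) \oplus (u_0\|Q^*)$, with $Q^*$ being the small circuit built from $u_0$ and $y^*$ as in $\gentrigger$. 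The core hybrid replaces $P$ with an $\iO$-obfuscation of a functionally equivalent program $P'$ that hardcodes $Q^*$ and a single ``override'' output $y^*$ at the two inputs $u$ and $u^*$; by \Cref{lem:sw14_lem1} and the injectivity of $F_2$, no input other than $u^*$ activates the hidden-trigger branch for this particular $Q^*$, so $P$ and $P'$ compute the same function. Subsequent hybrids puncture $K_2$ at $u_0\|Q^*$ and $K_3$ at $u_1^*$ (again using $\iO$ with the corresponding punctured program), then invoke puncturing security to replace $u_1^*$ and $F_3(K_3, u_1^*)$ with freshly sampled uniform strings. In this final hybrid the components $(u_0, u_1, u_2)$ are independently uniform, so switching the input supplied to the adversary from $u$ to $u^*$ is statistically invisible; unwinding the hybrids and re-introducing $F_2, F_3$ at the punctured points produces exactly $\gentrigger(u_0, y^*, K_2, K_3, \{A_i, s_i, s_i'\})$.

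The main obstacle will be the $\iO$ hybrid: I must ensure that ``the modified program $P'$ agrees with $P$ on every input'' truly holds, which amounts to verifying that the hidden-trigger predicate, with $Q^*$ hardcoded, is satisfied by the unique input $u^*$ and by no other string; this is where \Cref{lem:sw14_lem1} is crucial and also where the statistical injectivity of $F_2$ is used. A secondary subtlety is that $y^*$ and $Q^*$ must be fixed \emph{before} the obfuscated program is released to the adversary, which forces the order of sampling in the hybrids to move $u_0$ and $y^*$ into the setup phase and treat them as part of the key material rather than as input randomness. Once these points are handled, chaining the hybrids establishes that the two distributions in the lemma are computationally indistinguishable, and the same hybrid structure, applied independently to $u$ and $w$ using that $K_2$ and $K_3$ can be punctured at two points simultaneously, will extend the argument to \Cref{lem:hidden_trigger}.
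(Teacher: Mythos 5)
Your proposal has the right overall skeleton — the same four ingredients the paper uses: an $\iO$ hybrid that hardcodes the two special inputs, \Cref{lem:sw14_lem1} together with the near-injectivity of $F_2$ to keep the hidden-trigger branch from firing anywhere unexpected, puncturing of $K_2$ and $K_3$ with pseudorandomness at punctured points, and a final hybrid where $u$ and $u'$ are identically distributed given $u_0$. That matches the paper's Hybrids~3 through~6 closely, and your ``secondary subtlety'' about fixing $Q^*$ and $y^*$ before the obfuscated program is released is exactly the right observation.

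The gap is the very first transition. You propose to justify replacing $y = F_1(K_1, u)$ with an independent uniform $y^*$ by invoking the \emph{extracting} property of $F_1$: since $u_1\|u_2$ has high min-entropy, $y$ is statistically close to uniform given $K_1$ and $u_0$. That claim is only correct if the adversary's view is a function of $K_1$ and $u_0$ alone. But in the $b=0$ branch of the indistinguishability game the adversary is handed the entire input $u = u_0\|u_1\|u_2$, and it also holds $\iO(P)$ which contains the un-punctured $K_1$; so $y$ is a \emph{deterministic} function of its view there, not statistically random, and the extraction argument breaks. (Extraction is the correct tool in the companion reduction in the proof of \Cref{thm:PRF_antipiracy}, Hybrid~1$\to$2, where the adversary receives only the hidden-trigger inputs and never sees $u_1\|u_2$ — but it cannot be ported to the present lemma.) Your proposed ordering also makes the $\iO$ step unsound: once $y$ is replaced by a fresh $y^*$, the unmodified $P$ still outputs $F_1(K_1, u) \neq y^*$ at $(u, \vec{v})$, whereas your $P'$ outputs $y^*$ there, so $P$ and $P'$ are no longer functionally equivalent and $\iO$ security gives you nothing. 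The paper instead does it in the opposite order: Hybrid~1 uses $\iO$ to hardcode the override at $u, u'$ \emph{with the honest $y = F_1(K_1, u)$ inside $Q$} and simultaneously punctures $K_1$ at $u$ and $u'$ (functional equivalence holds because the override reproduces exactly what $P$ would have computed); only then, in Hybrid~2, does it invoke pseudorandomness at punctured points of $F_1$ to swap $y$ for a uniform string, which is a legitimate computational step precisely because the adversary no longer has $K_1$ evaluable at $u$. You omit the puncturing of $K_1$ entirely, and without it there is no mechanism — statistical or computational — to decouple $y$ from $u$ in the $b=0$ branch.
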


Note that we will mark the changes between the current hybrid and the previous hybrid {\color{red} in red}. 

\paragraph{Proof of \Cref{lem:hidden_trigger_1d}}.

\paragraph{Hybrid 0.} This is the original game where the input is sampled either uniformly at random or sampled as a hidden triggers input.  

\begin{enumerate}
    \item It samples random subspaces $A_i$ of dimension $\lambda/2$ and vectors $s_i, s'_i$ for $i = 1, 2, \cdots, \ell_0$. 
    It then prepares programs $R^0_i = \iO(A_i+s_i)$ and $R^1_i = \iO(A^\perp_i + s'_i)$ (padded to the length upper bound $\ell_2 - \ell_0$). It prepares the quantum state $\ket \psi = \bigotimes_{i} \ket {A_{i, s_i, s'_i}}$. 
    
    \item It then samples keys $K_1, K_2, K_3$ for $F_1, F_2, F_3$.
    
    \item It samples $u = u_0||u_1||u_2$ uniformly at random. Let $y = F_1(K_1, u)$. 
    
    \item It samples $u' \gets \gentrigger(u_0, y, K_2, K_3, \{A_i, s_i, s'_i\}_{i \in [\ell_0]})$.
    
    \item Generate the program $P$ as in Figure \ref{fig:program_pprime}. The adversary is given $(\ket \psi, \iO(P))$ and then $u$ or $u'$ depending on a random coin $b$.
    
\end{enumerate}

\begin{figure}[hpt]
\centering
\begin{gamespec}
\textbf{Hardcoded:} Keys $K_1, K_2, K_3$, $R^0_i, R^1_i$ for all $i \in [\ell_0]$. 

On input $x = x_0 || x_1 || x_2$ and vectors $v_1, \cdots, v_{\ell_0}$:  
\begin{enumerate}
\item If $F_3(K_3, x_1) \oplus x_2 = x_0' || Q'$  and $x_0 = x'_0$ and $x_1 = F_2(K_2, x'_0 || Q')$: 

    \quad It treats $Q'$ as a circuit and outputs $Q'(v_1, \cdots, v_{\ell_0})$. 
\item Otherwise, it checks if the following holds: for all $i \in [\ell_0]$, $R^{x_{0, i}}(v_i) = 1$. 

    \quad If they all hold, outputs $F_1(K_1, x)$. Otherwise, outputs $\bot$. 
\end{enumerate}
\end{gamespec}
\caption{Program $P$ (same as Figure \ref{fig:program_p})}
\label{fig:program_pprime}
\end{figure}

\newpage
\paragraph{Hybrid 1}
 In this hybrid, the key $K_1$ in the program $P$ is punctured at $u, u'$. The indistinguishability of Hybrid 0 and Hybrid 1 comes from the security of indistinguishability obfuscation.  

\begin{enumerate}
    \item It samples random subspaces $A_i$ of dimension $\lambda/2$ and vectors $s_i, s'_i$ for $i = 1, 2, \cdots, \ell_0$. 
    It then prepares programs $R^0_i = \iO(A_i+s_i)$ and $R^1_i = \iO(A^\perp_i + s'_i)$ (padded to length $\ell_2 - \ell_0$). It prepares the quantum state $\ket \psi = \bigotimes_{i} \ket {A_{i, s_i, s'_i}}$. 
    
    \item It then samples keys $K_1, K_2, K_3$ for $F_1, F_2, F_3$.
    
    \item It samples $u = u_0||u_1||u_2$ uniformly at random. Let $y = F_1(K_1, u)$.
    
    \item It samples $u' \gets \gentrigger(u_0, y, K_2, K_3, \{A_i, s_i, s'_i\}_{i \in [\ell_0]})$. 
    {\color{red} Let $Q$ be the obfuscation program during the execution of $\gentrigger$}.

    \item Generate the program $P$ as in Figure \ref{fig:program_P_1}. The adversary is given $(\ket \psi, \iO(P))$ and then $u$ or $u'$ depending on a random coin.
    
\end{enumerate}

\begin{figure}[hpt]
\centering
\begin{gamespec}
\textbf{Hardcoded:} {\color{red}Constants $u, u'$}; Keys ${\color{red}K_1 \setminus \{u, u'\}}, K_2, K_3$, $R^0_i, R^1_i$ for all $i \in [\ell_0]$. 

On input $x = x_0 || x_1 || x_2$ and vectors $v_1, \cdots, v_{\ell_0}$:  
\begin{enumerate}

\item {\color{red}If $x = u$ or $u'$, it outputs $Q(v_1, \cdots, v_{\ell_0})$}. %

\item If $F_3(K_3, x_1) \oplus x_2 = x_0' || Q'$  and $x_0 = x'_0$ and $x_1 = F_2(K_2, x'_0 || Q')$: 

    \quad It treats $Q'$ as a circuit and outputs $Q'(v_1, \cdots, v_{\ell_0})$. 
\item Otherwise, it checks if the following holds: for all $i \in [\ell_0]$, $R^{x_{0, i}}(v_i) = 1$.

    \quad If they all hold, outputs $F_1(K_1, x)$. Otherwise, outputs $\bot$. 
\end{enumerate}
\end{gamespec}
\caption{Program $P$}
\label{fig:program_P_1}
\end{figure}

Note that starting from this hybrid, whenever we mention $K_1$ \emph{inside a program $P$}, we mean to use the punctured key $K_1 \setminus \{u,u'\}$. Similar notations of punctured keys $K_2, K_3$ inside other programs will appear in the upcoming hybrids.

\newpage
 \paragraph{Hybrid 2.} In this hybrid,  the value of $F_1(K_1, u)$ is replaced with a uniformly random output. The indistinguishability of Hybrid 1 and Hybrid 2 comes from the pseudorandomness at punctured points of a puncturable PRF.  
\begin{enumerate}
    \item It samples random subspaces $A_i$ of dimension $\lambda/2$ and vectors $s_i, s'_i$ for $i = 1, 2, \cdots, \ell_0$. 
    It then prepares programs $R^0_i = \iO(A_i+s_i)$ and $R^1_i = \iO(A^\perp_i + s'_i)$ (padded to length $\ell_2 - \ell_0$). It prepares the quantum state $\ket \psi = \bigotimes_{i} \ket {A_{i, s_i, s'_i}}$. 
    
    \item It then samples keys $K_1, K_2, K_3$ for $F_1, F_2, F_3$.
    
    \item It samples $u = u_0||u_1||u_2$ uniformly at random. Let {\color{red} $y \gets [M]$}.
    
    \item It samples $u' \gets \gentrigger(u_0, y, K_2, K_3, \{A_i, s_i, s'_i\}_{i \in [\ell_0]})$. 
    Let $Q$ be the obfuscation program during the execution of $\gentrigger$.

    \item Generate the program $P$ as in Figure \ref{fig:program_P_1}. The adversary is given $(\ket \psi, \iO(P))$ and then $u$ or $u'$ depending on a random coin.
\end{enumerate}

\newpage
\paragraph{Hybrid 3.} In this hybrid, the check on the second line  will be skipped if $x_1$ is equal to $u_1$ or $u'_1$. By Lemma 2 of \cite{sahai2014use}, adding this check does not affect its functionality, except with negligible probability. 

The lemma says, to skip the check on the second line, $x_1$ will be equal to one of $\{u_1, u'_1\}$. To see why it does not change the functionality of the program, by \Cref{lem:sw14_lem1} and for all but negligible fraction of all keys $K_2$, if $x_1 = u'_1$, there is only one way to make the check satisfied and the input is $u_0, u'_2$. This input $u' = u_0||u'_1||u'_2$ is already handled in the first line. Therefore, the functionality does not change. 

After this change, $F_3(K_3, \cdot)$ will never be executed on those inputs. We can then puncture the key $K_3$ on them. The indistinguishability comes from the security of \iO.

\begin{enumerate}
    \item It samples random subspaces $A_i$ of dimension $\lambda/2$ and vectors $s_i, s'_i$ for $i = 1, 2, \cdots, \ell_0$. 
    It then prepares programs $R^0_i = \iO(A_i+s_i)$ and $R^1_i = \iO(A^\perp_i + s'_i)$ (padded to length $\ell_2 - \ell_0$). It prepares the quantum state $\ket \psi = \bigotimes_{i} \ket {A_{i, s_i, s'_i}}$. 
    
    \item It then samples keys $K_1, K_2, K_3$ for $F_1, F_2, F_3$.
    
    \item It samples $u = u_0||u_1||u_2$ uniformly at random. Let  $y \gets [M]$.
    
    \item It samples $u' \gets \gentrigger(u_0, y, K_2, K_3, \{A_i, s_i, s'_i\}_{i \in [\ell_0]})$.
    Let $Q$ be the obfuscation program during the execution of $\gentrigger$.

    \item Generate the program $P$ as in Figure \ref{fig:program_P_2}. The adversary is given $(\ket \psi, \iO(P))$ and then $u$ or $u'$ depending on a random coin.
\end{enumerate}

\begin{figure}[hpt]
\centering
\begin{gamespec}
\textbf{Hardcoded:} {Constants $u, u'$}; Keys ${K_1 \setminus \{u, u'\}}, K_2, {\color{red}K_3\setminus \{u_1, u'_1\}}$, $R^0_i, R^1_i$ for all $i \in [\ell_0]$. 

On input $x = x_0 || x_1 || x_2$ and vectors $v_1, \cdots, v_{\ell_0}$:  
\begin{enumerate}

\item If $x = u$ or $u'$, it outputs $Q(v_1 \cdots, v_{\ell_0})$. %

\item {\color{red}If $x_1 = u_1$ or $u'_1$, skip this check.} If $F_3(K_3, x_1) \oplus x_2 = x_0' || Q'$  and $x_0 = x'_0$ and $x_1 = F_2(K_2, x'_0 || Q')$: 

    \quad It treats $Q'$ as a circuit and outputs $Q'(v_1, \cdots, v_{\ell_0})$. 
\item Otherwise, it checks if the following holds: for all $i \in [\ell_0]$, $R^{x_{0, i}}(v_i) = 1$. 

    \quad If they all hold, outputs $F_1(K_1, x)$. Otherwise, outputs $\bot$. 
\end{enumerate}
\end{gamespec}
\caption{Program $P$}
\label{fig:program_P_2}
\end{figure}

\newpage
\paragraph{Hybrid 4.} In this hybrid, before checking $x_1 = F_2(K_2, x'_0 || Q')$, it checks if $x'_0||Q' \ne u_0||Q$. Because if $x'_0||Q' = u_0||Q$ and the last check $x_1 = F_2(K_2, x_0'||Q')$ is also satisfied, we know that 
\begin{align*}
x_1 = F_2(K_2, x'_0||Q') = F_2(K_2, u_0||Q) = u_1' \,\,\text{ (by the definition of $\gentrigger$).}     
\end{align*}
Therefore the step 2 will be skipped (by the first check). 
Thus, we can puncture $K_2$ at $u_0||Q$
The indistinguishability also comes from the security of \iO. %

\begin{enumerate}
    \item It samples random subspaces $A_i$ of dimension $\lambda/2$ and vectors $s_i, s'_i$ for $i = 1, 2, \cdots, \ell_0$. 
    It then prepares programs $R^0_i = \iO(A_i+s_i)$ and $R^1_i = \iO(A^\perp_i + s'_i)$ (padded to length $\ell_2 - \ell_0$). It prepares the quantum state $\ket \psi = \bigotimes_{i} \ket {A_{i, s_i, s'_i}}$. 
    
    \item It then samples keys $K_1, K_2, K_3$ for $F_1, F_2, F_3$.
    
    \item It samples $u = u_0||u_1||u_2$ uniformly at random. Let  $y \gets [M]$.
    
    \item It samples $u' \gets \gentrigger(u_0, y, K_2, K_3, \{A_i, s_i, s'_i\}_{i \in [\ell_0]})$.
    Let $Q$ be the obfuscation program during the execution of $\gentrigger$.

    \item Generate the program $P$ as in Figure \ref{fig:program_P_3}. The adversary is given $(\ket \psi, \iO(P))$ and then $u$ or $u'$ depending on a random coin.
\end{enumerate}

\begin{figure}[hpt]
\centering
\begin{gamespec}
\textbf{Hardcoded:} {Constants $u, u'$}; Keys ${K_1 \setminus \{u, u'\}}, {\color{red} K_2 \setminus \{u_0||Q\}}$, ${K_3\setminus \{u_1, u'_1\}}$, $R^0_i, R^1_i$ for all $i \in [\ell_0]$. 

On input $x = x_0 || x_1 || x_2$ and vectors $v_1, \cdots, v_{\ell_0}$:  
\begin{enumerate}

\item If $x = u$ or $u'$, it outputs $Q(v_1 \cdots)$.

\item {If $x_1 = u_1$ or $u'_1$, skip this check.} If $F_3(K_3, x_1) \oplus x_2 = x_0' || Q'$  and $x_0 = x'_0$ {and \color{red}$x'_0||Q' \ne u_0||Q$}, then also check $x_1 = F_2(K_2, x'_0 || Q')$:

    \quad It treats $Q'$ as a circuit and outputs $Q'(v_1, \cdots, v_{\ell_0})$. 
\item Otherwise, it checks if the following holds: for all $i \in [\ell_0]$, $R^{x_{0, i}}(v_i) = 1$.

    \quad If they all hold, outputs $F_1(K_1, x)$. Otherwise, outputs $\bot$. 
\end{enumerate}
\end{gamespec}
\caption{Program $P$}
\label{fig:program_P_3}
\end{figure}

\newpage
\paragraph{Hybrid 5.} In this hybrid, since the key $K_2$ has been punctured at $u_0||Q$, we can replace the evaluation of $F_2(K_2, \cdot)$ at the input with a uniformly random value. The indistinguishability comes from the pseudorandomness of the underlying puncturable PRF $F_2$. 

We expand the $\gentrigger$ procedure. 

\begin{enumerate}
    \item It samples random subspaces $A_i$ of dimension $\lambda/2$ and vectors $s_i, s'_i$ for $i = 1, 2, \cdots, \ell_0$. 
    It then prepares programs $R^0_i = \iO(A_i+s_i)$ and $R^1_i = \iO(A^\perp_i + s'_i)$ (padded to length $\ell_2 - \ell_0$). It prepares the quantum state $\ket \psi = \bigotimes_{i} \ket {A_{i, s_i, s'_i}}$. 
    
    \item It then samples keys $K_1, K_2, K_3$ for $F_1, F_2, F_3$.
    
    \item It samples $u = u_0||u_1||u_2$ uniformly at random. Let  $y \gets [M]$.
    
    \item It samples $u' \gets \gentrigger(u_0, y, K_2, K_3, \{A_i, s_i, s'_i\}_{i \in [\ell_0]})$ as follows: 
        \begin{enumerate}
        \item Let $Q$ be the obfuscation of the program (padded to length $\ell_2 - \ell_0$) that takes inputs $v_1, \cdots, v_{\ell_0}$ and outputs $y$ if and only if for every input $v_i$, if $u_{0, i} = 0$, then $v_i$ is in $A_i + s_i$ and otherwise it is in $A^\perp_i + s'_i$. 
        \item {\color{red} $u'_1 \gets [2^{\ell_1}]$} (since $F_2(K_2, u_0||Q)$ has been replaced with a uniformly random value).
        \item $u'_2 \gets F_3(K_3, u'_1) \oplus (u_0 ||Q)$.
        \item It outputs $u' = u_0||u'_1||u'_2$. 
    \end{enumerate}

    \item Generate the program $P$ as in Figure \ref{fig:program_P_3}. The adversary is given $(\ket \psi, \iO(P))$ and then $u$ or $u'$ depending on a random coin.
\end{enumerate}

\newpage
\paragraph{Hybrid 6.} In this hybrid, since the key $K_3$ has been punctured at $u'_1$, we can replace the evaluation of $F_3(K_3, \cdot)$ at $u'_1$ with a uniformly random value. The indistinguishability comes from the pseudorandomness of the underlying puncturable PRF $F_3$. 

\begin{enumerate}
    \item It samples random subspaces $A_i$ of dimension $\lambda/2$ and vectors $s_i, s'_i$ for $i = 1, 2, \cdots, \ell_0$. 
    It then prepares programs $R^0_i = \iO(A_i+s_i)$ and $R^1_i = \iO(A^\perp_i + s'_i)$ (padded to length $\ell_2 - \ell_0$). It prepares the quantum state $\ket \psi = \bigotimes_{i} \ket {A_{i, s_i, s'_i}}$. 
    
    \item It then samples keys $K_1, K_2, K_3$ for $F_1, F_2, F_3$.
    
    \item It samples $u = u_0||u_1||u_2$ uniformly at random. Let  $y \gets [M]$.
    
    \item It samples $u'$ as follows: 
        \begin{enumerate}
        \item $u'_1 \gets [2^{\ell_1}]$;
        \item {\color{red} $u'_2 \gets [2^{\ell_2}]$}.
        \item It outputs $u' = u_0||u'_1||u'_2$. 
    \end{enumerate}

    \item Generate the program $P$ as in Figure \ref{fig:program_P_3}. The adversary is given $(\ket \psi, \iO(P))$ and then $u$ or $u'$ depending on a random coin.
\end{enumerate}

In this hybrids, $u, u'$ are sampled independently, uniformly at random and they are symmetric in the program. The distributions for $b=0$ and $b=1$ are identical and even unbounded adversary can not distinguish these two cases. 
Therefore we finish the proof for \Cref{lem:hidden_trigger_1d}.

\begin{remark}
    The program $P$ depends on $Q_u$. Although $Q_u$ is indexed by $u$, it only depends on $u_0$. Thus, the distributions for $b=0$ and $b=1$ are identical
\end{remark}
\qed

\paragraph{Finishing the proof for \Cref{lem:hidden_trigger}.} The only difference between \Cref{lem:hidden_trigger} and \Cref{lem:hidden_trigger_1d} is the number of inputs sampled: either a single input $u$ (or $u'$) or a pair of independent inputs $u, w$ (or $u', w'$). 

All hybrids for \Cref{lem:hidden_trigger} are the same for the corresponding hybrids for \Cref{lem:hidden_trigger_1d}, except two inputs are sampled.  Thus every time $K_1, K_2$ or $K_3$ are punctured according to $u$ or $u'$ in the proof of \Cref{lem:hidden_trigger_1d}, $K_1, K_2$ or $K_3$ are punctured \emph{twice} according to both $u, u'$ and $w, w'$ in the proof of \Cref{lem:hidden_trigger}.

We are now giving the proof. If indistinguishability of some hybrid is not explained, it follows from the same reason as that in the corresponding hybrid in the proof of \Cref{lem:hidden_trigger_1d}.

\newpage
\paragraph{Hybrid 0.} The original game where both $u, w$ are sampled uniformly at random or $u', w'$ are random hidden trigger inputs. 

\begin{enumerate}
    \item It samples random subspaces $A_i$ of dimension $\lambda/2$ and vectors $s_i, s'_i$ for $i = 1, 2, \cdots, \ell_0$. 
    It then prepares programs $R^0_i = \iO(A_i+s_i)$ and $R^1_i = \iO(A^\perp_i + s'_i)$ (padded to the length upper bound $\ell_2 - \ell_0$). It prepares the quantum state $\ket \psi = \bigotimes_{i} \ket {A_{i, s_i, s'_i}}$. 
    
    \item It then samples keys $K_1, K_2, K_3$ for $F_1, F_2, F_3$.
    
    \item It samples $u = u_0||u_1||u_2$ uniformly at random. Let $y_u = F_1(K_1, u)$.
    
    \item It samples $u' \gets \gentrigger(u_0, y_u, K_2, K_3, \{A_i, s_i, s'_i\}_{i \in [\ell_0]})$.

    \item It samples $w = w_0||w_1||w_2$ uniformly at random. Let $y_w = F_1(K_1, w)$. 
    
    \item It samples $w' \gets \gentrigger(w_0, y_w, K_2, K_3, \{A_i, s_i, s'_i\}_{i \in [\ell_0]})$.
    
    \item Generate the program $P$ as in Figure \ref{fig:program_p_2d}. The adversary is given $(\ket \psi, \iO(P))$ and then $(u, w)$ or $(u', w')$ depending on a random coin $b$.
    
\end{enumerate}

\begin{figure}[hpt]
\centering
\begin{gamespec}
\textbf{Hardcoded:} Keys $K_1, K_2, K_3$, $R^0_i, R^1_i$ for all $i \in [\ell_0]$. 

On input $x = x_0 || x_1 || x_2$ and vectors $v_1, \cdots, v_{\ell_0}$:  
\begin{enumerate}
\item If $F_3(K_3, x_1) \oplus x_2 = x_0' || Q'$  and $x_0 = x'_0$ and $x_1 = F_2(K_2, x'_0 || Q')$: 

    \quad It treats $Q'$ as a circuit and outputs $Q'(v_1, \cdots, v_{\ell_0})$. 
\item Otherwise, it checks if the following holds: for all $i \in [\ell_0]$, $R^{x_{0, i}}(v_i) = 1$. 

    \quad If they all hold, outputs $F_1(K_1, x)$. Otherwise, outputs $\bot$. 
\end{enumerate}
\end{gamespec}
\caption{Program $P$}
\label{fig:program_p_2d}
\end{figure}

\newpage
\paragraph{Hybrid 1.} In this hybrid, the key $K_1$ in the program $P$ is punctured at $u, u', w, w'$.

\begin{enumerate}
    \item It samples random subspaces $A_i$ of dimension $\lambda/2$ and vectors $s_i, s'_i$ for $i = 1, 2, \cdots, \ell_0$. 
    It then prepares programs $R^0_i = \iO(A_i+s_i)$ and $R^1_i = \iO(A^\perp_i + s'_i)$ (padded to the length upper bound $\ell_2 - \ell_0$). It prepares the quantum state $\ket \psi = \bigotimes_{i} \ket {A_{i, s_i, s'_i}}$. 
    
    \item It then samples keys $K_1, K_2, K_3$ for $F_1, F_2, F_3$.
    
    \item It samples $u = u_0||u_1||u_2$ uniformly at random. Let $y_u = F_1(K_1, u)$.
    
    \item It samples $u' \gets \gentrigger(u_0, y_u, K_2, K_3, \{A_i, s_i, s'_i\}_{i \in [\ell_0]})$. 
    {\color{red} Let $Q_u$ be the obfuscation program during the execution of $\gentrigger$}.

    \item It samples $w = w_0||w_1||w_2$ uniformly at random. Let $y_w = F_1(K_1, w)$. 
    
    \item It samples $w' \gets \gentrigger(w_0, y_w, K_2, K_3, \{A_i, s_i, s'_i\}_{i \in [\ell_0]})$. 
    {\color{red} Let $Q_w$ be the obfuscation program during the execution of $\gentrigger$}.

    \item Generate the program $P$ as in Figure \ref{fig:program_p_2d_2}. The adversary is given $(\ket \psi, \iO(P))$ and then $(u, w)$ or $(u', w')$ depending on a random coin $b$.
    
\end{enumerate}

\begin{figure}[hpt]
\centering
\begin{gamespec}
\textbf{Hardcoded:} Keys ${\color{red} K_1 \setminus \{u, u', w, w'\}}, K_2, K_3$, $R^0_i, R^1_i$ for all $i \in [\ell_0]$. 

On input $x = x_0 || x_1 || x_2$ and vectors $v_1, \cdots, v_{\ell_0}$:  
\begin{enumerate}
\item {\color{red}If $x = u$ or $u'$, it outputs $Q_u(v_1, \cdots, v_{\ell_0})$. If $x = w$ or $w'$, it outputs $Q_w(v_1, \cdots, v_{\ell_0})$}. 

\item If $F_3(K_3, x_1) \oplus x_2 = x_0' || Q'$  and $x_0 = x'_0$ and $x_1 = F_2(K_2, x'_0 || Q')$: 

    \quad It treats $Q'$ as a circuit and outputs $Q'(v_1, \cdots, v_{\ell_0})$. 
\item Otherwise, it checks if the following holds: for all $i \in [\ell_0]$, $R^{x_{0, i}}(v_i) = 1$. 

    \quad If they all hold, outputs $F_1(K_1, x)$. Otherwise, outputs $\bot$. 
\end{enumerate}
\end{gamespec}
\caption{Program $P$}
\label{fig:program_p_2d_2}
\end{figure}

\newpage
\paragraph{Hybrid 2.} In this hybrid, $y_u$ and $y_w$ are sampled uniformly at random. Note that as long as $u \ne w$ (with overwhelming probability), we can apply the  pseudorandomness at punctured points of a puncturable PRF. 

\begin{enumerate}
    \item It samples random subspaces $A_i$ of dimension $\lambda/2$ and vectors $s_i, s'_i$ for $i = 1, 2, \cdots, \ell_0$. 
    It then prepares programs $R^0_i = \iO(A_i+s_i)$ and $R^1_i = \iO(A^\perp_i + s'_i)$ (padded to the length upper bound $\ell_2 - \ell_0$). It prepares the quantum state $\ket \psi = \bigotimes_{i} \ket {A_{i, s_i, s'_i}}$. 
    
    \item It then samples keys $K_1, K_2, K_3$ for $F_1, F_2, F_3$.
    
    \item It samples $u = u_0||u_1||u_2$ uniformly at random. {\color{red} Let $y_u \gets [M]$}.
    
    \item It samples $u' \gets \gentrigger(u_0, y_u, K_2, K_3, \{A_i, s_i, s'_i\}_{i \in [\ell_0]})$. 
    Let $Q_u$ be the obfuscation program during the execution of $\gentrigger$.

    \item It samples $w = w_0||w_1||w_2$ uniformly at random. {\color{red} Let $y_w \gets [M]$}. 
    
    \item It samples $w' \gets \gentrigger(w_0, y_w, K_2, K_3, \{A_i, s_i, s'_i\}_{i \in [\ell_0]})$. 
    Let $Q_w$ be the obfuscation program during the execution of $\gentrigger$.

    \item Generate the program $P$ as in Figure \ref{fig:program_p_2d_2}. The adversary is given $(\ket \psi, \iO(P))$ and then $(u, w)$ or $(u', w')$ depending on a random coin $b$.
    
\end{enumerate}

\newpage
\paragraph{Hybrid 3.} In this hybrid, $P$ is changed by checking if $x_1$ is equal to $u_1, u_1', w_1$ or $w_1'$. Moreover, $K_3$ is punctured at these points. 

\begin{enumerate}
    \item It samples random subspaces $A_i$ of dimension $\lambda/2$ and vectors $s_i, s'_i$ for $i = 1, 2, \cdots, \ell_0$. 
    It then prepares programs $R^0_i = \iO(A_i+s_i)$ and $R^1_i = \iO(A^\perp_i + s'_i)$ (padded to the length upper bound $\ell_2 - \ell_0$). It prepares the quantum state $\ket \psi = \bigotimes_{i} \ket {A_{i, s_i, s'_i}}$. 
    
    \item It then samples keys $K_1, K_2, K_3$ for $F_1, F_2, F_3$.
    
    \item It samples $u = u_0||u_1||u_2$ uniformly at random.  Let $y_u \gets [M]$.
    
    \item It samples $u' \gets \gentrigger(u_0, y_u, K_2, K_3, \{A_i, s_i, s'_i\}_{i \in [\ell_0]})$. 
    Let $Q_u$ be the obfuscation program during the execution of $\gentrigger$.

    \item It samples $w = w_0||w_1||w_2$ uniformly at random.  Let $y_w \gets [M]$. 
    
    \item It samples $w' \gets \gentrigger(w_0, y_w, K_2, K_3, \{A_i, s_i, s'_i\}_{i \in [\ell_0]})$. 
    Let $Q_w$ be the obfuscation program during the execution of $\gentrigger$.

    \item Generate the program $P$ as in Figure \ref{fig:program_p_2d_4}. The adversary is given $(\ket \psi, \iO(P))$ and then $(u, w)$ or $(u', w')$ depending on a random coin $b$.
\end{enumerate}

\begin{figure}[hpt]
\centering
\begin{gamespec}
\textbf{Hardcoded:} Keys $K_1 \setminus \{u, u', w, w'\}, K_2, {\color{red} K_3 \setminus \{u_1, u_1', w_1, w_1'\}}$, $R^0_i, R^1_i$ for all $i \in [\ell_0]$. 

On input $x = x_0 || x_1 || x_2$ and vectors $v_1, \cdots, v_{\ell_0}$:  
\begin{enumerate}
\item If $x = u$ or $u'$, it outputs $Q_u(v_1, \cdots, v_{\ell_0})$. If $x = w$ or $w'$, it outputs $Q_w(v_1, \cdots, v_{\ell_0})$. 

\item {\color{red} If $x_1$ is equal to $u_1, u_1', w_1$ or $w_1'$, then skip this check.} If $F_3(K_3, x_1) \oplus x_2 = x_0' || Q'$  and $x_0 = x'_0$ and $x_1 = F_2(K_2, x'_0 || Q')$: 

    \quad It treats $Q'$ as a circuit and outputs $Q'(v_1, \cdots, v_{\ell_0})$. 
\item Otherwise, it checks if the following holds: for all $i \in [\ell_0]$, $R^{x_{0, i}}(v_i) = 1$. 

    \quad If they all hold, outputs $F_1(K_1, x)$. Otherwise, outputs $\bot$. 
\end{enumerate}
\end{gamespec}
\caption{Program $P$}
\label{fig:program_p_2d_4}
\end{figure}

\newpage
\paragraph{Hybrid 4.} In this hybrid, before checking $x_1 = F_2(K_2, x'_0 || Q')$, it checks if $x'_0||Q' \ne u_0||Q_u$ and $x'_0||Q' \ne w_0||Q_w$.
Because if $x'_0||Q' = u_0||Q_u$ and the last check $x_1 = F_2(K_2, x_0'||Q')$ is also satisfied, we know that 
\begin{align*}
x_1 = F_2(K_2, x'_0||Q') = F_2(K_2, u_0||Q_u) = u_1' \,\,\text{ (by the definition of $\gentrigger$).}     
\end{align*}
Therefore the step 2 will be skipped (by the first check). Similarly, if $x'_0||Q' = w_0||Q_w$ and the last check $x_1 = F_2(K_2, x_0'||Q')$ is also satisfied, we know that 
\begin{align*}
x_1 = F_2(K_2, x'_0||Q') = F_2(K_2, w_0||Q_w) = w_1' \,\,\text{ (by the definition of $\gentrigger$).}
\end{align*}
Finally, we puncture $K_2$ at $u_0 ||Q_u$ and $w_0||Q_w$.

\begin{enumerate}
    \item It samples random subspaces $A_i$ of dimension $\lambda/2$ and vectors $s_i, s'_i$ for $i = 1, 2, \cdots, \ell_0$. 
    It then prepares programs $R^0_i = \iO(A_i+s_i)$ and $R^1_i = \iO(A^\perp_i + s'_i)$ (padded to the length upper bound $\ell_2 - \ell_0$). It prepares the quantum state $\ket \psi = \bigotimes_{i} \ket {A_{i, s_i, s'_i}}$. 
    
    \item It then samples keys $K_1, K_2, K_3$ for $F_1, F_2, F_3$.
    
    \item It samples $u = u_0||u_1||u_2$ uniformly at random.  Let $y_u \gets [M]$.
    
    \item It samples $u' \gets \gentrigger(u_0, y_u, K_2, K_3, \{A_i, s_i, s'_i\}_{i \in [\ell_0]})$. 
    Let $Q_u$ be the obfuscation program during the execution of $\gentrigger$.

    \item It samples $w = w_0||w_1||w_2$ uniformly at random.  Let $y_w \gets [M]$. 
    
    \item It samples $w' \gets \gentrigger(w_0, y_w, K_2, K_3, \{A_i, s_i, s'_i\}_{i \in [\ell_0]})$. 
    Let $Q_w$ be the obfuscation program during the execution of $\gentrigger$.

    \item Generate the program $P$ as in Figure \ref{fig:program_p_2d_5}. The adversary is given $(\ket \psi, \iO(P))$ and then $(u, w)$ or $(u', w')$ depending on a random coin $b$.
\end{enumerate}

\begin{figure}[hpt]
\centering
\begin{gamespec}
\textbf{Hardcoded:} Keys $K_1 \setminus \{u, u', w, w'\}, {\color{red} K_2 \setminus\{u_0||Q_u, w_0||Q_w\}}, {K_3 \setminus \{u_1, u_1', w_1, w_1'\}}$, $R^0_i, R^1_i$ for all $i \in [\ell_0]$. 

On input $x = x_0 || x_1 || x_2$ and vectors $v_1, \cdots, v_{\ell_0}$:  
\begin{enumerate}
\item If $x = u$ or $u'$, it outputs $Q_u(v_1, \cdots, v_{\ell_0})$. If $x = w$ or $w'$, it outputs $Q_w(v_1, \cdots, v_{\ell_0})$. 

\item If $x_1$ is equal to $u_1, u_1', w_1$ or $w_1'$, then skip this check. If $F_3(K_3, x_1) \oplus x_2 = x_0' || Q'$  and $x_0 = x'_0$ and {\color{red} $x_0'||Q' \ne u_0 ||Q_u$ and  $x_0'||Q' \ne w_0 ||Q_w$ } and $x_1 = F_2(K_2, x'_0 || Q')$: 

    \quad It treats $Q'$ as a circuit and outputs $Q'(v_1, \cdots, v_{\ell_0})$. 
\item Otherwise, it checks if the following holds: for all $i \in [\ell_0]$, $R^{x_{0, i}}(v_i) = 1$. 

    \quad If they all hold, outputs $F_1(K_1, x)$. Otherwise, outputs $\bot$. 
\end{enumerate}
\end{gamespec}
\caption{Program $P$}
\label{fig:program_p_2d_5}
\end{figure}

\newpage
\paragraph{Hybrid 5.}  In this hybrid, since the key $K_2$ has been punctured at $u_0||Q_u$ and $w_0 || Q_w$, we can replace the evaluation of $F_2(K_2, \cdot)$ at these two inputs with uniformly random values, as long as $u_0 \ne w_0$ (with overwhelming probability). The indistinguishability comes from the pseudorandomness of the underlying puncturable PRF $F_2$. 

\begin{enumerate}
    \item It samples random subspaces $A_i$ of dimension $\lambda/2$ and vectors $s_i, s'_i$ for $i = 1, 2, \cdots, \ell_0$. 
    It then prepares programs $R^0_i = \iO(A_i+s_i)$ and $R^1_i = \iO(A^\perp_i + s'_i)$ (padded to the length upper bound $\ell_2 - \ell_0$). It prepares the quantum state $\ket \psi = \bigotimes_{i} \ket {A_{i, s_i, s'_i}}$. 
    
    \item It then samples keys $K_1, K_2, K_3$ for $F_1, F_2, F_3$.
    
    \item It samples $u = u_0||u_1||u_2$ uniformly at random.  Let $y_u \gets [M]$.
    
    \item It samples $u' \gets \gentrigger(u_0, y_u, K_2, K_3, \{A_i, s_i, s'_i\}_{i \in [\ell_0]})$ as follows:
    
        \begin{enumerate}
        \item Let $Q_u$ be the obfuscation of the program (padded to length $\ell_2 - \ell_0$) that takes inputs $v_1, \cdots, v_{\ell_0}$ and outputs $y_u$ if and only if for every input $v_i$, if $u_{0, i} = 0$, then $v_i$ is in $A_i + s_i$ and otherwise it is in $A^\perp_i + s'_i$. 
        \item {\color{red} $u'_1 \gets [2^{\ell_1}]$} (since $F_2(K_2, u_0||Q_u)$ has been replaced with a uniformly random value).
        \item $u'_2 \gets F_3(K_3, u'_1) \oplus (u_0 ||Q_u)$.
        \item It outputs $u' = u_0||u'_1||u'_2$.
    \end{enumerate}

    \item It samples $w = w_0||w_1||w_2$ uniformly at random.  Let $y_w \gets [M]$. 
    
    \item It samples $w' \gets \gentrigger(w_0, y_w, K_2, K_3, \{A_i, s_i, s'_i\}_{i \in [\ell_0]})$ as follows: 
    
    \begin{enumerate}
        \item Let $Q_w$ be the obfuscation of the program (padded to length $\ell_2 - \ell_0$) that takes inputs $v_1, \cdots, v_{\ell_0}$ and outputs $y_w$ if and only if for every input $v_i$, if $w_{0, i} = 0$, then $v_i$ is in $A_i + s_i$ and otherwise it is in $A^\perp_i + s'_i$. 
        \item {\color{red} $w'_1 \gets [2^{\ell_1}]$} (since $F_2(K_2, w_0||Q_w)$ has been replaced with a uniformly random value).
        \item $w'_2 \gets F_3(K_3, w'_1) \oplus (w_0 ||Q_w)$.
        \item It outputs $w' = w_0||w'_1||w'_2$. 
    \end{enumerate}

    \item Generate the program $P$ as in Figure \ref{fig:program_p_2d_5}. The adversary is given $(\ket \psi, \iO(P))$ and then $(u, w)$ or $(u', w')$ depending on a random coin $b$.
\end{enumerate}

\newpage
\paragraph{Hybrid 6.}  In this hybrid, since the key $K_3$ has been punctured at $u'_1, w'_1$, we can replace the evaluation of $F_3(K_3, \cdot)$ at $u'_1, w'_1$ with uniformly random values (as long as $u'_1 \ne w'_1$, which happens with overwhelming probability). The indistinguishability comes from the pseudorandomness of the underlying puncturable PRF $F_3$.

\begin{enumerate}
    \item It samples random subspaces $A_i$ of dimension $\lambda/2$ and vectors $s_i, s'_i$ for $i = 1, 2, \cdots, \ell_0$. 
    It then prepares programs $R^0_i = \iO(A_i+s_i)$ and $R^1_i = \iO(A^\perp_i + s'_i)$ (padded to the length upper bound $\ell_2 - \ell_0$). It prepares the quantum state $\ket \psi = \bigotimes_{i} \ket {A_{i, s_i, s'_i}}$. 
    
    \item It then samples keys $K_1, K_2, K_3$ for $F_1, F_2, F_3$.
    
    \item It samples $u = u_0||u_1||u_2$ uniformly at random.  Let $y_u \gets [M]$.
    
    \item It samples $u'$ as follows:
    
        \begin{enumerate}
        \item {$u'_1 \gets [2^{\ell_1}]$}.
        \item {\color{red} $u'_2 \gets [2^{\ell_2}]$}.
        \item It outputs $u' = u_0||u'_1||u'_2$.
    \end{enumerate}

    \item It samples $w = w_0||w_1||w_2$ uniformly at random.  Let $y_w \gets [M]$.
    
    \item It samples $w'$ as follows:
    
    \begin{enumerate}
        \item $w'_1 \gets [2^{\ell_1}]$.
        \item {\color{red} $w'_2 \gets [2^{\ell_2}]$}.
        \item It outputs $w' = w_0||w'_1||w'_2$.
    \end{enumerate}

    \item Generate the program $P$ as in Figure \ref{fig:program_p_2d_5}. The adversary is given $(\ket \psi, \iO(P))$ and then $(u, w)$ or $(u', w')$ depending on a random coin $b$.
\end{enumerate}

In this hybrids, $u, u'$, $w, w'$ are sampled independently, uniformly at random and they are symmetric in the program. The distributions for $b=0$ and $b=1$ are identical and even unbounded adversary can not distinguish these two cases. 
Therefore we finish the proof for \Cref{lem:hidden_trigger}.  

\begin{remark}
    The program $P$ depends on $Q_u$ and $Q_w$. Although $Q_u$ and $Q_w$ are indexed by $u$ and $w$, they only depend on $u_0, w_0$ respectively. Thus, the distributions for $b=0$ and $b=1$ are identical
\end{remark}

\qed

\section{Proof of \texorpdfstring{\Cref{thm:PRF_indistinguishable_antipiracy}}{Indistinguishability Anti-Piracy Security for PRF}}
\label{sec:indistinguishable_anti_piracy_PRF}

The proof for \Cref{thm:PRF_indistinguishable_antipiracy} is similar to proof for \Cref{thm:PRF_antipiracy}, but has some main differences in the final reduction.
We highlight the changes made: the {\color{red}red-colored} parts are the differences between the latter hybrid and  the former hybrid; the {\color{blue} blue-colored}  parts are differences between original hybrids in proof for \Cref{thm:PRF_antipiracy} and these new hybrids.

     \paragraph{Hybrid 0.} 
     Hybrid 0 is the original anti-piracy indistinguihsability security game.
     
    \begin{enumerate}
        \item  A challenger samples $K_1 \gets \setup(1^\lambda)$ and prepares a quantum key $\rho_K = (\{\ket{A_{i, s_i, s'_i}} \}_{i \in [\ell_0]}, \iO(P))$. Here $P$ hardcodes $K_1, K_2, K_3$.

        \item $\As$ upon receiving $\rho_K$, it runs and prepares a pair of (potentially entangled) quantum states $\sigma[R_1], \sigma[R_2]$ as well as quantum circuits $U_1, U_2$. 
        \item The challenger also prepares two inputs $u, w$ as follows:
            \begin{itemize}
                \item It samples $u$ uniformly at random. 
                Let $y_u  = F_1(K_1, u)$.

                \item It samples $w$ uniformly at random.
                Let $y_w = F_1(K_1, w)$.

            \end{itemize}
        \item {\color{blue}It samples  $y_u', y_w'$  uniformly at random}. 
        
        \item {\color{blue}The challenger then samples uniform coins $b_0, b_1 \gets \{0,1\}$. If $b_0 = 0$, give $(u, y_u)$ to quantum program $(U_1, \sigma[R_1])$; else give $(u, y_u')$ to quantum program $(U_1, \sigma[R_1])$.
        Similarly, if $b_1 = 0$, give $(w, y_w)$ to quantum program $(U_2, \sigma[R_2])$; else give $(w, y_w')$}.
        
        \item The outcome of the game is 1 if and only if both quantum programs successfully {\color{blue}produce $b_0' = b_0$ and $b_1' = b_1$} respectively. 
    \end{enumerate}

    \paragraph{Hybrid 1} 
    The changes between Hybrid 0 and 1 are exactly the two cases the adversary needs to distinguish between in the game of \Cref{lem:hidden_trigger}.
      Assume there exists an algorithm that  distinguishes Hybrid 0 and 1 with \emph{non-negligible} probability $\epsilon(\lambda)$, then these exists
      an algorithm that breaks the game in \Cref{lem:hidden_trigger} with probability $\epsilon(\lambda) - \negl(\lambda)$.

     The reduction algorithm receives $\rho_k$ and $u, w$ or $u', w'$ from the challenger in \Cref{lem:hidden_trigger}; it computes $y_u, y_w$ using $\iO(P)$ on the received inputs respectively and gives them to the quantum decryptor states $\sigma[R_1], \sigma[R_2]$. If they both {\color{blue} output the guess }correctly, then the reduction outputs 0 for $u, w$, otherwise it outputs 1 for  $u', w'$.

    \begin{enumerate}
        \item  A challenger samples $K_1 \gets \setup(1^\lambda)$ and prepares a quantum key $\rho_K = (\{\ket{A_{i, s_i, s'_i}} \}_{i \in [\ell_0]}, \iO(P))$. Here $P$ hardcodes $K_1, K_2, K_3$.

        \item $\As$ upon receiving $\rho_K$, it runs and prepares a pair of (potentially entangled) quantum states $\sigma[R_1], \sigma[R_2]$ as well as quantum circuits $U_1, U_2$. 
        \item The challenger also prepares two inputs $u', w'$ as follows:
            \begin{itemize}
                \item It samples $u = u_0||u_1||u_2$ uniformly at random. 
                Let $y_u  = F_1(K_1, u)$. 
                
                {\color{red}
                Let $u' \gets \gentrigger(u_0, y_u, K_2, K_3, \{A_i, s_i, s'_i\}_{i \in [\ell_0]})$. }

                \item It samples $w = w_0||w_1||w_2$ uniformly at random.
                Let $y_w = F_1(K_1, w)$. 
                
                {\color{red}
                Let $w' \gets \gentrigger(w_0, y_w, K_2, K_3, \{A_i, s_i, s'_i\}_{i \in [\ell_0]})$. }

            \end{itemize}
            
         \item {\color{blue}It samples  $y_u', y_w'$  uniformly at random}. 
        
        \item {\color{blue}The challenger then samples uniform coins $b_0, b_1 \gets \{0,1\}$. If $b_0 = 0$, give $(u', y_u)$ to quantum program $(U_1, \sigma[R_1])$; else give $(u', y_u')$ to quantum program $(U_1, \sigma[R_1])$.
        Similarly, if $b_1 = 0$, give $(w', y_w)$ to quantum program $(U_2, \sigma[R_2])$; else give $(w', y_w')$}.
        
        \item The outcome of the game is 1 if and only if both quantum programs successfully {\color{blue}produce $b_0' = b_0$ and $b_1' = b_1$} respectively. 
    \end{enumerate}

    \paragraph{Hybrid 2.}  In this hybrid, if $u_0 \ne w_0$ (which happens with overwhelming probability),  $F_1(K_1, u)$ and $F_1(K_1, w)$ can be replaced with truly random strings. Since both inputs have enough min-entropy $\ell_1 + \ell_2 \geq m + 2\lambda  + 4$ (as $u_1||u_2$ and $w_1||w_2$ are completely uniform and not given to the adversary) and $F_1$ is an extracting puncturable PRF, both outcomes $y_u, y_w$ are statistically close to independently random outcomes.
    Thus, Hybrid 1 and Hybrid 2 are statistically close. 
    
    \begin{enumerate}
        \item  A challenger samples $K_1 \gets \setup(1^\lambda)$ and prepares a quantum key $\rho_K = (\{\ket{A_{i, s_i, s'_i}} \}_{i \in [\ell_0]}, \iO(P))$. Here $P$ hardcodes $K_1, K_2, K_3$.

        \item $\As$ upon receiving $\rho_K$, it runs and prepares a pair of (potentially entangled) quantum states $\sigma[R_1], \sigma[R_2]$ as well as quantum circuits $U_1, U_2$. 
        \item The challenger also prepares two inputs $u', w'$ as follows:
            \begin{itemize}
                \item It samples {\color{red} $u_0$ uniformly at random}. It then samples {\color{red} a uniformly random $y_u$}. 
                
                Let $u' \gets \gentrigger(u_0, y_u, K_2, K_3, \{A_i, s_i, s'_i\}_{i \in [\ell_0]})$.

                \item It samples {\color{red} $w_0$ uniformly at random}. It then samples {\color{red} a uniformly random $y_w$}. 
                
                Let $w' \gets \gentrigger(w_0, y_w, K_2, K_3, \{A_i, s_i, s'_i\}_{i \in [\ell_0]})$. 

            \end{itemize}
            
         \item {\color{blue}It samples  $y_u', y_w'$  uniformly at random}. 
        
        \item {\color{blue}The challenger then samples uniform coins $b_0, b_1 \gets \{0,1\}$. If $b_0 = 0$, give $(u, y_u)$ to quantum program $(U_1, \sigma[R_1])$; else give $(u, y_u')$ to quantum program $(U_1, \sigma[R_1])$.
        Similarly, if $b_1 = 0$, give $(w, y_w)$ to quantum program $(U_2, \sigma[R_2])$; else give $(w, y_w')$}.
        
        \item The outcome of the game is 1 if and only if both quantum programs successfully {\color{blue}produce $b_0' = b_0$ and $b_1' = b_1$} respectively. 
    \end{enumerate}

  \paragraph{Hybrid 3.} %

        \begin{enumerate}
            \item {\color{red} A challenger first samples $\{A_i, s_i, s'_i\}_{i \in [\ell_0]}$ and prepares the quantum states $\{\ket{A_{i, s_i, s'_i}} \}_{i \in [\ell_0]}$}. It treat the the quantum states $\{\ket{A_{i, s_i, s'_i}} \}_{i \in [\ell_0]}$ as the quantum decryption key $\rho_{\sk}$ for our single-decryptor encryption scheme and the secret key $\sk$ is $\{A_i, s_i, s'_i\}_{i \in [\ell_0]}$.  Similarly, let $\pk = \{R^0_i, R^1_i\}_{i \in [\ell_0]}$ where $R^0_i = \iO(A_i+s_i)$ and $R^1_i = \iO(A^\perp_i + s'_i)$. 
        
            \item  It samples $y_u, y_w$  uniformly at random. {\color{blue}It also samples  $y_u', y_w'$  uniformly at random}.
            
            \item {\color{blue} Then it flips two random coins $b_0, b_1 \gets \{0,1\}$. If $b_0 = 1$, let $(u_0, Q_0) \gets \enc(\pk, y_u)$; else let $(u_0, Q_0) \gets \enc(\pk, y_u')$.
            Similarly, if $b_1 = 1$,  let $(w_0, Q_1) \gets \enc(\pk, y_w)$; else, let $(w_0, Q_1) \gets \enc(\pk, y_w')$}.
            $\enc(\pk, \cdot)$ is the encryption algorithm of the underlying single-decryptor encryption scheme using $\pk$.

            \item The challenger constructs the program $P$ which hardcodes ${K_1}, K_2, K_3$.  It then prepares $\rho_K$, which is  $(\{\ket{A_{i, s_i, s'_i}} \}_{i \in [\ell_0]}, \iO(P))$.
            
            \item $\As$ upon receiving $\rho_K$, it runs and prepares a pair of (potentially entangled) quantum states $\sigma[R_1], \sigma[R_2]$ as well as quantum circuits $U_1, U_2$. 
            \item The challenger also prepares two inputs $u', w'$ as follows (as $\gentrigger$ does):
                \begin{itemize}
                    \item 
                    Let $u_1' \gets F_2(K_2, u_0 || Q_0)$ and $u_2'\gets F_3(K_3, u_1') \oplus (u_0||Q_0)$. Let $u' = u_0 || u_1' || u_2'$. 
                    
                    \item 
                    Let $w_1' \gets F_2(K_2, w_0 || Q_1)$ and $w_2'\gets F_3(K_3, w_1') \oplus (w_0||Q_1)$. Let $w' = w_0 || w_1' || w_2'$. 
                \end{itemize}
        
        \item {\color{blue}The challenger again samples uniform coins $\delta_0, \delta_1 \gets \{0,1\}$. If $\delta_0 = 0$, give $(u', y_u)$ to quantum program $(U_1, \sigma[R_1])$; else give $(u', y_u')$ to quantum program $(U_1, \sigma[R_1])$.
        Similarly, if $\delta_1 = 0$, give $(w', y_w)$ to quantum program $(U_2, \sigma[R_2])$; else give $(w', y_w')$}.

        \item The outcome of the game is 1 if both quantum programs successfully {\color{blue}produce the answers below respectively:
        \begin{itemize}
            \item If $(U_1, \sigma[R_1])$ outputs 0 and $b_0 = \delta_0$, or if it outputs 1 and $b_0 \neq \delta_0$, then $(U_1, \sigma[R_1])$ succeeds. Otherwise it fails.
            \item If $(U_2, \sigma[R_2])$ outputs 0 and $b_1 = \delta_1$, or if it outputs 1 and $b_1 \neq \delta_1$, then $(U_2, \sigma[R_2])$ succeeds. Otherwise it fails.
        \end{itemize} }

        \end{enumerate}

       Note that the only differences of Hyb 2 and Hyb 3 are the orders of executions and {\color{blue} that the challenger prepares $u', w'$ from one of $(y_u, y_u')$ and one of $(y_w, y_w')$ respectively, instead of preparing them from $y_u, y_w$ first and choosing random $y_u', y_w'$ later}. Since we will check if the random coins match in Step 8 of Hybrid 3, the game is essentially the same to an adversary as the game in Hybrid 2.

        \vspace{1em}
    
        Given an algorithm $\As$ that wins the indistinguishability anti-piracy game for PRF in Hybrid 3 with non-negligible probability $\gamma(\lambda)$, we can build another algorithm $\Bs$ that breaks the  {\color{blue} (regular) CPA-style $\gamma$-anti-piracy security  (see \Cref{def:weak_ag})} of the underlying single-decryptor encryption scheme. 
        \begin{itemize}
            \item $\Bs$ plays as the challenger in the game of Hybrid 3. 
            \item $\Bs$ will get  $\rho_{\sk} = \{\ket{A_{i, s_i, s'_i}} \}_{i \in [\ell_0]}$ and $\pk = \{\iO(A_i + s_i), \iO(A^\perp_i + s'_i)\}_{i \in [\ell_0]}$ in the anti-piracy game. 
            
            \item $\Bs$ prepares $K_1, K_2, K_3$ and the program $P$. Let $\rho_K = (\{\ket{A_{i, s_i, s'_i}} \}_{i \in [\ell_0]}, \iO(P))$.
            
            \item $\Bs$ gives $\rho_{K}$ to $\As$ and $\As$ prepares a pair of (potentially entangled) quantum states $\sigma[R_1], \sigma[R_2]$ as well as quantum circuits $U_1, U_2$.  
            
             \item  {\color{blue}$\Bs$ also samples uniform random $y_u, y_u', y_w, y_w'$ and sends $(y_u, y_u')$ and $(y_w, y_w')$ as the challenge plaintexts for the two quantum programs, to the challenger of single-decryptor encryption anti-piracy game}.

             \item $\Bs$ then creates quantum programs $\P_1, \P_2$ which will do the following steps.

            \item $\P_1$ receives challenge ciphertext $\ct_0 = (u_0, Q_0) $ (which will be encryption of either $y_u$ or $y_u'$), and $\P_2$ receives challenge ciphertext $\ct_1 = (w_0, Q_1)$ (which will be encryption of either $y_w$ or $y_w'$). They each independently prepares $u', w'$ as follows :
            \begin{itemize}
                    \item 
                    Let $u_1' \gets F_2(K_2, u_0 || Q_0)$ and $u_2'\gets F_3(K_3, u_1') \oplus (u_0||Q_0)$. Let $u' = u_0 || u_1' || u_2'$. 
                    
                    \item 
                    Let $w_1' \gets F_2(K_2, w_0 || Q_1)$ and $w_2'\gets F_3(K_3, w_1') \oplus (w_0||Q_1)$. Let $w' = w_0 || w_1' || w_2'$. 
                \end{itemize}
                
           \item {\color{blue} $\P_1$ gives either $(u', y_u)$ or $(u', y_u')$ depending on  a random coin $\delta_0 \gets \{0,1\}$, to $(\sigma[R_1], U_{1})$; 
           $\P_2$ gives either $(w', y_w)$ or $(w', y_w')$ depending on random coin $\delta_1 \gets \{0,1\}$, to $(\sigma[R_2], U_{2})$}.

            \item Then $\P_1$ and $\P_2$ respectively run $(\sigma[R_1], U_{1})$ and $(\sigma[R_2], U_{2})$ on their challenge received to output their answers $a_1$ and $a_2$. 

      {\color{blue}  \item Finally, depending on answers received and the coins $\delta_0, \delta_1$, $\P_1$ and $\P_2$  does the following:
         
          For $\P_1$: if $(\sigma[R_1], U_{1})$ outputs $0$(which means the program thinks it receives an input and its PRF evaluation):
                \begin{itemize}
                    \item if $\delta_0 = 0$: $\P_1$ outputs 0 (for encryption of $y_u$) to the challenger.
                    \item else, $\delta_0 = 1$: $\P_1$ outputs 1 (for  encryption of $y_u'$) to the challenger.
                \end{itemize}
        If  $(\sigma[R_1], U_{1})$ outputs $1$ (which means the program thinks it receives an input and a random value):
                \begin{itemize}
                    \item if $\delta_0 = 0$: $\P_1$ outputs 1 (for encryption of $y_u'$) to the challenger.
                    \item else, $\delta_0 = 1$: $\P_1$ outputs 0 (for  encryption of $y_u$) to the challenger.
                \end{itemize} 
           
           Similarly on the $\P_2$ and $(\sigma[R_2], U_2)$ side}.
            
        \end{itemize}
        
        We observe that the advantage of $\Bs$ in the CPA-style $\gamma$-anti-piracy game of single-decryptor encryption is the same as advantage of $\As$ in the indistinguishability anti-piracy game for PRF. \qed

\end{document}